\newtheorem{lemma}{Lemma}
\newtheorem{problem}{Problem}
\newtheorem{definition}{Definition}
\newtheorem{corollary}{Corollary}
\newtheorem{theorem}{Theorem}
\newenvironment{customcon}[1]
  {\innercustomcon}
  {\endinnercustomcon}
\newcommand{\nc}{\newcommand}
\nc{\rnc}{\renewcommand}
\nc\benum{\begin{enumerate}}
\nc\eenum{\end{enumerate}}
\nc\bit{\begin{itemize}}
\nc\eit{\end{itemize}}
\nc\ot{\otimes}
\def\be#1\ee{\begin{equation} #1 \end{equation}}
\def\ba#1\ea{\begin{align} #1 \end{align}}
\def\bas#1\eas{\begin{align*}#1\end{align*}}
\def\bpm#1\epm{\begin{pmatrix}#1\end{pmatrix}}
\nc{\non}{\nonumber}
\DeclareMathOperator{\poly}{poly}
\DeclareMathOperator*{\E}{\mathbb{E}}
\DeclareMathOperator*{\Wg}{wg}
\DeclareMathOperator*{\dist}{dist}
\newcommand{\SharpP}{\textsf{\#P}}
\definecolor{orange}{rgb}{1,0.5,0}
\definecolor{purple}{rgb}{0.5,0,0.5}
\definecolor{dark green}{rgb}{0,0.4,0}
\title{Efficient classical simulation of random shallow 2D quantum circuits}
\author{John Napp\thanks{Center for Theoretical Physics, MIT, Cambridge, USA. email:\href{mailto:napp@mit.edu}{napp@mit.edu}} \and Rolando L. La Placa\thanks{Center for Theoretical Physics, MIT, Cambridge, USA. email:\href{mailto:rlaplaca@mit.edu}{rlaplaca@mit.edu}} \and Alexander M. Dalzell\thanks{Institute for Quantum Information and Matter, Caltech, Pasadena, USA. email:\href{mailto:adalzell@caltech.edu}{adalzell@caltech.edu}} \and Fernando G. S. L. Brand\~ao\thanks{Institute for Quantum Information and Matter, Caltech, Pasadena, USA; \newline \indent Amazon Web Services, Pasadena, USA. email:\href{mailto:fbrandao@caltech.edu}{fbrandao@caltech.edu}} \and Aram W. Harrow\thanks{Center for Theoretical Physics, MIT, Cambridge, USA. email:\href{mailto:aram@mit.edu}{aram@mit.edu}}}
\begin{document}
\maketitle
\vspace{-1em}
\begin{abstract}
Random quantum circuits are commonly viewed as hard to simulate classically. In some regimes this has been formally conjectured --- in the context of \textit{deep} 2D circuits, this is the basis for Google's recent announcement of ``quantum computational supremacy'' --- and there had been no evidence against the more general possibility that for circuits with uniformly random gates, approximate simulation of typical instances is almost as hard as exact simulation. We prove that this is not the case  by exhibiting a \emph{shallow} random circuit family that cannot be efficiently classically simulated exactly under standard hardness assumptions, but can be simulated approximately for all but a superpolynomially small fraction of circuit instances in time linear in the number of qubits and gates; this example limits the robustness of recent worst-case-to-average-case reductions for random circuit simulation. While our proof is based on a contrived random circuit family, we  furthermore conjecture that sufficiently shallow constant-depth random circuits are  efficiently simulable more generally. To this end, we propose and analyze two simulation algorithms. Implementing one of our algorithms for the depth-3 ``brickwork'' architecture, for which exact simulation is hard, we found that a laptop could simulate typical instances on a $409\times 409$ grid with variational distance error less than $0.01$ in approximately one minute per sample, a task intractable for previously known circuit simulation algorithms. Numerical evidence indicates that the algorithm remains efficient asymptotically.

Key to both our rigorous complexity separation and our conjecture is an observation that 2D shallow random circuit simulation can be reduced to a simulation of a form of 1D dynamics consisting of alternating rounds of random local unitaries and weak measurements. Similar processes have recently been the subject of an intensive research focus, which has found numerically that the dynamics generally undergo a phase transition from an efficient-to-simulate regime to an inefficient-to-simulate regime as measurement strength is varied. Via a mapping from random quantum circuits to classical statistical mechanical models, we give analytical evidence that a similar computational phase transition occurs for our algorithms as parameters of the circuit architecture like the local Hilbert space dimension and circuit depth are varied, and additionally that the 1D dynamics corresponding to sufficiently shallow random quantum circuits falls within the efficient-to-simulate regime.

\end{abstract}

\thispagestyle{empty}

\newpage
{
\hypersetup{linkcolor=black}
\tableofcontents
}
\thispagestyle{empty}

\newpage

 \setcounter{page}{1}
\section{Introduction}
\subsection{How hard is it for classical computers to simulate quantum circuits?}
As quantum computers add more qubits and gates, where is the line between classically
simulable and classically hard to simulate? And once the size and runtime of
the quantum computer are chosen, which gate sequence is hardest to simulate?

So far, our answers to these questions have been informal or incomplete.  On the simulation side,  \cite{markov2008simulating} showed that a quantum circuit could be classically simulated by contracting a tensor network with cost exponential in the treewidth of the graph induced by the circuit.  When applied to $n$ qubits in a line running a circuit with depth $d$, the simulation cost of this algorithm is $\exp(\widetilde{\Theta}(\min(n,d)))$.  More generally we could consider $n=L_1L_2$ qubits arranged in an $L_1\times L_2$ grid running for depth $d$, in which case the simulation cost would be
\be \exp(\widetilde{\Theta}(\min(L_1L_2, L_1d, L_2d))) \label{eq:2D-contract}.\ee
In other words, we can think of the computation as taking up a space-time volume of $L_1\times L_2\times d$ and the simulation cost is dominated by the size of the smallest cut bisecting this volume.
An exception is for $d=1$ or $d=2$, which have simple exact simulations~\cite{terhal2002adaptive}.  Some restricted classes such as stabilizer circuits \cite{gottesman1998heisenberg} or one dimensional systems that are sufficiently unentangled~\cite{vidal2003efficient, vidal2004efficient, osborne2006efficient} may also be simulated efficiently. However, the conventional wisdom has been that in general, for 2D circuits with $d\geq 3$, the simulation cost scales as \Cref{eq:2D-contract}.

These considerations led IBM to propose the benchmark of ``quantum volume''~\cite{qvolume} which in our setting is $\exp(\sqrt{d\min(L_1,L_2)})$; this does not exactly coincide with \Cref{eq:2D-contract} but qualitatively captures a similar phenomenon. The idea of quantum volume is to compare quantum computers with possibly different architectures by evaluating their performance on a simple benchmark.  This benchmark task is to perform $n$ layers of random two-qubit gates on $n$ qubits, and being able to perform this with $\lesssim 1$ expected gate errors corresponds to a quantum volume of $\exp(n)$\footnote{Our calculation of quantum volume for 2D circuits above uses the additional fact that, assuming for simplicity that $L_1\leq L_2$, we can simulate a fully connected layer of gates on $L_2x$ qubits (for $ x \leq L_1$) with $O(x L_2 / L_1)$ locally connected 2D  layers using the methods of \cite{Rosenbaum13}. Then $x$ is chosen to maximize $\min(L_2x, d/(xL_2/L_1))$.}. Google's quantum computing group has also proposed random unitary circuits as a benchmark task for quantum computers~\cite{google1}.  While their main goal has been quantum computational supremacy~\cite{google2,googleSupremacy}, random circuits could also be used to diagnose errors including those that go beyond single-qubit error models by more fully exploring the configuration space of the system~\cite{qvolume}.

These proposals from industry reflect a rough consensus that simulating a 2D random quantum circuit should be nearly as hard as exactly simulating an arbitrary circuit with the same architecture, or in other words that random circuit simulation is nearly as hard as the \emph{worst case}, given our current state of knowledge. To the contrary, we prove (assuming standard complexity-theoretic conjectures) that for a certain family of constant-depth architectures, classical simulation of typical instances with small allowed error is easy, despite worst-case simulation being hard (by which we mean, it is classically intractable to simulate an arbitrary random circuit realization with arbitrarily-small error). For these architectures, we show that a certain algorithm exploiting the randomness of the gates and the allowed small simulation error can run much more quickly than the scaling in \Cref{eq:2D-contract}, running in time $O(L_1 L_2)$. While our proof is architecture specific, we give numerical and analytical evidence that for sufficiently low constant values of $d$, the algorithm remains efficient more generally. The intuitive reason for this is that the simulation of 2D shallow random circuits can be reduced to the simulation of a form of effective 1D dynamics which includes random local unitaries and weak measurements.  The measurements then cause the 1D process to generate much less entanglement than it could in the worst case, making efficient simulation possible. Before discussing this in greater detail, we  review the main arguments for the prevailing belief that random circuit simulation should be nearly as hard as the worst case.

\benum
\item{\em Evidence from complexity theory.}
\benum
\item {\em Hardness of sampling from post-selected universality.}
  A long line of work has shown that it is worst-case hard to either sample from the output distributions of quantum circuits or compute their output probabilities~\cite{terhal2002adaptive,aaronson2005quantum,bremner2010classical,AA13,BMS15,BMS17, harrow2017quantum}.  While the requirement of worst-case simulation is rather strong, these results do apply to any quantum circuit family that becomes universal once post-selection is allowed, thereby including noninteracting bosons and depth-3 circuits.  The hardness results are also based on the widely believed conjecture that the polynomial hierarchy is infinite, or more precisely that approximate counting is weaker than exact counting. Since these results naturally yield  worst-case hardness, they do not obviously imply that random circuits should be hard.  In some cases, additional conjectures can be made to extend the hardness results to some form of average-case hardness (as well as ruling out approximate simulations)~\cite{AA13,BMS15,AaronsonC16}, but these conjectures have not received widespread scrutiny.  Besides stronger conjectures, these hardness results usually require that the quantum circuits have an ``anti-concentration''  property, meaning roughly that their outputs are not too far from the uniform distribution~\cite{HM18}.
While random circuits are certainly not the only route to anti-concentration (simply performing Hadamard gates on the all $\ket 0$ state also works) they are a natural way to combine anti-concentration with an absence of any obvious structure (e.g.~Clifford gates) that might admit a simple simulation.

  \item{\em Average-case hardness of computing output probabilities} \cite{bouland2019complexity, movassagh2018efficient, movassagh2019}.  It is known that random circuit simulation admits worst-to-average case reductions for the computation of output probabilities. In particular, the ability to near-exactly compute the probability of some output string for a $1-1/\poly(n)$ fraction of Haar-random circuit instances on some architecture is essentially as hard as computing output probabilities for an arbitrary circuit instance with this architecture, which is known to be \SharpP-hard even for certain depth-3 architectures. The existence of such a  worst-to-average-case reduction could be taken as evidence for the hardness of random circuits. Our algorithms circumvent these hardness results by computing output probabilities with small error, rather than near-exactly.

 \eenum

\item {\em Near-maximal entanglement in random circuits.} Haar-random states on $n$ qudits are nearly
  maximally entangled across all cuts simultaneously~\cite{page1993average, hayden2006aspects}.
  Random quantum circuits on $L\times L\times \cdots$ arrays of qudits achieve
  similar near-maximal entanglement across all possible cuts once the depth is $\Omega(L)$~\cite{dahlsten2007emergence, HM18} and before this time, the entanglement often spreads ``ballistically''~\cite{tsunami,bertini2019entanglement}. Random tensor networks with large bond dimension nearly obey a min-flow/max-cut-type theorem \cite{hayden2016holographic, hastings2017asymptotics}, again meaning that they achieve nearly maximal values of an entanglement-like quantity. These results suggest that when running algorithms based on tensor contraction, random gates should be nearly the hardest possible gates to simulate.

\item {\em Absence of algorithms taking advantage of random inputs.} There are not many algorithmic techniques known that simulate random circuits more easily than worst-case circuits.  There are a handful of exceptions.  In the presence of any constant rate of noise, random circuits~\cite{YungG17,GaoD18}, IQP circuits~\cite{BMS17} and (for photon loss) boson sampling~\cite{KalaiK14,OszmaniecB18} can be efficiently simulated.  These results can also be viewed as due to the fact that fault-tolerant quantum computing is not a generic phenomenon and requires structured circuits to achieve (see \cite{BMS17} for discussion in the context of IQP).
Permanents of random matrices whose entries have small nonzero mean can be approximated efficiently~\cite{EldarM18}, while the case of boson sampling corresponds to entries with zero mean and the approach of \cite{EldarM18} is known to fail there. A heuristic approximate simulation algorithm based on tensor network contraction  \cite{pan2019contracting} was recently proposed  and applied to random circuits, although for this algorithm it is unclear how the approximations made are related to the overall simulation error incurred  (in contrast, our algorithm based on matrix product states can bound the overall simulation error it is making, even when comparison with exact simulation is not feasible). In practice, evidence for a hardness conjecture often is no more than the absence of algorithms.
Indeed, while some approximation algorithms are known for estimating output probabilities of constant-depth circuits~\cite{BGM19}, IQP circuits~\cite{SB09} and boson sampling~\cite{AA13} up to additive error $\delta$ in time $\poly(n,1/\delta)$, these are not very helpful for random circuits where typical output probabilities are $\sim 2^{-n}$.

\eenum

For the case of constant depth, there have been some quantum computational supremacy proposals that do not use uniformly random circuits, mostly based on the MBQC (measurement-based quantum computing) model \cite{raussendorf2001one}.  This means first preparing a cluster state and then measuring it in mostly equatorial bases, or equivalently performing $e^{i\theta Z}$ for various angles $\theta$ and then measuring in the $X$ basis.  This is far from performing uniformly random nearest-neighbor gates up to the same depth and then measuring in a fixed basis.  In many cases, the angles $\theta$ are also chosen to implement a specific family of circuits as well \cite{gao2017quantum,miller2017quantum, bermejo2018architectures}.  Previously it had not been clear whether this difference is important for the classical complexity or not.

Despite the above intuitive arguments for why the simulation of uniformly random circuits should be nearly as hard as the worst case, we (1) prove that there exist  architectures for which this is not the case, and (2) give evidence that this result is not architecture specific, but is rather a general property of sufficiently shallow  random circuits. As described in more detail below, we propose two simulation algorithms. One (which we also implement numerically) is based on a 2D-to-1D mapping in conjunction with tensor network methods, and the other is based on exactly simulating small subregions which are then ``stitched'' together. The performance of both algorithms is related to certain entropic quantities.

We also give evidence of computational phase transitions even for noiseless simulation of quantum circuits.  Previously it was known that phase transitions between classical and quantum computation existed as a function of the noise parameter in conventional quantum computation \cite{shor1996fault, aharonov1996polynomial, harrow2003robustness, razborov2004upper, virmani2005classical, buhrman2006new, kempe2008upper} as well as in MBQC \cite{raussendorf2005long, barrett2009transitions}.  We do not know of any previous work showing phase transitions even for noiseless computation, except for the gap between depth-2 and depth-3 circuits given by Terhal-DiVincenzo \cite{terhal2002adaptive} and the  phase transition as a function of rate of qubit loss during the preparation of a 2D cluster state for MBQC \cite{browne2008phase}.

We believe this latter transition, arising from the percolation phase transition on a square lattice, is in fact closely related to our results. Intuitively, a measurement on the system may be viewed as ``good'' (entanglement destroying) or ``bad'' (entanglement preserving). If all measurements are random, some fraction of these measurements will be ``good''. If this fraction becomes sufficiently large, the entanglement-destroying effects win out, and the system can be efficiently simulated classically. In the special case of a $Z$-basis measurement being performed on each qubit of a 2D cluster state with some probability $p$, the result of \cite{browne2008phase} shows that this may be made rigorous via results from percolation theory; the resulting quantum state may be efficiently simulated classically if $1-p$ falls below the percolation threshold $p_c$ for a square lattice, and the resulting state supports universal MBQC if $1-p$ falls above $p_c$. In contrast, in our setting a measurement is performed on each site in a Haar-random basis with probability one. While the percolation argument is no longer directly applicable in this setting, we nonetheless find evidence in favor of a computational phase transition driven by local qudit dimension $q$ and circuit depth $d$.

\subsection{Our results}
We give two classes of results, which we summarize in more detail below. The first consists of rigorous separations in complexity between worst-case simulation\footnote{Unless specified otherwise, we use \emph{worst-case simulation} to refer to the problem of exactly simulating an arbitrary circuit instance.} and approximate average-case simulation (for sampling) and between near-exact average-case simulation and approximate average-case simulation (for computing output probabilities) for random circuit families defined with respect to certain circuit architectures. While these results are rigorous, they are proved with respect to a contrived architecture and therefore do not address the question of whether random shallow circuits are classically simulable more generally. To address this issue, we also give conjectures on the performance of our algorithms for more general and more natural architectures. Our second class of results consists of analytical and numerical evidence in favor of these conjectures.

\subsubsection{Provable complexity separations}
We now summarize our provable results for particular circuit architectures. We first define more precisely what we mean by an ``architecture''.
\begin{definition}[Architecture]
An \emph{architecture} $A$ is an efficiently computable mapping from  positive integers $L$ to circuit layouts $A(L)$ defined on rectangular grids with sidelengths $L \times f(L)$ for some function $f(L) \leq \poly(L)$. A ``circuit layout'' is a specification of locations of gates in space and time and the number of qudits acted on by each gate. (The gate itself is not specified.) For any architecture $A$, we obtain the associated Haar-random circuit family acting on qudits of constant dimension $q$, $C_{A,q}$, by specifying every gate in $A$ to be distributed according to the Haar measure and to act on qudits of dimension $q$ which are initialized in a product state $\ket{1}^{\otimes(L \times f(L))}$.
\end{definition}
In this paper, we only consider architectures that are constant depth and spatially 2-local (that is, a gate either acts on a single site or two adjacent sites); therefore, ``architecture'' for our purposes always refers to a constant-depth spatially 2-local architecture.  The above definition permits architectures for which the layout of the circuit itself may be different for different sizes. However, it is natural for a circuit architecture to be spatially periodic, and furthermore for the ``unit cells'' of the architecture to be independent of $L$. We formalize this as a notion of \emph{uniformity}, which we define more precisely below.
\begin{definition}[Uniformity]
  We call a constant-depth architecture $A$ \emph{uniform} if there exists some spatially periodic circuit layout $B$ on an infinite square lattice such that, for all positive integers $L$, $A(L)$ is a restriction of $B$ to a rectangular sub-grid with sidelengths $L \times f(L)$ for some $f(L) \leq \poly(L)$. A random circuit family $C_{A,q}$ associated with a uniform architecture $A$ is said to be a uniform random circuit family.
\end{definition}
While uniformity is a natural property for a circuit architecture to possess, our provable separations are with respect to certain non-uniform circuit families. In particular, we prove that for any fixed $0 < c < 1$, there exists some non-uniform circuit architecture $A$ acting on $n$ qubits such that, if $C_A$ is the Haar-random circuit family associated with $A$ acting on qubits,
\begin{enumerate}
\item There does not exist a $\poly(n)$-time classical algorithm that exactly samples from the output distribution of arbitrary realizations of $C_A$ unless the polynomial hierarchy collapses to the third level.
\item Given an arbitrary fixed output string $\vb{x}$, there does not exist a $\poly(n)$-time classical algorithm for computing the probability of obtaining $\vb{x}$, $|\bra{\vb{x}}C_A \ket{0}^{\otimes n}|^2$, up to additive error $2^{-\tilde{\Theta}(n^2)}$ with probability at least $1-1/\poly(n)$ over choice of circuit instance, unless a $\SharpP$-hard function can be computed in randomized polynomial time.
\item There exists a classical algorithm that runs in time $O(n)$ and, with probability at least $1-2^{-n^c}$ over choice of circuit instance, samples from the output distribution of $C_A$ up to error at most $2^{-n^c}$ in total variation distance.
\item There exists a classical algorithm that runs in time $O(n)$ and, for an arbitrary output string $\vb{x}$, with probability at least $1-2^{-n^c}$ over choice of circuit instance, estimates $|\bra{\vb{x}}C_A \ket{0}^{\otimes n}|^2$ up to additive error $2^{-n}/2^{n^c}$. (This should be compared with $2^{-n}$, which is the average output probability over choices of $\vb{x}$.)
\end{enumerate}
The first two points above follow readily from prior works (respectively \cite{terhal2002adaptive} and \cite{movassagh2019}), while the latter two follow from an analysis of the behavior of one of our simulation algorithms for this architecture. These algorithms improve on the previously best known simulation time for this family of architectures of  $2^{\Theta(L)} = 2^{\Theta(n^{c'})}$ for some constant $c'(c) < 1$ based on an exact simulation based on tensor network contraction.  We refer to the architectures for which we prove the above separations as ``extended brickwork architectures'' (see \Cref{fig:extendedBrickwork} for a specification), as they are related to the ``brickwork architecture'' \cite{broadbent2009universal} studied in the context of MBQC.

\paragraph{Implications for worst-to-average-case reductions for random circuit simulation.}
Very recently, it was shown \cite{movassagh2019} that for any random circuit family with Haar-random gates for which it is $\SharpP$-hard to compute output probabilities in the worst case, there does not exist a $\poly(n)$-time algorithm for computing the output probability of some arbitrary output string $\vb{x}$ up to additive error $2^{-\tilde{\Theta}(n^2)}$ with high probability over the circuit realization, unless there exists a $\poly(n)$-time randomized algorithm for computing a $\SharpP$-hard function. Essentially, for Haar-random circuits, near-exact average-case computation of output probabilities is as hard as worst-case computation of output probabilities. Our results described above imply that the error tolerance for this hardness result cannot be improved to $2^{-n}/2^{n^{c}}$ for any $c < 1$.

This hardness result follows other prior work  \cite{bouland2019complexity, movassagh2018efficient} on the average-case hardness of random circuit simulation. In particular, the original paper \cite{bouland2019complexity} uses a different interpolation scheme than that used in \cite{movassagh2018efficient, movassagh2019}. Interestingly, as discussed in \Cref{app:supremacy}, we find that the interpolation scheme of \cite{bouland2019complexity} cannot be used to prove hardness results about our algorithms' performance on $C_A$, despite $C_A$ possessing worst-case hardness. While this observation may be of technical interest for future work on worst-to-average-case reductions for random circuit simulation, the alternative interpolation scheme of \cite{movassagh2019} does not suffer from this limitation.

While \cite{bouland2019complexity, movassagh2018efficient, movassagh2019} prove hardness results for the near-exact computation of output probabilites of random circuits, it is ultimately desirable to prove hardness for the Random Circuit Sampling (RCS) problem of sampling from the output distribution of a random circuit with small error in variational distance, as this is the computational task corresponding to the problem that the quantum computer solves. \emph{A priori}, one might hope that such a result could be proved via such a worst-to-average-case reduction. In particular, it was pointed out in these works that improving the error tolerance of the hardness result to  $2^{-n}/\poly(n)$ would be sufficient to prove hardness of RCS. Our work rules out such a proof strategy working by showing that even improving the error tolerance to $2^{-n}/2^{n^{c}}$ is unachievable. In particular, any proof of the hardness of RCS should be sensitive to the depth and should not be applicable to the worst-case-hard shallow random circuit ensembles that admit approximate average-case classical simulations.

\subsubsection{Conjectures for uniform architectures}
While the above results are provable, they are unfortunately proved with respect to a contrived non-uniform architecture, and furthermore do not provide good insight into how the simulation runtime scales with simulation error and simulable circuit fraction. An obvious question is then whether efficient classical simulation remains possible for ``natural'' random circuit families that are sufficiently shallow, and if so, how the runtime scales with system size and error parameters. We argue that it does, but that a computational phase transition occurs for our algorithms when the depth ($d$) or local Hilbert space dimension ($q$) becomes too large. Here we are studying the simulation cost as $n\to\infty$ for fixed $d$ and $q$.  Intuitively, there are many constant-depth random circuit families for which efficient classical simulation is possible, including many ``natural'' circuit architectures (it seems plausible that \emph{any} depth-3 random circuit family on qubits is efficiently simulable). However, we expect a computational phase transition to occur for sufficiently large constant depths or qudit dimensions, at which point our algorithms become inefficient. The location of the transition point will in general depend on the details of the architecture. The conjectures stated below are formalizations of this intuition.

We now state our conjectures more precisely. \Cref{con:efficient} essentially states that there are \emph{uniform} random circuit families for which worst-case simulation (in the sense of sampling or computing output probabilities) is hard, but approximate average-case simulation can be performed efficiently. (Worst-case hardness for computing probabilities also implies a form of average-case hardness for computing probabilities, as discussed above.)  This is stated in more-or-less the weakest form that seems to be true and would yield a polynomial-time simulation. However, we suspect that the scaling is somewhat more favorable.  Our numerical simulations and toy models are in fact consistent with a stronger conjecture, \Cref{con:aggressive}, which if true would yield stronger run-time bounds.  Conversely, \Cref{con:transition} states that if the depth or local qudit dimension of such an architecture is made to be a  sufficiently large constant, our two proposed algorithms experience computational phase transitions and become inefficient even for approximate average-case simulation.

\begin{customcon}{1}\label{con:efficient}
There exist uniform architectures and choices of $q$ such that, for the associated random circuit family $C_{A,q}$, (1) worst-case simulation of $C_{A,q}$ (in terms of sampling or computing output probabilities) is classically intractable unless the polynomial hierarchy collapses, and (2) our algorithms approximately simulate $C_{A,q}$ with high probability. More precisely, given parameters $\varepsilon$ and $\delta$, our algorithms run in time bounded by $\poly(n,1/\varepsilon,1/\delta)$ and can, with probability  $1-\delta$ over the random circuit instance, sample from the classical output distribution produced by $C_q$ up to variational distance error $\varepsilon$ and compute a fixed output probability up to additive error $\varepsilon/q^n$.
\end{customcon}

\begin{customcon}{1'}\label{con:aggressive}
For any uniform random circuit family $C_{A,q}$ satisfying the conditions of \Cref{con:efficient}, efficient simulation is possible with runtime replaced by $n^{1+o(1)}\cdot \exp(O(\sqrt{\log(1/\varepsilon\delta)}))$.
\end{customcon}

\begin{customcon}{2}\label{con:transition}
For any uniform random circuit family $C_{A,q}$ satisfying the conditions of \Cref{con:efficient}, there exists some constant $q^*$ such that our algorithms become inefficient for simulating $C_{A,q'}$ for any constant $q' \geq q^*$, where $C_{A,q'}$ has the same architecture as as $C_q$ but acts on qudits of dimension $q'$.  There also exists some constant $k^*$ such that, for any constant $k \geq k^*$, our algorithms become inefficient for simulating the composition of $k$ layers of the random circuit, $C_{A,q}^k \circ \dots \circ C_{A,q}^2 \circ C_{A,q}^1$, where each  $C_{A,q}^i$ is i.i.d. and distributed identically to $C_{A,q}$. In the inefficient regime, for fixed $\varepsilon$ and $\delta$ the runtime of our algorithms is $2^{O(L)}$.
\end{customcon}

Our evidence for these conjectures, which we elaborate upon below, consists primarily of (1) a rigorous reduction from the 2D simulation problem to a 1D simulation problem that can be efficiently solved with high probability if certain conditions on expected entanglement in the 1D state are met, (2) convincing numerical evidence that these conditions are indeed met for a specific worst-case-hard uniform random circuit families and that in this case the algorithm is extremely successful in practice, and (3) heuristic analytical evidence for both conjectures using a mapping from random unitary circuits to classical statistical mechanical models, and for \Cref{con:aggressive} using a toy model which can be more rigorously studied. The uniform random circuit family for which we collect the most evidence for classical simulability is associated with the depth-3 ``brickwork architecture'' \cite{broadbent2009universal} (see also \Cref{fig:extendedBrickwork} for a specification). We now present an overview of these three points, which are then developed more fully in the body of the paper.

\subsection{Overview of proof ideas and evidence for conjectures}\label{se:overview}
\subsubsection{Reduction to ``unitary-and-measurement'' dynamics.}
{
We reduce the problem of simulating a constant-depth quantum circuit acting on a $L \times L'$ grid of qudits to the problem of simulating an associated ``effective dynamics'' in 1D on $L$ qudits which is iterated for $L'$ timesteps, or alternatively on $L'$ qudits which is iterated for $L$ timesteps. This mapping is rigorous and is related to previous maps from 2D quantum systems to 1D system evolving in time~\cite{raussendorf2001one,kim2017holographic,kim2017noiseresilient}. The effective 1D dynamics is then simulated using the time-evolving block decimation algorithm of Vidal \cite{vidal2004efficient}. In analogy, we call this algorithm space-evolving block decimation (\texttt{SEBD}).  We rigorously bound the simulation error made by the algorithm in terms of quantities related to the entanglement spectra of the effective 1D dynamics and give conditions in which it is provably asymptotically efficient for sampling and estimating output probabilities with small error. \texttt{SEBD} is self-certifying in the sense that it can construct confidence intervals for its own simulation error and for the fraction of random circuit instances it can simulate --- we use this fact later to bound the error made in our numerical experiments.

A 1D unitary quantum circuit on $L$ qubits iterated for $L^{c}$ timesteps with $c > 0$ is generally hard to simulate classically in $\poly(L)$-time, as the entanglement across any cut can increase linearly in time. However, the form of 1D dynamics that a shallow circuit maps to includes measurements as well as unitary gates. While the unitary gates tend to build entanglement, the measurements tend to destroy entanglement and make classical simulation more tractable. It is \emph{a priori} unclear which effect has more influence. To illustrate the mapping, we introduce  the  simple worst-case-hard random circuit family consisting of a Haar-random single qubit gate applied to each site of a cluster state, and obtain an exact, closed-form expression for the effective 1D unitary-and-measurement dynamics simulated by \texttt{SEBD}. We call this model \texttt{CHR} for ``cluster-state with Haar-random measurements''.

Fortunately, unitary-and-measurement processes have been studied in a flurry of recent papers from the physics community \cite{li2018quantum, chan2018weak, skinner2019measurement, li2019measurement, szyniszewski2019entanglement, choi2019quantum, gullans2019dynamical, bao2019theory, jian2019measurement,gullans2019scalable,zabalo2019critical}. The consensus from this work is that processes consisting of entanglement-creating unitary evolution interspersed with entanglement-destroying measurements can be in one of two phases, where the entanglement entropy equilibrates to either an area law (constant), or to a volume law (extensive). When we vary parameters like the fraction of qudits measured between each round of unitary evolution, a phase transition is observed. The existence of a phase transition appears to be robust to variations in the exact model, such as replacing projective measurements on a fraction of the qudits with weak measurements on all of the qudits \cite{li2019measurement,szyniszewski2019entanglement}, or replacing Haar-random unitary evolution with Clifford \cite{li2018quantum,li2019measurement,gullans2019dynamical,choi2019quantum} or Floquet \cite{skinner2019measurement,li2019measurement} evolution. This suggests that the efficiency of the \texttt{SEBD} algorithm  depends on whether the particular circuit depth and architecture being simulated yields effective 1D dynamics that falls within the area-law or the volume-law regime. It also suggests a computational phase transition in the complexity of the \texttt{SEBD} algorithm. Essentially, decreasing the measurement strength or increasing the qudit dimension in these models is associated with moving toward a transition into the volume-law phase. Since increasing the 2D circuit depth is associated with decreasing the measurement strength and increasing the local dimension of the associated effective 1D dynamics, this already gives substantial evidence in favor of a computational phase transition in \texttt{SEBD}.

\texttt{SEBD} is provably inefficient if the effective 1D dynamics are on the volume-law side of the transition, and we expect it to be efficient on the area-law side because, in practice, dynamics obeying an area law for the von Neumann entanglement entropy are generally efficiently simulable. However, definitively proving that \texttt{SEBD} is efficient on the area-law side faces the obstacle that there are known contrived examples of states which obey an area law but cannot be efficiently simulated with matrix product states \cite{schuch2008entropy}.  We address this concern by directly studying the entanglement spectrum of unitary-and-measurement processes in the area-law phase. To do this, we introduce a toy model for such dynamics which may be of independent interest. For this model, we rigorously derive an asymptotic scaling of Schmidt values across some cut as $\lambda_i \propto \exp(-\Theta(\log^2 i))$ which is consistent with the scaling observed in our numerical simulations.  Moreover, for this toy model we show that with probability at least $1-\delta$, the equilibrium state after iterating the process can be $\varepsilon$-approximated by a state with Schmidt rank $r \leq \exp(O(\sqrt{\log(n/\varepsilon \delta)}))$. Taking this toy model analysis as evidence that the bond dimension of \texttt{SEBD} when simulating a circuit whose effective 1D dynamics is in an area-law phase obeys this  asymptotic scaling leads to \Cref{con:aggressive}.
}

\subsubsection{Proof idea for rigorous complexity separation with non-uniform architectures.}
We now explain the proof idea for the complexity separation discussed above. The main idea is that we define a non-uniform architecture such that, when all gates are Haar-random, the effective 1D dynamics consists of alternating rounds of unitary and measurement layers where in each measurement layer, a weak measurement is applied to each qubit. While the problem of rigorously proving an area-law/volume-law transition for general unitary-and-measurement processes is still open for the case of measurements with \emph{constant} measurement strength, for our particular architecture the measurement strength itself increases rapidly as the system size $n$ is increased (this is achieved using the non-uniformity of the architecture). Intuitively, then, by considering large enough $n$, the weak measurements can be made strong enough to destroy nearly all of the pre-existing entanglement in the 1D state.  This is the key idea that allows us to prove that, by always compressing the MPS describing the state to one of just constant bond dimension, the error incurred is very low and efficient simulation is possible.

The fact that the effective measurement strength increases rapidly with system size follows from a technical result about the decay of expected post-measurement entanglement entropy when a contiguous block of qubits in a state produced by a 1D random local circuit is measured. In particular, we show that if $\ket{\psi}_{ABC}$ is a 1D state produced by a depth-2 circuit with Haar-random 2-local gates, and subregion $B$ is measured in the computational basis, then the expected entanglement entropy on the post-measurement state satisfies
\begin{equation}
\E_b S(A)_{\psi_b} \leq 2^{-\Theta(|B|)}
\end{equation}
where $\ket{\psi_b}_{AC}$ is the post-measurement state after obtaining measurement outcome $b$, and $|B|$ denotes the number of qubits in region $B$. (While we only need and only prove this result for depth-2 circuits, we expect it to remain true for any constant depth.) A similar result was obtained by Hastings in the context of nonnegative wavefunctions rather than random shallow circuits \cite{hastings2016how}.

\subsubsection{Numerical evidence for conjectures.}
{
\paragraph{Excellent empirical performance of \texttt{SEBD}.} We numerically implemented the \texttt{SEBD} algorithm to simulate two different universal (for MBQC) uniform shallow circuit architectures with randomly chosen gates. The first model is depth-3 circuits with ``brickwork architecture'' (see \Cref{fig:brickwork} for a specification) and Haar-random two-qubit gates\footnote{A similar architecture but with a specific choice of gates rather than random gates has been proposed as a candidate for demonstrating quantum computational supremacy \cite{gao2017quantum}; \texttt{SEBD} becomes inefficient if such ``worst-case'' gates are chosen, as the effective 1D dynamics becomes purely unitary evolution without measurements in this case.}.  We found that a laptop using non-optimized code could efficiently simulate typical instances on a $409\times 409$ grid with variational distance error less than $0.01$. The mean runtime (averaged over random circuit instance) was on the order of one minute per sample. In principle the same algorithm could also be used to compute output probabilities with small additive error.  We also simulated the \texttt{CHR} model as defined above on lattices of sizes up to $50\times 50$. The slower-decaying entanglement spectrum of the effective 1D dynamics of the \texttt{CHR} model causes the maximal lattice size that we can simulate to be smaller, but allows the functional form of the spectrum to be better studied numerically, helping establish the asymptotic efficiency of the simulation as discussed below.

It is useful to compare our observed runtime with what is possible by previously known methods. The previously best-known method that we are aware of for computing output probabilities for these architectures would be to write the circuit as a tensor network and perform the contraction of the network \cite{villalonga2019establishing}. The cost of this process scales exponentially in the tree-width of a graph related to the quantum circuit, which for a 2D circuit is thought to scale roughly as the surface area of the minimal cut slicing through the circuit diagram, as in Eq.~\eqref{eq:2D-contract}. By this reasoning, we estimate that simulating a circuit with brickwork architecture on a $400 \times 400$ lattice using tensor network contraction would be roughly equivalent to simulating a depth-$40$ circuit on a $20 \times 20$ lattice with the architecture considered in \cite{villalonga2019establishing}, where the entangling gates are $CZ$ gates. We see that these tasks should be equivalent because the product of the dimensions of the bonds crossing the minimal cut is equal to $2^{200}$ in both cases: for the brickwork circuit, 100 gates cross the cut if we orient the cut vertically through the diagram in \Cref{fig:brickwork}(a) and each gate contributes a factor of 4; meanwhile, for the depth-$40$ circuit, only one fourth of the unitary layers will contain gates that cross the minimal cut, and each of these layers will have 20 such gates that each contribute a factor of 2 ($CZ$ gates have half the rank of generic gates). The task of simulating a depth-$40$ circuit on a $7 \times 7$ lattice was reported to require more than two hours using tensor network contraction on the 281 petaflop supercomputer Summit \cite{villalonga2019establishing}, and the exponentiality of the runtime suggests scaling this to $20 \times 20$ would take many orders of magnitude longer, a task that is decidedly intractable.

However, we emphasize that the important takeaway from the numerics is not merely the large circuit sizes we simulated but rather the fact that the numerics serve as evidence that the algorithm is efficient in the asymptotic limit $n \rightarrow \infty$, as we discuss presently.

\paragraph{Evidence for asymptotic efficiency.} Our simulations indicate that the average entanglement generated during the effective 1D dynamics of the \texttt{SEBD} algorithm for these architectures quickly saturates to a value independent of the number of qubits --- an area law for entanglement entropy. In fact, our numerics indicate that the effective 1D dynamics not only obeys an area law for the von Neumann entropy, but also has the stronger property of obeying an area law for some R\'{e}nyi entropies $S_\alpha$ with $\alpha < 1$. It has been proven that such a condition implies efficient representation by matrix product states \cite{schuch2008entropy}, providing  evidence for efficiency in the asymptotic limit.

To obtain a more precise estimate of the error of \texttt{SEBD} and attain further evidence of asymptotic efficiency, we also study the form of the entanglement spectra throughout the effective 1D dynamics. We observe that the spectrum of Schmidt values obeys a superpolynomial decay consistent with that predicted by our toy model for unitary-and-measurement dynamics in the area-law phase, providing further validation of our toy model which suggests that not only is \texttt{SEBD} polynomial-time, but obeys the even better scaling with error parameters given in \Cref{con:aggressive}. Overall, the numerics strongly support \Cref{con:efficient} and support the toy model which is the basis for the more aggressive \Cref{con:aggressive}.
}

\subsubsection{Analytical evidence for conjectures from statistical mechanics.}
{
In addition to providing strong numerical evidence that \texttt{SEBD} is efficient in the cases we considered, we also give analytical arguments for both algorithms' efficiency when acting on the depth-3 brickwork architecture using methods from statistical mechanics. We focus on the depth-3 brickwork architecture because it is a worst-case hard uniform architecture which is simple enough to be studied analytically. We also give evidence of computational phase transitions as qudit dimension and circuit depth are increased.

We map 2D shallow circuits with Haar-random gates to classical statistical mechanical models, utilizing techniques developed in   \cite{nahum2018operator,von2018operator,zhou2019emergent,hunter2019unitary,bao2019theory,jian2019measurement}, such that the free energy cost incurred by twisting boundary conditions of the stat mech model corresponds to quantities $\tilde{S}_k$, which we refer to as ``quasi-entropies'' of the output state of the quantum circuit. The quasi-entropy of index $k$ is related but not exactly equal to the R\'{e}nyi-$k$ entanglement entropy averaged over random circuit instances and measurement outcomes, denoted by  $\langle S_k \rangle$. We briefly define the quasi-entropy associated with a collection of states here. All logarithms are base-2 unless indicated otherwise.

\begin{definition}[Quasi-$k$ entropy]
For a collection $\mathcal{E} = \{ \rho_i \}$ of non-normalized bipartite states on system $AB$, we define the quasi-$k$ entropy of register $A$ as
\begin{equation}
\tilde{S}_k(A) = \frac{1}{1-k} \log\left(  \frac{\sum_i \tr(\rho_{i,A}^k)}{\sum_i \tr(\rho_{i})^k}  \right)
\end{equation}
where $\rho_{i,A}$ is the reduced state of $\rho_i$ on subregion $A$.
\end{definition}
Virtually identical quantities were also considered in two other very recent works \cite{bao2019theory,jian2019measurement}. Notably, in the $k \rightarrow 1$ limit, these quantities approach the expected von Neumann  entropy $\langle S(A) \rangle$ achieved when state $\rho_i$ is drawn from $\mathcal{E}$ with probability proportional to $\tr(\rho_i)$:
\begin{equation}
\lim_{k\rightarrow 1} \tilde{S}_k(A) = \langle S(A) \rangle = \sum_i \left(\frac{\tr(\rho_i)}{\sum_j \tr(\rho_j)}\right) S(A)_{\rho_i}
\end{equation}
Although the quasi-entropies $\tilde{S}_k$ are not the entropic quantites that directly relate to the runtime of our algorithms, we study them because the stat mech mapping permits for an analytical handle on $\tilde{S}_k$ for integer $k \geq 2$, and the calculations become especially tractable for $k=2$.  Essentially, changing the qudit dimension $q$ of the random circuit model corresponds to changing the interaction strengths in the associated stat mech model. Phase transitions in the classical stat mech model are accompanied by phase transitions in quasi-entropies. While the efficiency of our algorithms is related to different entropic quantities, which are hard to directly analyze, the phase transition in quasi-entropies  provides analytical evidence in favor of our conjectures, as we outline below.

\paragraph{Area-law to volume-law transition for $\tilde{S}_2$ in brickwork architecture.}
Since we can only analytically access quasi-entropies, our argument falls short of a rigorous proof that \texttt{SEBD} is efficient and experiences a computational phase transition, but contributes the following conclusions about the entanglement in the effective 1D dynamics.
\begin{enumerate}
    \item For the brickwork architecture, $\tilde{S}_2$ for the collection of pure states encountered by the \texttt{SEBD} algorithm satisfies an area law when qudits have local dimension $q=2$ (qubits) or $q=3$ (qutrits).
    \item For brickwork architecture, $\tilde{S}_2$ transitions to a volume-law phase once qudit local dimension $q$ becomes sufficiently large. The critical point separating the two phases is estimated to be roughly $q_c\approx 6$ and could be precisely computed with standard Monte Carlo techniques.
\end{enumerate}
The fact that $\tilde{S}_2$ obeys an area law for brickwork architecture with $q=2$ corroborates our other evidence that \texttt{SEBD} is efficient in this regime.

\paragraph{Phase transitions for $\tilde{S}_2$ in arbitrary architectures.}
The stat mech mapping can also be used to understand the behavior of $\tilde{S}_2$ during $\texttt{SEBD}$ for more general architectures. In particular, for any architecture, the mapping implies an area-law-to-volume-law phase transition in $\tilde{S}_2$ as $q$ is increased, contributing further evidence in favor of a computational phase transition driven by $q$. We also present a heuristic argument for why the existence of a computational phase transition as a function of $q$ should always be accompanied by the existence of a phase transition as a function of the number of layers of the random circuit. Together with prior work on phase transitions in unitary-and-measurement models as measurement strength and qudit dimension are changed, this provides further evidence of computational phase transitions as stated in \Cref{con:transition}.
}

\paragraph{\texttt{Patching} algorithm and transitions in quasi-conditional mutual information.}
We can use the stat mech mapping to study average entropic properties of the classical output distribution of the quantum circuit. In particular, we define a ``quasi-$k$ conditional mutual information'' for the distribution over classical output distributions which is parameterized by the real number $k$ and approaches the average conditional mutual information (CMI) in the $k \rightarrow 1$ limit. We argue that if a random circuit has an associated stat mech model that is disordered, then the quasi-2 CMI of the distribution over classical output distributions is exponentially decaying in the sense that $\tilde{I}_2 (A:C|B) \leq \poly(n) e^{-\Omega(\text{dist}(A,C))}$ for any lattice subregions $A$, $B$, and $C$, where $\text{dist}(A,C)$ is the distance between subregions $A$ and $C$.

Taking this as evidence that the average CMI obeys the same exponential decay condition, we show that a very different circuit simulation algorithm which we call \texttt{Patching} can also be used to efficiently simulate the random circuit with high probability. This simulation algorithm, based on \cite{brandao2019finite} but improving on the runtime of that algorithm (from quasi-polynomial to polynomial) in our setting of shallow circuits, works by exactly simulating disconnected subregions before applying recovery maps to ``stitch''  the regions together. We obtain rigorous bounds on the performance of \texttt{Patching} in terms of the rate of decay of CMI of the output distribution and give conditions in which it is asymptotically efficient.

On the other hand, when the corresponding stat mech model transitions to an ordered phase as $q$ is increased, the quasi-CMI does not decay to zero as $\dist(A,C)$ is increased, providing evidence against CMI decay and against the efficiency of  \texttt{Patching}. In fact, in the limit of infinitely large qudit dimension $q\rightarrow \infty$, by formally evaluating all quasi-$k$ CMIs and taking the $k\rightarrow 1$ limit  we show that the expected CMI of the classical output distribution between three regions forming a tripartition of the lattice becomes equal to a constant: $\langle I(A:C|B) \rangle =  (1-\gamma)/ \ln 2 \approx 0.61$ where $\gamma$ is the Euler constant. Unfortunately, performing the analytic continuation and hence exactly evaluating von Neumann entropies is difficult outside of the $q\rightarrow \infty$ limit.

\subsection{Future work and open questions}

Our work yields several natural follow-up questions and places for potential future work. We list some here.

\begin{enumerate}

    \item Can ideas from our work also be used to simulate \emph{noisy} 2D quantum circuits? Roughly, we expect that increasing noise in the circuit corresponds to decreasing the interaction strength in the corresponding stat mech model, pushing the model closer toward the disordered phase, which is (heuristically) associated with efficiency of our algorithms. We therefore suspect that if noise is incorporated, there will be a 3-dimensional phase diagram depending on circuit depth, qudit dimension, and noise strength. As the noise is increased, our algorithms may therefore be able to simulate larger depths and qudit dimensions than in the noiseless case.

    \item Can one approximately simulate random 2D circuits of arbitrary depth? This is the relevant case for Google's quantum computational supremacy experiment \cite{googleSupremacy}. Assuming \Cref{con:transition}, our algorithms are not efficient once the depth exceeds some constant, but it is not clear if this difference in apparent complexity for shallow vs.~deep circuits is simply an artifact of our simulation method, or if it is inherent to the problem itself.

    \item Our algorithms are well-defined for all 2D circuits, not only random 2D circuits. Are they also efficient for other kinds of unitary evolution at shallow depths, for example evolution by a fixed local 2D Hamiltonian for a short amount of time?

    \item Can we rigorously prove \Cref{con:efficient}? One way to make progress on this goal would be to find a worst-case-hard uniform circuit family for which it would be possible to perform the analytic continuation of quasi-entropies $\tilde{S}_k$ in the $k \rightarrow 1$ limit using the mapping to stat mech models.

    \item Can we give numerical evidence for \Cref{con:transition}, which claims that our algorithms undergo computational phase transitions? This would require numerically simulating our algorithms for circuit families with increasing local Hilbert space dimension and increasing depth and finding evidence that the algorithms eventually become inefficient.

   \item How precisely does the stat mech mapping inform the  efficiency of our algorithms? Is the correlation length of the stat mech model associated with the runtime of our simulation algorithms? How well does the phase transition point in the stat mech model (and accompanying phase transition in quasi-entropies) predict the computational phase transition point in the simulation algorithms? If such questions are answered, it may be possible to predict the efficiency and runtime of the simulation algorithms for an arbitrary (and possibly noisy) random circuit distribution via Monte Carlo studies of the associated stat mech model. In this way, the performance of the algorithms could be studied even when direct numerical simulation is not feasible.

    \item In the regime where \texttt{SEBD} is inefficient, i.e., when the effective 1D dynamics it simulates are on the volume-law side of the entanglement phase transition, is \texttt{SEBD} still better than previously known exponential-time methods? Intuitively, we expect this to be the case close to the transition point.

\end{enumerate}

\subsection{Outline for remainder of paper}
We now outline the material of the remaining sections. The logical dependencies between subsections are described in the following figure. The paper may be read linearly without issue, but readers interested in only one specific aspect of our results may skip subsections outside the relevant chain of dependencies illustrated by the diagram.
\begin{figure}[ht]
\centering
\includegraphics[width=0.3\textwidth]{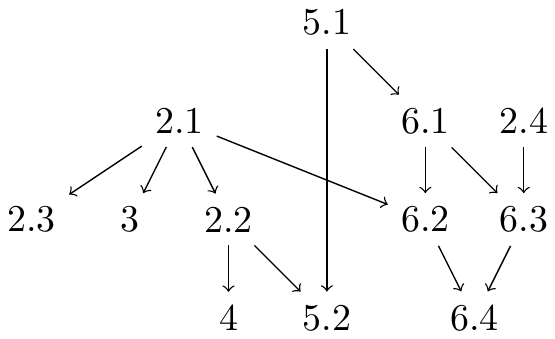}
\end{figure}

\begin{itemize}
	\item In \Cref{se:algorithms}, we specify and analyze our two algorithms for simulating 2D circuits, obtaining rigorous bounds on the runtime in terms of various parameters. In \Cref{sec:SEBD}, we specify and analyze the \texttt{SEBD} algorithm. In \Cref{se:CHR}, we give an explicit example of how the efficiency of the algorithm is related to the simulability of associated unitary-and-measurement processes. In \Cref{sec:conjectured}, we study a toy model for unitary-and-measurement dynamics in an area-law phase, which (along with supporting numerics) is the basis for \Cref{con:aggressive}. In \Cref{sec:patching}, we specify and analyze the \texttt{Patching} algorithm.

	\item In \Cref{se:proof}, we prove the complexity separation for a particular architecture. To understand this section, one need only read \Cref{sec:SEBD}.

    \item In \Cref{se:numerics}, we present numerical evidence supporting \Cref{con:efficient} and \Cref{con:aggressive}.

    \item In \Cref{se:mapping}, we introduce the mapping from random circuits to stat mech models. We first review the mapping technique developed in prior works before, in \Cref{se:weakmeasurementmapping}, applying the mapping to analyze a particular family of 1D circuits with weak measurements that correspond to the effective 1D dynamics of the cluster state model we study in \Cref{se:CHR}.
    \item In \Cref{se:2dcircuitmapping}, we apply the stat mech mapping to shallow random 2D circuits. In \Cref{se:statmechsebd} and \Cref{se:statmechpatching}, we show how the stat mech mapping supports \Cref{con:efficient} and \Cref{con:transition} with respect to \texttt{SEBD} and \texttt{Patching}, respectively. In \Cref{sec:brickwork}, we study the stat mech mapping in more detail for the ``brickwork'' architecture, strengthening the evidence for \Cref{con:efficient} and \Cref{con:transition}. To understand the contents of \Cref{se:statmech2D}, one must read \Cref{se:mapping} but \Cref{se:weakmeasurementmapping} may be skipped.
\end{itemize}

\section{Algorithms}\label{se:algorithms}
We now propose and analyze two algorithms for sampling from the output distributions of shallow 2D random quantum circuits. The first algorithm is based on a reduction from the 2D simulation problem to a 1D simulation problem, which is then simulated with the time-evolving block decimation (\texttt{TEBD}) algorithm \cite{vidal2004efficient}. We therefore refer to this algorithm as \emph{space-evolving block decimation} (\texttt{SEBD}). Essentially, the resulting algorithm is efficient if the 1D state in the corresponding effective 1D dynamics can be approximately represented as an MPS of polynomially bounded bond dimension at all times. We also discuss how a variation of \texttt{SEBD} can be used to compute output probabilities with small additive error.

We analyze the behavior of \texttt{SEBD} applied to a simple class of 2D shallow random circuits with a uniform circuit architecture that is universal for measurement-based quantum computation and is therefore believed to be hard to simulate in the worst case -- namely, the 2D cluster state with Haar-random single qubit measurements. We believe that this simple class of random 2D circuits qualitatively captures the behavior of many other random shallow circuit architectures. The benefit of studying this model is that we can obtain an exact, closed-form description of the behavior of our simulation algorithm for this problem. In particular, we show that the algorithm can be understood as simulating a 1D process involving alternating layers of random unitary gates and measurements. Such processes have been the subject of a number of recent works \cite{li2018quantum, chan2018weak, skinner2019measurement, li2019measurement, szyniszewski2019entanglement, choi2019quantum, gullans2019dynamical, zabalo2019critical},  which find evidence for the existence of an entanglement phase transition driven by the frequency and strength of the measurements from an ``area-law'' phase characterized by low entanglement to a ``volume-law'' phase characterized by large entanglement.

We introduce a toy model that intuitively captures how the entanglement spectrum of such a unitary-and-measurement process might scale in the area-law phase. The toy model predicts a superpolynomial decay of Schmidt values across any cut, a sufficient condition for efficient MPS representation. Later in \Cref{se:numerics}, we numerically observe the effective 1D dynamics of the uniform architectures we simulate to be in the area-law phase, with a decay of Schmidt values consistent with that predicted by the toy model. This physical picture provides strong evidence that our algorithm is efficient for the uniform architectures we considered. Later in \Cref{se:statmech2D}, we provide additional analytical evidence for this indeed being the case.

The second algorithm, which we call \texttt{Patching}, involves first sampling from the output distribution of small disconnected patches of the lattice, and then stitching them together to obtain a global sample. This algorithm is  efficient if the conditional mutual information (CMI) of the output distribution of the circuit is exponentially decaying in a sense that we make precise below.  The latter algorithm is essentially an adaptation of an algorithm for preparing Gibbs states with finite correlation length \cite{brandao2019finite}. However, by exploiting the fact that the distribution we want to sample from is classical and arises from a constant-depth local circuit, we are able to improve on a na\"{i}ve application of that scheme, obtaining a polynomial-time algorithm instead of the quasipolynomial-time algorithm obtained in \cite{brandao2019finite} if the conditional mutual information is exponentially decaying.

While the criteria required by these two algorithms for efficiency superficially appear unrelated, we find evidence that they are indeed related. Namely, in \Cref{se:statmech2D} we relate the efficiency criteria of both algorithms to phases of a statistical mechanical model associated with the random circuit family.  A stat mech model in the ordered phase suggests that the criteria for both algorithms is met, whereas a model in the disordered phase suggests that the criteria for both is not met.  It therefore is plausible that \texttt{SEBD} can efficiently simulate some random circuit family if and only if \texttt{Patching} can.

We assume the reader is familiar with standard tensor work methods, particularly algorithms for manipulating matrix product states (see e.g.~\cite{orus2014practical, bridgeman2017hand} for reviews).

\subsection{Space-evolving block decimation (\texttt{SEBD})}\label{sec:SEBD}
In this section, we introduce the \texttt{SEBD} algorithm for simulating a shallow 2D random circuit. In fact, the algorithm is well-defined for \emph{any} 2D circuit, but we present evidence that it is efficient for sufficiently shallow random circuit families with uniform architectures, and prove it is efficient for certain depth-3 random circuit families with non-uniform architectures in \Cref{se:proof}. We first give our algorithm for sampling and error bounds, before describing a modified version of the algorithm for computing output probabilities and error bounds associated with this version of the algorithm.

\subsubsection{Specification of algorithm}\label{specification}

\begin{figure}[tbp]
    \centering
    \includegraphics[width=0.9\columnwidth]{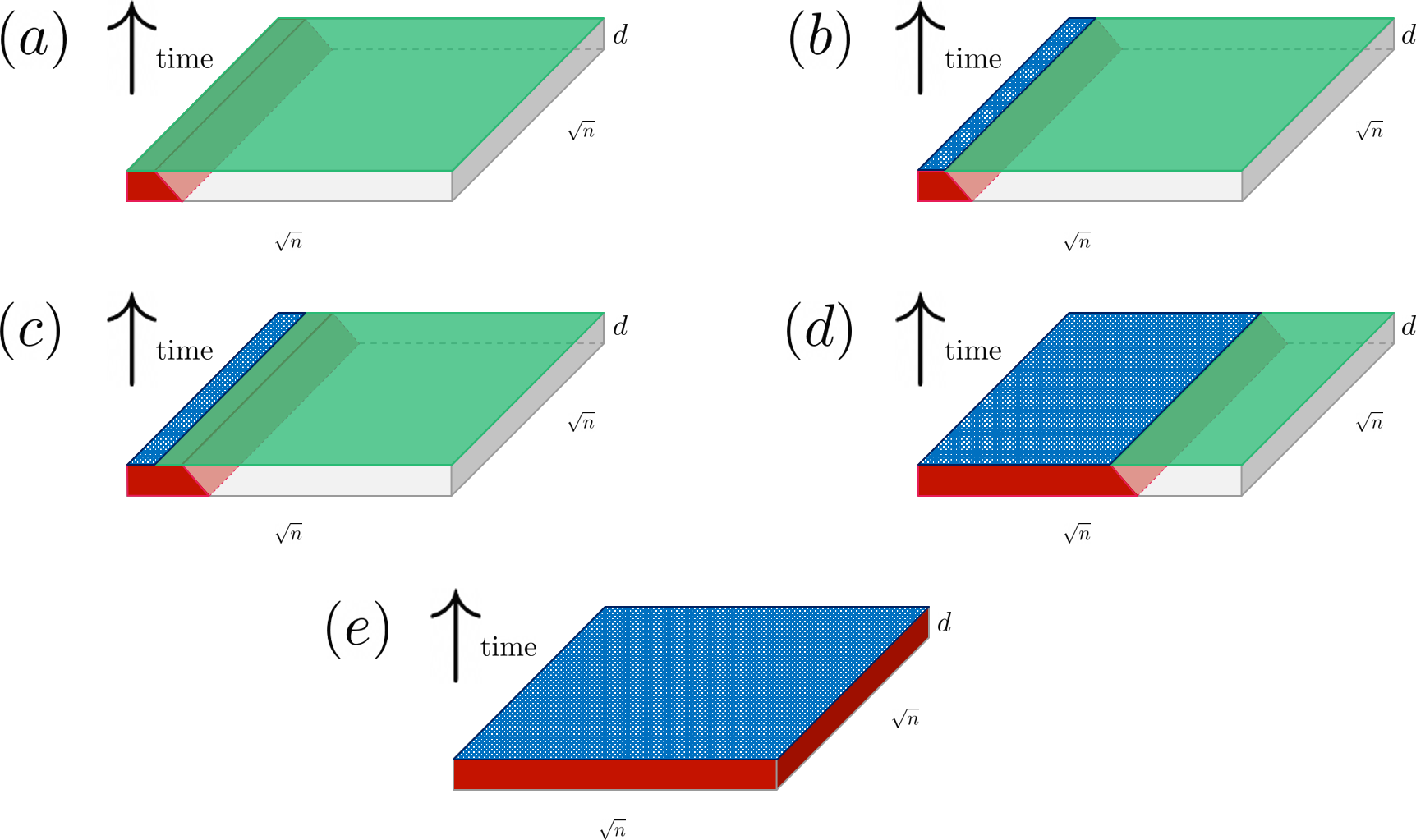}
    \caption{ \label{fg:sebd} Schematic depiction of \texttt{SEBD} acting on a square lattice with circuit depth $d$. In all figures, the 2D circuit is depicted as a spacetime volume, with time flowing upwards. The green (respectively dotted blue) region denotes unmeasured (measured) qudits. In (a), we apply all gates in the lightcone of \texttt{column 1}, namely, those gates intersecting the spacetime volume shaded red. In (b), we simulate the computational basis measurement of \texttt{column 1}.  In (c), we apply all gates in the lightcone of \texttt{column 2} that were previously unperformed. Figure (d) depicts the algorithm at an intermediate stage of the simulation, after the measurements of about half of the qudits have been simulated. The algorithm stores the current state as an MPS at all times, which may be periodically compressed to improve efficiency. Figure (e) depicts the algorithm at completion: the measurements of all $n$ of the qudits have been simulated.}
\end{figure}

\begin{figure}[htbp]\label{iteration}
    \centering
    \def\svgwidth{.7\columnwidth}
    \input{Figures/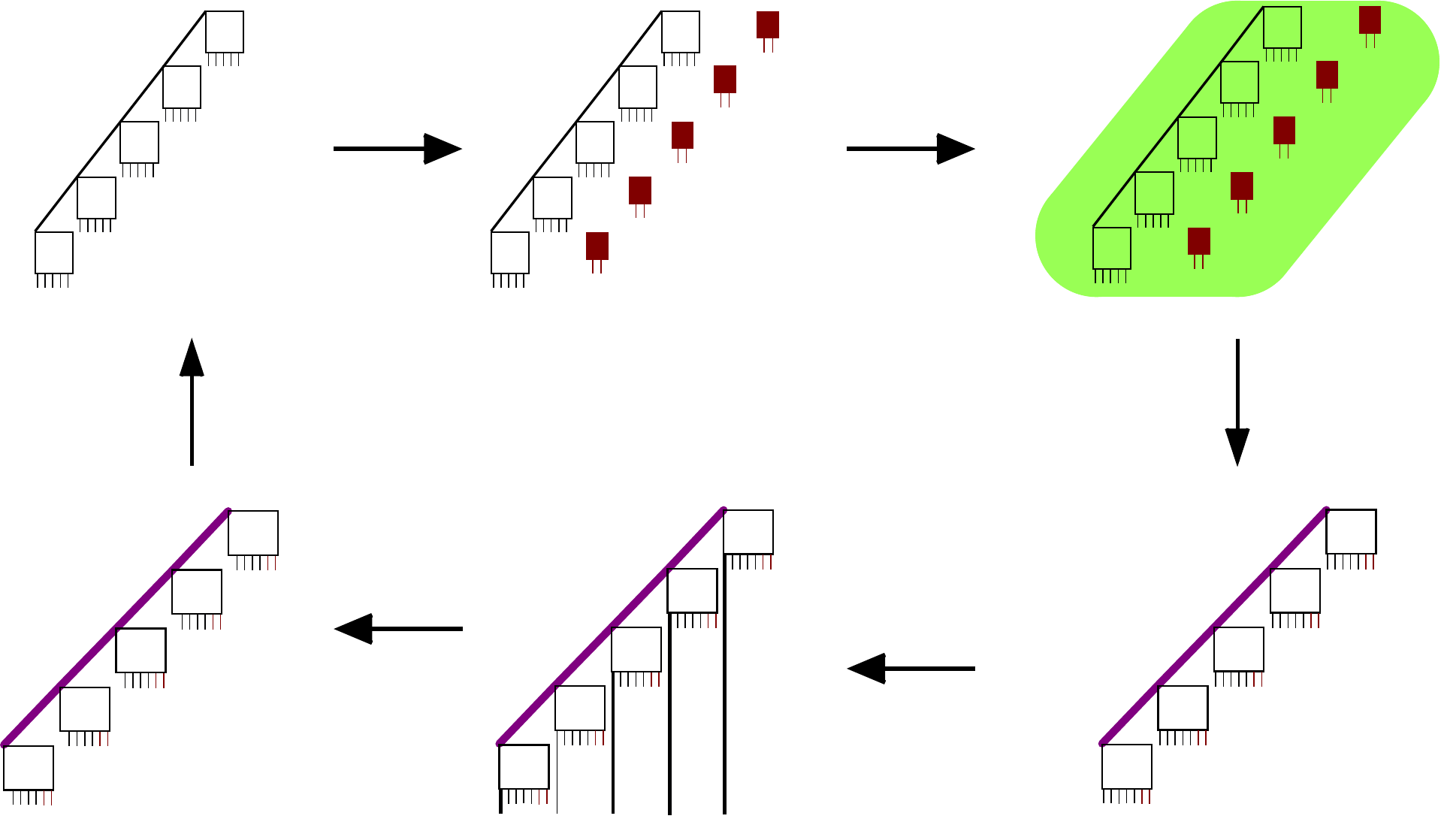_tex}
    \caption{ \label{fg:iteration} Iteration of \texttt{SEBD}. In (a), we begin with an MPS describing the current state $\rho_j$. In (b), the MPS is compressed via truncation of small Schmidt values. This will generally decrease the bond dimension of the MPS, depicted by the thin black line rather than thick purple line. In (c), qudits acted on by $V_j$ that are not already incorporated into the current state are added to the MPS (increasing the physical bond dimension of the MPS) and initialized in $\ket{0}$ states. In (d), the unitary gates associated with $V_j$ are applied. Figure (e) depicts the MPS after the application of $V_j$; the thick purple lines schematically illustrate the fact that the bond dimension may increase in this step. In (f), the measurement of \texttt{column j} is performed, and the outcome \texttt{01101} is obtained. Subsequently \texttt{column j} is projected onto \texttt{01101}, removing the physical legs associated with these sites from the MPS. The resulting state is $\rho_{j+1}$. }
\end{figure}

For concreteness, we consider a rectangular grid of qudits with local Hilbert space dimension $q$, although the algorithm could be similarly defined for different lattices. Assume WLOG that the grid consists of $n=L_1 \times L_2$ qudits, where $L_1$ is the number of rows, $L_2$ is the number of columns, and $L_1 \leq L_2$. For each qudit, let $\ket{i}, i \in [q]$ label a set of basis states which together form the computational basis.  Assume all gates act on one site or two neighboring sites, and the starting state is $\ket{1}^{\otimes n}$. Let $d$ denote the circuit depth, which should be regarded as a constant. For a fixed circuit instance $C$, the goal is to sample from a distribution close to $\mathcal{D}_C$, defined to be the distribution of the output of $C$ upon measuring all qudits in the computational basis. For an output string $\vb{x} \in [q]^n$, we let $\mathcal{D}_C(\vb{x})$ denote the probability of the circuit outputting $\vb{x}$ after measurement. The high-level behavior of the algorithm is illustrated in \Cref{fg:sebd}.  Recall that $C$ can always be \emph{exactly} simulated in time $L_2 q^{\Theta(d L_1)}$ using standard tensor network algorithms \Cite{markov2008simulating}.

Since all of the single-qudit measurements commute, we can measure the qudits in any order. In particular, we can first measure all of the sites in \texttt{column 1}, then those in \texttt{column 2}, and iterate until we have measured all $L_2$ columns.  This is the measurement order we will take. Now, consider the first step in which we measure \texttt{column 1}. Instead of applying all of the gates of the circuit and then measuring, we may instead apply only the gates in the \emph{lightcone} of \texttt{column 1}, that is, the gates that are causally connected to the measurements in \texttt{column 1}. We may ignore qudits that are outside the lightcone, by which we mean qudits that are outside the support of all gates in the lightcone.

Let $\rho_1 = \dyad{1}^{\otimes L_1}$ denote the trivial starting state that is a tensor product of $\ket{1}$ states in \texttt{column 1}, which the algorithm represents as an MPS. Let $V_1$ denote the isometry corresponding to applying all gates in the lightcone of this column. The algorithm simulates the application of $V_1$ by adding qudits in the lightcone of \texttt{column 1} as necessary and applying the associated unitary gates, maintaining the description of the state as an MPS of length $L_1$ as illustrated  in \Cref{fg:iteration}. Since there are up to $d+1$ columns in the lightcone of \texttt{column 1}, each tensor of the MPS after the application of $V_1$ has up to $d+1$ dangling legs corresponding  to physical indices, for a total physical dimension of at most  $q^{d+1}$. Since in the application of $V_1$, there are up to  $O(d^2)$ gates that act between any two neighboring rows, the (virtual) bond dimension of the updated MPS is at most $q^{O(d^2)}$.

We now simulate the computational basis measurement of \texttt{column 1}. More precisely, we measure the qudits of \texttt{column 1} one by one. We first compute the respective probabilities $p_1, p_2, \dots, p_q$ of the $q$ possible measurement outcomes for the first qudit. This involves contracting the MPS encoding $V_1 \rho_1 V_1^\dagger$. We now use these probabilities to classically sample an outcome $i \in [q]$, and update the MPS to condition on this outcome. That is, if (say) we obtain outcome \texttt{1} for site $i$, we apply the projector $\dyad{1}$ to site $i$ of the state and subsequently renormalize. After doing this for every qudit in the column, we have exactly sampled an output string $\vb{x}_1 \in [q]^{L_1}$ from the marginal distribution on \texttt{column 1}, and are left with an MPS description of the pure, normalized, post-measurement state $\rho_2$  proportional to $\tr_{\texttt{column 1}} \qty(\Pi_1^{\vb{x}} V_1 \rho_1 V_1^\dagger \Pi_1^{\vb{x}})$, where $\Pi_1^{\vb{x}}$ denotes the projection of \texttt{column 1} onto the sampled output string $\vb{x} = \vb{x}_1$. Using standard tensor network algorithms, the time complexity of these steps is $L_1 q^{O(d^2)}$.

We next consider \texttt{column 2}. At this point, we add the qudits and apply the gates that are in the lightcone of \texttt{column 2} but were not applied previously. Denote this isometry by $V_2$. It is straightforward to see that this step respects causality. That is, if some gate $U$ is in the lightcone of \texttt{column 1}, then any gate $W$ that is in the lightcone of \texttt{column 2} but not \texttt{column 1} cannot be required to be applied before $U$, because if it were, then it would be in the lightcone of \texttt{column 1}. Hence, when we apply gates in this step, we never apply a gate that was required to be applied before some gate that was applied in the first step. After this step, we have applied all gates in the lightcone of \texttt{columns (1, 2)}, and we have also projected \texttt{column 1} onto the measurement outcomes we observed.

By simulating the measurements of \texttt{column 2} in a similar way to those of \texttt{column 1}, we sample a string $\vb{x}_2$ from the marginal distribution on \texttt{column 2}, conditioned on the previously observed outcomes from \texttt{column 1}.   Each time an isometry $V_j$ is applied, the bond dimension of the MPS representation of the current state will in general increase by a multiplicative factor. In particular, if we iterate this procedure to simulate the entire lattice,  we will eventually encounter a maximal bond dimension of up to $q^{O(d L_1)}$ and will obtain a sample $\vb{x} = (\vb{x}_1, \vb{x}_2, \dots, \vb{x}_{L_2}) \in [q]^n$ from the true output distribution.

To improve the efficiency at the expense of accuracy, we may compress the MPS in each iteration to one with smaller bond dimension using standard MPS compression algorithms. In particular, in each iteration $j$ before we apply the corresponding isometry $V_j$, we first discard as many of the smallest singular values (i.e.~Schmidt values) associated with each cut of the MPS as possible up to a total \emph{truncation error}  per bond of $\epsilon$, defined as the sum of the squares of the discarded singular values. The bond dimension across any cut is reduced by the number of discarded values. This truncation introduces some error that we quantify below.

If the maximal bond dimension of this truncated version of the simulation algorithm is $D$, the total runtime of the full algorithm to obtain a sample is bounded by (taking $q$ and $d$ to be constants) $O(nD^3)$ using standard MPS compression algorithms. %

We assume that for a specified maximal bond dimension $D$ and truncation error per bond $\epsilon$, if a bond dimension ever exceeds $D$ then the algorithm terminates and outputs a failure flag \textsc{fail}. Hence, the runtime of the algorithm when simulating some circuit $C$ with parameters $\epsilon$ and $D$ is bounded by $O(nD^3)$, and the algorithm has some probability of failure $p_{f,C}$.  We summarize the \texttt{SEBD} algorithm in \Cref{SEBD}.

\begin{algorithm}
\caption{\texttt{SEBD}}\label{SEBD}
\hspace*{\algorithmicindent} \textbf{Input}: circuit instance $C$, truncation error $\epsilon$, bond dimension cutoff $D$ \\
\hspace*{\algorithmicindent} \textbf{Output}: string $\vb{x} \in [q]^n$ or \textsc{fail} \\
\hspace*{\algorithmicindent} \textbf{Runtime}: $O(nD^3)$ [$q$ and $d$ assumed to be constants]
\begin{algorithmic}[1]
\State initialize an MPS in the state $\dyad{1}^{\otimes L_1}$, corresponding to \texttt{column 1}
\For {$t=1\dots L_2$}
    \State compress MPS describing state by truncating small singular values, up to error $\epsilon$ per bond
     \State apply $V_t$, corresponding to gates in the lightcone of column $t$ not yet applied
     \State if some bond dimension is $> D$, terminate and output \textsc{fail}
    \State simulate measurement of all qudits in column $t$ via MPS contraction and sampling
    \State apply $\Pi_t^{\vb{x}_t}$ to condition on measurement string $\vb{x}_t$ observed for that column
\EndFor
\Return $(\vb{x}_1, \dots, \vb{x}_{L_2}) \in [q]^{n}$
\end{algorithmic}
\end{algorithm}
The untruncated version of the algorithm presented above samples from the true distribution $\mathcal{D}_C$ of the measurement outcomes of the original 2D circuit $C$. However, due to the MPS compression which we perform in each iteration and the possibility of failure, the algorithm incurs some error which causes it to instead sample from some distribution $\mathcal{D}_C^\prime$. Here, we bound the total variation distance between these distributions, defined by
\begin{equation}
    \frac{1}{2}\| \mathcal{D}'_C-\mathcal{D}_C \|_1 = \frac{1}{2}\sum_{\vb{x}} \lvert \mathcal{D}'_C(\vb{x})-\mathcal{D}_C(\vb{x}) \rvert + \frac{1}{2}p_{f,C},
\end{equation}
where the sum runs over the $q^n$ possible output strings (not including \textsc{fail}) in terms of the truncation error made by the algorithm.

We first obtain a very general bound on the error made by \texttt{SEBD} with no bond dimension cutoff in terms of the truncation error. Note that the truncation error may  depend on the (random) measurement outcomes, and is itself therefore a random variable.  See \Cref{appendix:error} for a proof.

\begin{restatable}{lemma}{samplingError}\label{lem:sampling_error_bound}
Let $\epsilon_i$ denote the sum of the squares of all singular values discarded in the compression during iteration $i$ of the simulation of a circuit $C$ with output distribution $\mathcal{D}_C$ by \texttt{SEBD} with no bond dimension cutoff, and let $\Lambda$ denote the sum of all singular values discarded over the course of the algorithm. Then the distribution $\mathcal{D}^\prime_C$ sampled from by \texttt{SEBD} satisfies
\begin{equation}\frac{1}{2} \norm{\mathcal{D}'_C - \mathcal{D}_C}_1 \leq \E \sum_{i=1}^{L_2} \sqrt{2 \epsilon_i} \leq   \sqrt{2} \E \Lambda,
\end{equation}
where the expectations are over the random measurement outcomes.
\end{restatable}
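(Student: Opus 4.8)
The plan is to bound the global sampling error by a telescoping sum over the $L_2$ iterations, charging each term to the state perturbation caused by the single MPS compression performed in that iteration. First I would write the exact (untruncated) run of \texttt{SEBD} as a composition of quantum instruments $\mathcal{M}_{L_2}\circ\cdots\circ\mathcal{M}_1$, where $\mathcal{M}_i$ applies the lightcone isometry $V_i$, measures \texttt{column }$i$ in the computational basis, and records the outcome in a classical register; reading off that register reproduces $\mathcal{D}_C$ exactly. (With no bond-dimension cutoff there is no failure flag, so $p_{f,C}=0$ and the only error is from compression.) The truncated run interleaves a compression map $\mathcal{T}_i$ before each $\mathcal{M}_i$, producing $\mathcal{D}'_C$. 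I would then define the hybrid $H_k$ in which compressions are applied in iterations $1,\dots,k$ and omitted thereafter, so $H_0=\mathcal{D}_C$ and $H_{L_2}=\mathcal{D}'_C$, and use $\tfrac12\norm{\mathcal{D}'_C-\mathcal{D}_C}_1\le\sum_{k}\tfrac12\norm{H_k-H_{k-1}}_1$. Consecutive hybrids agree on iterations $1,\dots,k-1$, hence enter iteration $k$ in the same classical-quantum state $\sigma_k=\sum_{\vb{x}_{<k}}p(\vb{x}_{<k})\dyad{\vb{x}_{<k}}\otimes\dyad{\psi_k(\vb{x}_{<k})}$, and differ only by whether $\mathcal{T}_k$ acts before $\mathcal{M}_k$. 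Because neither hybrid compresses in iterations $k+1,\dots,L_2$, the remainder of each run — $\mathcal{M}_k$ followed by the uncompressed tail $\mathcal{M}_{k+1},\dots,\mathcal{M}_{L_2}$ — is one fixed CPTP quantum-to-classical channel, so the data-processing inequality gives $\tfrac12\norm{H_k-H_{k-1}}_1\le\tfrac12\norm{\sigma_k'-\sigma_k}_1$, where $\sigma_k'$ replaces $\ket{\psi_k}$ by its compression $\ket{\tilde\psi_k}$ on each branch. Since the two cq-states share the classical register, this equals $\E_{\vb{x}_{<k}}\bigl[\tfrac12\norm{\dyad{\psi_k}-\dyad{\tilde\psi_k}}_1\bigr]$, with the expectation over the marginal on $\vb{x}_{<k}$ realized by the algorithm (this marginal depends only on iterations $1,\dots,k-1$, which are compressed identically here and in the true run).

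Next I would bound the per-iteration trace distance. The key point is that the compression is a single canonical-form sweep: with the MPS in canonical form, the vector discarded at each bond is orthogonal both to the retained state and to the vectors discarded at the other bonds, so these error vectors add in quadrature. Consequently, writing $\epsilon_k=\sum_b\delta_b$ for the total discarded weight, the unnormalized compressed vector $\ket{\phi_k}$ satisfies $\langle\psi_k|\phi_k\rangle=\norm{\ket{\phi_k}}^2=1-\epsilon_k$, so the normalized compression obeys $|\langle\psi_k|\tilde\psi_k\rangle|^2=1-\epsilon_k$. Bounding trace distance by the Euclidean distance of state vectors then gives
\begin{equation}
\tfrac12\norm{\dyad{\psi_k}-\dyad{\tilde\psi_k}}_1\le\norm{\ket{\psi_k}-\ket{\tilde\psi_k}}_2=\sqrt{2-2\sqrt{1-\epsilon_k}}\le\sqrt{2\epsilon_k},
\end{equation}
where the last step uses $\sqrt{1-\epsilon_k}\ge 1-\epsilon_k$. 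Summing over $k$ and pulling the expectation outside yields the first claimed inequality $\tfrac12\norm{\mathcal{D}'_C-\mathcal{D}_C}_1\le\E\sum_i\sqrt{2\epsilon_i}$.

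The second inequality is then elementary: within iteration $i$ the discarded singular values $\{s_j\}$ satisfy $\sqrt{\epsilon_i}=\sqrt{\sum_j s_j^2}\le\sum_j s_j$ ($\ell^2\le\ell^1$), so $\sum_i\sqrt{2\epsilon_i}\le\sqrt2\sum_i\sum_j s_j=\sqrt2\,\Lambda$, and taking expectations gives $\E\sum_i\sqrt{2\epsilon_i}\le\sqrt2\,\E\Lambda$. The hard part will be the bookkeeping of the first paragraph rather than any single estimate: one must set up the classical-quantum state so that (i) the random, outcome-dependent $\epsilon_i$ are exactly the weights discarded by the actual (already partially compressed) algorithm, ensuring the expectation is over the distribution the algorithm realizes, and (ii) the ``remainder'' map in each hybrid is a genuine CPTP instrument — which is why the tail must be left uncompressed, since a renormalizing compression is nonlinear and would break data processing. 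The quadrature fact underlying the per-iteration bound must also be justified from the canonical form; without it one only obtains the weaker per-iteration estimate $\sum_b\sqrt{\delta_b}$, so it is essential that the compression maintains canonical form when it sweeps across the bonds.
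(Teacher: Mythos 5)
Your proof is correct, and its global skeleton matches the paper's: a telescoping/hybrid decomposition over the $L_2$ iterations, with contractivity of CPTP maps (your data-processing step on the uncompressed tail) charging the total variation distance to one truncation error per iteration, and the classical-quantum structure of the state converting each term into an expectation over the outcomes actually realized by the compressed run; the paper organizes this as a forward recursion comparing the exact evolution $\rho_t$ with the truncated evolution $\sigma_t$, which is the same argument in different clothing. Where you genuinely depart from the paper is the per-iteration estimate: the paper simply cites the MPS truncation lemma of \cite{verstraete2006matrix} (stated in \Cref{appendix:error} as $\norm{\dyad{\psi}-\dyad{\psi^{(t)}}}_1\le\sqrt{8\epsilon}$), whereas you derive the bound from scratch via the claim that in a one-directional canonical-form sweep the discarded vectors at distinct bonds are mutually orthogonal and orthogonal to the retained state. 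That claim is true, and the justification you defer is short: in a left-to-right sweep the current state lies in the kept subspace of every earlier bond, so all of its Schmidt vectors at bond $b$ (kept and discarded) also lie in those earlier kept subspaces; hence the bond-$b$ error vector lies in $\mathrm{Im}(P_c)$ for every $c<b$ while being annihilated by $P_b$, which gives pairwise orthogonality and the exact relations $\langle\psi_k|\phi_k\rangle=\norm{\phi_k}^2=1-\epsilon_k$. This buys you a self-contained proof and a marginally sharper per-iteration constant (for pure states one even gets trace distance exactly $\sqrt{\epsilon_k}$, versus the $\sqrt{2\epsilon_k}$ you settle for and the $\sqrt{2\epsilon}$ implied by the cited lemma), though after the cq-state decomposition and the $\ell^2\le\ell^1$ step both routes land on the identical final bound. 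The one caveat, which you correctly flag yourself, is that the quadrature argument is tied to defining $\epsilon_i$ as the weight discarded from the \emph{current} (partially truncated) state in a one-directional sweep --- which is indeed what the algorithm computes and what the lemma's $\epsilon_i$ means; if the truncations were instead referenced to the original state's spectra at each cut, one would need the factor-of-two Verstraete--Cirac version rather than exact orthogonality.
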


From \Cref{lem:sampling_error_bound} we immediately obtain two corollaries. The first is useful for empirically bounding the sampling error in total variation distance made by \texttt{SEBD} when the algorithm also has a bond dimension cutoff. The second is a useful asymptotic statement. The corollaries follow straightforwardly from the coupling formulation of variational distance, Markov's inequality, and the triangle inequality.

\begin{corollary}\label{cor:SEBD_error_bound}
	Let $\mathcal{A}$ denote a \texttt{SEBD} algorithm with truncation error parameter $\epsilon$ and bond dimension cutoff $D$. Consider a fixed circuit $C$, and suppose that $\mathcal{A}$ applied to this circuit fails with probability $p_{f,C}$.  Then $\mathcal{A}$ samples from the output distribution of $C$ with total variation distance error bounded by  $L_2 \sqrt{2 \epsilon L_1} + p_{f,C}$.

	If the failure probability of $\mathcal{A}$ averaged over random choice of circuit instance and measurement outcome is $p_f$, then for any $\delta$, on at least $1-\delta$ fraction of circuit instances, $\mathcal{A}$ samples from the true output distribution with total variation distance error bounded by $L_2 \sqrt{2 \epsilon L_1} + p_f/\delta$.
\end{corollary}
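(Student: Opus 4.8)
The plan is to prove the two statements separately, in both cases reducing to the no-cutoff bound of \Cref{lem:sampling_error_bound}. Write $\mathcal{A}_\infty$ for the variant of \texttt{SEBD} that uses the same per-bond truncation error $\epsilon$ but has \emph{no} bond-dimension cutoff, and let $\mathcal{D}''_C$ denote the distribution it produces; by construction $\mathcal{A}_\infty$ never outputs \textsc{fail}. The first step is to instantiate \Cref{lem:sampling_error_bound} for $\mathcal{A}_\infty$. Since the MPS has length $L_1$ and hence fewer than $L_1$ cuts, and each cut is truncated to squared-weight at most $\epsilon$, every per-iteration truncation error satisfies $\epsilon_i \le L_1 \epsilon$ deterministically. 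Substituting this into the lemma and summing over the $L_2$ iterations yields $\tfrac12\norm{\mathcal{D}''_C - \mathcal{D}_C}_1 \le \E\sum_{i=1}^{L_2}\sqrt{2\epsilon_i} \le L_2\sqrt{2\epsilon L_1}$.

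For the fixed-circuit statement, I would compare the cutoff algorithm $\mathcal{A}$ with $\mathcal{A}_\infty$ through a coupling that feeds both the same random bits used to sample measurement outcomes. As long as no bond dimension exceeds $D$ after truncation, the two runs perform identical operations and therefore produce identical MPS descriptions and identical output strings; the runs differ only on the event that $\mathcal{A}$ triggers \textsc{fail}, which by definition occurs with probability $p_{f,C}$. Hence the coupling formulation of variational distance gives $\tfrac12\norm{\mathcal{D}'_C - \mathcal{D}''_C}_1 \le p_{f,C}$, and the triangle inequality combined with the previous paragraph yields $\tfrac12\norm{\mathcal{D}'_C - \mathcal{D}_C}_1 \le L_2\sqrt{2\epsilon L_1} + p_{f,C}$, which is the first claim.

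For the average-case statement, I would view $p_{f,C}$ as a nonnegative random variable over the choice of circuit instance $C$, whose expectation is the averaged failure probability $p_f$. Markov's inequality then gives $\Pr_C[p_{f,C} \ge p_f/\delta] \le \delta$, so on at least a $1-\delta$ fraction of instances we have $p_{f,C} < p_f/\delta$. On each such instance the fixed-circuit bound already established gives total variation error at most $L_2\sqrt{2\epsilon L_1} + p_{f,C} \le L_2\sqrt{2\epsilon L_1} + p_f/\delta$, as desired.

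The only real subtlety, and hence the step I would write most carefully, is the coupling argument: one must verify that before the first cutoff violation the two algorithms are genuinely performing bit-for-bit identical truncations and samplings (so that agreement holds deterministically under shared randomness), and that the sole source of disagreement is the \textsc{fail} outcome, so that the disagreement probability is exactly $p_{f,C}$ rather than merely bounded by it. Everything else is the routine combination of \Cref{lem:sampling_error_bound}, the per-iteration estimate $\epsilon_i \le L_1\epsilon$, the triangle inequality, and Markov's inequality.
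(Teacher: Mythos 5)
Your proposal is correct and takes essentially the same route the paper intends: the paper offers no detailed proof of this corollary, saying only that it ``follows straightforwardly from the coupling formulation of variational distance, Markov's inequality, and the triangle inequality,'' and your argument instantiates exactly those ingredients --- \Cref{lem:sampling_error_bound} with the deterministic per-iteration bound $\epsilon_i \le L_1\epsilon$, a coupling of the cutoff and no-cutoff runs whose disagreement event is the \textsc{fail} event, the triangle inequality, and Markov's inequality applied to $p_{f,C}$. The coupling step you flag as the main subtlety is indeed valid, since both runs maintain identical MPS states (and hence identical conditional sampling distributions) until the cutoff is triggered.
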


In practice, the variational distance error of \texttt{SEBD} with truncation error $\epsilon$ applied to the simulation of some circuit $C$ can be bounded by constructing a confidence interval for $p_{f,C}$ and applying the above bound.

\begin{corollary}\label{cor:SEBD_asymptotic_efficiency}
	Let $\mathcal{A}$ denote a \texttt{SEBD} algorithm with truncation error parameter $\epsilon$ and no bond dimension cutoff. Suppose that, for some random circuit family with $q=O(1)$ and $d=O(1)$, the expected bond dimension across any cut is bounded by $\poly(n, 1/\epsilon)$. Then, \texttt{SEBD} with some choice of $\epsilon = 1/\poly(n)$ and $D = \poly(n)$ runs in time $\poly(n,1/\varepsilon, 1/\delta)$ and, with probability at least $1-\delta$ over the choice of circuit instance $C$, samples from the output distribution of $C$ with variational distance error less than $\varepsilon$.
\end{corollary}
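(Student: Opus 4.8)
The plan is to obtain \Cref{cor:SEBD_asymptotic_efficiency} as a routine consequence of the empirical error bound \Cref{cor:SEBD_error_bound} together with a tail bound on the bond dimension extracted from the hypothesis via Markov's inequality. Recall that \Cref{cor:SEBD_error_bound} guarantees that, on at least a $1-\delta$ fraction of circuit instances, \texttt{SEBD} run with truncation error $\epsilon$ and cutoff $D$ samples from $\mathcal{D}_C$ with variational-distance error at most $L_2\sqrt{2\epsilon L_1} + p_f/\delta$, where $p_f$ is the failure probability averaged over circuit instances and measurement outcomes. It therefore suffices to choose $\epsilon$ and $D$ so that each of the two terms is at most $\varepsilon/2$, while keeping $\epsilon = 1/\poly$ and $D = \poly$.

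First I would control the truncation term. Since $L_1 \le L_2 \le n$, we have $L_2\sqrt{2\epsilon L_1} \le n^{3/2}\sqrt{2\epsilon}$, so taking $\epsilon = \Theta(\varepsilon^2/n^3)$ forces $L_2\sqrt{2\epsilon L_1} \le \varepsilon/2$. This choice is $1/\poly(n,1/\varepsilon)$, and in particular $1/\epsilon = \poly(n,1/\varepsilon)$, so that the hypothesized bound $\poly(n,1/\epsilon)$ on the expected bond dimension becomes $\poly(n,1/\varepsilon)$.

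Next I would bound $p_f$. The key observation is that, until the first time a bond dimension exceeds $D$, the cutoff algorithm is identical to the no-cutoff algorithm, since both perform the same $\epsilon$-truncation in each iteration; hence a failure occurs \emph{exactly} when the $\epsilon$-truncated bond dimension of the no-cutoff run exceeds $D$ at some bond and in some iteration. Writing $\chi_{i,j}$ for that (random) bond dimension at internal bond $i$ in iteration $j$, the hypothesis gives $\E\,\chi_{i,j} \le \poly(n,1/\varepsilon)$ for every $(i,j)$. Because the MPS has length $O(L_1)$ and there are $L_2$ iterations, there are only $O(n)$ such pairs (using $d=O(1)$), so a union bound together with Markov's inequality yields
\begin{equation}
p_f = \Pr[\exists\, (i,j):\ \chi_{i,j} > D] \le \sum_{i,j} \frac{\E\,\chi_{i,j}}{D} \le \frac{n\cdot \poly(n,1/\varepsilon)}{D}.
\end{equation}
Choosing $D = \Theta\big(n\cdot \poly(n,1/\varepsilon)/(\varepsilon\delta)\big) = \poly(n,1/\varepsilon,1/\delta)$ then makes $p_f \le \varepsilon\delta/2$, so that $p_f/\delta \le \varepsilon/2$.

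Combining the two bounds, on at least a $1-\delta$ fraction of instances the variational-distance error is at most $\varepsilon/2 + \varepsilon/2 = \varepsilon$, and the runtime is $O(nD^3) = \poly(n,1/\varepsilon,1/\delta)$, as claimed. The only genuine subtlety — and the step I would take most care with — is the identification of the averaged failure probability $p_f$ with the tail event of the bond dimension $\chi_{i,j}$ reached by the \emph{no-cutoff} algorithm, which is precisely what licenses applying the hypothesis (stated for the no-cutoff variant) to the run that actually enforces the cutoff. Once this coupling of the two runs is in place, the union bound is harmless because there are only polynomially many bond/iteration locations, and Markov's inequality supplies the rest.
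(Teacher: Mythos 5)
Your proof is correct and follows essentially the route the paper intends: it invokes \Cref{cor:SEBD_error_bound}, splits the error into the truncation term and the failure term, and controls the latter via Markov's inequality on the expected bond dimension together with a union bound over the polynomially many bond/iteration locations --- exactly the ``coupling formulation of variational distance, Markov's inequality, and the triangle inequality'' recipe the paper cites for this corollary. Your explicit identification of the cutoff algorithm's failure event with the tail event of the no-cutoff run (via the observation that the two runs coincide until the first exceedance) is the right way to make the hypothesis, which is stated for the no-cutoff variant, applicable, and it closes the only subtle step in the argument.
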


Thus, to prove the part of \Cref{con:efficient} about sampling up to total variation distance error $\varepsilon$ for uniform random circuit families, it would suffice to show that there is a 2D constant-depth uniform random quantum circuit family with the worst-case-hard property for which the expected bond dimension across any cut while running \texttt{SEBD} with truncation parameter $\epsilon$ is bounded by $\poly(n, 1/\epsilon)$. Later, we will introduce two candidate circuit families for which we can give numerical and analytical evidence that this criterion is indeed met.

In the next subsection, we show how the other part of \Cref{con:efficient}, regarding computing output probabilities, would also follow from a $\poly(n,1/\epsilon)$ bound on the bond dimension of states encountered by \texttt{SEBD} on uniform worst-case-hard circuit families.

\subsubsection{Computing output probabilities}
\label{se:computing_output_probs}
In the previous section, we described how a \texttt{SEBD} algorithm with a truncation error parameter $\epsilon$ and a  bond dimension cutoff $D$ applied to a circuit $C$ samples from a distribution $\mathcal{D}'_C$ satisfying $ \| \mathcal{D}'_C - \mathcal{D}_C \|_1 \leq 2  L_2 \sqrt{2 \epsilon L_1} + 2 p_{f,C}$ where $p_{f,C}$ is the probability that some bond dimension exceeds $D$ and the algorithm terminates and indicates failure. Expanding the expression for the 1-norm and rearranging, we have
\begin{equation}
	\frac{1}{q^n} \sum_{\vb{x}} |\mathcal{D}'_C(\vb{x}) - \mathcal{D}_C(\vb{x}) | \leq \frac{2 L_2 \sqrt{2 \epsilon L_1} + p_{f,C}}{q^n}.
\end{equation}
\texttt{SEBD} with bond dimension cutoff $D$ can be used to compute $\mathcal{D}'_C(\vb{x})$ for any output string $\vb{x}$ in time $O(n D^3)$ (taking $q$ and $d$ to be constants). To do this, for a fixed output string $\vb{x}$, \texttt{SEBD} proceeds similarly to the case in which it's being used for sampling, but rather than sampling from the output distribution of some column, it simply projects that column onto the outcome specified by the string $\vb{x}$, and computes the conditional probability of that outcome via contraction of the MPS.  That is, at iteration $t$, the algorithm computes the conditional probability of measuring the string $\vb{x}_t \in [q]^{L_1}$ in column $t$ , $\mathcal{D}'_C(\vb{x}_t | \vb{x}_1, \dots, \vb{x}_{t-1})$, by projecting column $t$ onto the relevant string via the projector $\Pi^{\vb{x}_t}_t$ and then contracting the relevant MPS. If the bond dimension ever exceeds $D$, then it must hold that $\mathcal{D}'_C(\vb{x}) = 0$, and so the algorithm outputs zero and terminates.  Otherwise, the algorithm outputs $\mathcal{D}'_C(\vb{x}) = \prod_{t=1}^{L_2} \mathcal{D}_C'(\vb{x}_t | \vb{x}_1, \dots, \vb{x}_{t-1})$. We summarize this procedure in \Cref{SEBD_probability}.

\begin{algorithm}[h]
\caption{\texttt{SEBD} for computing output probabilities}\label{SEBD_probability}
\hspace*{\algorithmicindent} \textbf{Input}: circuit instance $C$, truncation error $\epsilon$, bond dimension cutoff $D$, string $\vb{x} \in [q]^n$ \\
\hspace*{\algorithmicindent} \textbf{Output}: $\mathcal{D}'_C(\vb{x})$ \\
\hspace*{\algorithmicindent} \textbf{Runtime}: $O(nD^3)$ [$q$ and $d$ assumed to be constants]
\begin{algorithmic}[1]
\State initialize an MPS in the state $\dyad{1}^{\otimes L_1}$, corresponding to \texttt{column 1}
\For {$t=1\dots L_2$}
    \State compress MPS describing state by truncating small singular values, up to error $\epsilon$ per bond
     \State apply $V_t$, corresponding to gates in the lightcone of column $t$ not yet applied
     \State if some bond dimension is $> D$, terminate and output zero
    \State apply $\Pi_t^{\vb{x}_t}$ to condition on string $\vb{x}_t$
    \State compute $\mathcal{D}'_C (\vb{x}_t|\vb{x}_1, \dots, \vb{x}_{t-1})$ via MPS contraction
\EndFor
\Return $\mathcal{D}'_C(\vb{x}) = \prod_{t=1}^{L_2} \mathcal{D}'_C(\vb{x}_t | \vb{x}_1, \dots, \vb{x}_{t-1})$
\end{algorithmic}
\end{algorithm}

We have therefore shown the following.

\begin{lemma}
Let $p_{f,C}$ be the failure probability of \texttt{SEBD} when used to simulate a circuit instance $C$ with truncation error parameter $\epsilon$ and bond dimension cutoff $D$. Suppose $\vb{x} \in [q]^n$ is an output string drawn uniformly at random. Then \Cref{SEBD_probability} outputs a number $\mathcal{D}'_C(\vb{x})$ satisfying
\begin{equation}\label{eq:errorUniformStrings}
	\E_{\vb{x}} |\mathcal{D}'_C(\vb{x}) - \mathcal{D}_C(\vb{x})| \leq \frac{2 L_2 \sqrt{2 \epsilon L_1} + p_{f,C}}{q^n}.
\end{equation}
\end{lemma}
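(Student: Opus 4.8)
The plan is to identify the number returned by \Cref{SEBD_probability} with the probability that the sampling version \Cref{SEBD} assigns to the same output string, and then to invoke the total-variation bound of \Cref{cor:SEBD_error_bound} and rewrite the uniform average as a normalized sum.

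First I would argue that, for every $\vb{x} \in [q]^n$, the value $\mathcal{D}'_C(\vb{x})$ returned by \Cref{SEBD_probability} equals the probability that \Cref{SEBD}, run with the same $\epsilon$ and $D$, emits exactly $\vb{x}$. The key observation is that both algorithms perform identical deterministic operations in each iteration $t$ --- compression up to truncation error $\epsilon$ per bond, application of the isometry $V_t$, the bond-dimension check, and projection by $\Pi_t^{\vb{x}_t}$ --- and that the MPS obtained after these operations is a deterministic function of the conditioning prefix $\vb{x}_1,\dots,\vb{x}_{t-1}$ only, independent of whether column $t$ is subsequently sampled or prescribed. Consequently the conditional probability $\mathcal{D}'_C(\vb{x}_t \mid \vb{x}_1,\dots,\vb{x}_{t-1})$ computed by MPS contraction in \Cref{SEBD_probability} is precisely the probability with which \Cref{SEBD} would sample $\vb{x}_t$ in column $t$ given the same prefix. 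By the chain rule, the product $\prod_{t=1}^{L_2}\mathcal{D}'_C(\vb{x}_t \mid \vb{x}_1,\dots,\vb{x}_{t-1})$ is the probability that \Cref{SEBD} produces the full string $\vb{x}$. The failure branches also agree: if some bond dimension exceeds $D$ along the path determined by $\vb{x}$, then \Cref{SEBD} outputs \textsc{fail} and hence assigns probability zero to $\vb{x}$, while \Cref{SEBD_probability} returns zero; in either case the two quantities coincide. Thus the distribution $\mathcal{D}'_C$ over output strings (with total deficit $p_{f,C}$ attached to the \textsc{fail} event) is the same object for both algorithms.

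Next I would apply \Cref{cor:SEBD_error_bound}, which bounds the total variation distance of this shared $\mathcal{D}'_C$ from the true distribution $\mathcal{D}_C$ by $L_2\sqrt{2\epsilon L_1} + p_{f,C}$. Using the definition of the $1$-norm that records the \textsc{fail} outcome with weight $\tfrac12 p_{f,C}$, namely $\tfrac12\|\mathcal{D}'_C - \mathcal{D}_C\|_1 = \tfrac12\sum_{\vb{x}}|\mathcal{D}'_C(\vb{x}) - \mathcal{D}_C(\vb{x})| + \tfrac12 p_{f,C}$, the corollary rearranges to
\begin{equation}
\sum_{\vb{x}} |\mathcal{D}'_C(\vb{x}) - \mathcal{D}_C(\vb{x})| \leq 2 L_2 \sqrt{2\epsilon L_1} + p_{f,C}.
\end{equation}

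Finally, since $\vb{x}$ is drawn uniformly from the $q^n$ output strings, $\E_{\vb{x}}|\mathcal{D}'_C(\vb{x}) - \mathcal{D}_C(\vb{x})| = q^{-n}\sum_{\vb{x}}|\mathcal{D}'_C(\vb{x}) - \mathcal{D}_C(\vb{x})|$, and dividing the previous display by $q^n$ yields \Cref{eq:errorUniformStrings}. The only step requiring genuine care is the first: verifying that the per-column conditional probabilities and the failure conditions of the two algorithms coincide, which rests on the fact that the intermediate compressed states depend only on the conditioning prefix and not on whether the column outcome is being sampled or prescribed. The remaining steps are a direct rearrangement of \Cref{cor:SEBD_error_bound} together with the elementary rewriting of the uniform average as a normalized sum.
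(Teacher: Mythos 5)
Your proposal is correct and takes essentially the same route as the paper: identify the output of \Cref{SEBD_probability} with the probability $\mathcal{D}'_C(\vb{x})$ that the sampling algorithm assigns to $\vb{x}$ (with \textsc{fail} paths contributing zero), rearrange the total-variation bound of \Cref{cor:SEBD_error_bound} into $\sum_{\vb{x}} |\mathcal{D}'_C(\vb{x}) - \mathcal{D}_C(\vb{x})| \leq 2L_2\sqrt{2\epsilon L_1} + p_{f,C}$, and divide by $q^n$. The only difference is presentational: you spell out the per-column equivalence of the two algorithms, which the paper treats as immediate from the construction of \Cref{SEBD_probability}.
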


The above lemma bounds the expected error incurred while estimating a uniformly random output probability for a fixed circuit instance $C$.  We may use this lemma to straightforwardly bound the expected error incurred while estimating the  probability of a fixed output string over a distribution of random circuit instances. The corollary is applicable if the distribution of circuit instances has the property of being invariant under an application of a final layer of arbitrary single-qudit gates. This includes circuits in which all gates are Haar-random (as long as every qudit is acted on by some gate), but is more general. In particular, any circuit distribution in which the final gate to act on any given qudit is Haar-random satisfies this property. This fact will be relevant in subsequent sections.

\begin{corollary}\label{cor:SEBD_probability_bound}
Let $p_{f}$ be the failure probability of \texttt{SEBD} when used to simulate a random circuit instance $C$ with truncation error parameter $\epsilon$ and bond dimension cutoff $D$, where $C$ is drawn from a distribution that is invariant under application of a final layer of arbitrary single-qudit gates. Then for any fixed string $\vb{x} \in [q]^n$ the output of \Cref{SEBD_probability} satisfies
\begin{equation}
	\E_C |\mathcal{D}_C'(\vb{x}) - \mathcal{D}_C(\vb{x}) | \leq \frac{2L_2 \sqrt{2 \epsilon L_1} + p_{f}}{q^n}.
\end{equation}
\end{corollary}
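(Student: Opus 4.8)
The plan is to reduce this corollary to the preceding lemma, \Cref{eq:errorUniformStrings}, which already controls the \emph{average} error over a uniformly random output string for each \emph{fixed} circuit instance. The only mismatch is that here the string $\vb{x}$ is fixed while the circuit $C$ is random, so I want to use the assumed invariance of the circuit distribution to trade an average over strings for the average over circuits. Concretely, define $g(\vb{x}) := \E_C |\mathcal{D}'_C(\vb{x}) - \mathcal{D}_C(\vb{x})|$. The whole statement follows once I show that $g(\vb{x})$ does not depend on $\vb{x}$, since then
\begin{equation}
g(\vb{x}) = \frac{1}{q^n} \sum_{\vb{y}} g(\vb{y}) = \E_C\left[ \frac{1}{q^n}\sum_{\vb{y}} |\mathcal{D}'_C(\vb{y}) - \mathcal{D}_C(\vb{y})| \right] \leq \frac{2 L_2 \sqrt{2\epsilon L_1} + p_f}{q^n},
\end{equation}
where the middle equality is just linearity of expectation over the finite sum, and the inequality applies \Cref{eq:errorUniformStrings} inside $\E_C$ together with $p_f = \E_C\, p_{f,C}$.

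To prove $g$ is constant, I fix two strings $\vb{x}, \vb{x}'$ and choose the final single-qudit layer $V = \bigotimes_i V_i$ with each $V_i$ a permutation gate satisfying $V_i \ket{x_i'} = \ket{x_i}$, so that $\bra{\vb{x}} V = \bra{\vb{x}'}$. By the assumed invariance of the circuit distribution under appending such a layer, $\E_C f(C) = \E_C f(VC)$ for every functional $f$, hence $g(\vb{x}) = \E_C|\mathcal{D}'_{VC}(\vb{x}) - \mathcal{D}_{VC}(\vb{x})|$. For the exact output distribution the equivariance is immediate: $\mathcal{D}_{VC}(\vb{x}) = |\bra{\vb{x}} V C \ket{1}^{\otimes n}|^2 = |\bra{\vb{x}'} C \ket{1}^{\otimes n}|^2 = \mathcal{D}_C(\vb{x}')$. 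It remains to establish the matching identity for the algorithm's output, $\mathcal{D}'_{VC}(\vb{x}) = \mathcal{D}'_C(\vb{x}')$, which then yields $g(\vb{x}) = g(\vb{x}')$ and completes the reduction.

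I expect this last identity to be the main obstacle, because $\mathcal{D}'$ is defined through an entire run of \Cref{SEBD_probability}, including the compression step, the bond-dimension cutoff, and the failure branch that outputs zero, and all of these must be shown to be unaffected by $V$. The key structural fact is that each $V_i$ is the final gate acting on its qudit, so within the run on $VC$ it cannot belong to the lightcone of any earlier column and is therefore applied by the column isometry immediately before that qudit is projected. Projecting onto $\ket{x_i}$ after $V_i$ acts exactly like projecting onto $\ket{x_i'}$ without $V_i$: the operator $\ket{x_i}\bra{x_i} V_i = \ket{x_i}\bra{x_i'}$ produces the same conditional probability $|\langle x_i' | \phi \rangle|^2$ and, up to the irrelevant label of the qudit that is then traced out, the same post-measurement state. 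Moreover, since each $V_i$ is a single-site unitary it leaves every Schmidt spectrum across every cut unchanged at each step, so the truncated singular values, the resulting bond dimensions, and hence the failure event are literally identical in the two runs. Consequently both runs compute the same product of conditional probabilities (or both fail and output zero), giving $\mathcal{D}'_{VC}(\vb{x}) = \mathcal{D}'_C(\vb{x}')$. The points requiring care are that permutation gates suffice (they are a special case of the ``arbitrary single-qudit gates'' in the hypothesis) and that the per-step equivalence must be tracked jointly with the conditioning on previously measured columns; but no quantitative estimate beyond \Cref{eq:errorUniformStrings} is needed.
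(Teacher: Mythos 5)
Your proof is correct and takes essentially the same route as the paper: both reduce the fixed-string error to the uniformly-averaged-string bound of the preceding lemma by exploiting invariance of the circuit distribution under a final layer of single-qudit gates that maps one basis string to another. The only structural difference is cosmetic (you prove $\E_C\lvert\mathcal{D}'_C(\vb{x})-\mathcal{D}_C(\vb{x})\rvert$ is independent of $\vb{x}$ and then average, while the paper averages over strings first and applies the invariance inside the double expectation), and your explicit verification that the algorithm's output obeys $\mathcal{D}'_{VC}(\vb{x})=\mathcal{D}'_C(\vb{x}')$ --- via the lightcone placement of the final gates, the identity $\dyad{x_i}V_i = \ket{x_i}\bra{x_i'}$, and the invariance of Schmidt spectra (hence truncations and failure events) under single-site unitaries --- carefully fills in a step the paper leaves implicit.
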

\begin{proof}
Averaging the bound of \Cref{eq:errorUniformStrings} over random circuit instances, we have
\begin{equation}
	\E_{\vb{y}} \E_{C} |\mathcal{D}'_C(\vb{y}) - \mathcal{D}_C(\vb{y})| \leq \frac{2 L_2 \sqrt{2 \epsilon L_1} + p_{f}}{q^n}.
\end{equation}
Let $L_{\vb{y}}$ denote a layer of single-qudit gates with the property that $L_{\vb{y}}\ket{\vb{x}} = \ket{\vb{y}}$. By assumption, $C$ is distributed identically to the composition of $C$ with $L_{\vb{y}}$, denoted $L_{\vb{y}} \circ C$. Together with the observation that $\mathcal{D}_{L_{\vb{y}} \circ C}(\vb{y}) = \mathcal{D}_{C}(\vb{x})$, we have
\begin{equation}
\E_{\vb{y}} \E_{C} |\mathcal{D}'_C(\vb{y}) - \mathcal{D}_C(\vb{y})| = \E_{\vb{y}} \E_{C} |\mathcal{D}'_{L_{\vb{y}} \circ C}(\vb{y}) - \mathcal{D}_{L_{\vb{y}} \circ C}(\vb{y})| =  \E_{C} |\mathcal{D}'_C(\vb{x}) - \mathcal{D}_C(\vb{x})|,
\end{equation}
from which the result follows.
\end{proof}
The following asymptotic statement follows straightforwardly.
\begin{corollary}\label{cor:SEBD_postselected_asymptotic_bound}
Let $\mathcal{A}$ denote a \texttt{SEBD} algorithm with truncation error parameter $\epsilon$ and no bond dimension cutoff. Suppose that, for some random circuit family with $q=O(1)$ and $d=O(1)$, the expected bond dimension across any cut is bounded by $\poly(n, 1/\epsilon)$. Then, \texttt{SEBD} with some choice of $\epsilon = 1/\poly(n)$ and $D = \poly(n)$ runs in time $\poly(n,1/\varepsilon, 1/\delta)$ and, with probability at least $1-\delta$ over the choice of circuit instance $C$, estimates $\mathcal{D}_C(\vb{x})$ for some fixed $\vb{x} \in [q]^n$ up to additive error bounded by $\varepsilon/q^n$.
\end{corollary}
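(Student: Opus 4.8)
The plan is to reduce everything to \Cref{cor:SEBD_probability_bound}, which already bounds the \emph{expected} (over circuit instances) additive error for a fixed string $\vb{x}$, and then convert this into the claimed high-probability guarantee via Markov's inequality, in direct analogy to how \Cref{cor:SEBD_asymptotic_efficiency} was derived in the sampling case. Assuming (as holds for Haar-random families, and as is implicit in the circuit families of interest) that the distribution over $C$ is invariant under a final layer of single-qudit gates, \Cref{cor:SEBD_probability_bound} gives
\begin{equation}
\E_C |\mathcal{D}_C'(\vb{x}) - \mathcal{D}_C(\vb{x})| \leq \frac{2 L_2 \sqrt{2 \epsilon L_1} + p_f}{q^n}.
\end{equation}
The entire argument then reduces to choosing $\epsilon = 1/\poly$ and $D = \poly$ (in $n$, $1/\varepsilon$, $1/\delta$) so that the numerator satisfies $2 L_2 \sqrt{2 \epsilon L_1} + p_f \leq \varepsilon \delta$.

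First I would control the truncation term. Using $L_1 \leq L_2$ and $L_1 L_2 = n$, the quantity $2 L_2 \sqrt{2 \epsilon L_1}$ is $\poly(n)\cdot\sqrt{\epsilon}$, so a choice $\epsilon = \Theta(\varepsilon^2 \delta^2 / n^3)$, which is $1/\poly(n,1/\varepsilon,1/\delta)$ (and $1/\poly(n)$ for fixed $\varepsilon,\delta$), forces $2 L_2 \sqrt{2 \epsilon L_1} \leq \varepsilon\delta/2$. With $\epsilon$ now fixed, the hypothesis supplies a bound $\bar{D} = \poly(n,1/\epsilon) = \poly(n,1/\varepsilon,1/\delta)$ on the expected bond dimension across any single cut.

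Second I would control the failure probability $p_f$. The key observation is that the cutoff algorithm runs identically to the no-cutoff algorithm (same $\epsilon$) until the first moment a bond dimension exceeds $D$, so $p_f$ equals the probability that the no-cutoff algorithm's bond dimension exceeds $D$ at one of the $O(n)$ (bond position, iteration) pairs arising over the run. Applying Markov's inequality to the per-cut expectation gives $\Pr[\text{bond dim} > D] \leq \bar{D}/D$ at each cut, and a union bound over the $O(n)$ cuts yields $p_f \leq O(n)\bar{D}/D$; taking $D = O(n\bar{D}/(\varepsilon\delta)) = \poly(n,1/\varepsilon,1/\delta)$ then gives $p_f \leq \varepsilon\delta/2$. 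This Markov-plus-union-bound step---in particular the point that a bound on the \emph{per-cut} expected bond dimension still controls the \emph{worst-case-over-cuts} failure event---is the only mildly delicate part of the argument.

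Finally, with $2 L_2 \sqrt{2 \epsilon L_1} + p_f \leq \varepsilon\delta$ secured, Markov's inequality over the circuit instances gives
\begin{equation}
\Pr_C\left[ |\mathcal{D}_C'(\vb{x}) - \mathcal{D}_C(\vb{x})| > \frac{\varepsilon}{q^n} \right] \leq \frac{\E_C |\mathcal{D}_C'(\vb{x}) - \mathcal{D}_C(\vb{x})|}{\varepsilon/q^n} \leq \frac{2 L_2 \sqrt{2 \epsilon L_1} + p_f}{\varepsilon} \leq \delta,
\end{equation}
so with probability at least $1-\delta$ over $C$ the output has additive error at most $\varepsilon/q^n$. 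Because the chosen $\epsilon$ and $D$ are respectively $1/\poly$ and $\poly$ in $n,1/\varepsilon,1/\delta$, the runtime $O(nD^3)$ of \Cref{SEBD_probability} is $\poly(n,1/\varepsilon,1/\delta)$, which closes the argument.
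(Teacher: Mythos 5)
Your proof is correct and matches the paper's intent: the paper states this corollary as following ``straightforwardly'' from \Cref{cor:SEBD_probability_bound}, and your argument---invoking that corollary (noting the single-qudit-gate invariance, automatic for Haar-random families), choosing $\epsilon$ and $D$ to make the expected error at most $\varepsilon\delta/q^n$ via Markov plus a union bound over the $O(n)$ bond--iteration pairs, then applying Markov's inequality over circuit instances---is precisely the omitted routine derivation.
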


\Cref{cor:SEBD_postselected_asymptotic_bound} shows how the part of \Cref{con:efficient} about computing arbitrary output probabilities to error $\varepsilon/q^n$ would follow from a bound on the bond dimension across any cut when \texttt{SEBD} runs on a uniform worst-case-hard circuit family.

\subsection{\texttt{SEBD} applied to cluster state with Haar-random measurements (\texttt{CHR})}\label{se:CHR}
We now study the \texttt{SEBD} algorithm in more detail for a simple uniform family of 2D random circuits that possesses the worst-case-hard property required by \Cref{con:efficient}. The model we consider is the following: start with a 2D cluster state of $n$ qubits arranged in a $\sqrt{n}\times \sqrt{n}$ grid, apply a single-qubit Haar-random gate to each qubit, and then measure all qubits in the computational basis. Recall that a cluster state may be created by starting with the product state $\ket{+}^{\otimes n}$ before applying CZ gates between all adjacent sites. An equivalent formulation which we will find convenient in the subsequent section is to measure each qubit of the cluster state in a Haar-random basis. We refer to this model as \texttt{CHR}, for ``cluster state with Haar-random measurements''.

 It is straightforward to show, following \cite{bremner2010classical}, that sampling from the output distribution of \texttt{CHR} is classically \emph{worst-case hard} assuming the polynomial hierarchy (\textsf{PH}) does not collapse to the third level.
\begin{lemma}\label{HardnessOfCluster}
Suppose that there exists a polynomial-time classical algorithm for \texttt{CHR} that, for any circuit realization, samples from the classical output distribution. Then \textsf{PH} collapses to the third level.
\end{lemma}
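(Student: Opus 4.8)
The plan is to follow the post-selection argument of Bremner, Jozsa, and Shepherd \cite{bremner2010classical}, combined with Aaronson's identity $\textsf{PostBQP} = \textsf{PP}$ \cite{aaronson2005quantum}. The central task is to show that \texttt{CHR} circuits are \emph{universal under post-selection}; a hypothetical exact classical sampler then lets us carry out any such post-selected computation in $\textsf{PostBPP}$, which forces $\textsf{PP} \subseteq \textsf{PostBPP}$ and, via Toda's theorem, collapses $\PH$ to its third level. The circuit-family-specific content is entirely in the universality step, and everything downstream is standard complexity theory.

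First I would establish that post-selected \texttt{CHR} contains $\textsf{PostBQP}$. Recall that a 2D cluster state together with arbitrary single-qubit measurements is a universal resource for measurement-based quantum computation \cite{raussendorf2001one,broadbent2009universal}: any polynomial-size quantum circuit compiles into a pattern of single-qubit measurements on a polynomially sized cluster state, using $XY$-plane bases for logical gates and $Z$-basis measurements to carve the cluster into the desired graph. In \texttt{CHR}, applying a single-qubit gate $U$ before a computational-basis measurement is exactly a measurement in the basis $\{U^\dagger\ket{0},U^\dagger\ket{1}\}$, and since the gates range over all of $SU(2)$, every basis required by the pattern is realized by \emph{some} assignment of the gates. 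Because the hypothesized classical algorithm is assumed to succeed for \emph{any} circuit realization, it in particular succeeds for the fixed realization whose gates implement the target measurement pattern; this is precisely where the worst-case (rather than average-case) nature of the hypothesis is used.

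The one subtlety is that standard MBQC is adaptive: each measurement basis must be rotated to compensate for the random Pauli byproduct operators produced by earlier outcomes. I would eliminate this by post-selection, fixing every measurement outcome to the value for which all byproduct corrections are trivial. This event has nonzero (if exponentially small) probability, so it is a legitimate post-selection, and conditioned on it all bases may be fixed \emph{in advance}, non-adaptively. Hence every $\textsf{PostBQP}$ computation is expressible as a post-selected \texttt{CHR} circuit, and by Aaronson's theorem $\textsf{PP} = \textsf{PostBQP}$ is contained in the class of problems solvable by post-selected \texttt{CHR} circuits. To close the loop, suppose a polynomial-time classical algorithm samples exactly from $\mathcal{D}_C$ for every realization $C$: given a post-selected \texttt{CHR} computation, run the sampler and discard samples inconsistent with the post-selection event. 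Because the sampling is \emph{exact}, the accepted samples are distributed exactly as the post-selected quantum output, so the computation is carried out in $\textsf{PostBPP}$. Combining the two inclusions yields $\textsf{PP}\subseteq\textsf{PostBPP}$, and by the standard argument (Toda's theorem $\PH\subseteq\textsf{P}^{\textsf{PP}}$ together with the containment of $\textsf{PostBPP}$ in the third level of $\PH$; see \cite{bremner2010classical,aaronson2005quantum}) this forces $\PH$ to collapse to its third level.

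I expect the main obstacle to be the universality step of the second and third paragraphs: verifying that \texttt{CHR} genuinely inherits the post-selection universality of cluster-state MBQC, i.e.\ that the available single-qubit bases suffice and that post-selecting the byproduct outcomes truly removes all adaptivity so that a fixed, non-adaptive pattern recovers all of $\textsf{PostBQP}$. The final complexity-theoretic collapse, by contrast, is entirely off-the-shelf.
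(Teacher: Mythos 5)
Your proposal is correct and takes essentially the same route as the paper's proof: measurement-based universality of the cluster state, post-selection on measurement outcomes (the ``all byproducts trivial'' event, which the paper phrases as postselecting outcomes on zero) to remove adaptivity, and the Bremner--Jozsa--Shepherd argument that an exact classical sampler would give $\textsf{PostBPP} \supseteq \textsf{PostBQP} = \textsf{PP}$ and hence collapse \PH{} to the third level. The only difference is that you spell out the steps (gate-before-measurement equals choice of basis, worst-case hypothesis applied to the specific realization, Aaronson's theorem and Toda) that the paper compresses into citations to \cite{raussendorf2001one} and \cite{bremner2010classical}.
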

\begin{proof}
Recall that the cluster state is a resource state for universal measurement-based quantum computation (MBQC) \cite{raussendorf2001one}. Hence, it is \textsf{PostBQP}-hard to sample from the conditional output distribution of an arbitrary instance of \texttt{CHR} with some outcomes postselected on zero. Therefore, if there is some efficient classical sampling algorithm, then \textsf{PostBPP} = \textsf{PostBQP} which implies that \textsf{PH} collapses to the third level \cite{bremner2010classical}.
\end{proof}

It can also be shown, following \cite{movassagh2019}, that near-exactly computing output probabilities of \texttt{CHR} is \SharpP-hard in the average case.

\begin{lemma}[Follows from \cite{movassagh2019}]\label{lem:ramis}
Suppose there exists an algorithm $\mathcal{A}$ that, given a random instance $C$ of \texttt{CHR} and fixed string $\vb{x}$, with probability $1-1/p(n)$ outputs $\mathcal{D}_C(\vb{x}) = |\langle \vb{x} | C | 0\rangle ^{\otimes n}|^2$ up to additive error $2^{-\tilde{\Theta}(n^2)}$, where $p(n)$ is a sufficiently large polynomial. Then  $\mathcal{A}$ can be used to compute a \SharpP-complete function with high probability in polynomial time.
\end{lemma}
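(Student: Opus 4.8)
The plan is to follow the worst-to-average-case reduction of \cite{movassagh2019}, specialized to the \texttt{CHR} family. The starting point is worst-case \SharpP-hardness: for a suitable fixed choice of the single-qubit gates, computing $\mathcal{D}_C(\vb{x})$ exactly is \SharpP-hard. Since the cluster state is a universal resource for MBQC \cite{raussendorf2001one}, a specific assignment of single-qubit measurement bases (equivalently, single-qubit gates) implements a universal quantum circuit whose acceptance amplitude encodes a \SharpP-hard quantity via the standard encoding of a counting problem into an output amplitude. Call the corresponding worst-case instance $C^\star$; then exactly evaluating $\mathcal{D}_{C^\star}(\vb{x})$ is \SharpP-hard. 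The goal is to show that the hypothesized average-case algorithm $\mathcal{A}$ can be bootstrapped into a randomized, high-probability exact evaluator of $\mathcal{D}_{C^\star}(\vb{x})$.

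The engine of the reduction is the Cayley path of \cite{movassagh2019}. Given the worst-case gates and a Haar-random set of gates (the only free parameters of a \texttt{CHR} instance), I would interpolate each gate along its Cayley path, producing a family of instances $C_\theta$, $\theta \in [0,1]$, with the three properties that make the reduction work: (i) $C_\theta$ is an \emph{exactly} valid \texttt{CHR} instance for every real $\theta$, since the Cayley transform maps the real line into the unitary group; (ii) every matrix entry of every gate is a rational function of $\theta$ of degree $O(1)$, so after clearing a common denominator the probability $p(\theta) := \mathcal{D}_{C_\theta}(\vb{x}) = |\langle \vb{x} | C_\theta | 0\rangle^{\otimes n}|^2$ is a polynomial in $\theta$ of degree $D = O(n)$ (each of the $\Theta(n)$ gates contributes bounded degree, and squaring the amplitude at most doubles it); and (iii) for $\theta$ drawn from a small interval near the Haar endpoint, the induced distribution of $C_\theta$ is close in total variation to the Haar-random \texttt{CHR} distribution, so the success guarantee of $\mathcal{A}$ (which holds on a $1 - 1/p(n)$ fraction of Haar instances) transfers to these shifted points up to a small loss.

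Given these ingredients, the reduction evaluates the polynomial at $m = \Theta(D)$ nodes $\theta_1, \dots, \theta_m$ clustered near the Haar endpoint, runs $\mathcal{A}$ at each to obtain approximations $\tilde{p}_j \approx p(\theta_j)$, and then reconstructs $p$ by robust polynomial interpolation, extrapolating to the value of $\theta$ corresponding to $C^\star$. To control the failure probability of $\mathcal{A}$ across all $m = \poly(n)$ queries, I would either take $p(n)$ large enough that a union bound guarantees every query succeeds with high probability, or treat the evaluations as a noisy Reed--Solomon codeword and apply Berlekamp--Welch decoding to tolerate a constant fraction of erroneous points; either route is why the statement only requires $p(n)$ to be a sufficiently large polynomial.

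The main obstacle is the error analysis of the extrapolation. Because we sample near one endpoint and extrapolate to the other, the Lagrange coefficients are large and amplify the per-point additive error, and the reduction succeeds only if this amplified error stays below the inverse-exponential precision needed to pin down the \SharpP-hard value (a rational number of bounded denominator). The crux of \cite{movassagh2019} is that the Cayley path yields an honest low-degree polynomial with conditioning controlled well enough that a per-point additive error of $2^{-\tilde{\Theta}(n^2)}$ suffices, the $n^2$ arising from the interplay between the degree $D = \tilde{\Theta}(n)$ and the $\tilde{\Theta}(n)$ bits of final precision required. Verifying that the \texttt{CHR} architecture --- whose only randomness is in single-qubit gates atop a fixed cluster-state backbone --- fits the hypotheses of the Cayley-path theorem, in particular the closeness-to-Haar step (iii) for the single-qubit Haar measure, is the only model-specific work; the remainder is a direct invocation of \cite{movassagh2019}.
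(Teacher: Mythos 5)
Your proposal is correct and takes essentially the same route as the paper: the paper also derives worst-case \SharpP-hardness of \texttt{CHR} from the cluster state's universality for MBQC (so that a specific choice of single-qubit gates encodes a \SharpP-hard output probability) and then invokes the Cayley-path worst-to-average-case reduction of \cite{movassagh2019} as a black box, with the $2^{-\tilde{\Theta}(n^2)}$ error tolerance inherited directly from that theorem. Your additional unpacking of the reduction's internals (rational-function structure of the Cayley path, closeness to Haar near the endpoint, robust interpolation/extrapolation and its error amplification) is content of the cited result rather than new model-specific work, and is consistent with it.
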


Under standard complexity theoretic assumptions, \Cref{HardnessOfCluster} rules out the existence of a classical sampling algorithm for \texttt{CHR} that succeeds for all instances, and  \Cref{lem:ramis} rules out the existence of an algorithm for efficiently computing most output probabilities of \texttt{CHR}. A natural question is then whether efficient approximate average-case versions of these algorithms   may exist. We formalize these questions as the problems $\texttt{CHR}_{\pm}^{\texttt{samp/prob}}$.

\begin{problem}[$\texttt{CHR}_{\pm}^{\texttt{samp/prob}}$]
Given as input a random instance $C$ of \texttt{CHR} (specified by a sidelength $\sqrt{n}$ and a set of $n$ single-qubit Haar-random gates applied to the $\sqrt{n} \times \sqrt{n}$ cluster state) and error parameters $\varepsilon$ and $\delta$, perform the following computational task in time $\poly(n,1/\varepsilon,1/\delta)$.
\begin{itemize}
\item $\texttt{CHR}_{\pm}^{\texttt{samp}}$. Sample from a distribution $\mathcal{D}'_C$ that is $\varepsilon$-close in total variation distance to the true output distribution $\mathcal{D}_C$ of circuit $C$, with probability of success at least $1-\delta$ over the choice of measurement bases. \\
\item $\texttt{CHR}_{\pm}^{\texttt{prob}}$. Estimate $\mathcal{D}_C(\vb{0})$, the probability of obtaining the all-zeros string upon measuring the output state of $C$ in the computational basis, up to additive error at most $\varepsilon/2^n$, with probability of success at least $1-\delta$ over the choice of measurement bases.
\end{itemize}
\end{problem}

In the next section, we show that \texttt{SEBD} solves $\texttt{CHR}_{\pm}^{\texttt{samp/prob}}$ if a certain form of 1D dynamics involving local unitary gates and measurements is classically simulable.

\subsubsection{\texttt{SEBD} applied to \texttt{CHR}}
We first consider the sampling variant of \texttt{SEBD}.  Specializing to the \texttt{CHR} model, the algorithm takes on a particularly simple form due to the fact that the cluster state is built by applying CZ gates between all neighboring pairs of qubits, which are initialized in $\ket{+}$ states. Due to this structure, the radius of the lightcone for this model is simply one. In particular, the only gates in the lightcone of \texttt{columns 1-j} are the Haar-random single-qubit gates acting on qubits in these columns, as well as CZ gates that act on at least one qubit within these columns. This permits a simple prescription for \texttt{SEBD} applied to this problem.

Initialize the simulation algorithm in the state $\rho_1 = \dyad{+}^{\otimes \sqrt{n}}$ corresponding to \texttt{column 1}. To implement the isometry $V_1$, initialize the qubits of \texttt{column 2} in the state $\dyad{+}^{\otimes \sqrt{n}}$ and apply CZ gates between adjacent qubits that are both in \texttt{column 1} and between adjacent qubits in separate columns. Now, measure the qubits of \texttt{column 1} in the specified Haar-random bases (equivalently, apply the specified Haar-random gates and measure in the computational basis), inducing a pure state $\rho_2$ with support in \texttt{column 2}. Iterating this process, we progress through a random sequence of 1D states on $\sqrt{n}$ qubits $\rho_1 \rightarrow \rho_2 \rightarrow \dots \rightarrow \rho_{\sqrt{n}}$ which we will see can be equivalently understood as arising from a 1D dynamical process consisting of alternating layers of random unitary gates and weak measurements.

It will be helpful to introduce notation. Define $\ket{\theta, \phi} := \cos\qty(\frac{\theta}{2})\ket{0} + e^{i\phi}\sin\qty(\frac{\theta}{2})\ket{1}$. In other words, let $\ket{\theta, \phi}$ denote the single-qubit pure state with polar angle $\theta$ and azimuthal angle $\phi$ on the Bloch sphere. Let $\theta^{(t)}_i$ and $\phi^{(t)}_i$ specify the measurement basis of the qubit in row $i$ and column $t$; that is, the projective measurement on the qubit in row $i$ and column $t$ is $\{\Pi_{\theta^{(t)}_i,\phi^{(t)}_i}^0, \Pi_{\theta^{(t)}_i,\phi^{(t)}_i}^1\}$ with $\Pi_{\theta^{(t)}_i, \phi^{(t)}_i}^0 := \dyad{\theta^{(t)}_i, \phi^{(t)}_i}$ and   $\Pi_{\theta^{(t)}_i, \phi^{(t)}_i}^1 := I- \Pi_{\theta^{(t)}_i, \phi^{(t)}_i}^0$. We also define
\begin{subequations}\label{eq:M0M1}
  \begin{align}
M_0(\theta, \phi) &:= \mqty(\cos(\theta/2) & 0 \\ 0 & e^{-i \phi} \sin(\theta/2))  \\
M_1(\theta, \phi) &:= \mqty(\sin(\theta/2) & 0 \\ 0 & e^{i \phi} \cos(\theta/2)).
  \end{align}
  \end{subequations}
Note that $\{M_0(\theta,\phi), M_1 (\theta, \phi)\}$ defines a weak single-qubit measurement. We now describe, in \Cref{effective}, a 1D  process which we claim produces a sequence of states identical to that encountered by \texttt{SEBD} for the same choice of measurement bases and measurement outcomes, and also has the same measurement statistics.

\begin{algorithm}
\caption{Effective 1D dynamics of a fixed instance of \texttt{CHR}}\label{effective}
\begin{algorithmic}[1]
\State $\varphi_1 \gets \dyad{+}^{\otimes \sqrt{n}}$.
\For {$t =1\dots \sqrt{n}-1$}
	\State apply a CZ gate between every adjacent pair of qubits
	\State measure $\{ M_0(\theta^{(t)}_i,\phi^{(t)}_i), M_1(\theta^{(t)}_i,\phi^{(t)}_i) \}$ on qubit $i$, obtaining $X^{(t)}_i$, for $i \in [\sqrt{n}]$
	\State apply a Hadamard transform
	\State $\varphi_{t+1} \gets $ resulting state
\EndFor
\State measure $\qty{\Pi_{\theta^{(\sqrt{n})}_i,\phi^{(\sqrt{n})}_i}^0, \Pi_{\theta^{(\sqrt{n})}_i,\phi^{(\sqrt{n})}_i}^1 }$ on qubit $i$,  obtaining $X_i^{(\sqrt{n})}$, for $i \in [\sqrt{n}]$
\end{algorithmic}
\end{algorithm}

\begin{lemma}
For a fixed choice of $\{\theta^{(t)}_i, \phi^{(t)}_i\}$ parameters, the joint distribution of outcomes $\{ X_i^{(t)} \}_{i,t}$ is identical to that of $\{ Y_i^{(t)} \}_{i,t}$, where $\{Y_{i}^{(t)}\}_{i,t}$ are the measurement outcomes obtained upon measuring all qubits of a $\sqrt{n} \times \sqrt{n}$ cluster state, with the measurement on the qubit in row $i$ and column $t$ being $\{\Pi_{\theta^{(t)}_i,\phi^{(t)}_i}^0, \Pi_{\theta^{(t)}_i,\phi^{(t)}_i}^1\}$. Furthermore, for any fixed choice of measurement outcomes, $\varphi_j = \rho_j$ for all $j \in [\sqrt{n}]$, where $\rho_j$ is the state at the beginning of iteration $j$ of the \texttt{SEBD} algorithm.
\end{lemma}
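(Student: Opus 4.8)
The plan is to prove the claim by induction on the column index $j$, reducing the equivalence of the two processes to a single-row ``gate-teleportation'' identity for cluster states and then tensoring and composing. First I would pin down the per-column map that \texttt{SEBD} implements on \texttt{CHR}. Because the cluster-state lightcone has radius one (as noted above), processing \texttt{column} $t$ consists only of: (i) adding the $\sqrt{n}$ fresh qubits of \texttt{column} $t+1$ in the state $\ket{+}$; (ii) applying the not-yet-applied $\mathrm{CZ}$ gates in the lightcone, namely the vertical $\mathrm{CZ}$s within \texttt{column} $t$ and the horizontal $\mathrm{CZ}$s between rows of \texttt{column} $t$ and \texttt{column} $t+1$; and (iii) projectively measuring \texttt{column} $t$ in the prescribed bases, which transfers a pure state onto \texttt{column} $t+1$. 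Identifying row $i$ of \texttt{column} $t$ with row $i$ of \texttt{column} $t+1$, this is exactly a map on a single $\sqrt{n}$-qubit register, and the vertical $\mathrm{CZ}$s coincide with the ``$\mathrm{CZ}$ between every adjacent pair'' step of \Cref{effective}.

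The heart of the argument is the single-row identity. Fix one row, let $a$ be the \texttt{column}-$t$ qubit (in an arbitrary state $\ket{\psi}$, possibly entangled with other rows) and $b$ the fresh \texttt{column}-$(t+1)$ qubit in $\ket{+}$. I would show by direct computation that, viewing the post-measurement branch as an operator from $\mathcal{H}_a$ to $\mathcal{H}_b$,
\begin{equation}
\bra{y}_{\theta,\phi}\,\mathrm{CZ}_{ab}\,\ket{+}_b \;=\; H\,M_y(\theta,\phi),
\end{equation}
up to an outcome-dependent byproduct phase, where $\ket{y}_{\theta,\phi}$ is the basis vector for outcome $y$ of the row's measurement. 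For $y=0$ this is immediate: expanding $\mathrm{CZ}(\ket{\psi}\otimes\ket{+})$ and projecting onto $\ket{\theta,\phi}$ produces $H M_0(\theta,\phi)\ket{\psi}$ on $b$ exactly. The two key consequences are that (a) $\{M_0,M_1\}$ is a valid weak measurement in the computational basis, so the effective POVM on $a$ induced by coupling to the $\ket{+}$ ancilla and measuring, $\qty{\bra{+}_b\,\mathrm{CZ}_{ab}^\dagger\,\Pi^y_{\theta,\phi}\,\mathrm{CZ}_{ab}\,\ket{+}_b}$, equals $\qty{M_y^\dagger M_y}$ (the Hadamard and any byproduct being unitary, hence irrelevant to $T_y^\dagger T_y$), giving matching outcome probabilities; and (b) the residual Hadamard is precisely the ``apply a Hadamard transform'' step of \Cref{effective}.

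With the single-row identity in hand, the full statement follows by tensoring and induction. Since every gate in a given layer acts on disjoint qubit pairs, the per-row maps tensor to give the full per-column map, which is then literally the body of the loop in \Cref{effective}. The base case is $\varphi_1=\rho_1=\dyad{+}^{\otimes\sqrt{n}}$. Assuming $\varphi_t=\rho_t$ for a fixed outcome history, applying the identical per-column maps to both yields $\varphi_{t+1}=\rho_{t+1}$; and since the conditional outcome distributions of the two processes agree at step $t$ (by the POVM equality), the chain rule gives equality of the joint distributions of $\qty{X_i^{(t)}}$ and $\qty{Y_i^{(t)}}$. The final projective measurement of the last column matches by construction.

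The main obstacle is the phase/byproduct bookkeeping in the single-row identity for the nontrivial outcome $y=1$. The projector $\Pi^1_{\theta,\phi}=I-\dyad{\theta,\phi}$ only fixes the transferred state up to a choice of representative for the orthogonal complement, and one must check that the conventions for $\ket{\theta,\phi}$, the Hadamard, and the phases appearing in $M_0,M_1$ are mutually consistent, so that the post-measurement \emph{states} (not merely the POVM elements, which match automatically) agree. Concretely, one must verify that any outcome-dependent single-qubit byproduct is either removed by the chosen conventions or is a diagonal phase that can be absorbed into the definition of $M_1$ without altering $M_1^\dagger M_1$; this is exactly what guarantees that the inductive step preserves $\varphi_{t+1}=\rho_{t+1}$ on the nose rather than only up to a local correction. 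Everything else is routine linear algebra.
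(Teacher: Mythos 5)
Your proposal is correct and takes essentially the same approach as the paper: the paper's proof consists precisely of your single-row identity, stated as $(\Pi^y_{\theta,\phi}\otimes I)\,CZ\,(\ket{\xi}\otimes\ket{+}) = \ket{\cdot}\otimes H M_y(\theta,\phi)\ket{\xi}$ for $y\in\{0,1\}$, with the tensoring over rows and the induction over columns left implicit in the preceding description of \texttt{SEBD} applied to \texttt{CHR}. The phase bookkeeping you flag for $y=1$ is indeed the only delicate point: carrying out the computation with the paper's conventions gives the outcome-$1$ Kraus operator $\mathrm{diag}\qty(\sin(\theta/2),\,-e^{-i\phi}\cos(\theta/2))$, a diagonal-phase variant of $M_1$ that leaves $M_1^\dagger M_1$ (hence all outcome statistics) unchanged and is removed exactly by the absorption into $M_1$ that you describe.
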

\begin{proof}
The lemma follows from the above description of the behavior of \texttt{SEBD} applied to \texttt{CHR}, as well as the following identities holding for any single-qubit state $\ket{\xi}$ which may be verified by straightforward calculation:
\begin{align}
(\Pi_{\theta,\phi}^0 \otimes I) CZ (\ket{\xi}\otimes \ket{+}) &= \ket{\theta, \phi} \otimes H M_0(\theta,\phi) \ket{\xi} \\
(\Pi_{\theta,\phi}^1 \otimes I) CZ (\ket{\xi}\otimes \ket{+}) &=\ket{\pi - \theta, -\phi} \otimes H  M_1(\theta,\phi) \ket{\xi}.
\end{align} \end{proof}

\sloppy We have seen that, for a fixed choice of single-qubit measurement bases $\{\theta^{(t)}_j, \phi^{(t)}_j \}_{t,j}$ associated with an instance $C$, we can define an associated 1D process consisting of alternating layers of single-qubit weak measurements and local unitary gates, such that simulating this 1D process is sufficient for sampling from $\mathcal{D}_C$.

Now, recall that in the context of simulating \texttt{CHR}, each single-qubit measurement basis is chosen randomly according to the Haar measure. That is, the Bloch sphere angles $(\theta^{(t)}_i, \phi^{(t)}_i)$ are Haar-distributed. If we define $x^{(t)}_i \equiv \cos \theta^{(t)}_i$, we find that $x^{(t)}_i$ is uniformly distributed on the interval $[-1,1]$. The parameters $\phi^{(t)}_i$ are uniformly distributed on $[0,2\pi]$. Using these observations, as well as the observation that the outcome probabilities of the measurement of qubit $i$ in iteration $t$ are independent of the azimuthal angle $\phi^{(t)}_i$ when $t < \sqrt{n}$, we may derive effective dynamics of a random instance.

Define the operators
$$ N(x) := \mqty(\sqrt{\frac{1+x}{2}} & 0 \\ 0 & \sqrt{\frac{1-x}{2}}), \; \; x \in [-1,1] .$$
Note that $\{ N(x), N(-x) \}$ defines a weak measurement. Also, define the phase gate
$$ P(\phi) := \mqty(1 & 0 \\ 0 & e^{i\phi}), \; \; \phi \in [0,2\pi]. $$

By randomizing each single-qubit measurement basis according to the Haar distribution, one finds that the dynamics of \Cref{effective} (which applies for a fixed choice of measurement bases) may be written as \Cref{RandomEffective} below, where the notation $x\in_U [-1,1]$ means that $x$ is a random variable uniformly distributed on $[-1,1]$. That is, the distribution of random sequences $\varphi_1 \rightarrow \varphi_2 \rightarrow \dots \rightarrow \varphi_{\sqrt{n}}$ and distribution of output statistics  produced by \Cref{RandomEffective} is identical to that produced by \texttt{SEBD} applied to \texttt{CHR}.

\begin{algorithm}
\caption{Effective 1D dynamics of \texttt{CHR}}\label{RandomEffective}
\begin{algorithmic}[1]
\State $\varphi_1 \gets \dyad{+}^{\otimes \sqrt{n}}$.
\For {$t =1\dots \sqrt{n}-1$}
	\State apply a CZ gate between every adjacent pair of qubits
	\For {$i=1 \dots \sqrt{n}$}
		\State measure $\{ N(x), N(-x) \}$ on qubit $i$ with $x \in_U [-1,1]$
		\State apply the gate $P(\phi)$ with $\phi\in_U [0,2\pi]$ to qubit $i$
	\EndFor
	\State apply a Hadamard transform
	\State $\varphi_{t+1} \gets $ resulting state
\EndFor
	\State perform a projective measurement on each qubit in a Haar-random basis
\end{algorithmic}
\end{algorithm}

Hence, if \texttt{TEBD} can efficiently simulate the process of \Cref{RandomEffective} with high probability, then \texttt{SEBD} can solve $\texttt{CHR}_{\pm}^{\texttt{samp}}$ and $\texttt{CHR}_{\pm}^{\texttt{prob}}$. We formalize this in the following lemma.

\begin{lemma}
Suppose that \texttt{TEBD} can efficiently simulate the process described in  \Cref{RandomEffective} in the sense that the expected bond dimension across any cut is bounded by $\poly(n,1/\epsilon)$ where $\epsilon$ is the truncation error parameter. Then \texttt{SEBD} can be used to solve $\texttt{CHR}_{\pm}^{\texttt{samp}}$ and $\texttt{CHR}_{\pm}^{\texttt{prob}}$.
\end{lemma}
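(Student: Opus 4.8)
The plan is to chain the equivalence just established between \texttt{SEBD} applied to \texttt{CHR} and the effective 1D process of \Cref{RandomEffective} with the asymptotic-efficiency corollaries proved earlier, \Cref{cor:SEBD_asymptotic_efficiency} and \Cref{cor:SEBD_postselected_asymptotic_bound}. The key observation is that the lemma preceding \Cref{RandomEffective} establishes not merely that the output statistics agree, but that the entire random sequence of 1D states $\varphi_1 \to \dots \to \varphi_{\sqrt n}$ produced by \Cref{RandomEffective} has the same joint distribution as the sequence $\rho_1 \to \dots \to \rho_{\sqrt n}$ encountered by \texttt{SEBD} when run on a Haar-random instance of \texttt{CHR}. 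Since \texttt{SEBD} applied to \texttt{CHR} \emph{is} \texttt{TEBD} applied to this effective 1D dynamics---the two algorithms manipulate the identical sequence of MPS under the same per-bond truncation rule with parameter $\epsilon$---the bond dimension across any cut at any step is the same random variable in both pictures. Hence the hypothesis that \texttt{TEBD} simulates \Cref{RandomEffective} with expected bond dimension across any cut bounded by $\poly(n,1/\epsilon)$ transfers verbatim into the statement that \texttt{SEBD} run on \texttt{CHR} has expected bond dimension across any cut bounded by $\poly(n,1/\epsilon)$, where the expectation is over both the Haar-random bases and the measurement outcomes.

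With this bound in hand, I would verify that \texttt{CHR} meets the remaining hypotheses of the two corollaries and then invoke them. \texttt{CHR} acts on qubits, so $q=2=O(1)$, and its lightcone radius is one so the effective depth is a constant $d = O(1)$; \Cref{cor:SEBD_asymptotic_efficiency} therefore applies directly, yielding a choice of $\epsilon = 1/\poly(n)$ and cutoff $D = \poly(n)$ for which \texttt{SEBD} runs in time $\poly(n, 1/\varepsilon, 1/\delta)$ and, with probability at least $1-\delta$ over the Haar-random choice of measurement bases, samples from a distribution within total variation distance $\varepsilon$ of $\mathcal{D}_C$---precisely the task $\texttt{CHR}_{\pm}^{\texttt{samp}}$. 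For the probability variant I would additionally note that the \texttt{CHR} ensemble is invariant under postcomposition with an arbitrary final layer of single-qubit gates, since the last gate acting on each qubit is Haar-random; this is exactly the condition under which \Cref{cor:SEBD_postselected_asymptotic_bound} applies. Taking the fixed string $\vb{x} = \vb{0}$, that corollary then gives, with the same parameter choices and in time $\poly(n,1/\varepsilon,1/\delta)$, an estimate of $\mathcal{D}_C(\vb{0})$ accurate to additive error $\varepsilon/q^n = \varepsilon/2^n$ with probability at least $1-\delta$, solving $\texttt{CHR}_{\pm}^{\texttt{prob}}$.

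Because the argument is essentially a composition of results already in hand, the only genuinely substantive step---and thus the main thing to get right---is the transfer of the bond-dimension bound across the equivalence in the first paragraph. One must confirm that ``efficiently simulable'' for \Cref{RandomEffective} refers to the same notion of expected bond dimension (averaged over both bases and outcomes, under the identical per-bond truncation error $\epsilon$) that the corollaries consume, so that no quantifier over the randomness is mismatched. Everything downstream is then a direct application of \Cref{cor:SEBD_asymptotic_efficiency} and \Cref{cor:SEBD_postselected_asymptotic_bound}.
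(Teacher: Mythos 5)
Your proposal is correct and matches the paper's own proof, which is exactly the one-line composition of \Cref{cor:SEBD_asymptotic_efficiency}, \Cref{cor:SEBD_postselected_asymptotic_bound}, and the equivalence between \texttt{SEBD} on \texttt{CHR} and the dynamics of \Cref{RandomEffective}; you have simply spelled out the details (transfer of the bond-dimension bound through the equivalence, $q,d=O(1)$, and the invariance of \texttt{CHR} under a final layer of single-qubit gates) that the paper leaves implicit. No gaps.
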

\begin{proof}
Follows from \Cref{cor:SEBD_asymptotic_efficiency}, \Cref{cor:SEBD_postselected_asymptotic_bound}, and the equivalence to \Cref{RandomEffective} discussed above.
\end{proof}

We have shown how \texttt{SEBD} applied to \texttt{CHR} can be reinterpreted as \texttt{TEBD} applied to a 1D dynamical process involving alternating layers of random unitaries and weak measurements. Up until this point, there has been little reason to expect that \texttt{SEBD} is efficient for the simulation of \texttt{CHR}. In particular, with no truncation, the bond dimension of the MPS stored by the algorithm grows exponentially as the algorithm sweeps across the lattice.

We now invoke the findings of a number of related recent works \cite{li2018quantum, chan2018weak, skinner2019measurement, li2019measurement, szyniszewski2019entanglement, choi2019quantum, gullans2019dynamical,bao2019theory,jian2019measurement,gullans2019scalable, zabalo2019critical}  to motivate the possibility that \texttt{TEBD} can efficiently simulate the effective 1D dynamics. These works study various 1D dynamical processes involving alternating layers of measurements and random local unitaries. In some cases, the measurements are considered to be projective and only occur with some probability $p$. In other cases, similarly to \Cref{RandomEffective}, weak measurements are applied to each site with probability one. The common finding of these papers is that such models appear to exhibit an entanglement phase transition driven by measurement probability $p$ (in the former case), or measurement strength (in the latter case). On one side of the transition, the entanglement entropy obeys an area law, scaling as $O(1)$ with the length $L$. On the other side, it obeys a volume law, scaling as $O(L)$.

Based on these works, one expects the entanglement dynamics to saturate to an area-law or volume-law phase. And in fact, our numerical  studies (presented in \Cref{se:numerics})  suggest that these dynamics saturate to an area-law phase.   The common intuition that 1D quantum systems obeying an area law for the von Neumann entropy are easy to simulate with matrix product states therefore suggests that \texttt{SEBD} applied to this problem is efficient. While counterexamples to this common intuition are known \cite{schuch2008entropy}, they are contrived and do not present an obvious obstruction for our algorithm. To better understand the relationship between maximal bond dimension and truncation error when the effective dynamics is in the area-law phase as well as rule out such counterexamples, in the following section we describe a toy model for a unitary-and-measurement process in the area-law phase, which predicts a superpolynomial decay of Schmidt values across any cut and therefore predicts that a polynomial runtime is sufficient to perform the simulation to $1/\poly(n)$ error. Our numerical results (presented in  \Cref{se:numerics}) suggest that the effective dynamics of the random circuit architectures we consider are indeed in the area-law phase, with entanglement spectra consistent with those predicted by the toy model dynamics. Further analytical evidence for efficiency is given in \Cref{se:statmech}.

Note that, although we explicitly derived the effective 1D dynamics for the \texttt{CHR} model and observed it to be a simple unitary-and-measurement process, the interpretation of the effective 1D dynamics as a unitary-and-measurement process is not specific to \texttt{CHR} and is in fact general. In the general case, \texttt{SEBD} tracks $O(r)$ columns simultaneously where $r$ is the radius of the lightcone corresponding to the circuit. In each iteration, new qudits that have come into the lightcone are added, unitary gates that have come into the lightcone are performed, and finally projective measurements are performed on a single column of qudits. Similarly to the case of \texttt{CHR}, this entire procedure can be viewed as an application of unitary gates followed by weak measurements on a 1D chain of qudits of dimension $q^{O(r)}$. Intuitively, increasing the circuit depth corresponds both to increasing the local dimension in the effective 1D dynamics and decreasing the measurement strength. The former is due to the fact that in general the lightcone radius $r$ will increase as depth is increased, and the local dimension of the effective dynamics is $q^{O(r)}$. The latter is due to the fact that as $r$ increases, the number of tracked columns increases but the number of measured qudits in a single round stays constant. Hence the fraction of measured qudits decreases, and intuitively we expect this to correspond to a decrease in effective measurement strength. This intuition together with the findings of prior works on unitary-and-measurement dynamics suggests that the effective dynamics experiences an entanglement phase transition from an area-law to volume-law phase as $q$ or $d$ is increased, and therefore \texttt{SEBD} experiences a computational phase transition, supporting \Cref{con:transition}. While this analogy is not perfect, we provide further analytical evidence in \Cref{se:statmech2D} that the effective 1D dynamics indeed undergoes such a phase transition.

\subsection{Conjectured entanglement spectrum of unitary-and-measurement dynamics in an area-law phase}\label{sec:conjectured}

Numerical (\Cref{se:numerics}) and analytical (\Cref{se:statmech2D}) evidence suggests that the effective 1D dynamics corresponding to the uniform 2D shallow random circuit families we consider are in the area-law phase, making efficient simulation via \texttt{SEBD} very plausible. However, it is desirable to have clear predictions for the scaling of the entanglement spectra for states of the effective 1D dynamics, as this allows us to make concrete predictions for error scaling of \texttt{SEBD} and rule out (contrived) examples of states \cite{schuch2008entropy} which cannot be efficiently represented via MPS despite obeying an area law for the von Neumann entanglement entropy.

To this end, we study a simple toy model of how entanglement might scale in the area-law phase of a unitary-and-measurement circuit.  Consider a chain of $n$ qubits where we are interested in the entanglement across the cut between $1,\ldots,n/2$ and $n/2+1,\ldots,n$ (assume $n$ is even).  We model the dynamics as follows.  In each time step we perform the following three steps:
\begin{enumerate}
    \item   Set the state of sites $n/2$ and $n/2+1$ to be an EPR pair $\ket{\Phi}=(\ket{00}+\ket{11})/\sqrt{2}$.
    \item Perform the cyclic permutations $n/2,n/2-1,\ldots,1,n/2$ and $n/2+1,n/2+2,\ldots,n,n/2+1$.  That is, move each qubit one step away from the central cut, except for qubits 1 and $n$, which are moved to $n/2$ and $n/2+1$ respectively.
    \item Perform a weak measurement on each qubit with Kraus elements $M_0(\theta)=\cos(\theta/2)\dyad 0 + \sin (\theta/2)\dyad 1$ and
    $M_1(\theta)=\sin(\theta/2)\dyad 0 + \cos (\theta/2)\dyad 1$.  This is based on \Cref{eq:M0M1}, but the phases will not matter here so we have dropped them for simplicity.
\end{enumerate}

Without the measurements this would create one EPR pair in each time step until the system had $n/2$ EPR pairs across the cut after time $n/2$.  However, the measurements have the effect of reducing the entanglement. For this process, we derive the functional form of the asymptotic scaling of half-chain Schmidt coefficients $\lambda_1 \geq \lambda_2 \geq \cdots$. Moreover, bounds on the scaling of the entanglement spectrum allows us to derive a relation between the truncation error (sum of squares of discarded Schmidt values) $\epsilon$ incurred upon discarding small Schmidt values, and the rank $r$ of the post-truncation state. The bounds are given in the following lemma, which is proved in \Cref{appendix:error}.
\begin{restatable}{lemma}{toyModel}\label{lem:toy_model}
Let $\lambda_1 \geq \lambda_2 \geq \cdots$ denote the half-chain Schmidt values after at least $n/2$ iterations of the toy model process. Then with probability at least $1-\delta$ the half-chain Schmidt values indexed by $i \geq i^* = \exp(\Theta(\sqrt{\log(n/\delta)}))$ obey the asymptotic scaling
\be \lambda_i \propto \exp(-\Theta(\log^2(i))). \ee
Furthermore, upon truncating the smallest Schmidt coefficients up to a truncation error of $\epsilon$, with probability at least $1-\delta$, the half-chain Schmidt rank $r$ of the post-truncation state obeys the scaling
\be r \leq \exp(\Theta\qty(\sqrt{\log(n/\epsilon \delta)})). \ee
\end{restatable}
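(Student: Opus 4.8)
The plan is to exploit the fact that the toy-model dynamics keeps the state across the central cut in an exactly product form, reducing the problem to a one-dimensional random walk together with a combinatorial subset-sum count. First I would observe that the three operations preserve a rigid structure: the conveyor-belt permutation moves the two qubits of each injected pair symmetrically outward, so at any time $\ge n/2$ the state is, across the cut, a tensor product of $n/2$ two-qubit states $\ket{\psi_a}$ indexed by the age $a \in \{0,\dots,n/2-1\}$ of the pair. Because both $M_0(\theta)$ and $M_1(\theta)$ are diagonal in the computational basis, each $\ket{\psi_a}$ stays of the form $\alpha_a\ket{00}+\beta_a\ket{11}$, so its Schmidt coefficients across the cut are exactly $\{\abs{\alpha_a},\abs{\beta_a}\}$ after normalization. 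Consequently the full set of half-chain Schmidt values is the multiset of products $\{\prod_a s_a : s_a\in\{\abs{\alpha_a},\abs{\beta_a}\}\}$, and after dividing by the largest value $\prod_a \max(\abs{\alpha_a},\abs{\beta_a})$ it becomes $\{\exp(-W_S): S\subseteq\{0,\dots,n/2-1\}\}$, where $W_S=\sum_{a\in S} w_a$ and $w_a:=\tfrac12\abs{\log(\abs{\beta_a}^2/\abs{\alpha_a}^2)}$ is the log-gap of pair $a$. Thus the sorted spectrum is controlled entirely by the ordered subset sums of the $w_a$, and the argument splits into understanding the law of $w_a$ and then a subset-sum count.

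Second, I would analyze $w_a$ as a random walk. Each time step applies one weak measurement to each of the pair's two qubits, and since the diagonal Kraus operators act on $\alpha_a\ket{00}+\beta_a\ket{11}$ by a Bayesian-type update, the log-ratio $u_a:=\log(\abs{\beta_a}^2/\abs{\alpha_a}^2)$ evolves as a Markov chain whose increments are $\pm c$ per single-qubit measurement, with $c=2\log\cot(\theta/2)$, and whose drift I would compute to be $\E[\Delta u_a\mid u_a]=c\cos\theta\,\tanh(u_a/2)$. The origin is therefore an unstable fixed point: $\abs{u_a}$ is pushed outward, the per-step variance shrinks to zero as $\abs{u_a}\to\infty$, and once $\abs{u_a}$ exceeds an $O(1)$ threshold the drift is bounded below by the positive constant $c\cos\theta$. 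Using a Freedman/Azuma-type concentration inequality for this chain I would show that, conditioned on a good event of probability $\ge 1-\delta$ obtained by a union bound over all $\le n/2$ pairs, every pair of age $a\gtrsim a_0$ with $a_0=\Theta(\log(n/\delta))$ satisfies $w_a=\tfrac12\abs{u_a}\asymp a$, i.e. $\kappa_1 a\le w_a\le\kappa_2 a$ for constants $0<\kappa_1\le\kappa_2$. The $\log(n/\delta)$ loss and the age threshold $a_0$ are precisely what produce the index threshold $i^*=\exp(\Theta(\sqrt{\log(n/\delta)}))$ below which the clean scaling need not hold.

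Third, with $w_a\asymp a$ in hand I would count subset sums. Writing $N(m)$ for the number of $S$ with $W_S\le m$, the bounds $\kappa_1 a\le w_a\le\kappa_2 a$ sandwich $N(m)$ between the numbers of subsets of the positive integers whose ordinary sum is at most $m/\kappa_2$ and at most $m/\kappa_1$; since the number of partitions into distinct parts of total $s$ grows like $\exp(\Theta(\sqrt{s}))$ (Hardy--Ramanujan), both bounds give $N(m)=\exp(\Theta(\sqrt m))$. Inverting the relation $i\approx N(m)$, so that $m=\Theta(\log^2 i)$, yields for every $i\ge i^*$ the advertised $\lambda_i\propto\exp(-\Theta(\log^2 i))$.

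Finally, for the rank bound I would estimate the truncation tail $\sum_{i>r}\lambda_i^2$. Using $\lambda_i^2\asymp\exp(-\Theta(\log^2 i))$ and comparing the sum to $\int_r^\infty \exp(-c\log^2 x)\,dx$ (substitute $x=e^y$ and complete the square), the tail is $\exp(-\Theta(\log^2 r))$ up to subexponential factors; setting this equal to $\epsilon$ and solving gives $\log r=\Theta(\sqrt{\log(1/\epsilon)})$, which combined with the $i\ge i^*$ restriction and the $1-\delta$ event yields $r\le\exp(\Theta(\sqrt{\log(n/\epsilon\delta)}))$. The main obstacle is the second step: turning the heuristic ``drift away from the origin'' picture into a rigorous, high-probability linear lower and upper bound $w_a\asymp a$ holding simultaneously for all pairs, since the walk is a state-dependent (non-i.i.d.) Markov chain whose increments lose their fluctuations precisely in the regime that dominates, and whose slow escape from the unstable origin must be controlled uniformly in order to pin down $i^*$.
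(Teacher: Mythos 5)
Your skeleton --- product structure across the central cut, per-pair log-gaps $w_a$, the spectrum as ordered subset sums, distinct-partition asymptotics $\exp(\Theta(\sqrt{m}))$, and the final tail-sum/rank estimate --- is the paper's argument in outline, and your drift computation $\E[\Delta u_a\mid u_a]=c\cos\theta\tanh(u_a/2)$ is correct. However, the state-dependent Markov chain that you flag as the main obstacle is avoidable entirely: for an EPR pair measured repeatedly by these diagonal Kraus operators, the Born-rule outcome record is \emph{exactly} a $50/50$ mixture of i.i.d.\ Bernoulli$(\sin^2(\theta/2))$ and i.i.d.\ Bernoulli$(\cos^2(\theta/2))$ sequences (sum the Born probabilities over sequences with $s$ outcomes of type $M_1$ and you get a mixture of two binomials). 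Conditioned on the hidden mixture label, each pair's log-gap is an explicit shifted binomial with \emph{constant} drift from the start, so no Freedman/Azuma analysis of escape from an unstable fixed point is needed. This is the representation the paper uses.

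The genuine gap is quantitative and lies in how you organize the concentration. You union-bound \emph{per pair}, obtaining $w_a\asymp a$ only for ages $a\geq a_0=\Theta(\log(n/\delta))$, and then count subset sums deterministically. But pairs younger than $a_0$ are uncontrolled --- their $w_a$ can be arbitrarily close to $0$ --- so each of the $2^{a_0}=\poly(n/\delta)$ subsets of young pairs multiplies the count $N(m)$ with no guaranteed suppression. The best you can conclude is $\log N(m)\leq\Theta(\sqrt{m})+\Theta(\log(n/\delta))$, and inverting this gives $m=\Theta(\log^2 i)$ only once $\log i\gtrsim\log(n/\delta)$; your method therefore proves the scaling only for $i^*=\poly(n/\delta)$, not the claimed $i^*=\exp(\Theta(\sqrt{\log(n/\delta)}))$, and the rank bound correspondingly degrades to $r\leq\poly(n/\delta)+\exp(\Theta(\sqrt{\log(1/\epsilon)}))$, exponentially weaker than the lemma. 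The fix is to concentrate \emph{whole subset sums} rather than individual pairs, which is what the paper does: since independent centered binomials convolve to a centered binomial, every subset of total age $\ell$ has fluctuation distributed as a single centered binomial on $2\ell$ trials, i.e.\ of size $O(\sqrt{\ell\log(1/p)})$ --- the constituent pairs' fluctuations add in quadrature, not linearly --- so the fluctuation is $O(\ell)$ as soon as $\ell\gtrsim\log(1/p)$. A union bound within each sector of fixed total age $\ell$ (over its $Q(\ell)=\exp(\Theta(\sqrt{\ell}))$ coefficients, which only adds an $O(\ell^{3/4})$ term) and over the $\Theta(n^2)$ sectors with $p=\delta/\Theta(n^2)$ then gives the condition $\ell\gtrsim\log(n/\delta)$, which is precisely the threshold $i^*=\exp(\Theta(\sqrt{\log(n/\delta)}))$ and the stated rank bound $r\leq\exp(\Theta(\sqrt{\log(n/\epsilon\delta)}))$.
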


This is the basis for our \Cref{con:aggressive}. More precisely, we take this analysis as evidence that the bond dimension $D$, truncation error $\epsilon$, and system size $n$ obey the scaling $D \leq  \exp(\Theta\qty(\sqrt{\log(n/\epsilon \delta)}))$ with probability $1-\delta$ over random circuit instance and random measurement outcomes when \texttt{SEBD} simulates a random constant-depth 2D circuit whose effective 1D dynamics lie in the area-law phase.  Recalling that the runtime of \texttt{SEBD} scales like $O(n D^3)$ for a maximal bond dimension of $D$ and using the relationship between truncation error, failure probability, variational distance error, and simulable circuit fraction given in \Cref{cor:SEBD_error_bound}, we conclude that \texttt{SEBD} with a maximal bond dimension cutoff scaling as  $\exp(\Theta\qty(\sqrt{\log(n/\epsilon \delta)}))$  runs in time $n^{1+o(1)} \exp(\Theta\qty(\sqrt{\log(1/\varepsilon \delta)}))$ and simulates $1-\delta$ fraction of random circuit instances up to variational distance error $\varepsilon$.

It is important to note what this heuristic argument leaves out.  While a 1D unitary-and-measurement circuit will indeed create $O(1)$ ebits across any given cut in each round, these will not remain in the form of distinct pairs of qubits.  The unitary dynamics {\em within} each side of the cut will have the effect of transforming the Schmidt bases into entangled ones.  This will make the measurements less effective at reducing the entanglement, for reasons that can be understood in terms of quantum state merging~\cite{HOW07,choi2019quantum}.  Another simplification of the toy model is that the measurement angle $\theta$ is taken to be a fixed constant rather than random. Finally, in the toy model we assume for simplicity that the EPR pairs move cyclically. We expect that, if this effect is significant, it is more likely to make the toy model overly pessimistic compared with the real situation. Despite these simplifications, we believe this model is qualitatively accurate in the area-law phase. Indeed, the scaling of Schmidt values predicted by our toy model analysis is consistent with the scaling we find numerically in \Cref{fig:spectrum}.

\subsection{\texttt{Patching}}\label{sec:patching}
We now describe a second algorithm for sampling from the output distributions and computing output probabilities of 2D quantum circuits acting on qudits of local dimension $q$. While the \texttt{SEBD} algorithm described in the previous section is efficient if the corresponding effective 1D dynamics can be efficiently simulated with \texttt{TEBD}, the algorithm of this section is efficient if the circuit depth $d$ and local dimension $q$ are constant and the conditional mutual information (CMI) of the classical output distribution is exponentially decaying in a sense that we make precise below. In \Cref{se:statmech2D} we will give evidence that the output distribution of sufficiently shallow random 2D circuits acting on qudits of sufficiently small dimension satisfies such a property with high probability, and the property is not satisfied if the circuit depth or local dimension exceeds some critical constant value.

The algorithm we describe is an adaptation and simplification of the Gibbs state preparation algorithm of \cite{brandao2019finite}. In that paper, the authors essentially showed that a quantum Gibbs state defined on a lattice can be prepared by a quasipolynomial time quantum algorithm, if the Gibbs state satisfies two properties: (1) exponential decay of correlations and (2) exponentially decaying quantum conditional mutual information for shielded regions. Our situation is simpler than the one considered in that paper, due to the fact that sufficiently separated regions of the lattice are causally disconnected as a result of the fact that the circuit inducing the distribution is constant-depth and therefore has a constant-radius lightcone. The structure of our algorithm is very similar to theirs, except we can make some simplifications and substantial improvements as a result of the constant-radius lightcone and the fact that we are sampling from a classical distribution rather than a quantum Gibbs state.

Before we describe the algorithm, we set some notation. Let $\Lambda$ denote the set of all qudits of a $L_1 \times L_2$ rectangular grid (assume $L_1 \leq L_2 \leq \poly(L_1)$).  If $A$ and $B$ are two subsets of qudits of $\Lambda$, we define $\dist(A,B) := \min_{i \in A, j\in B} \dist(i,j)$, where $\dist(i,j)$ is the distance between sites $i$ and $j$ as measured by the $\infty$-norm. There are two primary facts that our algorithm relies on. First, if the circuit has depth $d$, any two sets of qudits separated by a distance greater than $2d$ have non-overlapping lightcones. Hence, if $A$ and $B$ are two lattice regions separated by distance at least $2d$, and $\rho$ is the quantum state output by the circuit (before measurement), it holds that $\rho_{AB} = \rho_A \otimes \rho_B$ and therefore $\mathcal{D}_{AB} = \mathcal{D}_A \otimes \mathcal{D}_B$ if $\mathcal{D} = \sum_{\vb{x}} \mathcal{D}(\vb{x}) \dyad{\vb{x}}$ is the classical output distribution of the circuit and (for example) $\mathcal{D}_{A}$ denotes the marginal of $\mathcal{D}$ on subregion $A$. (Note that our notation is slightly different in this section -- we now use subscripts on $\mathcal{D}$ to denote marginals, and the dependence of $\mathcal{D}$ on the circuit instance is left implicit.) Second, if the classical CMI $I(X:Z|Y)_p$ of three random variables with joint distribution $p_{XYZ}$ is small, then $p_{XYZ}$ is close to the distribution $p_{X|Y}p_Y p_{Z|Y}$ corresponding to a Markov chain $X - Y - Z$. We state this more formally as the following lemma, which follows from the Pinsker inequality.

\begin{lemma}[see e.g. \cite{cover1991elements}]\label{cmi}
Let $X,Y,Z$ be discrete random variables, and let $p_{XYZ}$ denote their joint distribution. Then
$$ I(X:Z|Y)_p \geq \frac{1}{2\ln 2} \| p_{XYZ} - p_{X|Y}p_Y p_{Z|Y} \|_1^2. $$
\end{lemma}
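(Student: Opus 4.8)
The plan is to recognize the conditional mutual information as a Kullback--Leibler divergence and then invoke Pinsker's inequality. First I would recall the standard identity that rewrites the CMI as a relative entropy between the true joint distribution and the distribution in which $X$ and $Z$ are made conditionally independent given $Y$. Writing out the definition $I(X:Z|Y)_p = \sum_{x,y,z} p_{XYZ}(x,y,z) \log \frac{p_{XZ|Y}(x,z|y)}{p_{X|Y}(x|y)p_{Z|Y}(z|y)}$ and substituting $p_{XZ|Y}(x,z|y) = p_{XYZ}(x,y,z)/p_Y(y)$, one immediately obtains
$$ I(X:Z|Y)_p = \sum_{x,y,z} p_{XYZ}(x,y,z) \log \frac{p_{XYZ}(x,y,z)}{p_{X|Y}(x|y)\,p_Y(y)\,p_{Z|Y}(z|y)} = D\!\left(p_{XYZ} \,\|\, q\right), $$
where $q := p_{X|Y}p_Y p_{Z|Y}$ is precisely the Markov-chain distribution appearing in the lemma.

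The one preliminary I would verify is that $q$ is a bona fide probability distribution, so that the divergence is between two genuine distributions and Pinsker's inequality applies. This follows from $\sum_{x} p_{X|Y}(x|y) = \sum_z p_{Z|Y}(z|y) = 1$ for each fixed $y$, whence $\sum_{x,y,z} q(x,y,z) = \sum_y p_Y(y) = 1$.

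The main step is then Pinsker's inequality, which states that for probability distributions $P$ and $Q$ the relative entropy measured in nats satisfies $D_{\mathrm{nat}}(P\|Q) \geq \tfrac{1}{2}\|P-Q\|_1^2$ (equivalently $D_{\mathrm{nat}} \ge 2\,\mathrm{TV}(P,Q)^2$ with $\mathrm{TV} = \tfrac12\|\cdot\|_1$). Since the paper's logarithms are base $2$, the CMI is measured in bits, so I would convert via $D_{\mathrm{nat}} = (\ln 2)\,D_{\mathrm{bits}}$ to obtain $D(p_{XYZ}\|q) \geq \tfrac{1}{2\ln 2}\|p_{XYZ}-q\|_1^2$. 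Combining this with the identity from the first step yields exactly
$$ I(X:Z|Y)_p \geq \frac{1}{2\ln 2}\,\big\|p_{XYZ} - p_{X|Y}\,p_Y\, p_{Z|Y}\big\|_1^2, $$
as claimed. There is no substantive obstacle here; the only point requiring care is the bookkeeping of the logarithm base, which is responsible for the factor $1/(2\ln 2)$ rather than the $1/2$ one would get directly from the natural-log form of Pinsker's inequality.
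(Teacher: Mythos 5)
Your proof is correct and is exactly the argument the paper intends: the paper itself only remarks that the lemma ``follows from the Pinsker inequality,'' and your route --- rewriting $I(X:Z|Y)_p$ as the relative entropy $D\qty(p_{XYZ} \,\|\, p_{X|Y}p_Y p_{Z|Y})$ and then applying Pinsker with the base-2/natural-log conversion giving the $1/(2\ln 2)$ factor --- is the standard way to fill in that one-line citation. No gaps; the care you take with the logarithm base and with checking that $p_{X|Y}p_Yp_{Z|Y}$ is a genuine distribution is precisely what is needed.
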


Following \cite{brandao2019finite}, we also formally define a notion of CMI decay.

\begin{definition}[Markov property]\label{markovProperty}
Let $p$ denote a probability distribution supported on $\Lambda$. Then $p$ is said to satisfy the \emph{$\delta(l)$-Markov condition} if, for any tripartition of a subregion $X$ of the lattice into subregions $X=A\cup B\cup C$ such that $\dist(A,C)\geq l$, we have
\begin{equation}
    I(A:C|B)_p \leq  \delta(l).
\end{equation}
\end{definition}

\begin{figure}
    \centering
    \def\svgwidth{0.6\columnwidth}
    \input{Figures/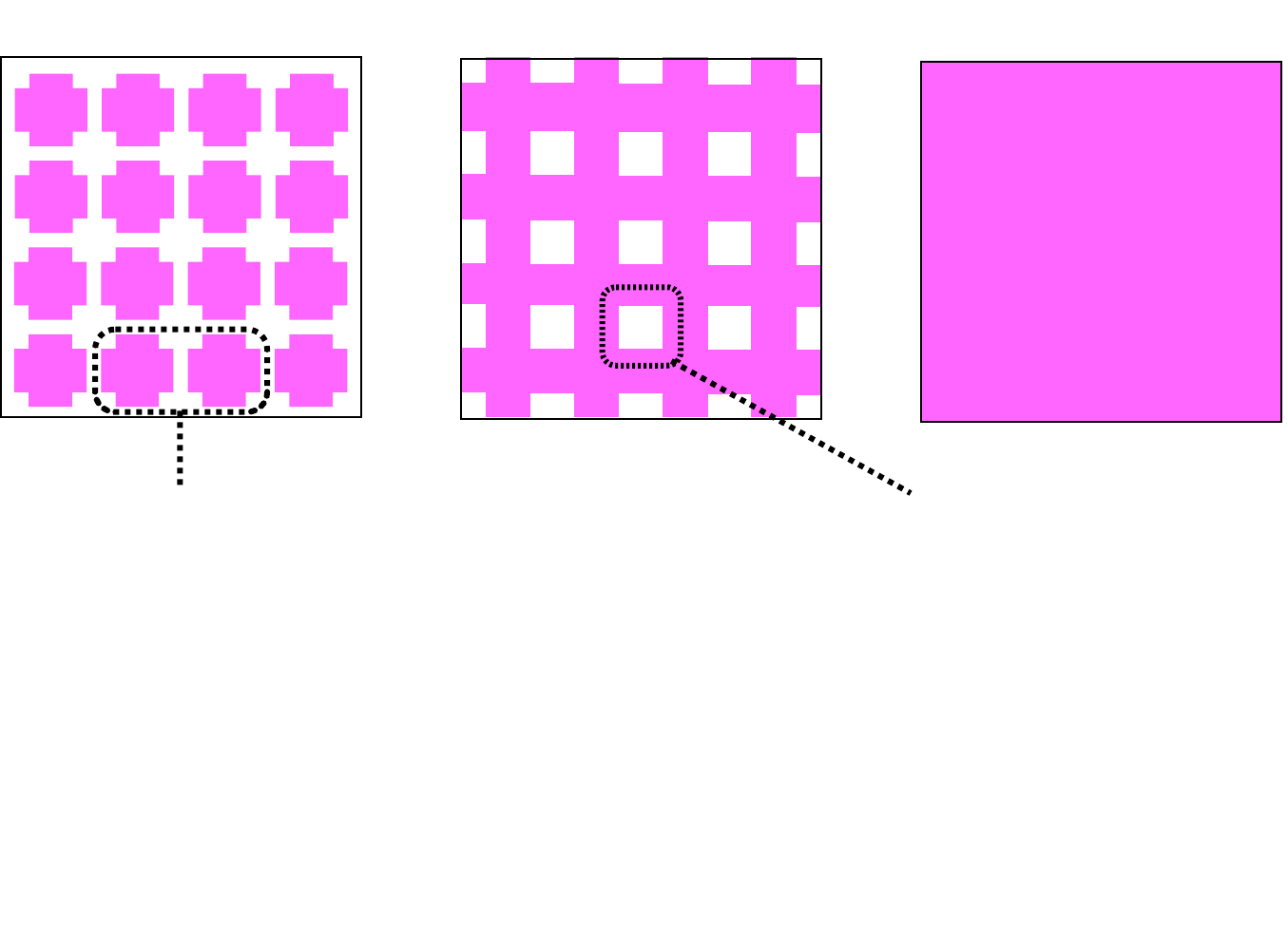_tex}
    \caption{\texttt{Patching}. Pink represents marginals of the output distribution that have been approximately sampled, while white represents unsampled regions. In $(a)$, the algorithm has sampled from disconnected patches. Figure $(b)$ depicts how the algorithm transitions from configuration $(a)$ to $(c)$.  Namely, the algorithm generates a sample from the conditional distribution on $A$, conditioned on the configuration of region $B$. Similarly, figure $(d)$ depicts how the ``holes'' of configuration $(c)$ are filled in. The end result is shown in $(e)$, an approximate sample from the global distribution on the full lattice.}\label{fg:patching}
\end{figure}

Intuitively, our algorithm works by first sampling from the marginal distributions of spatially separated patches on the lattice, and then stitching the patches together to approximately obtain a sample from the global distribution. For a $O(1)$-depth circuit whose output distribution has exponentially decaying CMI, the efficiency of this procedure is guaranteed by the two facts above. We now show this more formally.
\begin{theorem}
Suppose $C$ is a 2-local quantum circuit of depth $d$ defined on a 2D rectangular grid $\Lambda$ of $n = L_1 \times L_2$ qudits, and let $\mathcal{D}(\vb{x}) := |\bra{\vb{x}}C\ket{1}^{\otimes n}|^2$ denote its output distribution. Then if $\mathcal{D}$ satisfies the $\delta(l)$-Markov condition, for any integer $l > 2d$ \texttt{Patching} with a length-scale parameter $l$  runs in time $n q^{O(d l)}$ and samples from some distribution $\mathcal{D}'$ that satisfies $\| \mathcal{D}' - \mathcal{D} \|_1 \leq O(1) (n / l^2)\sqrt{ \delta(l)} $.

In particular, if $d = O(1)$, $q=O(1)$, and $\mathcal{D}$ is $\poly(n)e^{-\Omega(l)}$-Markov, then for any polynomial $r(n)$, for some choice of lengthscale parameter \texttt{Patching} runs in time $\poly(n)$ and samples from a distribution that is $1/r(n)$-close to $\mathcal{D}$ in total variation distance.
\end{theorem}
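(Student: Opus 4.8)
The plan is to analyze \texttt{Patching} as a sequential sampling procedure in which the lattice is filled in block by block, and to control the total error by a telescoping hybrid argument in which the \emph{only} approximation made at each step is the replacement of a true conditional distribution by one that conditions on a bounded-width shield. First I would fix the geometry (as in \Cref{fg:patching}): partition $\Lambda$ into $O(n/l^2)$ blocks of sidelength $O(l)$, and process them in a constant number of rounds so that at each fill-in step the newly sampled block $A$ is adjacent to an already-sampled shield $B$ of width at least $l$, with every other previously-sampled qudit at distance at least $l$ from $A$. The initial disconnected patches of panel $(a)$ are chosen to be pairwise separated by more than $2d$ (possible since $l>2d$); their lightcones are then disjoint, so by the first of the two ``primary facts'' preceding the theorem the marginal of $\mathcal D$ on their union factorizes \emph{exactly}, and sampling each patch from its exact marginal incurs zero error. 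Hence only the $O(n/l^2)$ fill-in steps contribute to the error.

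For efficiency, observe that the conditional $\mathcal D_{A|B}$ used by the algorithm can be computed exactly: the marginal $\mathcal D_{A\cup B}$ depends only on the backward lightcone of $A\cup B$, which for depth $d$ is contained in an $O(l)\times O(l)$ sub-grid. Exact tensor-network contraction of a depth-$d$ circuit on such a region costs $q^{O(dl)}$ by \Cref{eq:2D-contract}, since the minimal cut through the $O(l)\times O(l)\times d$ spacetime volume has size $O(\min(l^2,dl))=O(dl)$ when $d\le l$. Sampling all $O(l^2)$ qudits of $A$ one at a time, each via one such contraction, therefore costs $q^{O(dl)}$, and over $O(n/l^2)$ blocks the total runtime is $n\,q^{O(dl)}$.

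The heart of the argument is the error bound. Writing both $\mathcal D$ and the algorithm's output $\mathcal D'$ as sequential-sampling distributions in the same processing order, a telescoping hybrid over the steps (the initial-patch steps contributing nothing, as noted) gives
\[
\tfrac{1}{2}\norm{\mathcal D'-\mathcal D}_1 \;\le\; \sum_{j}\, \E_{x_{BC}\sim\mathcal D}\, \tfrac{1}{2}\norm{\mathcal D_{A|BC}(\cdot|x_{BC}) - \mathcal D_{A|B}(\cdot|x_B)}_1,
\]
where the sum is over fill-in steps, $A$ is the block sampled at step $j$, $B$ its shield, and $C$ the remaining previously-sampled qudits. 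For a single term, the elementary identity
\[
\E_{x_{BC}\sim\mathcal D}\norm{\mathcal D_{A|BC} - \mathcal D_{A|B}}_1 \;=\; \norm{\mathcal D_{ABC} - \mathcal D_{A|B}\,\mathcal D_{B}\,\mathcal D_{C|B}}_1 \;\le\; \sqrt{2\ln 2\cdot I(A:C|B)}
\]
reduces the per-step error to the distance of $\mathcal D_{ABC}$ from its Markov-chain approximation, which the Pinsker bound of \Cref{cmi} controls. Since $\dist(A,C)\ge l$ by construction, the $\delta(l)$-Markov condition of \Cref{markovProperty} gives $I(A:C|B)\le\delta(l)$, so each fill-in step contributes $O(\sqrt{\delta(l)})$; summing over the $O(n/l^2)$ steps yields $\norm{\mathcal D'-\mathcal D}_1\le O(1)(n/l^2)\sqrt{\delta(l)}$.

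The main obstacle is making the telescoping step fully rigorous: because the conditionals depend on the random prefix, one must carefully verify that the per-step error is exactly the expected conditional total-variation distance with the prefix drawn from the \emph{true} distribution $\mathcal D$, and that the displayed identity relating this expectation to $\norm{\mathcal D_{ABC} - \mathcal D_{A|B}\mathcal D_B\mathcal D_{C|B}}_1$ (the quantity controlled by the CMI) holds; the geometric bookkeeping ensuring that every fill-in block genuinely has a width-$l$ shield with $\dist(A,C)\ge l$ while the block count stays $O(n/l^2)$ is the other delicate point. Finally, the ``in particular'' claim follows by setting $\delta(l)=\poly(n)e^{-\Omega(l)}$ and choosing $l=\Theta(\log n)$ with a large enough constant so that $(n/l^2)\poly(n)e^{-\Omega(l)}\le 1/r(n)$; with $d,q=O(1)$ the runtime then becomes $n\,q^{O(d\log n)}=\poly(n)$.
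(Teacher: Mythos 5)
Your proposal is correct and follows essentially the same route as the paper: identical geometry (exactly-factorizing disconnected patches, then $O(n/l^2)$ shielded fill-in steps), the same per-step quantity $\|\mathcal{D}_{ABC}-\mathcal{D}_{A|B}\mathcal{D}_B\mathcal{D}_{C|B}\|_1$ controlled by the Pinsker bound of \Cref{cmi} plus the Markov condition, and the same lightcone/tensor-network runtime analysis with $l=\Theta(\log n)$ for the final claim. Your telescoping hybrid over sequential conditionals is just an explicit rendering of what the paper phrases as recovery maps $\mathcal{R}_{B\to AB}$ whose errors add linearly by the ``union property'' (triangle inequality plus contractivity), so the two arguments coincide term by term.
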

\begin{proof}
The algorithm proceeds in three steps, illustrated in \Cref{fg:patching}. First, for each square subregion $R_i$ shaded in \Cref{fg:patching}(a) with $i \in [O(n/l^2)]$, sample from $\mathcal{D}_{R_i}$, the marginal distribution of $\mathcal{D}$ on subregion $R_i$. To do this, first restrict to the qudits and gates in the lightcone of $R_i$. Sampling from the output distribution on $R_i$ produced by this restricted version of the circuit is equivalent to sampling from the marginal on $R_i$ of the true distribution produced by the full circuit.  Since $l > 2d$, this restriction of the circuit is contained in a sublattice of dimensions $O(l)\times O(l)$. Using standard tensor network methods \cite{markov2008simulating}, sampling from the output distribution of this restricted circuit on $R_i$ can be performed in time $q^{O(d l)}$. Since there are $O(n/l^2)$ patches, this step can be performed in time $n q^{O( d l )}$. After performing this step, we have prepared the state $\mathcal{D}_{R_1}\otimes \cdots \otimes \mathcal{D}_{R_k} = \mathcal{D}_{R_1, \dots, R_k}$ where the equality holds because the patches are separated by $l > 2d$ and are therefore mutually independent.

In the second step, we apply ``recovery maps'' to approximately prepare a sample from the larger, connected lattice subregion $S$ shaded in \Cref{fg:patching}(c). The prescription for these recovery maps is given in \Cref{fg:patching}(b). Referring to this figure, a recovery map $\mathcal{R}_{B\rightarrow AB}$ is applied  to generate a sample from subregion $A$, conditioned on the state of region $B$. Explicitly, the mapping is given by linearly extending the map $\mathcal{R}_{B\rightarrow AB}(\dyad{b}_B) = \sum_a \mathcal{D}_{A|B}(a|b) \dyad{a}_A \otimes \dyad{b}_B$. Note that, for a tripartite distribution $\mathcal{D}_{ABC}$, $\mathcal{R}_{B\rightarrow AB}(\mathcal{D}_{BC}) = \mathcal{D}_{A|B}\mathcal{D}_B \mathcal{D}_{C|B}$. To implement this recovery map, one can again restrict to gates in the lightcone of region $AB$ and utilize standard tensor network simulation algorithms to generate a sample from the marginal distribution on $A$, conditioned on the (previously sampled) state of $B$. The time complexity for this step is again $q^{O(d l)}$. After applying this and $O(n/l^2)$ similar recovery maps, we obtain a sample from a distribution $\mathcal{D}'_S$. By \Cref{cmi}, the triangle inequality, and \Cref{markovProperty}, the error of this step is bounded as
\begin{equation}
    \| \mathcal{D}'_S - \mathcal{D}_S \|_1 \leq O(1) (n/l^2) \sqrt{\delta(l)} = O(1) (n / l^2)\sqrt{\delta(l)}.
\end{equation}

Note that the fact that the errors caused by recovery maps acting on disjoint regions accumulate at most linearly has been referred to previously \cite{brandao2019finite} as the ``union property'' for recovery maps.  The final step is very similar to the previous step. We now apply recovery maps, described by \Cref{fg:patching}(d), to fill in the ``holes'' of the subregion $S$ and approximately obtain a sample from the full distribution $\mathcal{D} = \mathcal{D}_\Lambda$. By a similar analysis, we find that the error incurred in this step is again $O(1) (n / l^2) \sqrt{\delta(l)}$, and therefore the  procedure samples from a  distribution $\mathcal{D}'_\Lambda$ for which $\| \mathcal{D}'_\Lambda - \mathcal{D}_\Lambda \|_1 \leq O(1) (n / l^2)\sqrt{\delta(l)} $.

The second paragraph of the theorem follows immediately by choosing a suitable $l = \Theta(\log n)$.
\end{proof}

A straightforward application of Markov's inequality implies that a polynomial-time algorithm for sampling with error $1/\poly(n)$ succeeds with high probability over a random circuit instance if the output distribution CMI is exponentially decaying in expectation. We formalize this as the following corollary.

\begin{corollary}
Let $\mathcal{C}$ be a random circuit distribution. Define $\mathcal{C}$ to be $\delta(l)$-Markov if, for any tripartition of a subregion $X$ of the lattice into subregions $X = A \cup B \cup C$ such that $\text{dist}(A,C) \geq l$, we have
\begin{equation}
\langle I(A:C|B)_{\mathcal{D}} \rangle \leq \delta(l)
\end{equation}
where the angle brackets denote an average over circuit realizations and $\mathcal{D}$ is the associated classical output distribution. Then if $d = O(1), q=O(1)$, and $\mathcal{C}$ is $\poly(n) e^{-\Omega(l)}$-Markov, then for any polynomials $r(n)$ and $s(n)$, \texttt{Patching} can run in time $\poly(n)$ and, with probability $1-1/s(n)$ over the random circuit realization, sample from a distribution that is $1/r(n)$-close to the true output distribution in variational distance.
\end{corollary}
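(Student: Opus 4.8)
The plan is to upgrade the fixed-circuit guarantee of the preceding theorem to an average-case statement by the direct application of Markov's inequality suggested in the text. First I would observe that, for a \emph{fixed} realization with output distribution $\mathcal{D}$, the proof of the theorem really controls the sampling error by a sum over the $m = O(n/l^2)$ tripartitions $(A_i,B_i,C_i)$ arising in the recovery maps: by the Pinsker bound \Cref{cmi} the $i$-th recovery step contributes error at most $\sqrt{2\ln 2\,I_i}$, and these accumulate at most linearly (the ``union property''), so
\begin{equation}
\|\mathcal{D}' - \mathcal{D}\|_1 \;\leq\; \sum_{i=1}^{m} \sqrt{2\ln 2\, I_i}, \qquad I_i := I(A_i:C_i|B_i)_{\mathcal{D}}, \quad m = O(n/l^2),
\end{equation}
an inequality that requires no Markov hypothesis in itself. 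I would fix the lengthscale at $l = \alpha \log n$ for a constant $\alpha$ to be chosen later; for large $n$ this satisfies $l > 2d$ (since $d = O(1)$) and keeps the runtime $n\,q^{O(dl)} = n^{1+O(d\alpha \log q)} = \poly(n)$.

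Next I would average over circuit realizations. Each tripartition used by the algorithm has $\dist(A_i,C_i) \geq l$ (this is precisely how the recovery maps are laid out in \Cref{markovProperty} and \Cref{fg:patching}), so the corollary's $\delta(l)$-Markov hypothesis gives $\langle I_i \rangle \leq \delta(l)$. Combining linearity of expectation with concavity of the square root (Jensen's inequality), the expected error obeys
\begin{equation}
\langle \|\mathcal{D}' - \mathcal{D}\|_1 \rangle \;\leq\; \sum_{i=1}^{m} \big\langle \sqrt{2\ln 2\, I_i}\big\rangle \;\leq\; \sum_{i=1}^{m} \sqrt{2\ln 2\,\langle I_i\rangle} \;\leq\; m\sqrt{2\ln 2\, \delta(l)} \;=\; O(n/l^2)\sqrt{2\ln 2\,\delta(l)}.
\end{equation}

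Finally, since the total variation distance is a nonnegative random variable, Markov's inequality bounds the probability that a given realization produces error exceeding the target $1/r(n)$ by $r(n)\cdot O(n/l^2)\sqrt{2\ln 2\,\delta(l)}$. Substituting $\delta(l) = \poly(n)\,e^{-\Omega(l)}$ with $l = \alpha\log n$ renders this failure probability at most $\poly(n)\cdot n^{-\Omega(\alpha)}$, and I would then take $\alpha$ large enough that it drops below $1/s(n)$. The only genuine subtlety — the ``main obstacle'' such as it is — is the simultaneous choice of parameters: increasing $l$ suppresses both the expected error and the failure probability through the factor $e^{-\Omega(l)}$, but inflates the runtime as $q^{O(dl)}$, so $l$ must stay $\Theta(\log n)$ to remain polynomial-time while the exponential $e^{-\Omega(\alpha\log n)}$ still dominates the fixed polynomial prefactors $r(n)$, $s(n)$ and the $\poly(n)$ hidden inside $\delta(l)$. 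This is always achievable because a single constant $\alpha$ can be chosen to exceed the finite total degree of all the competing polynomials, yielding the claimed $\poly(n)$-time algorithm that, with probability $1-1/s(n)$ over the random circuit realization, samples within variational distance $1/r(n)$ of the true output distribution.
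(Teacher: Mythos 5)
Your proof is correct and follows essentially the same route as the paper, which dispenses with the corollary as a ``straightforward application of Markov's inequality'' to the preceding theorem: you bound the expected total variation error by averaging the per-recovery-step Pinsker bounds (via Jensen) and then apply Markov to the error random variable, with the same choice $l = \Theta(\log n)$ balancing runtime against the exponentially decaying CMI. Your observation that the per-circuit bound $\|\mathcal{D}'-\mathcal{D}\|_1 \leq \sum_i \sqrt{2\ln 2\, I_i}$ holds without any Markov hypothesis is exactly the detail that makes the averaging argument go through, and it is the natural filling-in of the step the paper leaves implicit.
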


Thus, proving that some uniform worst-case-hard circuit family $\mathcal{C}$ is $\poly(n)e^{-\Omega(l)}$-Markov provides another route to proving the part of \Cref{con:efficient} about sampling with small total variation distance error. In \Cref{se:statmech2D}, we will give analytical evidence that if $\mathcal{C}$ is a random circuit distribution of sufficiently low depth and small qudit dimension, then $\mathcal{C}$ is indeed $\poly(n) e^{-\Omega(l)}$-Markov, and if the depth or qudit dimension becomes sufficiently large, then $\mathcal{C}$ is not $\poly(n)f(l)$-Markov for any $f(l) = o(1)$, supporting \Cref{con:transition}, which states that our algorithms exhibit computational phase transitions.

Finally, we note that \texttt{Patching} can also be used to estimate specific output probabilities of a random circuit instance $C$ with high probability if  $C$ is drawn from a distribution $\mathcal{C}$  that is  $\poly(n) e^{-\Omega(l)}$-Markov. This shows that the Markov condition could also be used to prove the second part of \Cref{con:efficient} regarding computing output probabilities with small error.  This is similar to how \texttt{SEBD} can also be used to compute  output probabilities, as discussed in \Cref{se:computing_output_probs}.

\begin{lemma}
Let $\mathcal{C}$ be a circuit distribution over constant depth $d$ and constant qudit dimension $q$ 2D circuits on $n$ qudits which is $\poly(n) e^{-\Omega(l)}$-Markov and invariant under application of a final layer of arbitrary single-qudit gates. Then for a  circuit instance $C$ drawn from $\mathcal{C}$ and a fixed $\vb{x} \in [q]^n$, a variant of  \texttt{Patching} can be used to output a number $\mathcal{D}'(\vb{x})$ in time $\poly(n)$ that satisfies
\begin{equation}
| \mathcal{D}'(\vb{x}) - \mathcal{D}(\vb{x}) | \leq q^{-n}/r(n)
\end{equation}
with probability $1-1/s(n)$ for any polynomials $r(n)$ and $s(n)$, where $\mathcal{D}$ is the output distribution associated with $C$.
\end{lemma}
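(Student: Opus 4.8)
The plan is to mimic the relationship between \Cref{SEBD} and \Cref{SEBD_probability}: just as the sampling version of \texttt{SEBD} is converted into an estimator of a fixed output probability by projecting onto (rather than sampling) each column, the \texttt{Patching} sampler of the preceding theorem can be converted into an estimator of $\mathcal{D}(\vb{x})$. The key structural observation is that the approximate distribution $\mathcal{D}'$ produced by \texttt{Patching} has an explicit closed form as a product of $O(n/l^2)$ local factors, each of which is either a patch marginal $\mathcal{D}_{R_i}(\vb{x}_{R_i})$ or a recovery-map conditional $\mathcal{D}_{A \mid B}(\vb{x}_{A} \mid \vb{x}_{B})$ of the \emph{true} distribution $\mathcal{D}$, supported on an $O(l) \times O(l)$ region. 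The variant of \texttt{Patching} evaluates each such factor at the relevant substring of the fixed input $\vb{x}$ (all conditioning values are fixed parts of $\vb{x}$), using the same lightcone restriction and tensor-network contraction as in the sampler, and multiplies them. Since each factor is computed exactly in time $q^{O(dl)}$, the total runtime is $nq^{O(dl)}$, which is $\poly(n)$ for $l = \Theta(\log n)$.

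Next I would control the error. The error analysis of the preceding theorem bounds $\|\mathcal{D}' - \mathcal{D}\|_1 \leq O(1)(n/l^2)\sqrt{\delta_C(l)}$, where $\delta_C(l)$ is the instance-wise CMI bound, and expanding the $1$-norm converts this into a bound on the average error over a uniformly random string:
\begin{equation}
\E_{\vb{y}}\, |\mathcal{D}'(\vb{y}) - \mathcal{D}(\vb{y})| = \frac{1}{q^n}\|\mathcal{D}' - \mathcal{D}\|_1 \leq \frac{O(1)(n/l^2)\sqrt{\delta_C(l)}}{q^n}.
\end{equation}
Averaging over $C \sim \mathcal{C}$, using concavity of the square root so that $\E_C \sqrt{\delta_C(l)} \leq \sqrt{\langle \delta(l)\rangle}$, together with the $\poly(n)e^{-\Omega(l)}$-Markov assumption, yields $\E_C \E_{\vb{y}}\, |\mathcal{D}'(\vb{y})-\mathcal{D}(\vb{y})| \leq q^{-n}\poly(n)e^{-\Omega(l)}$.

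Third, I would remove the averaging over strings exactly as in the proof of \Cref{cor:SEBD_probability_bound}. Let $L_{\vb{y}}$ be the layer of single-qudit gates with $L_{\vb{y}}\ket{\vb{x}} = \ket{\vb{y}}$. Because $\mathcal{C}$ is invariant under a final single-qudit layer, $C$ and $L_{\vb{y}}\circ C$ are identically distributed, and since $L_{\vb{y}}$ merely relabels local basis states within each lightcone we get both $\mathcal{D}_{L_{\vb{y}}\circ C}(\vb{y}) = \mathcal{D}_C(\vb{x})$ and $\mathcal{D}'_{L_{\vb{y}}\circ C}(\vb{y}) = \mathcal{D}'_C(\vb{x})$. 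Substituting gives $\E_C |\mathcal{D}'_C(\vb{x}) - \mathcal{D}_C(\vb{x})| = \E_{\vb{y}}\E_C |\mathcal{D}'_C(\vb{y}) - \mathcal{D}_C(\vb{y})| \leq q^{-n}\poly(n)e^{-\Omega(l)}$ for the fixed string $\vb{x}$. A final application of Markov's inequality then gives, with probability at least $1-1/s(n)$ over $C$, the bound $|\mathcal{D}'_C(\vb{x}) - \mathcal{D}_C(\vb{x})| \leq q^{-n}s(n)\poly(n)e^{-\Omega(l)}$; choosing $l = \Theta(\log n)$ large enough makes $s(n)\poly(n)e^{-\Omega(l)} \leq 1/r(n)$ while keeping the runtime $\poly(n)$, which is the claim.

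The main obstacle, and the only step requiring genuine care rather than bookkeeping, is verifying the covariance property $\mathcal{D}'_{L_{\vb{y}}\circ C}(\vb{y}) = \mathcal{D}'_C(\vb{x})$ for the \emph{approximate} \texttt{Patching} estimate, as opposed to the exact distribution $\mathcal{D}$ where it is immediate. One must confirm that inserting a final product of single-qudit gates commutes with the lightcone restriction, the patch marginalization, and the recovery-map structure, so that every local factor in the product form of $\mathcal{D}'(\vb{x})$ is simply relabeled rather than genuinely altered; only then does the string-averaging trick transfer from the exact to the approximate estimate. This holds because single-qudit gates cannot enlarge any lightcone and act as a site-by-site local basis change, but it is the step where the argument would break for a less benign final perturbation.
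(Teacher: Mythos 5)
Your proposal is correct and follows essentially the same route as the paper's proof: convert the \texttt{Patching} sampler into an exact evaluator of $\mathcal{D}'(\vb{x})$ by multiplying the locally computed conditional probabilities, bound $\E_{\vb{y}}\E_C\,|\mathcal{D}'(\vb{y})-\mathcal{D}(\vb{y})|$ using the variational-distance guarantee together with the $\poly(n)e^{-\Omega(l)}$-Markov condition, transfer to the fixed string $\vb{x}$ via invariance under a final layer of single-qudit gates exactly as in \Cref{cor:SEBD_probability_bound}, and finish with Markov's inequality and a choice of $l=\Theta(\log n)$. Your explicit verification of the covariance property $\mathcal{D}'_{L_{\vb{y}}\circ C}(\vb{y}) = \mathcal{D}'_C(\vb{x})$ for the approximate distribution is a detail the paper leaves implicit behind the phrase ``a nearly identical argument,'' and you correctly identify it as the one step needing genuine care.
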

\begin{proof}
With probability $1-1/\poly(n)$ over the circuit instance $C$, \texttt{Patching} with some choice of lengthscale $l = \Theta(\log n)$ efficiently samples from a distribution $\mathcal{D}'_C$ that is $1/\poly(n)$-close in variational distance to $\mathcal{D}_C$ for any choice of polynomials. Hence, for an output probability $\vb{y}$ chosen uniformly at random and a circuit $C$ drawn from $\mathcal{C}$, it holds that
\begin{equation}
\E_{\vb{y}} \E_C |\mathcal{D}'(\vb{y}) - \mathcal{D}(\vb{y}) | \leq q^{-n}/\poly(n)
\end{equation}
if $l = c \log n$ and $c$ is a sufficiently large constant. By a nearly identical argument to that used in the proof of \Cref{cor:SEBD_probability_bound}, due to the invariance of $\mathcal{C}$ under application of a final layer of single qudit gates, for some fixed $\vb{x} \in [q]^n$ we also have
\begin{equation}
\E_C |\mathcal{D}'(\vb{x}) - \mathcal{D}(\vb{x})| \leq q^{-n}/\poly(n)
\end{equation}
for any choice of polynomial. Finally, it is straightforward to see that an instance of \texttt{Patching} that samples from $\mathcal{D}'$ can also be used to exactly compute $\mathcal{D}'(\vb{x})$ for any $\vb{x}$. (To do this, the algorithm computes conditional probabilities via tensor network contractions as before, except instead of using these conditional probabilities to sample, it simply multiplies them together similarly to how \texttt{SEBD} can be used to compute output probabilities.) Applying Markov's inequality completes the proof.
\end{proof}

\section{Rigorous complexity separation for the ``extended brickwork architecture''}\label{se:proof}
\begin{figure}
\centering
\includegraphics[width=\textwidth]{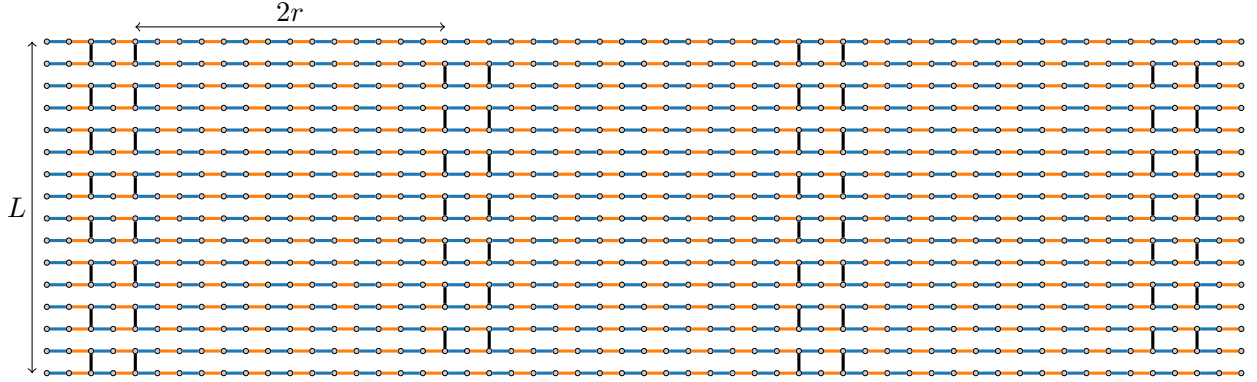}
\caption{Extended brickwork architecture with $n$ qubits. Here, circles represent qubits initialized in the state $\ket{0}^{\otimes n}$, blue lines represent the first layer of gates to act, orange lines represent the second layer, and black lines represent the third and final layer. All gates are chosen Haar-randomly. We let \texttt{Brickwork}$(L,r,v)$ denote the corresponding random circuit with circuit layout depicted in the figure above with vertical sidelength $L$, ``extension parameter'' $2r$ (which gives the distance between vertical gates acting on adjacent pairs of rows), and number of pairs of columns of vertical gates $v$. In the above example, $r=7$ and $v=4$.  The standard brickwork architecture corresponds to $r=1$.  Note that $n = \Theta(L r v)$.}\label{fig:extendedBrickwork}
\end{figure}

In this section, we show that \texttt{SEBD} is provably efficient for certain random circuit families that are worst-case hard (similar facts could also be shown for \texttt{Patching}, but for brevity we restrict our focus to the former algorithm). We define the circuit architecture in \Cref{fig:extendedBrickwork}. It follows immediately from prior works that exactly sampling from the output distribution of this random circuit family for arbitrary circuit instances or near-exactly computing a specific output probability with high probability is classically hard under standard complexity theoretic assumptions. We summarize these observations in the following lemma.

\begin{lemma}
Let $r(L)$ and $v(L)$ be any polynomially bounded functions, with $v(L) \geq L^{a}$ for some $a > 0$. Suppose that there exists a classical algorithm that runs in time $\poly(n)$ and samples from the output distribution of an arbitrary realization of $\texttt{Brickwork}(L, r(L), v(L))$, as defined in \Cref{fig:extendedBrickwork}. Then the polynomial hierarchy collapses to the third level. Suppose there exists a classical algorithm that runs in time $\poly(n)$ and, for an arbitrary fixed output string $\vb{x}$, with probability at least $1-1/\poly(n)$ over choice of random instance, computes the output probability of $\vb{x}$ up to additive error $2^{-\tilde{\Theta}(n^2)}$.  Then there exists a probabilistic polynomial-time algorithm for computing a $\SharpP$-hard function.
\end{lemma}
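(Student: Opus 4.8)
The plan is to treat both claims as worst-case statements that can be assembled from existing reductions, so the two main jobs are (i) to verify that the extended brickwork architecture supports universal measurement-based computation under post-selection, and then (ii) to feed this into the post-selection collapse argument of Bremner--Jozsa--Shepherd for the sampling claim and into the Cayley-path worst-to-average-case reduction of Movassagh for the probability claim. Both are architecture-level adaptations of the templates already used for \texttt{CHR} in \Cref{HardnessOfCluster} and \Cref{lem:ramis}.

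For the sampling claim, I would first argue that for \emph{some} choice of gates --- which, since the architecture only fixes the \emph{locations} of gates and we are proving worst-case (``arbitrary realization'') hardness, we are free to select --- the depth-3 extended brickwork circuit followed by a computational-basis measurement is equivalent to preparing a 2D graph (cluster) state on a sublattice and measuring each qubit in an arbitrary single-qubit basis, exactly as in the proof of \Cref{HardnessOfCluster}. Here one absorbs the needed local Hadamards into the two-qubit brickwork gates so that a graph state is produced from $\ket{0}^{\otimes n}$, and the spacer qubits introduced by the extension parameter $2r$ are disentangled or used as identity wires by fixing the corresponding gates, so that universality is inherited from the standard brickwork cluster state \cite{broadbent2009universal}. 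The constraint $v(L) \ge L^a$ ensures that the effective cluster-state grid available for computation (of dimensions $\Theta(L) \times \Theta(v)$ after accounting for the spacing) is polynomially large in both directions, as universality of MBQC requires. Consequently the post-selected family is \textsf{PostBQP}-universal, and a hypothetical $\poly(n)$-time classical sampler working on \emph{every} instance would in particular sample from these hard instances, yielding $\textsf{PostBPP} = \textsf{PostBQP} = \textsf{PP}$; since $\textsf{PH} \subseteq \textsf{P}^{\textsf{PP}}$ by Toda's theorem and $\textsf{PostBPP}$ sits in the third level of $\textsf{PH}$, this collapses $\textsf{PH}$ to its third level, exactly as in \cite{bremner2010classical}.

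For the probability claim, the first step is to note that the same post-selection universality makes \emph{exact} worst-case computation of an output probability $|\bra{\vb{x}}C\ket{0}^{\otimes n}|^2$ a \SharpP-hard task, since $\textsf{PostBQP} = \textsf{PP}$ and such amplitudes encode \SharpP quantities. The second step is to invoke the worst-to-average-case reduction of \cite{movassagh2019} verbatim: because all gates are Haar-random and worst-case probability computation is \SharpP-hard, an algorithm computing a fixed output probability to additive error $2^{-\tilde\Theta(n^2)}$ on a $1-1/\poly(n)$ fraction of instances can, via polynomial interpolation along the Cayley path together with Berlekamp--Welch-style error correction, be converted into a randomized $\poly(n)$-time procedure that recovers the exact worst-case probability, and hence computes a \SharpP-hard function. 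This is precisely the content of \Cref{lem:ramis} specialized to this architecture.

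The only real work beyond citation is the structural verification in the first step of each part --- namely that the \emph{extended} (non-uniform, gate-spaced) brickwork layout still embeds a universal post-selected MBQC computation on enough logical qubits, so that both \textsf{PostBQP}-universality and worst-case \SharpP-hardness go through. I expect this to be the main obstacle, since one must check that the spacer qubits can be made inert and that the polynomial bounds on $r$ and $v$ (with $v \ge L^a$) leave an effective logical grid that is polynomially large in both dimensions. Once this embedding is established, the complexity-theoretic collapse and the interpolation reduction are direct appeals to \cite{bremner2010classical} and \cite{movassagh2019}, respectively.
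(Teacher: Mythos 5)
Your proposal is correct and follows essentially the same route as the paper's proof: establish worst-case MBQC universality of the extended brickwork layout (with the long $2r$-stretches made into identity/teleportation wires and $v \geq L^a$ guaranteeing a polynomially large logical grid), then cite \cite{bremner2010classical} for the post-selection-based \textsf{PH} collapse and \cite{movassagh2019} for the worst-to-average-case reduction giving \SharpP-hardness of near-exact average-case probability computation. The only cosmetic difference is that the paper phrases the inertness of the spacer regions as a choice of measurement bases that teleport the state, while you phrase it as fixing the corresponding gates; in this circuit family these are the same choice.
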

\begin{proof}
We first note that $\texttt{Brickwork}(L, r(L), v(L))$ supports universal MBQC, in the sense that a specific choice of gates can create a resource state that is universal for MBQC. This is an immediate consequence of the proof of universality of the ``standard'' brickwork architecture (corresponding to $r=1$) proved in \cite{broadbent2009universal}. Indeed, when using the extended brickwork architecture for MBQC, measurements on the long 1D stretches of length $2r$ may be chosen such that the effective state is simply teleported to the end when computing from left to right (i.e., measurements may be chosen such that the long 1D segments simply amount to applications of identity  gates on the effective state). The scaling $v \geq L^{a}$ ensures that MBQC with an extended brickwork resource state suffices to simulate any \textsf{BQP} computation with polynomial overhead. Since a worst-case choice of gates creates a resource state for universal MBQC, an algorithm that can simulate an arbitrary circuit realization can be used to simulate arbitrary single-qubit measurements on a resource state universal for MBQC. Under post-selection, such an algorithm can therefore simulate \textsf{PostBQP} \cite{raussendorf2001one} and hence cannot be efficiently simulated classically unless the polynomial hierarchy collapses to the third level \cite{bremner2010classical}.

Similarly, for some subsets of instances, it is $\SharpP$-hard to compute the output probability of an arbitrary string, since (by choosing gates to create a resource state for universal MBQC) this would allow one to compute output probabilities of universal polynomial-size quantum circuit families which is known to be $\SharpP$-hard.  The result of  \cite{movassagh2019} is then applicable, which implies that if the gates are chosen Haar-randomly, efficiently computing the output probability of some fixed string with probability $1-1/\poly(n)$ over the choice of instance up to additive error bounded by $2^{-\tilde{\Theta}(n^2)}$ implies the ability to efficiently compute a $\SharpP$-hard function with high probability.
\end{proof}

Our goal is to prove that \texttt{SEBD} can efficiently approximately simulate the extended brickwork architecture in the average case for  choices of extension parameters for which the above hardness results apply. To this end, we first show a technical lemma which describes how measurements destroy entanglement in 1D shallow random circuits. In particular, given a 1D state generated by a depth-2 Haar-random circuit acting on qubits, after measuring some contiguous region of spins $B$, the expected entanglement entropy of the resulting post-measurement pure state across a cut going through $B$ is exponentially small in the length of $B$. We defer the proof to \Cref{appendix:error}. (We expect the result to remain true for general constant-depth random circuits in 1D acting on qudits, but we will only need the depth-2 case with qubits.)
\begin{restatable}{lemma}{entanglementDecay}\label{lem:entanglementDecay}
Suppose a 1D random circuit $C$ is applied to qubits $\{1,\dots, n\}$ consisting of a layer of 2-qubit Haar-random gates acting on qubits $(k,k+1)$ for odd $k \in \{1,\dots, n-1\}$, followed by a layer of 2-qubit Haar-random gates acting on qubits $(k,k+1)$ for even $k \in \{1,\dots, n-1\}$. Suppose the qubits of region $B := \{i, i+1, \dots, j\}$ for $j \geq i$ are measured in the computational basis, and the outcome $b$ is obtained. Then, letting $\ket{\psi_b}$ denote the post-measurement pure state on the unmeasured qubits, and letting $A := \{1, 2, \dots, i-1\}$ denote the qubits to the left of $B$,
\begin{equation}
\E S(A)_{\psi_b} \leq c^{|B|}
\end{equation}
for some universal constant $c < 1$, where the expectation is over measurement outcomes and choice of random circuit $C$.
\end{restatable}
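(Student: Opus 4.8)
The plan is to exploit the fact that a depth-$2$ brickwork circuit on qubits produces an output state $\ket{\Psi}_{ABC}$ whose Schmidt rank across every cut between neighboring qubits is at most $2$: exactly one two-qubit gate straddles each cut (a layer-1 gate across every ``odd'' cut and a layer-2 gate across every ``even'' cut), and each such gate acts on an otherwise unentangled cut, so it can raise the Schmidt rank to at most $q=2$. Writing $\ket\Psi$ as an MPS of bond dimension $2$ in the mixed-canonical form with orthogonality center on the bond between $B$ and $C$, projecting $B$ onto a computational-basis string $b=(b_i,\dots,b_j)$ yields $\ket{\tilde\psi_b}_{AC}=\sum_{\alpha,\beta}(T_b)_{\alpha\beta}\ket{L_\alpha}_A\ket{R_\beta}_C$, where $\{\ket{L_\alpha}\}$ and $\{\ket{R_\beta}\}$ are orthonormal boundary Schmidt bases and $T_b=M^{(i)}_{b_i}M^{(i+1)}_{b_{i+1}}\cdots M^{(j)}_{b_j}\operatorname{diag}(t)$ is a product of the $2\times 2$ MPS transfer matrices $M^{(m)}_{b_m}$ (with $t$ the Schmidt values across $AB\,|\,C$). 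Consequently the post-measurement reduced state is the single-qubit density matrix $\rho_{b,A}=T_bT_b^\dagger/\tr(T_bT_b^\dagger)$, obtained with probability $p_b=\tr(T_bT_b^\dagger)$, so $S(A)_{\psi_b}=h(\lambda_2)$ where $\lambda_2\le 1/2$ is its smaller eigenvalue and $h$ is the binary entropy.

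Next I would convert the target expectation $\E_C\E_b S(A)_{\psi_b}=\E_C\sum_b p_b\,h(\lambda_2)$ into a determinant quantity that behaves multiplicatively across $B$. Using $h(\lambda_2)\le 2\sqrt{\lambda_2(1-\lambda_2)}$ together with $p_b\sqrt{\lambda_2(1-\lambda_2)}=\sigma_1\sigma_2=|\det T_b|$ (where $\sigma_1\ge\sigma_2$ are the singular values of $T_b$) gives, for every fixed circuit,
\begin{equation}
\E_b S(A)_{\psi_b}\;\le\;2\sum_b|\det T_b|\;=\;2\Big(\prod_\beta t_\beta\Big)\prod_{m=i}^{j}\Big(\sum_{b_m}\big|\det M^{(m)}_{b_m}\big|\Big)\;\le\;2\prod_{m=i}^{j}f_m,
\end{equation}
where $f_m:=\sum_{b_m}|\det M^{(m)}_{b_m}|$, the sum over $b$ factorizes because it is a sum of products of per-site terms, and $\prod_\beta t_\beta\le 1$. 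In the left-canonical gauge (all tensors of $B$ satisfying $\sum_s M^{(m)\dagger}_s M^{(m)}_s=\mathbb 1$), writing $P_m=M^{(m)\dagger}_0 M^{(m)}_0$ with eigenvalues $p_1,p_2\in[0,1]$ gives $f_m=\sqrt{p_1p_2}+\sqrt{(1-p_1)(1-p_2)}\le 1$ by Cauchy--Schwarz, with equality only when $P_m\propto\mathbb 1$. Hence each factor lies in $[0,1]$, and the lemma reduces to showing that, on average, a constant fraction of these factors are bounded away from $1$.

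Finally I would perform the average over the Haar-random gates with an iterated-conditioning (supermartingale) argument. Processing the qubits of $B$ from left to right and letting $\mathcal F_{m-1}$ denote the gates supported on columns $\le m-1$, the left-canonical tensor $M^{(m)}$ --- and hence $f_m$ --- is a function of $\mathcal F_{m-1}$ and of the fresh Haar-random gate newly acting on qubit $m$. Because the Haar measure is left/right invariant, integrating this fresh gate makes the conditional law of $P_m$ independent of the incoming Schmidt basis, so $\E[f_m\mid\mathcal F_{m-1}]\le\gamma$ for a universal $\gamma<1$; this is the same mechanism by which the weak measurements of \Cref{eq:M0M1} destroy entanglement in the closed-form \texttt{CHR} dynamics, where a computational-basis measurement following a Haar-random rotation acts as an entanglement-destroying measurement in a random basis. (If a single fresh gate is shared between two neighbors, it is cleanest to process $B$ in pairs $(2k,2k+1)$ and bound $\E[f_{2k}f_{2k+1}\mid\mathcal F]\le\gamma'<1$.) Iterating, $\E_C\prod_m f_m\le\gamma^{|B|}$, so $\E S(A)_{\psi_b}\le 2\gamma^{|B|}\le c^{|B|}$ for a suitable universal $c\in(\gamma,1)$ once the $O(1)$ prefactor and the finitely many small-$|B|$ cases (handled directly via the strict expectation bound $\E f_m<1$) are absorbed.

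The main obstacle is this last step: establishing the uniform per-site contraction $\E[f_m\mid\mathcal F_{m-1}]\le\gamma<1$. This requires (i) a careful dictionary between the physical circuit gates and the canonical MPS tensors so that one can isolate genuinely ``fresh'' Haar randomness at each site, and (ii) a Weingarten-type Haar integral (or an invariance argument reducing it to the \texttt{CHR} computation) showing that after integrating that randomness the expected determinant factor is strictly below one, irrespective of the conditioning. Everything upstream --- the bond-dimension-$2$ reduction, the determinant factorization, and the Cauchy--Schwarz bound $f_m\le 1$ --- is routine, so the entire content of the lemma is concentrated in quantifying how strongly a Haar-random measurement typically contracts the $2\times 2$ transfer matrix.
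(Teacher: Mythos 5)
Your route is genuinely different from the paper's, but it contains two gaps, one of which you yourself flag and which is fatal as written. First, the premise that the depth-2 brickwork state has Schmidt rank at most $2$ across every neighboring cut is false. It holds for the odd cuts, where the straddling layer-1 gate acts on two unentangled pure qubits, but across an even cut the straddling layer-2 gate acts on two qubits that are each already entangled (by layer 1) with their outer neighbors; a two-qubit gate applied across a product cut can raise the Schmidt rank up to its \emph{operator} Schmidt rank, which is $q^2=4$ for Haar-typical gates and is generically attained here. So the MPS has alternating bond dimensions $2$ and $4$, the matrices $T_b$ are not products of $2\times 2$ blocks, and both the identity $p_b\sqrt{\lambda_2(1-\lambda_2)}=|\det T_b|$ and the factorization $\sum_b|\det T_b|=\prod_m f_m$ break down (determinants are undefined, and products of singular values are non-multiplicative, for the non-square blocks); also $S(A)_{\psi_b}$ is then the entropy of a rank-$\le 4$ state rather than a binary entropy. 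This defect is probably repairable by blocking sites in pairs $(2k,2k+1)$ so that every inter-block cut is an odd cut of dimension $2$, but the part of your argument you call ``routine'' is not correct as stated.

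Second, and more fundamentally, the one statement that (as you note) carries the entire content of the lemma --- the uniform conditional contraction $\E[f_m\mid\mathcal F_{m-1}]\le\gamma<1$ --- is asserted, not proved. The invariance heuristic does not obviously suffice: the left-canonical tensor at site $m$ depends on \emph{all} gates to its left through the canonicalization sweep, consecutive sites share a single fresh Haar gate (so consecutive factors $f_m$ are correlated), and you need a bound that is bounded away from $1$ uniformly over the conditioning, not merely $<1$ pointwise for each realization. Without this step there is no proof. It is instructive to compare with how the paper obtains exactly this uniformity: it dispenses with the MPS/determinant machinery and proves contraction of the expected von Neumann entropy directly (\Cref{lem:entropy_decrease}): coupling the about-to-be-measured qubit by a fresh Haar-random gate to a fresh Haar-random two-qubit state and then measuring it gives $\E_b S(A)_{\psi_b}\le c\,S(A)_{\psi}$ for a universal $c<1$. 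That proof writes $\ket{\psi}_{AB}$ in Schmidt form with weights $(p,1-p)$, lower-bounds the expected top eigenvalue of the post-measurement reduced state by $p+\epsilon(1-p)$, where $\epsilon$ is the almost-surely-positive squared overlap of the two conditional post-measurement states on the fresh registers, and then uses concavity together with $\sup_{p\in[1/2,1)}H_2(p+\epsilon(1-p))/H_2(p)\le c(\epsilon)<1$; Haar invariance makes the law of $\epsilon$ independent of the incoming state, which is precisely the uniformity your sketch lacks. The full lemma then follows by measuring the qubits of $B$ one at a time and iterating the contraction, yielding $\E S(A)\le c^{|B|/2}$.
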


We now outline the argument for why \texttt{SEBD} should be efficient for the extended brickwork architecture for sufficiently large extension parameters, before showing this more formally. During the evolution of \texttt{SEBD} as it sweeps from left to right across the lattice, it periodically encounters long stretches of length $2r$ in which no vertical gates are applied. We call these ``1-local regions'' since the gates applied in the corresponding effective 1D dynamics are 1-local when the algorithm is in such a region. Hence, in the effective 1D dynamics, no 2-qubit gates are applied and therefore the bond dimension of the associated MPS cannot increase during these stretches. It turns out that in 1-local regions, not only does the bond dimension needed to represent the state not increase, but it in fact rapidly decays in expectation. If $r$ is sufficiently large, then the effective 1D state at the end of the 1-local region is very close to a product state with high probability, regardless of how entangled the state was before the region. Hence, when  \texttt{SEBD} compresses the MPS describing the effective state at the end of the region, it may compress the bond dimension of the MPS to some fixed constant with very small incurred error. The two-qubit gates that are performed in-between 1-local regions only increase the bond dimension by a constant factor. Hence, with high probability, \texttt{SEBD} can use a $O(1)$ maximal bond dimension cutoff and simulate a random circuit with extended brickwork architecture with high probability.  More precisely, it turns out that the scaling $r \geq \Theta(\log(n))$ is sufficient to guarantee efficient simulation with this argument. We now give the argument in more detail before stating some implications as a corollary.
\begin{figure}
\centering
\includegraphics[width=\textwidth]{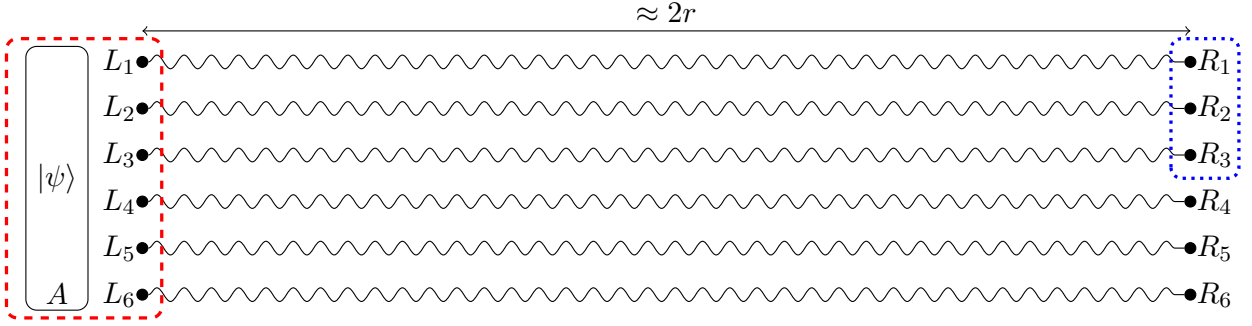}
\caption{Illustration of the state after the qubits of columns $i,i+1,\dots,j$ have been measured, but before gates in the lightcone of registers $A$ and $L$ have been performed.  In each row $i$, we are left with a post-measurement bipartite state $\ket{\phi_i}_{L_i R_i}$ depicted by a wavy line. The expected entanglement entropy $S(L_i)_{\phi_i}$ decays exponentially in $r$. The final state of interest $\ket{\psi'}$ is obtained by applying local unitaries to the qubits in the dashed red box before measuring all of these qubits in the computational basis, inducing the final state $\ket{\psi'}$ on $R = R_1 \cup \cdots \cup R_L$. By concavity of the von Neumann entropy, the expected entanglement entropy of $\ket{\psi'}$ across the cut defined by the dotted blue box is upper bounded by the entanglement entropy across this cut before the unitaries and measurements in the dashed red box are performed.}\label{fig:entangledPairs}
\end{figure}
\begin{lemma}
Let $C$ be an instance of \texttt{Brickwork}$(L,r,v)$. Then, with probability at least $1-2^{-\Theta(r)}$ over the circuit instance, \texttt{SEBD} running with maximal bond dimension cutoff $D = \Theta(1)$ and truncation error parameter $\epsilon = 2^{-\Theta(r)}$ can be used to (1) sample from the output distribution of $C$ up to error $n  2^{-\Theta(r)}$ in variational distance and (2) compute the output probability of an arbitrary output string up to additive error $n  2^{-\Theta(r)}/2^n$ in runtime $\Theta(n)$.
\end{lemma}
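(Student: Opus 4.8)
The plan is to track how the bond dimension of the frontier MPS evolves as \texttt{SEBD} sweeps horizontally across \texttt{Brickwork}$(L,r,v)$, exploiting the alternation between two kinds of regions. When the sweep crosses one of the $v$ columns of vertical gates, the bond dimension of the length-$L$ frontier can grow by at most a constant factor $q^{O(1)}$, since only $O(1)$ two-qubit gates straddle each horizontal cut there. In between, the sweep traverses a ``$1$-local region'' of width $2r$ in which no vertical gates act, so the effective $1$D dynamics factorizes across rows and no cross-row entanglement can be created. The goal is to show that each such region in fact \emph{destroys} nearly all cross-row entanglement: after it, the frontier is $2^{-\Theta(r)}$-close (in expected entanglement entropy across every cut) to a product state, so \texttt{SEBD} may recompress the MPS down to a fixed constant bond dimension $D=\Theta(1)$ at truncation error $\epsilon=2^{-\Theta(r)}$ per bond. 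Granting this, the runtime is $O(nD^3)=O(n)$, and summing the per-region errors with \Cref{cor:SEBD_error_bound} (for sampling) and \Cref{cor:SEBD_postselected_asymptotic_bound} (for probabilities) yields the claimed error bounds.

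The technical heart is the entanglement-destruction claim for a single $1$-local region, which is where I would invoke \Cref{lem:entanglementDecay}. Within each row, the width-$2r$ region is exactly a $1$D depth-$2$ Haar-random circuit followed by a computational-basis measurement of the contiguous block of $2r$ measured qubits, so \Cref{lem:entanglementDecay} gives that the new frontier qubits $R_i$ of row $i$ retain only expected entanglement $c^{2r}=2^{-\Theta(r)}$ with the old frontier $L_i$ lying on the far side of the measured block. The step that requires care is passing from this per-row statement to a bound on the cross-row entanglement of the \emph{joint} frontier state (the quantity that actually controls the MPS bond dimension), because the old frontiers $L_i$ of different rows are mutually entangled from the preceding layer of vertical gates. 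I would handle this exactly as in \Cref{fig:entangledPairs}: since the only channel by which a new frontier qubit $R_i$ can be correlated with another row is through its own $L_i$, and since all of the $L_i$ are measured out as the sweep advances, conditioning on those measurement outcomes and applying concavity of the von Neumann entropy bounds the expected cross-row entanglement of the post-region frontier by $\sum_i \E\, S(L_i)_{\phi_i}\le L\,2^{-\Theta(r)}$. I expect this conditioning step to be the main obstacle: one must condition on (rather than trace over) the old-frontier outcomes, since tracing would reintroduce the large classical mixedness inherited from the cross-row entanglement of the previous vertical layer, and one must verify that the residual $R_i$--$L_i$ links are the only source of cross-row correlation created in the region.

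From the expected-entropy bound I would extract a worst-case, per-instance bond-dimension bound by a two-step probabilistic argument. First, because the expected cross-cut entropy after a region is as small as $2^{-\Theta(r)}$, Markov's inequality shows that with probability $1-2^{-\Theta(r)}$ (over the circuit gates and measurement outcomes) the entropy across every one of the $L$ cuts is below $2^{-\Theta(r)}$; by the grouping inequality $h_2(1-\lambda_1)\le S$, a cut with such tiny entropy has largest Schmidt weight $\lambda_1$ within $2^{-\Theta(r)}$ of $1$, so discarding all but a constant number of Schmidt values costs truncation error at most $\epsilon=2^{-\Theta(r)}$ per bond and keeps $D=\Theta(1)$. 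Second, a union bound over the $O(v)$ regions and $O(L)$ cuts (with $r\geq\Theta(\log n)$ to absorb the polynomial factors into the exponent) shows that, except on a $2^{-\Theta(r)}$ fraction of circuit instances, the $D=\Theta(1)$ cutoff is never exceeded, so \texttt{SEBD} does not output \textsc{fail}.

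Finally I would assemble the pieces. On the good $1-2^{-\Theta(r)}$ fraction of instances the bond dimension never exceeds $D=\Theta(1)$, giving runtime $O(nD^3)=\Theta(n)$; the total discarded weight is the sum of the $O(v)$ per-region truncations, so \Cref{lem:sampling_error_bound} and \Cref{cor:SEBD_error_bound} bound the sampling error in variational distance by $L_2\sqrt{2\epsilon L_1}+p_{f,C}=\poly(n)\,2^{-\Theta(r)}=n\,2^{-\Theta(r)}$. For part (2), the identical bond-dimension control feeds into \Cref{cor:SEBD_postselected_asymptotic_bound} and the output-probability variant \Cref{SEBD_probability}, converting the per-bond truncation error into an additive error $n\,2^{-\Theta(r)}/2^n$ on the output probability $\mathcal{D}_C(\vb{x})$ for any fixed string $\vb{x}$, again in time $\Theta(n)$.
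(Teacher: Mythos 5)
Your proposal is correct and follows essentially the same route as the paper's proof: per-row application of \Cref{lem:entanglementDecay} inside the width-$2r$ 1-local regions, the concavity-of-von-Neumann-entropy argument of \Cref{fig:entangledPairs} (exploiting that the pre-measurement reduced state on the new frontier $R=\cup_i R_i$ is a product across rows) to bound the expected cross-cut entanglement by $L\,2^{-\Theta(r)}$, then Markov plus a union bound over the $O(Lv)$ bonds, fed into the established truncation-error corollaries. The only cosmetic differences are the ordering of the Jensen/Markov steps (the paper bounds $\E\,\lambda_{\max}\geq 2^{-\E S}$ before applying Markov, while you apply Markov to the entropy and then use the grouping inequality) and that for part (2) the paper invokes \Cref{cor:SEBD_probability_bound} rather than the asymptotic \Cref{cor:SEBD_postselected_asymptotic_bound}; both are interchangeable with your version.
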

\begin{proof}
Suppose the state stored by \texttt{SEBD} immediately before entering into a 1-local region is $\ket{\psi}_A$, defined on register $A$. After another $O(r)$ iterations of \texttt{SEBD}, just before the end of the 1-local region, denote the new one-dimensional state stored by \texttt{SEBD} as $\ket{\psi'}$. Note that $\ket{\psi'}$ is a random state, depending on both the random choices of gates  in the 1-local region and the random measurement outcomes. We now bound the expected entanglement entropy of $\ket{\psi'}$ across an arbitrary cut.

To this end, we observe that the random final state $\ket{\psi'}$ may be equivalently generated as follows. Instead of iterating \texttt{SEBD} as usual for $O(r)$ iterations, we first introduce a contiguous block of qubits that lie in the 1-local region. In particular, for all rows, we introduce all qubits that lie in columns $\{i, i+1, \dots, j\}$. Here, $i$ is chosen to be the leftmost column such that the lightcone of column $i$ does not contain qubits in register $A$. Similarly, $j$ is chosen to be the rightmost column such that the lightcone of qubits in column $j$ does not contain vertical gates. Note that $|i-j| = \Theta(r)$.

We next apply all gates in the lightcone of the qubits of columns $\{i, i+1, \dots, j\}$, before measuring these qubits in the computational basis. Note that in this step, we are effectively performing a set of $L$ one-dimensional depth-2 Haar-random circuits, and then measuring $\Theta(r)$ intermediate qubits for each of the $L$ instances. For each instance, we are left with a (generically entangled) pure state between a ``left'' and ``right'' subsystem, as illustrated in \Cref{fig:entangledPairs}. Let $L_i$ ($R_i$) denote the left (right) subsystem associated with row $i$, and let $\ket{\phi_i}_{L_i R_i}$ denote the associated post-measurement pure state on these subsystems. By \Cref{lem:entanglementDecay}, it follows that the expected entanglement entropy for any 1D instance obeys $\E S(L_i)_{\phi_i} \leq 2^{-\Theta(r)}$ where the expectation is over random circuit instance and measurement outcomes.

The next step is to apply all gates in the lightcone of the qubits of registers $A$ and $L := \cup_i L_i$ before measuring these registers, inducing a (random) 1D post-measurement pure state on subsystem $R := \cup_i R_i$. It is straightforward to verify that the distribution of the random 1D pure state $\ket{\psi'}_{R}$ obtained via this procedure is identical to that obtained from repeatedly iterating \texttt{SEBD} through column $j$\footnote{Strictly speaking, we are actually studying a version of \texttt{SEBD} that only performs the MPS compression step at the end of a 1-local region. Since 1-local operations cannot increase the bond dimension of the associated MPS, the algorithm can forego the compression steps during the 1-local regions without incurring a bond dimension increase.}. Indeed, the procedures are identical up to performing commuting gates and commuting measurements in different orders, which does not affect the measurement statistics or post-measurement states.

Our goal is now to bound the entanglement entropy $S(R_1 R_2 \dots  R_k)_{\psi'}$ in expectation across an arbitrary cut of the post-measurement 1D state. Such a bound follows from the concavity of the von Neumann entropy. Let $\rho_{R_1, \dots, R_k}$ denote the reduced density matrix on these subsystems before the measurements on $A$ and $L$ are performed. Let $\rho^x_{R_1, \dots, R_k}$ denote the reduced density matrix on these subsystems after the measurements on $A$ and $L$ are performed and the outcome $x$ is obtained; note that the final state $\psi'$ implicitly depends on $x$.  Now, letting $\Pr[x]$ denote the probability of obtaining outcome $x$, we have the relation $\sum_x\Pr[x] \rho^x_{R_1 \cdots R_k} = \rho_{R_1 \cdots R_k}$. To see this, observe that for any set of measurement operators $\{M^x\}_x$ satisfying $\sum_x M^{x \dagger} M^x = I$, we have $\rho_{R_1\cdots R_k} = \tr_{\setminus R_1\cdots R_k}\qty(\dyad{\psi'}) = \sum_x \tr_{\setminus R_1\cdots R_k}\qty(M^x \dyad{\psi'} M^{x\dagger}) = \sum_x \Pr[x] \frac{\tr_{\setminus R_1\cdots R_k}\qty(M^x \dyad{\psi'} M^{x\dagger})}{\tr\qty(M^x \dyad{\psi'} M^{x\dagger})} = \sum_x \Pr[x] \rho^x_{R_1 \cdots R_k}$. Now,
\begin{align}
\sum_x \Pr[x] S(R_1 \dots R_k)_{\psi'} &= \sum_x \Pr[x] S(\rho^x_{R_1, \dots, R_k}) \\
&\leq S\qty(\sum_x \Pr[x] \rho^x_{R_1, \dots, R_k}) \\
&= S(\rho_{R_1, \dots, R_k}) \\
&= \sum_{i=1}^k S(R_i)_{\phi_i}
\end{align}
where the first  line follows by definition, the second line follows from concavity of the von Neumann entropy, the third line uses the relation we discussed previously, and in the final line we used the fact that $\rho_{R_1, \dots, R_k}$ is a product state. Hence, we see that for any fixed set of gates and for any outcomes of the measurements of qubits in columns $i, i+1, \dots, j$, the expected entanglement entropy of the final 1D state $\psi'$ on $R$ across any cut is bounded by the entropy across that cut before the measurements on subregions $A$ and $L$. Taking the expectations of both sides of this result with respect to the random gates and measurement outcomes of the qubits in columns $i, i+1, \dots, j$, we finally obtain
\begin{equation}
\E S(R_1 \dots R_k)_{\psi'} \leq L 2^{-\Theta(r)}
\end{equation}
where we used the fact that $k < L$ and $\E S(R_i)_{\phi_i} \leq 2^{-\Theta(r)}$. We now use the fact that the largest eigenvalue $\lambda_{\max}(R_1\cdots R_k)$ of the reduced density matrix is lower bounded as $\lambda_{\max}(R_1\cdots R_k)_{\psi'} \geq 2^{-S(R_1\cdots R_k)_{\psi'}}$; this follows from the fact that Shannon entropy upper bounds min-entropy. Using this inequality as well as Jensen's inequality, we have the bound
\begin{equation}
\E \lambda_{\max}(R_1\cdots R_k) \geq \E 2^{-S(R_1\cdots R_k)_{\psi'}} \geq 2^{-\E S(R_1\cdots R_k)_{\psi'}}  \geq 2^{-L 2^{-\Theta(r)}}  \geq 1 - L 2^{-\Theta(r)}.
\end{equation}
Therefore, if we truncate all but the largest Schmidt coefficient across the $R_k \, : \, R_{k+1}$ cut, we incur an expected truncation error upper bounded by $L 2^{-\Theta(r)}$. Hence, by Markov's inequality, we incur a truncation error upper bounded by $L 2^{-\Theta(r)}$ with probability at least $1-2^{-\Theta(r)}$.

Therefore, if we run \texttt{SEBD} using a \emph{per bond} truncation error of $\epsilon = L 2^{-\Theta(r)}$ and a maximum bond dimension cutoff of $D = O(1)$, the failure probability will be upper bounded by $L v 2^{-\Theta(r)}$; here we used the union bound to upper bound the  probability that any of the $O(Lv)$ bonds over the course of the algorithm becomes larger than the cutoff $D$. Hence, by \Cref{cor:SEBD_error_bound}, for at least $1-2^{-\Theta(r)}$ fraction of random circuit instances, \texttt{SEBD} can sample from the output distribution with variational distance error $L v 2^{-\Theta(r)} < n 2^{-\Theta(r)}$. Similarly, by \Cref{cor:SEBD_probability_bound}, for at least $1-2^{-\Theta(r)}$ fraction of circuit instances, \texttt{SEBD} can compute the probability of the all-zeros output string up to additive error $n 2^{-\Theta(r)} / 2^n$.

Since the runtime of \texttt{SEBD} is $O(n D^3)$ when acting on qubits as discussed previously, and $D$ is chosen to be constant for the version of the algorithm used here, the runtime is $O(n)$.
\end{proof}

With an appropriate choice of $r = \Theta(\log(L))$, the above result implies that \texttt{SEBD} can perform the simulation with error $1/\poly(n)$ for at least $1-1/\poly(n)$ fraction of instances. Similarly, choosing $r$ to be a sufficiently large polynomial in $L$, \texttt{SEBD} can perform the simulation with error $2^{-n^{1-\delta}}$ for $1-2^{-n^{1-\delta}}$ fraction of instances, for any constant $\delta > 0$. We summarize these observations as the following corollary.

\begin{corollary}
For any choice of polynomially bounded $v, p_1, p_2$, for any sufficiently large constant $c$ \texttt{SEBD}  can simulate $1-1/p_1(n)$ fraction of instances of \texttt{Brickwork}$(L, \lceil c \log(L) \rceil, v(L))$ up to error $\varepsilon \leq 1/p_2(n)$ in time $O(n)$. For any choice of $\delta > 0$ and $v(L) \leq \poly(L)$, for any sufficiently large constant $c$  \texttt{SEBD}  can simulate $1-2^{-n^{1-\delta}}$ fraction of instances of \texttt{Brickwork}$(L,\lceil L^c \rceil, v(L))$ up to error $\varepsilon \leq 2^{-n^{1-\delta}}$ in time $O(n)$. Here, ``simulate with error $\varepsilon$'' implies the ability to sample with variational distance error $\varepsilon$ and compute the output probability of some fixed string $\vb{x}$ with additive error $\varepsilon/2^n$.
\end{corollary}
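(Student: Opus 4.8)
The plan is to obtain this corollary directly from the lemma proved immediately above, which already establishes that on an instance of $\texttt{Brickwork}(L,r,v)$, with probability at least $1 - 2^{-\Theta(r)}$ over the circuit instance, \texttt{SEBD} run with a constant bond-dimension cutoff $D = \Theta(1)$ and truncation parameter $\epsilon = 2^{-\Theta(r)}$ simultaneously samples to variational distance $n\,2^{-\Theta(r)}$ and estimates any fixed output probability to additive error $n\,2^{-\Theta(r)}/2^n$, all in time $\Theta(n)$. The only remaining work is to substitute the two prescribed choices of the extension parameter $r$ and track how the resulting bounds scale in $n$, using the relation $n = \Theta(Lrv)$ from \Cref{fig:extendedBrickwork} together with the polynomial bounds on $v$ (and hence on $r$). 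Since $D$ stays constant in both cases, the runtime is $\Theta(n)$ with no further argument; all the content lies in verifying the failure-probability and error exponents.

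For the first part I would set $r = \lceil c \log L \rceil$. Because $r = \Theta(\log L)$ and $v \leq \poly(L)$, the relation $n = \Theta(Lrv)$ gives $n = \poly(L)$, hence $L = n^{\Omega(1)}$; concretely, if $v \leq L^m$ then $n = O(L^{m+1}\log L)$, so $L \geq n^{1/(m+2)}$ for large $n$. Substituting, the failure probability is $2^{-\Theta(r)} = L^{-\Theta(c)} = n^{-\Theta(c)}$, and the variational-distance error (resp.\ the rescaled additive-probability error) acquires a single factor of $n$, giving $n^{1-\Theta(c)}$ (resp.\ $n^{1-\Theta(c)}/2^n$). For any fixed target polynomials $p_1, p_2$, choosing the constant $c$ large enough makes the effective exponent $\Theta(c)/(m+2)$ exceed the degrees of both $p_1$ and $p_2$, so that the failure probability is $\leq 1/p_1(n)$ and the error is $\leq 1/p_2(n)$, as required.

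The second part is analogous but with $r = \lceil L^c \rceil$. Now $n = \Theta(L\cdot L^c\cdot v) = \Theta(L^{1+c}v)$, and with $v \leq L^m$ this yields $L \geq n^{1/(1+c+m)}$, hence $r \geq L^c \geq n^{c/(1+c+m)}$. Since $c/(1+c+m) \to 1$ as $c \to \infty$, for any fixed $\delta > 0$ one can pick $c$ large enough that $c/(1+c+m) \geq 1 - \delta/2$, so $r \geq n^{1-\delta/2}$. Then the failure probability $2^{-\Theta(r)} \leq 2^{-\Theta(n^{1-\delta/2})}$ and the error $n\cdot 2^{-\Theta(r)}$ (and its $2^{-n}$-scaled analogue for probabilities) are both bounded by $2^{-n^{1-\delta}}$ for all sufficiently large $n$, the constant hidden in the $\Theta$ and the extra factor of $n$ being absorbed by the gap between the exponents $1-\delta/2$ and $1-\delta$.

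I expect no genuine obstacle: the corollary is a routine specialization of the lemma, and the only care required is the bookkeeping that converts bounds stated in $r$ and $L$ into bounds in $n$ via $n = \Theta(Lrv)$ and $v \leq \poly(L)$, plus the verification that a ``sufficiently large constant $c$'' really does push the relevant exponent above the prescribed polynomial degree (part one) or above $1-\delta$ (part two). The one mild subtlety worth stating explicitly is that $L$ grows polynomially in $n$ — guaranteed because $r,v \leq \poly(L)$ force $n = \poly(L)$ — so that a failure probability expressed as a negative power of $L$ is genuinely a negative power of $n$.
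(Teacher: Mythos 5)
Your proposal is correct and follows essentially the same route as the paper: the paper also obtains the corollary by specializing the preceding lemma to $r = \Theta(\log L)$ (resp.\ $r$ a sufficiently large polynomial in $L$) and converting the $2^{-\Theta(r)}$ bounds into bounds in $n$ via $n = \Theta(Lrv)$ with $v \leq \poly(L)$. Your version simply makes the exponent bookkeeping more explicit than the paper's brief remark.
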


\section{Numerical results}\label{se:numerics}
\begin{figure}[htbp]
	\centering
	\subfloat[\texttt{CHR}]{{\includegraphics[width=0.46\textwidth]{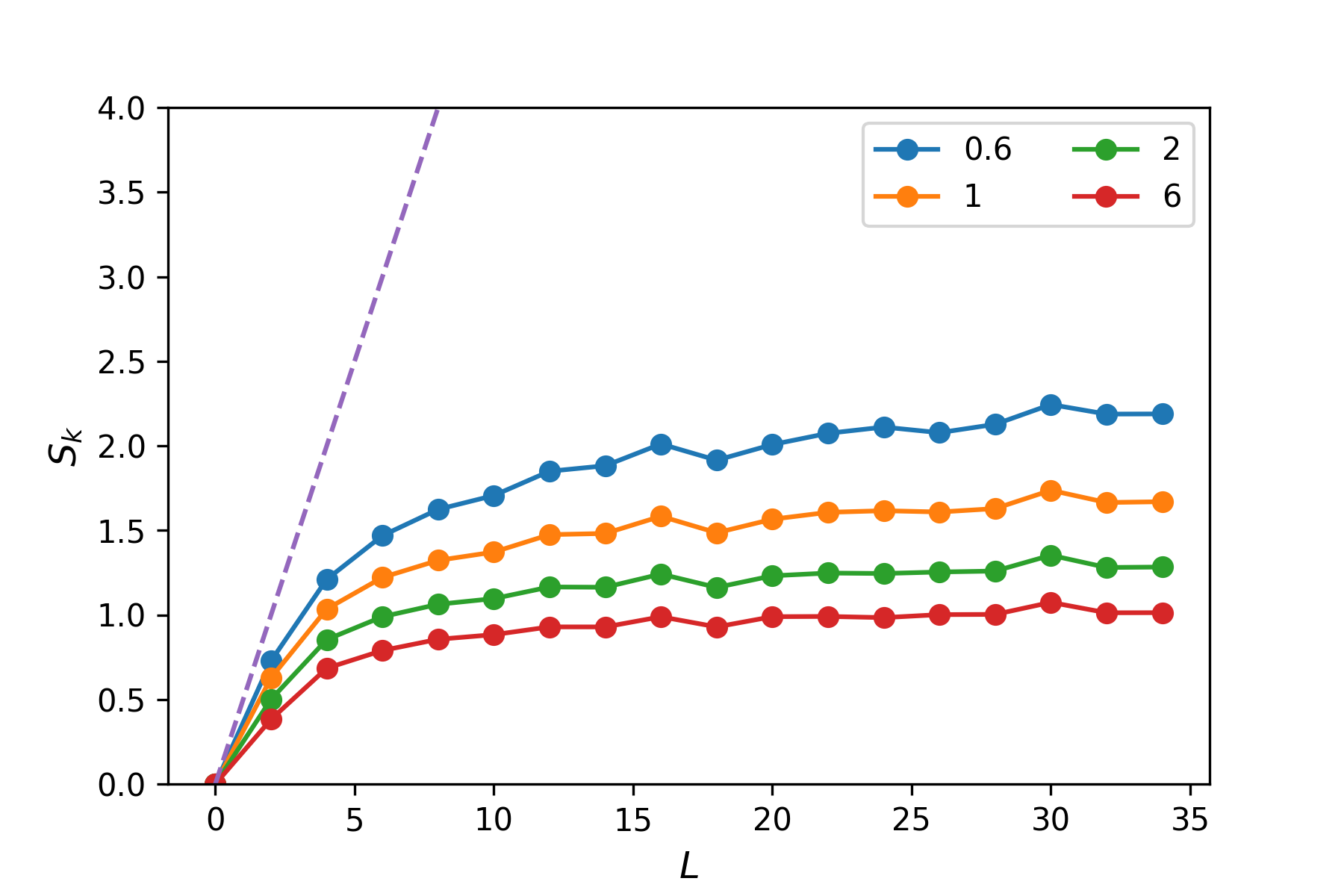} }}%
	\qquad
	\subfloat[Brickwork]{{\includegraphics[width=0.46\textwidth]{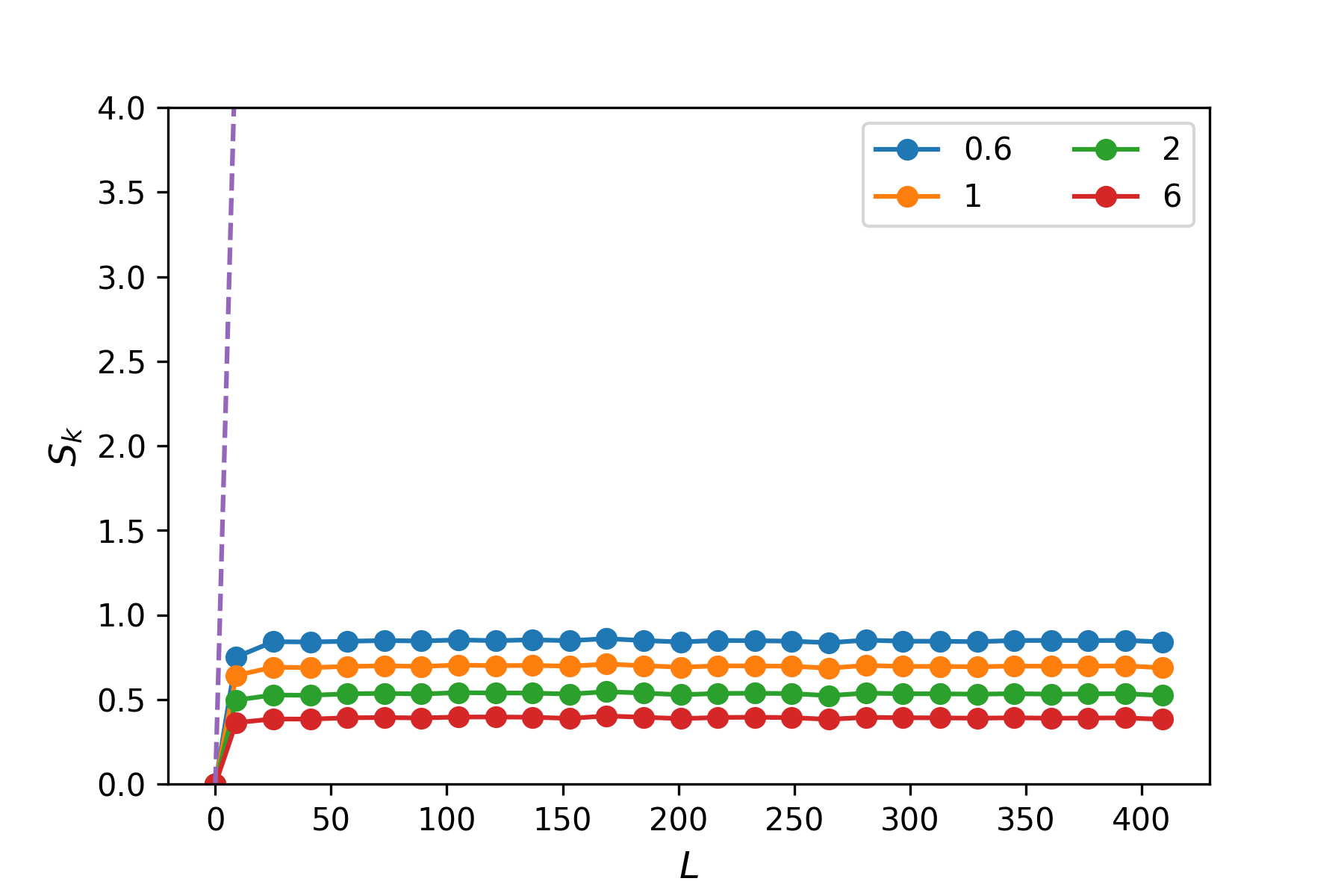} }}%
	\caption{R\'{e}nyi half-chain entanglement entropies $S_k$ versus sidelength $L$ in the effective 1D dynamics for the \texttt{CHR} and brickwork models, after 80 (resp.~550) iterations. Each point represents the entanglement entropy averaged over 50 random circuit instances, and over the final 10 (resp.~50) iterations for the \texttt{CHR} (resp.~brickwork) model. Dashed lines depict the half-chain entanglement entropy scaling of a maximally entangled state, which can be created with a ``worst-case'' choice of gates for both architectures. The maximal truncation error per bond $\epsilon$ was $10^{-10}$ for \texttt{CHR} and $10^{-14}$ for the brickwork model.}\label{fig:entanglement}
\end{figure}

\begin{figure}[htbp]
	\centering
	\includegraphics[width=0.65\textwidth]{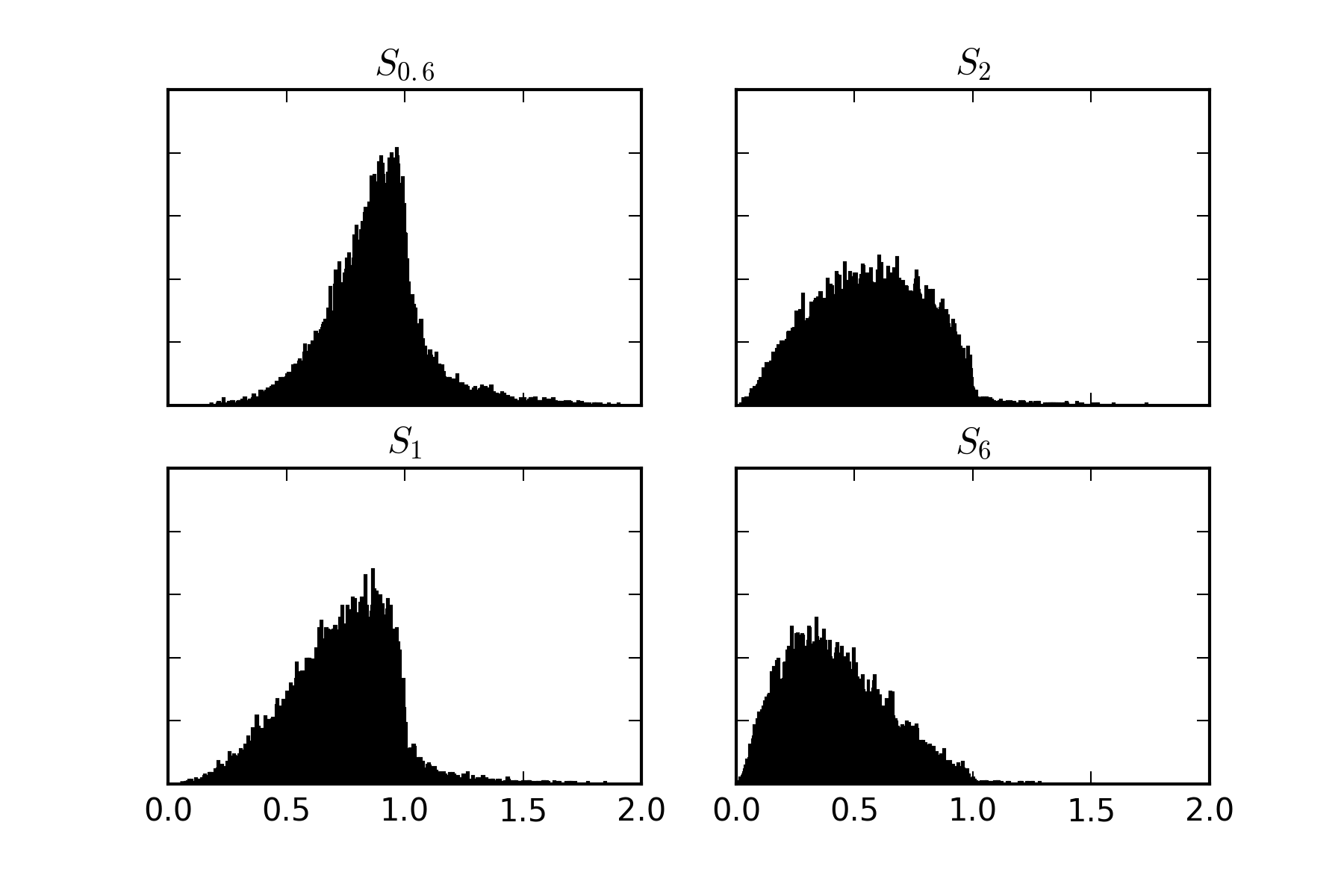}
	\caption{Histograms of observed half-chain R\'{e}nyi entanglement entropies after 49 iterations of the effective 1D dynamics of the brickwork architecture with sidelength $L=49$. Histograms are based on approximately 11,000 trials. Each value is generated from a distinct random circuit realization.}\label{fig:histogram}
\end{figure}

\begin{figure}[htbp]
	\centering
	\subfloat{{\includegraphics[width=0.46\textwidth]{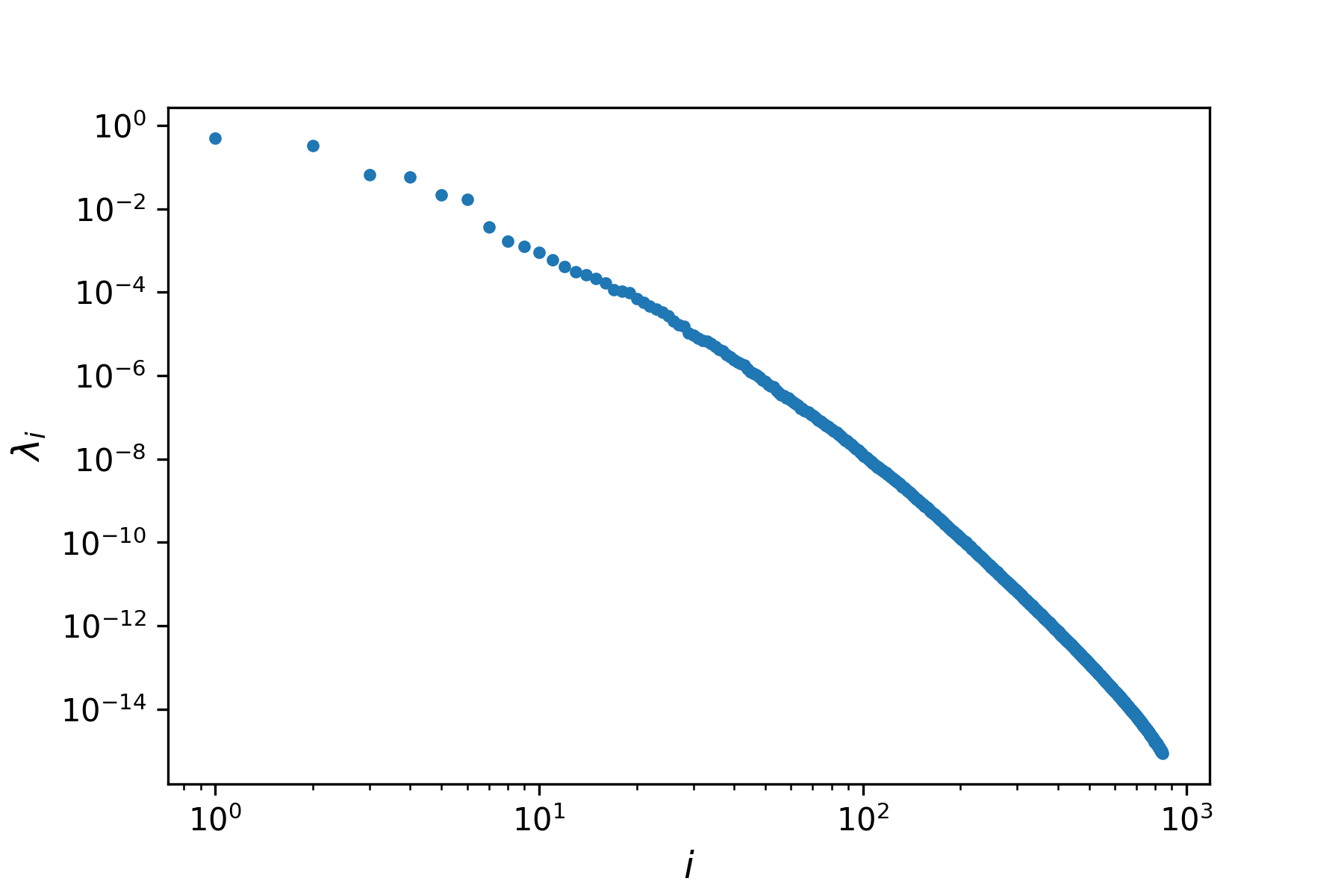} }}%
	\qquad
	\subfloat{{\includegraphics[width=0.46\textwidth]{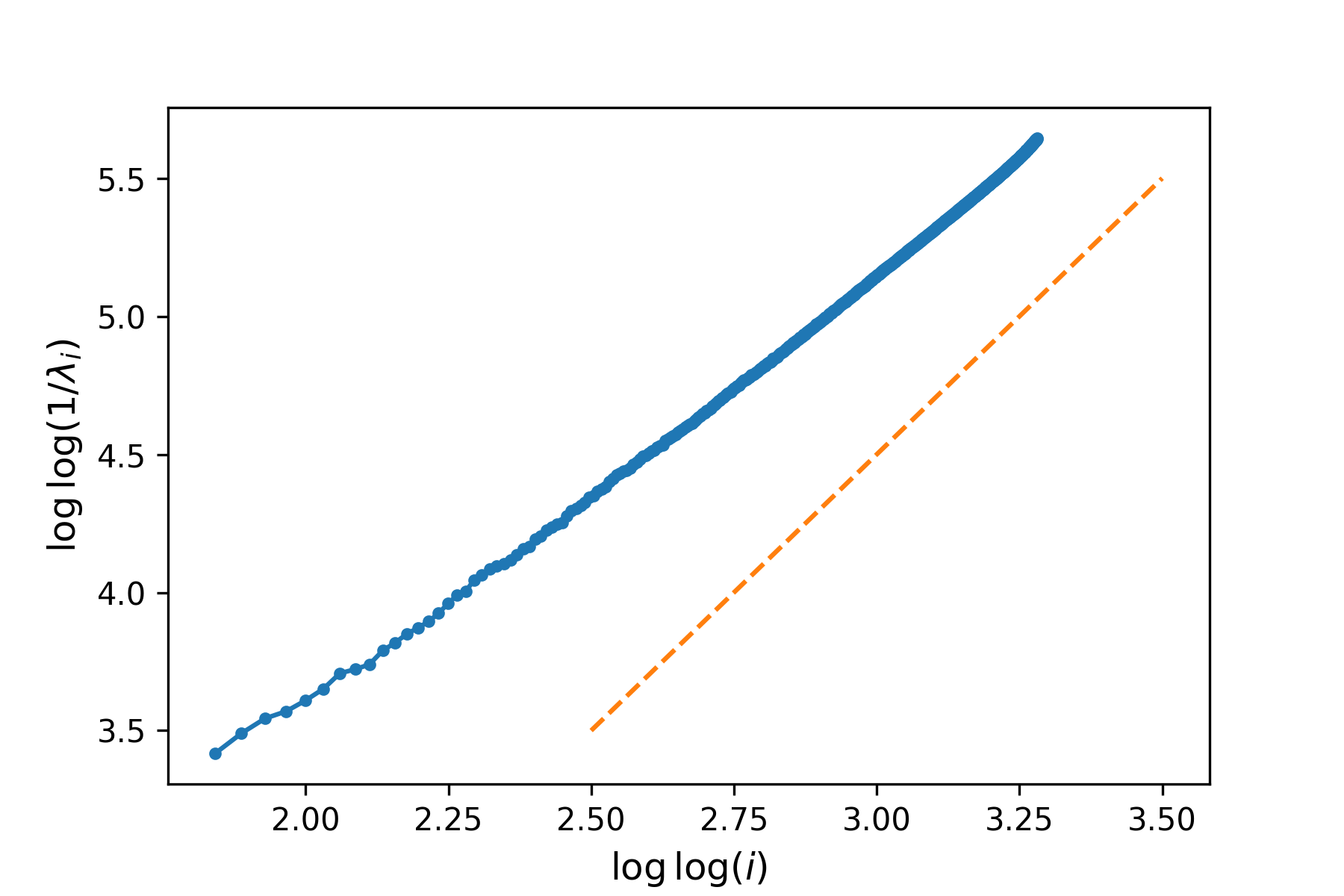} }}%
	\caption{Typical half-chain entanglement spectra $\lambda_1 \geq \lambda_2 \geq \dots$ observed during the effective 1D dynamics of \texttt{CHR}. These plots were generated from an instance with sidelength $L=44$ after running for $44$ iterations, with squared Schmidt values smaller than approximately $10^{-15}$ truncated. The left figure shows the spectra of half-chain eigenvalues.  The downward curvature in the log-log scale indicates superpolynomial decay. The right figure displays the same data (minus the few largest values) on a loglog-loglog scale. The toy model predicts that the blue curve asymptotes  to a  straight line with slope two in the right figure, illustrated by the dashed orange line, corresponding to scaling like $\lambda_i \sim 2^{-\Theta(\log^2(i))}$.  The plot is qualitatively consistent with this prediction. The spectrum for the brickwork model decays too quickly to obtain as useful statistics without going to much higher numerical precision.}\label{fig:spectrum}
\end{figure}

We implemented\footnote{code available at \href{https://github.com/random-shallow-2d/random-shallow-2d}{https://github.com/random-shallow-2d/random-shallow-2d}} \texttt{SEBD} on two families of random circuits: one consisting of depth-3 random circuits defined on a ``brickwork architecture'' consisting of three layers of two-qubit Haar-random gates (\Cref{fig:extendedBrickwork} with parameter $r=1$), and  the other being the random circuit family obtained by applying single-qubit Haar-random gates to all sites of a cluster state -- we referred to this problem as $\texttt{CHR}$ previously. Note that the former architecture has depth three (not including the measurement layer) and the latter has depth four, and both architectures support universal measurement-based quantum computation \cite{broadbent2009universal}, meaning they have the worst-case-hard property relevant for Conjecture \ref{con:efficient}. We did not implement \texttt{Patching}, due to its larger overhead.

Implementing \texttt{SEBD} on a standard laptop, we could simulate typical instances of the $409 \times 409$ brickwork model  with truncation error $10^{-14}$ per bond with a runtime on the order of one minute per sample, and typical instances of the $34\times 34$ \texttt{CHR} model with truncation error $10^{-10}$ per bond with a runtime on the order of five minutes per sample (these truncation error settings correspond to sampling errors of less than $0.01$ in variational distance as derived previously in \Cref{sec:SEBD}).  We in fact simulated instances of \texttt{CHR} with grid sizes as large as $50\times 50$, although due to the significantly longer runtime for such instances we did not perform large numbers of trials for these cases. In the case of the $409\times 409$ brickwork model, performing over $3000$ trials (consisting of generating a random circuit instance and generating a sample from its output distribution using a truncation error of $10^{-14}$) and finding no instances we could not simulate, then with $95\%$ confidence, we may conclude that the probability of successfully simulating a random instance with this truncation error is over $0.999$. Using the bound derived in \Cref{sec:SEBD}, we can therefore conclude that, with $95\%$ confidence, we can sample from the output distribution of at least $0.9$ fraction of $409\times 409$ circuit instances with variational distance error at most $0.01$.  Intuitively, we expect the true simulable fraction to be much larger than this statistical guarantee, as it appears that the entanglement in the effective 1D dynamics only grows extensively for highly structured instances. Note that for both models, the runtime for a fixed truncation error was qualitatively highly concentrated around the mean. We expect that substantially larger instances of both random circuit families could be quickly simulated with more computing power, although we previously argued in \Cref{se:overview} that the $409 \times 409$ simulation of the brickwork architecture is already far beyond what could have been achieved by previous simulation methods that we are aware of.

The discrepancy between maximal lattice sizes achieved for the two architectures is a result of the fact that the two have very different effective 1D dynamics. In particular, the entanglement of the effective dynamics for the brickwork architecture saturates to a significantly smaller value than that of the cluster state architecture. And even more directly relevant for prospects of fast simulation, the  typical spectrum of Schmidt values across some cut of the effective 1D dynamics for the brickwork architecture decays far more rapidly than that of the 1D dynamics for \texttt{CHR}.  For this reason, the slower-decaying eigenvalue spectrum of \texttt{CHR} was significantly more costly for the runtime of the algorithm. (In fact, the eigenvalue spectrum of the brickwork model decayed sufficiently quickly that we were primarily limited not by the runtime of our algorithm, but by our numerical precision, which could in principle be increased.) But while the slower decay of the spectrum for the \texttt{CHR} model necessitated a longer runtime for a given sidelength, it allowed us to study the functional form of the spectrum and in particular compare against the predictions of the toy model of \Cref{sec:conjectured} as we discuss below.

While we were computationally limited to probing low-depth and small-size models, our numerical results point toward \texttt{SEBD} having an asymptotic running time for both models bounded by $\poly(n,1/\varepsilon,1/\delta)$ in order to sample with variational distance $\varepsilon$ or compute output probabilities with additive error $\varepsilon/q^n$ with probability $1-\delta$, suggesting that \Cref{con:efficient} is true.  Our numerical evidence for this is as follows.

\begin{enumerate}
	\item We find that the effective 1D dynamics associated with these random circuit families appear to be in area-law phases, as displayed in \Cref{fig:entanglement}. That is, the entanglement does not grow extensively with the sidelength $L$, but rather saturates to some constant (which appears to be concentrated, as illustrated in \Cref{fig:histogram}). We furthermore observe qualitatively identical behavior for some R\'{e}nyi entropies $S_\alpha$ with $\alpha < 1$. It is known \cite{schuch2008entropy} that this latter condition is sufficient to imply that a 1D state may be efficiently described by an MPS, indicating that \texttt{SEBD} is efficient for these circuit families and that \Cref{con:efficient} is true.

	\item  For further evidence of efficiency, we study the functional form of the entanglement spectra of the effective 1D dynamics. For the effective 1D dynamics corresponding to \texttt{CHR}, we observe superpolynomial decay of eigenvalues (i.e.~squared Schmidt values) associated with some cut, displayed in \Cref{fig:spectrum}, indicating that choosing a maximal bond dimension of $D = \poly(1/\epsilon)$ is more than  sufficient to incur less than $\epsilon$ truncation error per bond. The observed spectrum tends toward a scaling which is consistent with the asymptotic scaling of $\lambda_i \sim 2^{-\Theta(\log^2 (i))}$ predicted by the toy model of \Cref{sec:conjectured} and consistent with our \Cref{con:aggressive}. Note that this actually suggests that the required bond dimension of \texttt{SEBD} may be even smaller than $\poly(1/\epsilon)$, scaling like $D = 2^{\Theta(\sqrt{\log(1/\epsilon)})}$.
\end{enumerate}

While these numerical results may be surprising given the prevalence of average-case hardness conjectures for quantum simulation, they are not surprising from the perspective of the recent works (discussed in previous sections) that find strong evidence for an entanglement phase transition from an area-law to volume-law phase for 1D unitary-and-measurement processes driven by measurement strengths. Since the effective dynamics of the 2D random shallow circuits we study are exactly such processes, our numerics simply point out that these systems are likely on the area-law side of the transition. (However, no formal universality theorems are known, so the various  models of unitary-and-measurement circuits that have been studied are generally not known to be equivalent to each other.) In the case of the brickwork architecture, we are also able to provide independent analytical evidence (Section \ref{sec:brickwork}) that this is the case by showing the ``quasi-entropy'' $\tilde{S}_2$ for the 1D process is in the area-law phase. We leave the problem of numerically studying the precise relationship between circuit depth, qudit dimension, properties of the associated stat mech models (including ``quasi-entropies'') as discussed in subsequent sections, and the performance of \texttt{SEBD} for future work. In particular, simulations of larger depth and larger qudit local dimension could be used to provide numerical support for \Cref{con:transition}, which claims that as these parameters are increased the circuit architectures eventually transition to a regime where our algorithms are no longer efficient.

\section{Mapping random circuits to statistical mechanical models}\label{se:statmech}

The close relationship between general tensor networks and statistical mechanics has been explored in previous work (see, e.g.~\cite{robeva2018duality}). The relationship becomes especially useful when the tensors are drawn from random ensembles. For example, a mapping from random tensor networks to stat mech models was used to study holographic duality in \cite{hayden2016holographic, vasseur2018entanglement}. The same idea was applied to tensor networks where the tensors are Haar-random unitary matrices, i.e.~random quantum circuits, to study quantum chaos and quantify entanglement growth under random unitary dynamics \cite{nahum2018operator,von2018operator,zhou2019emergent,hunter2019unitary}. Very recently, \cite{bao2019theory} and \cite{jian2019measurement} used this mapping to argue that the phase transition from area-law to volume-law scaling of entanglement entropy numerically observed in circuits consisting of Haar-random local gates mixed with projective measurements is related to the disorder-to-order phase transition in Ising-like statistical mechanical models.

We present the details of this mapping for unitary quantum circuits in Section \ref{se:mapping} and then in Section \ref{se:weakmeasurementmapping}, we show how it is applied to 1D circuits interspersed with weak measurements, as was done in \cite{bao2019theory,jian2019measurement} (however, we analyze a different weak measurement). This provides background and context for our application of the mapping directly to 2D circuits, the relevant case for our algorithms, in Section \ref{se:statmech2D}.

\subsection{Mapping procedure} \label{se:mapping}

\paragraph{Setup.}
{
Let our system consist of $n$ qudits of local dimension $q$. The circuits we consider are specified by a sequence of pairs of qudits (indicating where unitary gates are applied) and single-qudit weak measurements; this sequence can be assembled into a quantum circuit diagram. The single-qudit measurements are each described by a set $\mathcal{M}$ of measurement operators along with a probability distribution $\mu$ over the set $\mathcal{M}$. These sets are normalized such that $\tr(M^\dagger M)$ is constant for all $M \in \mathcal{M}$ and $\E_{M\leftarrow\mu} M^\dagger M = \mathbb{I}_q$ where $\mathbb{I}_q$ is the $q \times q$ identity matrix. Thus we have $\tr(M^\dagger M)=q$ for all $M$. The introduction of a probability measure over $\mathcal{M}$ in our notation, which was also used in \cite{jian2019measurement}, is not conventional, but it is equivalent to the standard formulation and will be important for later definitions.

When a measurement is performed, if the state of the system at the time of measurement is $\sigma$, the probability of measuring the outcome associated with operator $M$ is $\mu(M)\tr( M\sigma M^\dagger)$ (Born rule for quantum measurements). For a fixed outcome $M$, the quantity $\tr( M\sigma M^\dagger)$ is a function of $\sigma$ that we refer to as the \textit{relative likelihood} of obtaining the outcome $M$ on the state $\sigma$, since it gives the ratio of the probability of obtaining outcome $M$ in the state $\sigma$ to the probability of obtaining outcome $M$ in the maximally mixed state $\frac{1}{q}\mathbb{I}_q$.  After obtaining outcome $M$, the state is updated by the rule $\sigma \rightarrow M\sigma M^\dagger/\tr(M\sigma M^\dagger)$.
Thus a pure initial state remains pure throughout the evolution. For notational convenience and without loss of generality, we will assume that for each $u$, the $u$th unitary is immediately followed by single-qudit measurements $(\mathcal{M}_u, \mu_u)$ and $(\mathcal{M}'_u,\mu'_u)$ on the qudits $a_u,a'_u \in [n]$ that are acted upon by the unitary, respectively; in the case no measurement is performed, we may simply take $\mathcal{M}_u$ to consist solely of the identity operator, and in the case that more than one measurement is performed, we may multiply together the sets of measurement operators and their corresponding probability distributions to form a single set describing the overall weak measurement.

Thus, the (non-normalized) output state of the circuit with $l$ unitaries acting on the initial state $\ket{1\ldots 1}$ can be expressed as
\begin{equation}
    \rho = {M'}_l M_l U_l\ldots M'_1M_1U_1\ket{1\ldots 1}\bra{1\ldots 1} U_1^\dagger M_1^\dagger
    (M'_1)^\dagger\ldots U_l^\dagger M_l^\dagger (M'_l)^\dagger\label{eq:rhocircuitoutput}
\end{equation}
where each unitary $U_u$ is chosen from the Haar measure over unitaries acting on qudits $a_u$ and $a'_u$, while $M_u$ and $M'_u$ are the measurement operators associated with the measurement outcome obtained upon performing a measurement on qudits $a_u$ and $a'_u$, respectively, following application of unitary $U_u$.
}

\paragraph{Goal.}
{
The objective of the stat mech mapping is to learn something about the entanglement entropy for the output state $\rho$ on some subset $A$ of the qudits. The $k$-R\'{e}nyi entanglement entropy for the state $\rho$ on region $A$ is defined as
\begin{equation}
 S_k(A)_\rho = \frac{1}{1-k}\log\left(\frac{Z_{k,A}}{Z_{k,\emptyset}}\right)
\end{equation}
where
\begin{align}
    Z_{k,\emptyset}&= \tr(\rho)^k \label{eq:Zk0def}\\
    Z_{k,A}&= \tr(\rho_A^k). \label{eq:ZkAdef}
\end{align}
and $\rho_A$ is the reduced density matrix of $\rho$ on region $A$. The von Neumann entropy $S(A)_\rho = -\tr\left(\frac{\rho_A}{\tr(\rho)} \log\left(\frac{\rho_A}{\tr(\rho)}\right)\right)$ represents the $k \rightarrow 1$ limit.

For the purposes of assessing the efficiency of our algorithms, we would like to be able to calculate the average value of the $k$-R\'{e}nyi entropy over the random choice of the unitaries in the circuit and for measurement outcomes drawn randomly according to the Born rule. Mathematically, we let the notation $\E_U(Q)$ represent the expectation of a quantity $Q$ when the unitaries of the circuit are drawn uniformly at random from the Haar measure and the measurement outcomes are drawn at random from the distribution over their respective sets of measurement operators
\begin{equation}\label{eq:ECQ}
    \E_U(Q) = \E_{M_1 \leftarrow \mu_1}\E_{M'_1 \leftarrow \mu'_1}\ldots\E_{M_l \leftarrow \mu_1}\E_{M'_l \leftarrow \mu'_1}\int_{U(q^2)} dU_1\ldots \int_{U(q^2)}dU_l \; Q
\end{equation}
Here $\int_{U(q^2)}$ denotes integration over the Haar measure of the unitary group with dimension $q^2$.
To take into account the Born rule, a certain choice of unitaries and measurement outcomes leading to the output state $\rho$, as in Eq.~\eqref{eq:rhocircuitoutput}, should be weighted by $\tr(\rho)$, i.e.~the product of the relative likelihoods of all the measurement outcomes. Thus, the relevant average $k$-R\'{e}nyi entropy values are given by
\begin{equation}\label{eq:entropydesired}
    \langle S_k(A)_\rho \rangle := \frac{\E_U( \tr(\rho) S_{k}(A)_\rho)}{\E_U(\tr(\rho))} = \frac{1}{1-k}\frac{\E_U\left( \tr(\rho)\log\frac{Z_{k,A}}{Z_{k,\emptyset}}\right)}{\E_U(\tr(\rho))}.
\end{equation}
As we will see, the quantity naturally computed by the stat mech model is not $\langle S_k(A)_\rho \rangle$, but rather the ``quasi-entropy" given by
\begin{equation}\label{eq:freeenergydifference}
 \tilde{S}_k(A) := \frac{1}{1-k}\log\left(\frac{\E_U(\tr(\rho)^k \frac{Z_{k,A}}{Z_{k,\emptyset}})}{\E_U(\tr(\rho)^k)}\right) = \frac{1}{1-k}\log\left(\frac{\E_U(Z_{k,A})}{\E_U(Z_{k,\emptyset})}\right) = \frac{F_{k,\emptyset}-F_{k,A}}{1-k}
\end{equation}
where $F_{k,\emptyset/A} := -\log(\E_U(Z_{k,\emptyset/A}))$ will be associated with the ``free energy'' of the classical stat mech model that the circuit maps to. When clear, we abbreviate $\tilde{S}_k(A)$ by $\tilde{S}_k$ and $\langle S_k(A)_\rho \rangle$ by $\langle S_k \rangle$. It is apparent that $\langle S_k\rangle$ is not equal to the quasi-entropy $\tilde{S}_k$: the definition of $\langle S_k \rangle$ weights circuit instances by $\tr(\rho)$ and takes the $\log$ before the expectation, while the definition of $\tilde{S}_k$ weights by $\tr(\rho)^k$ and takes the $\log$ afterward. Indeed, it is possible for $\tilde{S}_k$ to be smaller than some constant independent of $L$ (area law), while $\langle S_k \rangle$ scales extensively with $L$ (volume law) due to fluctuations of the random variable $Z_{k,A}$ away from its average value toward 0.

Importantly, though, $\tilde{S}_k \rightarrow \langle S_k\rangle$ as $k \rightarrow 1$; in this limit both quantities approach the expected observed von Neumann entropy $\langle S \rangle$ of the circuit output. This conclusion is justified by L'Hospital's rule and noting that $Z_{k,A}/Z_{k,\emptyset} \rightarrow 1$ as $k\rightarrow 1$. We will see that the stat mech model can only be applied for integers $k \geq 2$, so unfortunately it does not allow for direct access to formula $\tilde{S}_k$ in this limit. Nonetheless, the fact that the quasi-entropy $\tilde{S}_k$ is intimately related to the actual expectation of the von Neumann entropy lends some justification to the use of $\tilde{S}_k$ as a proxy for $\langle S_k \rangle$ even when $k\neq1$. This is further justified by previous work studying random 1D circuits without measurements; in \cite{von2018operator}, the growth rate of $\tilde{S}_2$ in random 1D circuits was calculated using the stat mech mapping (note that when there are no measurements, we have $\tr(\rho) = 1$ and thus $Z_{k,\emptyset} = 1$ for any $k$), and no significant difference was found with numerical calculations of $\langle S_2 \rangle$. Moreover, \cite{zhou2019emergent} used the so-called \textit{replica trick} to directly compute $\langle S_2 \rangle$ as a series in $1/q$, where $q$ is the qudit local dimension, and the leading term of this expansion agrees with $\tilde{S}_2$. While the replica trick is not completely rigorous, this gives a clear indication that $\tilde{S}_2$ is a valid substitute for $\langle S_2 \rangle$ in the $q\rightarrow\infty$ limit and it is a good approximation for finite $q$ (when there are no measurements).
}

\paragraph{Mapping.}
{
We now describe the procedure for mapping a quantum circuit diagram to a classical statistical mechanical model, such that quantities $\E_U(Z_{k,\emptyset})$ and $\E_U(Z_{k,A})$ for integers $k\geq 2$ are given by partition functions of the stat mech model. This follows work in \cite{nahum2018operator,von2018operator,zhou2019emergent,hunter2019unitary,bao2019theory,jian2019measurement}, although our presentation is for the most part self-contained. Here we present merely how to perform the mapping, leaving the details of its justification to Appendix \ref{app:statmechdetails}.

To define the stat mech model we must specify two ingredients: first, the nodes and edges that form the interaction graph on which the model lives, and second, the details of the interactions between nodes that share an edge. The graph, which is the same for all $k$, is formed from the circuit diagram as follows. First, we replace each Haar-random unitary (labeled by integer $u$) in the circuit diagram with a pair of nodes, which we refer to as the \textit{incoming} node $t_u$ and \textit{outgoing} node $s_u$ for that unitary, and we connect nodes $t_u$ and $s_u$ by an edge. Then, we add edges between the outgoing node $s_{u_1}$ of unitary $u_1$ and the incoming node $t_{u_2}$ of another unitary $u_2$ when $u_2$ acts immediately after $u_1$ on the same qudit (ignoring for now the presence of any measurements made in between). Finally, we introduce a single auxiliary node $x_a$ for each qudit $a \in [n]$ and add a single edge connecting $x_a$ to the outgoing node $s_u$ for unitary $u$ if $u$ is the final unitary of the circuit to act on qudit $a$. Thus, all of the outgoing nodes have degree equal to three. The incoming node for each unitary has degree three minus the number of qudits (0, 1, or 2) for which the unitary is the first in the circuit to act on that qudit. We provide an example of this mapping in Figure \ref{fig:statmechexample}.
\begin{figure}[ht]
   \centering
   \includegraphics[width=0.9\columnwidth]{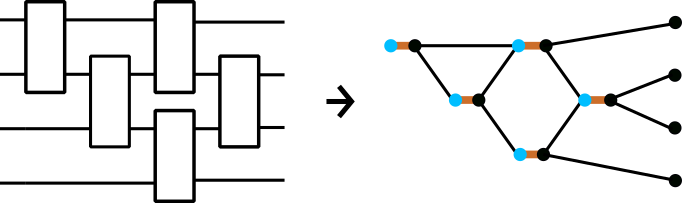}
    \caption{ \label{fig:statmechexample} Example of stat mech mapping applied to a circuit diagram with 4 qudits and 5 Haar-random gates. Thick orange edges carry Weingarten weight. Black edges carry measurement-dependent weight.}
\end{figure}
Each node in the graph may now be viewed as a spin that takes on one of $k!$ values, corresponding to an element of the symmetric group $S_k$. A spin configuration is given by an assignment $(\sigma_u,\tau_u) \in S_k\times S_k$ for each pair of nodes $(s_u,t_u)$, as well as an assignment $\chi_a \in S_k$ to each auxiliary node $x_a$. The main utility of the stat mech mapping is then given by the following equation, expressing the quantities $\E_U(Z_{k,\emptyset/A})$ as a sum over spin configurations on this graph

\begin{equation}\label{eq:partitionfunction}
    \E_U(Z_{k,\emptyset/A}) = \sum_{\{\sigma_u\}_u,\{\tau_u\}_u} \prod_{u}\text{weight}(\langle s_u t_u \rangle) \prod_{\langle s_{u_1} t_{u_2}\rangle}\text{weight}(\langle s_{u_1} t_{u_2} \rangle)\prod_{\langle s_u x_a \rangle}\text{weight}(\langle s_u x_{a}\rangle)
\end{equation}
This is a partition function --- a weighted sum over spin configurations where the weight of each term is given by a product of factors that depend only on the spin value of a pair of nodes $(s,t)$ connected by an edge, denoted $\langle s t \rangle$. In this case, the sum runs only over the values $\sigma_u$ and  $\tau_u$, of the incoming and outgoing nodes; the values $\chi_a$ of the auxiliary nodes are fixed across all the terms and encode the boundary conditions that differ between $\E_U(Z_{k,\emptyset})$ and $\E_U(Z_{k,A})$. We define the free energy to be the negative logarithm of this partition function (see Eq.~\eqref{eq:freeenergydifference}), mirroring the standard relationship $F=-k_B T \log(Z)$ between the free energy and the partition function from statistical mechanics, with $k_B T$ set to 1.

We now specify the details of the interaction by defining the weight function for different edges. There are only two different kinds of interactions. Edges $\langle s_ut_u \rangle$ between incoming and outgoing nodes of the same unitary have
\begin{equation}
    \text{weight}(\langle s_ut_u \rangle) = \Wg(\tau_{u}\sigma_{u}^{-1},q^2) \label{eq:weightsutu}
\end{equation}
where $\Wg(\pi,q^2)$ is the Weingarten function. The Weingarten function arises from performing the integrals over the Haar measure in Eq.~\eqref{eq:ECQ}, and one formula for it is given in the appendix in \Cref{eq:weingarten}. Note that there exist permutations $\pi$ for which $\Wg(\pi,q^2)<0$, so the overall weight of a configuration can be negative and our stat mech model would only correspond to a physical model with complex-valued energy.

Meanwhile, edges $\langle s_{u_1} t_{u_2}\rangle$ connecting nodes of successive unitaries $u_1$ and $u_2$ (resp.~edges $\langle s_u x_a \rangle$ connecting outgoing nodes to auxiliary nodes) have weight that depends both on the setting of variables $\sigma_{u_1}$ and $\tau_{u_2}$ (resp.~$\sigma_u$ and $\chi_a$) as well as on the distribution $\mu$ over some set $\mathcal{M}$ of single-qudit measurement operators that act on the qudit between the application of unitaries $u_1$ and $u_2$ (resp.~after unitary $u$). This weight is given by
\begin{align}
    \text{weight}(\langle s_{u_1} t_{u_2}\rangle)&=\E_{M \leftarrow \mu} \tr((M^\dagger M)^{\otimes k} W_{\sigma_{u_1}\tau_{u_2}^{-1}}) \label{eq:weightsu1tu2}\\
    \text{weight}(\langle s_u x_{a}\rangle)&=\E_{M \leftarrow \mu}\tr((M^\dagger M)^{\otimes k} W_{\sigma_u\chi_{a}^{-1}}) \label{eq:weightsuxa}
\end{align}
where $W_\pi$ is the operator acting on a $k$-fold tensor product space that performs the permutation $\pi$ of the registers. Later, in Section \ref{se:statmechpatching}, we will be interested in expressing entropies of the \textit{classical} output distribution of the circuit in terms of partition functions and to handle this case we will update Eq.~\eqref{eq:weightsuxa}. Note that the quantity $\tr(X^{\otimes k} W_\pi)$ is equal for all $\pi$ with the same cycle structure, which corresponds to some partition $\lambda = (\lambda_1, \ldots, \lambda_r)$ of $k$, where $\sum_i\lambda_i = k$ and $\lambda_1\geq\ldots \geq \lambda_r > 0$. Then we have
\begin{equation}
    \tr(X^{\otimes k} W_\pi) = \prod_{i=1}^r\tr(X^{\lambda_i})
\end{equation}
This formula allows us to simplify the weight formulas \eqref{eq:weightsu1tu2} and \eqref{eq:weightsuxa} in a few special cases. If no measurement is made, then $\mathcal{M}=\{I\}$ and $\text{weight}(\langle s_{u_1} t_{u_2}\rangle) = q^{C(\sigma_{u_1}\tau_{u_2}^{-1})}$, where $C(\pi)$ is the number of cycles $r$ in the permutation $\pi$. On the other hand, if a projective measurement onto one of the $q$ basis states is made, then $\mathcal{M} = \{\sqrt{q}\Pi_m\}_{m=1}^{q}$ and $\mu$ is the uniform distribution, where $\Pi_m = \ket{m}\bra{m}$. Since in this case $\tr((M^\dagger M)^w) = q^w$ for any power $w$ and any $M \in \mathcal{M}$, we have $\text{weight}(\langle s_{u_1} t_{u_2}\rangle) = q^{k-1}$ for any pair $\sigma_{u_1}, \tau_{u_2}$.

The final piece of this prescription is setting the value $\chi_{a}$ for each of the auxiliary nodes $x_a$ at the end of the circuit, which can be seen as fixing the boundary conditions for the stat mech model. These nodes are fixed to the same value for each term in the sum and depend on whether we are calculating $\E_U(Z_{k,\emptyset})$ or $\E_U(Z_{k,A})$, and whether the qudit $a$ is in the region $A$. For $\E_U(Z_{k,\emptyset})$, the value $\chi_{a}$ is fixed to the identity permutation $e$ for every $a$. Meanwhile, for $\E_U(Z_{k,\emptyset})$, we ``twist'' the boundary conditions and change $\chi_a$ to be the $k$-cycle $(1\ldots k)$ if $a$ is in $A$, leaving $\chi_a = e$ if $a$ is in the complement of $A$.

}

\paragraph{Special case of $k=2$.}
{
When $k=2$, the symmetric group $S_k$ has only 2 elements, identity $e$ and swap $(12)$, so the quantities $\E_U(Z_{2,\emptyset})$ and $\E_U(Z_{2,A})$ map to partition functions of Ising-like classical stat mech models where each node takes on one of two values. Furthermore, in the $k=2$ case with no measurements, it was shown in \cite{nahum2018operator,von2018operator} (see also \cite{zhou2019emergent,hunter2019unitary}) that one can get rid of all negative terms in the partition function by decimating half of the nodes, i.e.~explicitly performing the sum over the values of the incoming nodes $\{\tau_u\}_u$ in Eq.~\eqref{eq:partitionfunction}.  This continues to be true even when there are measurements in between unitaries in the circuit. However, the decimation causes the two-body interactions to become three-body interactions between any three nodes $s_{u_1},s_{u_2},s_{u_3}$ when unitary $u_3$ succeeds unitaries $u_1$ and $u_2$ and shares a qudit with each. The lack of negative weights for $k=2$ is convenient because it allows one to view the system as a classical spin model at a real temperature and can therefore be analyzed with well-studied numerical techniques like Monte Carlo sampling.
}

\subsection{Mapping applied to 1D circuits with weak measurements}\label{se:weakmeasurementmapping}

In Section \ref{se:CHR}, we discussed the connection between the effective 1D dynamics of our \texttt{SEBD} algorithm and previous work \cite{li2018quantum,chan2018weak,skinner2019measurement,li2019measurement,szyniszewski2019entanglement,choi2019quantum,gullans2019dynamical,gullans2019scalable,zabalo2019critical} on 1D Haar-random circuits with some form of measurements in between each layer of unitaries.

In this subsection, we apply the stat mech mapping to the 1D with weak measurement model and explain the connection between the area-law-to-volume-law transition that has been observed in numerical simulations and the disorder-to-order thermal transition in the classical stat mech model, which occurs at a non-zero critical temperature $T_c$. This analysis was first performed in \cite{bao2019theory} and independently in \cite{jian2019measurement}. The results presented in this section are a reproduction of their analysis but for a different weak measurement, chosen to be relevant for the dynamics of the \texttt{SEBD} algorithm acting on the \texttt{CHR} problem. We include this analysis for two purposes: first, to shed light on the behavior of \texttt{SEBD} acting on \texttt{CHR}, and second, to serve as a more complete example of the stat mech mapping in action, prior to our application of the mapping directly to 2D circuits in Section \ref{se:statmech2D}, where the analysis is somewhat more complicated.

\begin{figure}[ht]
    \centering
        \def\svgwidth{0.9\columnwidth}
    \input{Figures/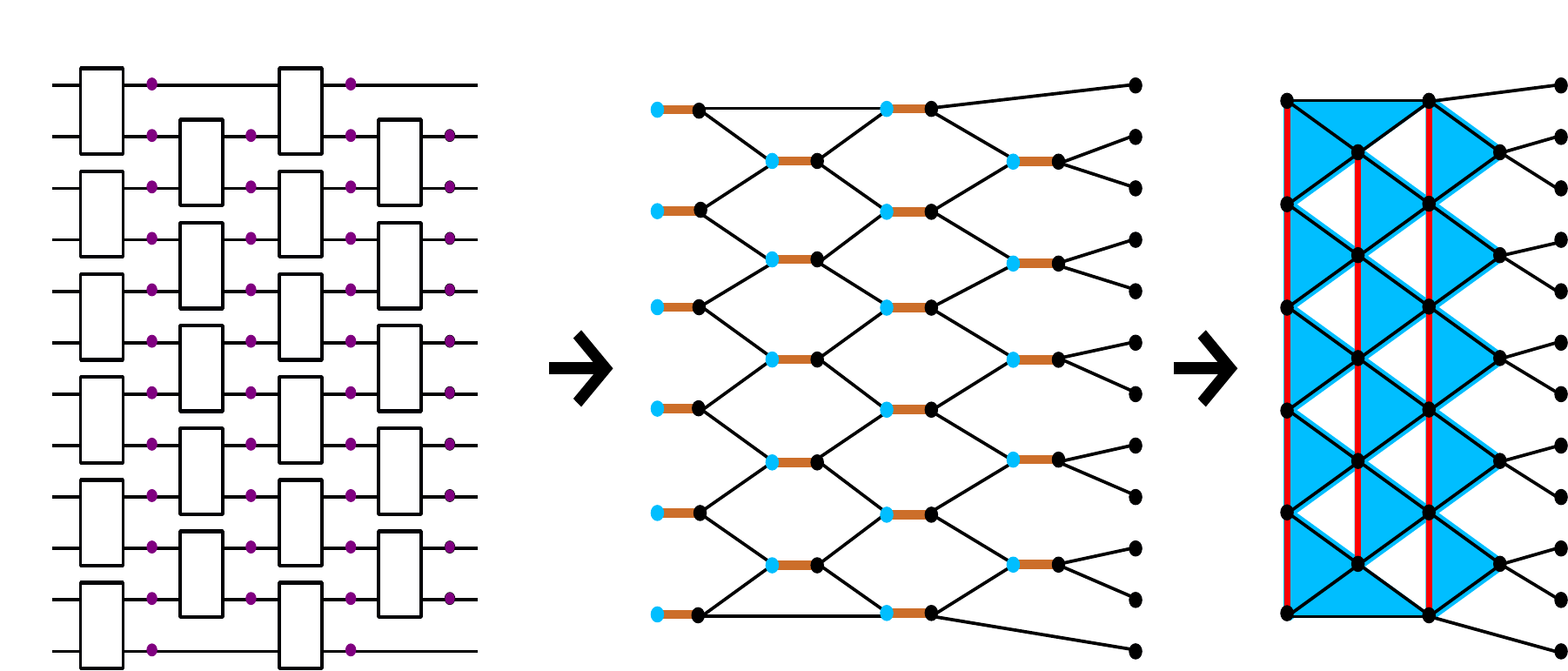_tex}
    \caption{ \label{fig:weakmeasurementmapping} Summary of series of maps for Haar-random 1D circuits with weak measurements. (a) The quantum circuit diagram for the unitary plus weak measurement model consists of layers of Haar-random two-qudit gates followed by layers of weak measurements on every qudit, indicated by purple dots. (b) The stat mech mapping results in a model on the honeycomb lattice, where horizontal orange links have weight given by the Weingarten function and diagonal black links have weight that depends on the weak measurement. Blue dots and black dots represent incoming and outgoing nodes, respectively. (c) By decimating the incoming (blue) nodes in the honeycomb lattice, we reduce the number of nodes by half and generate a model with three-body interactions living on rightward-pointing triangles, shaded in blue. When $k=2$ the weights are all positive, and the three-body interaction can be decomposed into an anti-ferromagnetic interaction along vertical (red) links and ferromagnetic interactions along diagonal (dark blue) links.}
\end{figure}

\paragraph{Mapping to the honeycomb lattice.}
{
Let us assume our circuit has $n$ qudits of local dimension $q$ arranged on a line with open boundary conditions. A circuit of depth $d$ acts on the qudits where each layer consists of nearest-neighbor two-qudit Haar-random unitaries. In between each layer of unitaries, a weak measurement is performed on every qudit, described by the set $\mathcal{M}$ of measurement operators and a probability distribution $\mu$ over $\mathcal{M}$. The first step of the stat mech mapping is to replace each Haar-random unitary with a pair of nodes and connect these nodes according to the order of the unitaries acting on the qudits. The second step is to introduce a new auxiliary node for each qudit and connect each outgoing node within the final layer of unitaries to the corresponding pair of auxiliary nodes. The resulting graph is the honeycomb lattice, as shown in Figure \ref{fig:weakmeasurementmapping}(b). We now review what the interactions are on this graph. The horizontal links in Figure \ref{fig:weakmeasurementmapping}(b) host interactions that contribute a weight equal to the Weingarten function. When $k=2$, the interaction depends only on if the pair of nodes agree ($\sigma_u\tau_u^{-1} = e$) or if they disagree ($\sigma_u\tau_u^{-1} = (12)$). In this case the interactions are given explicitly by
\begin{equation}
\text{weight}(\langle s_u t_u \rangle) = \Wg(\sigma_u \tau_u^{-1},q^2)=
\begin{cases}
    \frac{1}{q^4-1} & \text{if }\sigma_u \tau_u^{-1}=e\\
    -\frac{1}{q^2(q^4-1)} &\text{if }\sigma_u \tau_u^{-1} = (12)
\end{cases}
\end{equation}
Meanwhile, the diagonally oriented links in Figure \ref{fig:weakmeasurementmapping}(b) host interactions that depend on the details of the weak measurement being applied in between each layer of unitaries, which we now define.
}

\paragraph{Weak measurement and diagonal weights.}
{
The weak measurement we choose is given as follows. For a fixed $q \times q$ unitary matrix $U$, define
\begin{equation}
    M^{(m)}_U := \sqrt{q} \cdot \text{diag}(U_{m,\cdot})
\end{equation}
that is, the $q \times q$ matrix whose diagonal entries are given by the $m$th row of $U$, scaled by a factor of $\sqrt{q}$, and whose off-diagonal entries are 0. Define the probability distribution $\mu_U$ be the uniform distribution over the set $\mathcal{M}_U = \{M^{(m)}_U\}_{m=1}^{q}$. We can see that ($\mathcal{M}_U,\mu_U$) forms a valid weak measurement since
\begin{equation}
    \sum_{m=1}^{q} \mu_U(m) (M^{(m)}_U)^\dagger M^{(m)}_U = \sum_{m=1}^{q} \text{diag}(\lvert U_{m,\cdot}\rvert^2) = \mathbb{I}_q
\end{equation}
where the last equality follows from the fact that the sum of the squared norms of the entries within a column of a unitary matrix is 1. When $U=\mathbb{I}_q$, the measurement operator $M^{(m)}_U$ is a projector onto the $m$th basis state (scaled by a factor of $\sqrt{q}$), and the weak measurement is simply a projective measurement onto the computational basis.

The weak measurement that we consider here will be a mixture of the weak measurement $(\mathcal{M}_U, \mu_U)$ for different $U$. Formally, we take $\mathcal{M} = \cup_{U \in U(q)} \mathcal{M}_U$. We let the distribution $\mu$ over $\mathcal{M}$ be the distribution resulting from drawing $U$ according to the Haar measure, and then drawing $M$ from $\mathcal{M}_U$ uniformly at random.

This weak measurement is seen to exactly reproduce the weak measurement of \texttt{SEBD} acting on \texttt{CHR} in \Cref{RandomEffective} when $q=2$, where the measurement operators were the diagonal matrices
\begin{subequations}\label{eq:M0M1statmech}
  \begin{align}
M^{(1)} &:= \mqty(\cos(\theta/2) & 0 \\ 0 & e^{-i \phi} \sin(\theta/2))  \\
M^{(2)} &:= \mqty(\sin(\theta/2) & 0 \\ 0 & e^{i \phi} \cos(\theta/2)).
  \end{align}
\end{subequations}
with angles $(\theta,\phi)$ drawn according to the Haar measure on the sphere. Indeed, even for $q\neq2$, this weak measurement arises from a natural generalization of the $\texttt{CHR}$ problem, where one makes Haar-random measurements on a cluster state of higher local dimension, which is created by applying a generalized Hadamard gate to each qudit followed by a generalized $CZ$ gate on each pair of neighboring qudits on the 2D lattice.

To compute the weights on the edges of the stat mech model for $k=2$, we apply the formula in Eqs.~\eqref{eq:weightsu1tu2} and \eqref{eq:weightsuxa}.
\begin{align}
\text{weight}(\langle s_{u_1}t_{u_2} \rangle) &= \int_{U(q)}dU \sum_{m=1}^{q}\frac{1}{q}\tr\left({\left((M^{(m)}_{U})^\dagger M^{(m)}_{U}\right)}^{\otimes 2} W_{\sigma_{u_1}\tau_{u_2}^{-1}}\right) \\
&=\int_{U(q)}dU q\sum_{m=1}^{q}
\begin{cases}
    \text{tr}\left(\text{diag}(\lvert U_{m,\cdot} \rvert^2)\right)^2 & \text{if } \sigma_{u_1}\tau_{u_2}^{-1} = e\\
    \text{tr}(\text{diag}\left(\lvert U_{m,\cdot} \rvert^4)\right) & \text{if } \sigma_{u_1}\tau_{u_2}^{-1} = (12)
\end{cases}
\\
&=
\begin{cases}
    q^2 & \text{if } \sigma_{u_1}\tau_{u_2}^{-1} = e\\
    q^2 \cdot w & \text{if } \sigma_{u_1}\tau_{u_2}^{-1} = (12)
\end{cases}
\end{align}
where
\begin{align}
    w &:= \int_{U(q)}dU \sum_m \frac{1}{q} \tr(\text{diag}(\lvert U_{m,\cdot} \rvert^4)) = q\int_{U(q)}dU \lvert U_{0,0} \rvert^4 \\
    &= q\sum_{\sigma,\tau \in S_2} \Wg(\sigma\tau^{-1},q)=2q\sum_{\sigma \in S_2}  \Wg(\sigma,q)= 2q\left( \frac{1}{q^2-1}-\frac{1}{q(q^2-1)}\right)\\
    &=\frac{2}{q+1},
\end{align}
where in the second line we have invoked the Haar integration formula that appears in Eq.~\eqref{eq:haarintegration} later in Appendix \ref{app:statmechdetails}, and then substituted the explicit values for the Weingarten function when $k=2$. The formula for $\text{weight}(\langle s_u x_a \rangle)$ is given similarly.

We can see that for all $q>1$, the weight is larger when the values of the nodes agree than when they disagree, indicating a ferromagnetic Ising interaction. Indeed, the interaction for $k=2$ will be ferromagnetic regardless of what weak measurement $M$ is made since $\tr(M^\dagger M)^2 \geq \tr((M^\dagger M)^2)$ holds for all $M$. Furthermore, for our choice of weak measurement, the ferromagnetic Ising interaction becomes stronger as $q$ increases.

}

\paragraph{Eliminating negative weights via decimation when $k=2$.}
{
The possibility of a negative weight on the horizontal edges of the honeycomb lattice in Figure \ref{fig:weakmeasurementmapping}(b) appears to impede further progress in the analysis since the classical model cannot be viewed as a physical system with real interaction energies at a real temperature. For $k=2$, this problem may be circumvented by decimating half of the spins; that is, we explicitly perform the sum over $\{\tau_u\}_u$ in the partition function in Eq.~\eqref{eq:partitionfunction}, yielding a new stat mech model involving only the outgoing nodes $s_u$. Since the decimated incoming nodes (except for those in the first layer) each have three neighbors, all three of which are undecimated outgoing nodes, the new model will have a three-body interaction between each such trio of nodes. We may furthermore observe that, for our choice of weak measurement when $k=2$, the three-body weight may be re-expressed as the product of three two-body weights acting on the three edges of the triangle. Below we give formulas for the two-body weights; our formulas are a unique decomposition of the three-body interaction up to a shifting of overall constant factors from one link to another. Thus, via decimation we have moved from the honeycomb lattice with two-body interactions to the triangular lattice with two-body interactions, as illustrated in Figure \ref{fig:weakmeasurementmapping}(c). There are two kinds of two-body interactions on this triangular lattice.  Vertically oriented links between nodes $s_{u_1}$ and $s_{u_2}$ host anti-ferromagnetic interactions
\begin{align}\label{eq:AFMweightvertical}
    \text{weight}(\langle s_{u_1}s_{u_2} \rangle) =
    \begin{cases}
    \frac{1}{q^4-1} & \text{if } \sigma_{u_1}\sigma_{u_2} =e\\
    \frac{w}{1+q^2}\left((q^2-w^2)(q^2w^2-1)\right)^{-1/2}  & \text{if } \sigma_{u_1} \sigma_{u_2} = (12)
    \end{cases}
\end{align}
and diagonally oriented links host ferromagnetic interactions, where
\begin{align}\label{eq:FMweightdiagonal}
    \text{weight}(\langle s_{u_1}s_{u_2} \rangle) =
    \begin{cases}
    q\sqrt{q^2-w^2} & \text{if } \sigma_{u_1}\sigma_{u_2} =e\\
    q\sqrt{w^2q^2-1} & \text{if } \sigma_{u_1} \sigma_{u_2} = (12).
    \end{cases}
\end{align}
For all values of the measurement strength $p$, the ferromagnetic interactions are stronger than the anti-ferromagnetic interaction.
}

\paragraph{Phase diagram.}
{
The model described above for $k=2$ is exactly the anisotropic Ising model on the triangular lattice. In general this model may be described by its energy functional
\begin{equation}
E/kT = -J_1 \sum_{\langle ij \rangle_1 }g_ig_j-J_2 \sum_{\langle ij \rangle_2 }g_ig_j-J_3 \sum_{\langle ij \rangle_3 }g_ig_j
\end{equation}
where $g_i \in \{+1,-1\}$ are Ising spin variables and the three sums are over links along each of the three triangular axes. This model has been studied and its phase diagram is well understood \cite{houtappel1950order,stephenson1970ising}.  In the setting where along two of the axes the interaction strength is equal in magnitude and ferromagnetic, while along the third axis it is weaker in magnitude and antiferromagnetic, the model is known to experience a phase transition as the temperature is varied. At high temperatures, it is in the disordered phase; in other words, samples drawn from the thermal distribution exhibit exponentially decaying correlations between spin values $\sigma_u$ with a constant correlation length of $\xi$. At low temperatures, it is in an ordered phase where samples exhibit long-range correlation. At the critical point, the interaction strengths satisfy the equation \cite{houtappel1950order,stephenson1970ising}
\begin{equation}
    \sinh(2J_1)\sinh(2J_2) + \sinh(2J_2)\sinh(2J_3) +  \sinh(2J_1)\sinh(2J_3) = 1.
\end{equation}
For us, parameter $q$ plays the role of the temperature, and the interaction strengths, derived from Eqs.~\eqref{eq:AFMweightvertical} and \eqref{eq:FMweightdiagonal}, are given by
\begin{align}
    J_1=J_2 &= \frac{1}{4}\log\left(\frac{q^2-w^2}{w^2q^2-1}\right), \label{eq:J1}\\
    J_3 &= -\frac{1}{2}\log\left(\frac{w(q^2-1)}{\sqrt{(q^2-w^2)(q^2w^2-1)}}\right). \label{eq:J3}
\end{align}
Using these equations, we can solve for the critical point, and we find it to be $q_c = 3.249$. Only integer values of $q$ correspond to valid quantum circuits, so we conclude that the model is disordered when $q=2$ or $q=3$ and ordered when $q \geq 4$. We plot this one dimensional phase diagram in Figure \ref{fig:weakmeasurementphasediagram}.

\begin{figure}[ht]
    \centering
    \includegraphics[width=0.6\columnwidth]{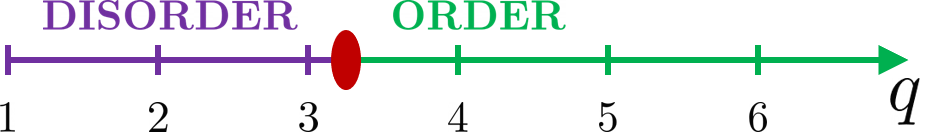}
    \caption{\label{fig:weakmeasurementphasediagram} Phase diagram showing for which values of $q$ the anisotropic Ising model on the triangular lattice is ordered and disordered. The critical point, indicated by the red dot, occurs at $q_c=3.249$.}
\end{figure}

\paragraph{Connection between (dis)order and scaling of entanglement entropy.} We expect the scaling of the quantity $\tilde{S}_2 = F_{2,A}-F_{2,\emptyset} = -\log(\E_U(Z_{2,A})/ \E_U(Z_{2,\emptyset}))$ to be related to the order or disorder of the model by the following argument. For $\E_U(Z_{2,\emptyset})$, the auxiliary spins are all set to $\chi_a = e$, biasing the bulk spins nearby to prefer $e$ over $(12)$. For $\E_U(Z_{2,A})$, the spins within the region $A$ are twisted so that $\chi_a = (12)$, introducing a domain wall at the boundary. In the ordered phase, the bias introduced at the boundary extends throughout the whole bulk since there is no decay of correlation with distance. The domain wall at the boundary in the calculation of $\E_U(Z_{2,A})$ forces the bulk to separate into two regions with distinct phases separated by a domain wall that cuts through the bulk. The domain wall has length of order $\min(|A|,d)$ where $|A|$ is the number of sites in region $A$ and $d$ is the depth. In the calculation  of $\E_U(Z_{2,\emptyset})$, there is no domain wall. The addition of one additional unit of domain wall within a configuration leads the weight of the configuration to decrease by a constant factor, so in the ordered phase we expect $-\log(\E_U(Z_{2,A})/\E_U(Z_{2,\emptyset})) = O(\min(|A|,d))$. Meanwhile, in the disordered phase, there is a natural length scale $\xi$ that boundary effects will penetrate into the bulk. The domain wall at the boundary due to twisted boundary conditions will be washed out by the bulk disorder after a distance on the order of $\xi = O(1)$. Thus we expect $-\log(\E_U(Z_{2,A})/\E_U(Z_{2,\emptyset})) = O(1)$. A cartoon illustrating this logic appears later in Figure \ref{fig:2Dcircuitsdomainwalls} when we consider the stat mech mapping applied to 2D circuits.  For further discussion of the connection between order-disorder properties of the stat mech model and entropic properties of the underlying quantum objects, see \cite{vasseur2018entanglement,bao2019theory,jian2019measurement}.

This logic suggests that, if we take the scaling of $\tilde{S}_2$ to be a good proxy for the scaling of $\langle S_2 \rangle$, the disorder-to-order phase transition in the classical model would be accompanied by an area-law-to-volume-law phase transition in the R\'{e}nyi-2 entropy of the output of random circuits.
}

\paragraph{Relationship to numerical simulation of \texttt{SEBD} on \texttt{CHR}.}
{
In Section \ref{se:CHR}, with fixed $q=2$, it was established that the effective dynamics of \texttt{SEBD} running on \texttt{CHR} are alternating layers of entangling two-qubit $CZ$ gates and weak measurements on every qubit of a 1D line, where the form of the weak measurement is given explicitly. The dynamics we have studied in this section use the same weak measurement, but choose the two-qubit entangling gates to be Haar-random. We have established that the quasi-2-entropy $\tilde{S}_2$ satisfies an area law for this process when $q=2$, and the statement remains true for $q=3$ when the weak measurement corresponds to a natural generalization of the \texttt{CHR} problem to larger local dimension. For $q=4$, it is no longer true; the dynamics of $\tilde{S}_2$ satisfy a volume law.

Due to the similarity between the dynamics studied in this section and that of \texttt{SEBD} running on \texttt{CHR}, our conclusion provides a partial explanation for the numerical observation presented in Section \ref{se:numerics} that the average entanglement entropy $\langle S_k \rangle$ satisfies an area law when \texttt{SEBD} runs on \texttt{CHR} for $q=2$ and various values of $k$.
}

\paragraph{Additional observations appearing in previous work.} The above analysis is essentially a restatement of what appears in recent works by Bao, Choi, and Altman \cite{bao2019theory} and separately Jian, You, Vasseur, and Ludwig \cite{jian2019measurement}, except that here we analyzed a different weak measurement. In particular, \cite{bao2019theory} considered the case where a projective measurement occurs with some probability $p$ on each qudit after each layer of unitaries, and otherwise there is no measurement. They made the observation that we describe above that the $k=2$ mapping can be written as a 2-body anisotropic Ising model on the triangular lattice with an exact solution. Both of these papers went beyond what we have presented here to analyze the $q \rightarrow \infty$ limit directly, where they observed that the stat mech model becomes a standard ferromagnetic Potts model on the square lattice for all integers $k$. For $k=2$ this is exactly the square lattice Ising model and indeed, we can see from Eq.~\eqref{eq:J3} that when $q \rightarrow \infty$, $J_3 \rightarrow 0$; the anti-ferromagnetic links along one axis vanish leaving a square lattice with exclusively ferromagnetic interactions. The fact that the model becomes tractable for all integers $k \geq 2$ allows these papers to invoke analytic continuation and make sense of the $k \rightarrow 1$ limit, where the quasi-entropy $\tilde{S}_k$ exactly becomes the expected von Neumann entropy $\langle S \rangle$.

\section{Evidence for efficiency of algorithms from statistical mechanics}\label{se:statmech2D}

The efficiency of both algorithms hinges on the behavior of certain entropic quantities of the quantum state produced by the quantum circuit. In the \texttt{SEBD} algorithm, the bond dimension of the 1D MPS is truncated at each step of the MPS evolution, introducing small error but keeping the runtime of the algorithm efficient (i.e. polynomial in $n$) so long as the entanglement entropy across any cut of the MPS obeys an area law\footnote{Technically, an area law for the $k$-R\'{e}nyi entropy with $0 < k < 1$ is needed to fully justify algorithmic efficiency \cite{schuch2008entropy}.}. In the \texttt{Patching} algorithm, the classical conditional mutual information $I(A:C|B)$ of the joint distribution of measurement outcomes must decay exponentially in the size of region $B$, where $A$, $B$, and $C$ are regions of the lattice with $B$ separating $A$ from $C$. In this section, we examine the behavior of these entropic quantities for typical circuit instances by employing the mapping introduced in Section \ref{se:statmech}.

While the stat mech mapping falls short of providing a fully rigorous justification of the efficiency of the algorithms, it is the most promising tool we are aware of to analytically understand the behavior of the \texttt{SEBD} and \texttt{Patching} algorithm when they are running on circuits made from Haar-random gates. At the very least, the mapping provides important intuition for why these algorithms would be efficient in certain cases and not efficient in others. Furthermore, it lays the groundwork for what could constitute a more rigorous justification. The main roadblock is the fact that the stat mech mapping gives an expression for the ``quasi-entropy,'' which is only exactly equal to the average entanglement entropy in a limit that cannot be directly accessed by the mapping. From a practical perspective, the stat mech model could provide a way to numerically estimate some of these quasi-entropies, which could possibly aid in predicting the runtime and error of the algorithms in regimes where it may not be easy to tell by running the algorithm itself.

Modulo the concern of quasi-entropy vs.~entropy, the stat mech mapping predicts the following about the efficiency of the algorithms:

\begin{enumerate}
    \item For 2D circuits with nearest-neighbor Haar-random gates of sufficiently small depth $d$ and sufficiently small local dimension $q$, we expect both algorithms to be efficient in the system size. That is, they produce samples from a distribution with $\varepsilon$ total variation distance from the ideal distribution on a fraction $1-\delta$ of circuit instances, and they have runtime $\poly(L_1, L_2, 1/\varepsilon,1/\delta)q^{\poly(d)}$. In particular, this is true for depth $d=3$ and local dimension $q=2$ for the uniform ``brickwork'' architecture where exact simulation is known to be hard in the worst case, under plausible complexity theoretic assumptions.
    \item For a fixed depth $d$, we expect the algorithms to become inefficient once the local dimension exceeds some critical value $q_c$.
    \item For a fixed local dimension $q$, we expect the algorithms to become inefficient once the depth exceeds some critical value $d_c$.

\end{enumerate}

The efficient-to-inefficient transition in computational complexity is related to a disorder-to-order phase transition in the classical model. Item 1 above should be regarded as evidence for \Cref{con:efficient}, while items 2 and 3 are evidence for \Cref{con:transition}.

\subsection{Mapping applied to general 2D circuits} \label{se:2dcircuitmapping}

We now apply the mapping described in Section \ref{se:mapping} directly to a depth-$d$ circuit acting on a $\sqrt{n} \times \sqrt{n}$ lattice of qudits consisting of nearest-neighbor two-qudit Haar-random gates, but without measurements in between the gates. This is the relevant case for the algorithms presented in this paper. In this section, we will assume for concreteness that the first unitary layer includes gates that act on qudits at gridpoints $(i,j)$ and $(i,j+1)$ for all odd $i$ and all $j$, the second layer on $(i,j)$ and $(i,j+1)$ for all even $i$ and all $j$, the third layer on $(i,j)$ and $(i+1,j)$ for all $i$ and all odd $j$, and the fourth layer on $(i,j)$ and $(i+1,j)$ for all $i$ and all even $j$. Subsequent layers then cycle through these four orientations.

\paragraph{The model resulting from the stat mech mapping.}
{
Replacing the unitaries in the circuit diagram with pairs of nodes and connecting them as described in Section \ref{se:mapping} yields a graph embedded in three dimensions. The nodes in this graph still have degree three, so locally the graph looks similar to the honeycomb lattice, but globally the nodes form a 3D lattice that can be viewed roughly as a $\sqrt{n} \times \sqrt{n} \times d$ slab, although the details of how these nodes connect is not straightforward to visualize. We have included pictures of the graph in Figure \ref{fig:2Dcircuitgraph}.

\begin{figure}[h]
    \centering
    \subfloat[Depth-4 circuit on $4 \times 4$ lattice]{{\includegraphics[width=0.45\textwidth]{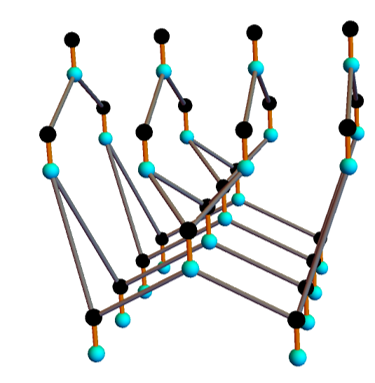} }}%
	\qquad
	\subfloat[Depth-5 circuit on $28 \times 28$ lattice]{{\includegraphics[width=0.45\textwidth]{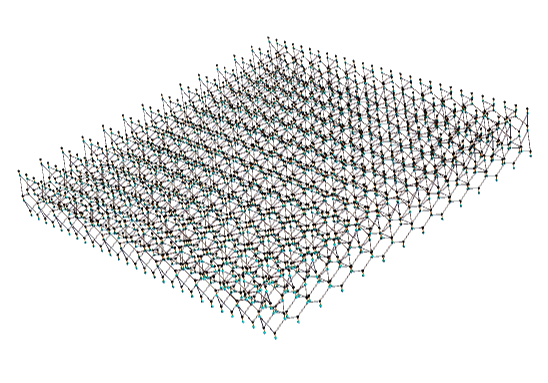} }}%
    \caption{\label{fig:2Dcircuitgraph}
    The graph produced by the stat mech mapping on shallow 2D circuits. (a) A close up view of the graph reveals that the degree of most nodes is three, similar to the honeycomb lattice. (b) A far-away view reveals that globally the graph looks like a two dimensional slab of thickness roughly $d$.
    }
\end{figure}
As before, edges between nodes originating from the same unitary are assigned a weight equal to the Weingarten function. Since there is no measurement between unitaries, we may take the measurement operator applied on the edges that connect successive unitaries to be the identity, and we find that the weight contributed by these edges is $\text{weight}(\langle s_{u_1} t_{u_2} \rangle) = q^{C(\sigma_{u_1}\tau_{u_2}^{-1})}$. For $k=2$ this amounts to a ferromagnetic Ising interaction where
\begin{equation}
    \text{weight}(\langle s_{u_1}t_{u_2} \rangle) =
    \begin{cases}
    q^2 & \text{if } \sigma_{u_1}\tau_{u_2} =e\\
    q & \text{if } \sigma_{u_1} \tau_{u_2} = (12).
    \end{cases}
\end{equation}

To analyze the output state, we will divide the $n$ qudits into three groups $A$, $B$, and $C$. We suppose that after the $d$ unitary layers have been performed, a projective measurement is performed on the qudits in region $B$. Qudits in regions $A$ and $C$ are left unmeasured and we wish to calculate quantities like $\E_U(Z_{k,\emptyset/A})$. The mapping calls for us to introduce an auxiliary node for each qudit in the circuit. However, the projective measurement of qudits in the region $B$ effectively isolates auxiliary nodes associated with qudits in region $B$; the edge connecting it to one of the bulk nodes carries a weight that is constant across all bulk configurations. Thus we may equivalently omit the introduction of auxiliary nodes in region $B$ as well as any edges that would be connected to them.
}

\paragraph{Eliminating negative weights via decimation when $k=2$.}
{
The quantities $\E_U(Z_{k,\emptyset/A})$ are now given by classical partition functions on this graph with appropriate boundary conditions for the auxiliary nodes in regions $A$ and $C$. We wish to understand whether this stat mech model is ordered or disordered. Again, we are faced with the issue that the Weingarten function can take negative values and thus some configurations over this graph could have negative weight. As in the case of 1D circuits with weak measurements, we can partially rectify this by decimating all the incoming nodes. The resulting graph has half as many nodes and interactions between groups of three adjacent nodes. In the case of 1D circuits with weak measurements when $k=2$, we could decompose this three-body interaction into three two-body interactions. If we try to do the same here, we find the two-body interactions have infinite strength. Instead, we work directly with the three-body interaction between nodes $s_{u_1}$, $s_{u_2}$, and $s_{u_3}$, where unitary $u_3$ acts after $u_1$ and $u_2$, for which there is a simple formula for the weights:
\begin{equation}
    \text{weight}(\langle s_{u_1}s_{u_2}s_{u_3} \rangle) =
    \begin{cases}
    1 & \text{if } \sigma_{u_1} = \sigma_{u_2} = \sigma_{u_3} \\
    \frac{1}{q+q^{-1}} & \text{if } \sigma_{u_2} \neq \sigma_{u_3} \\
    0 & \text{if } \sigma_{u_1} \neq \sigma_{u_2} = \sigma_{u_3}.
    \end{cases}
\end{equation}

Now, all the weights are non-negative. Moreover, the largest weight occurs when all the nodes agree, indicating a generally ferromagnetic interaction between the trio of nodes. If one of the bottom two nodes disagrees with the other two, the weight is reduced by a factor of $q+1/q$. When the top node disagrees, the weight is 0; these configurations are forbidden and contribute nothing to the partition function.

}

\paragraph{Allowed domain wall configurations and disorder-order phase transitions.}
{
Using this observation, we can understand the kinds of domain wall structures that will appear in configurations that contribute non-zero weight. In this setting, domain wall structures are membrane-like since the graph is embedded in 3D. Membranes that have upward curvature, shaped like a bowl, are not allowed, because somewhere there would need to be an interaction where the upper node disagrees with the two below it. On the other hand, cylindrically shaped domain wall membranes do not have this issue, nor do dome-shaped membranes. The weight of a configuration is reduced by a factor of $q+1/q$ for each unit of domain wall, an effect that acts to minimize the domain wall size when drawing samples from the thermal distribution (energy minimization). On the other hand, larger domain walls have more configurational entropy --- there are many ways to cut through the graph with a cylindrically shaped membrane --- an effect that acts to bring out more domain walls in samples from the thermal distribution (entropy maximization). The question is, which of these effects dominates? For a certain setting of the depth $d$ and local dimension $q$, is there long-range order, or is there exponential decay of correlations indicating disorder? Generally speaking, increasing depth magnifies both effects: cylindrical domain wall membranes must be longer --- meaning larger energy --- when the depth is larger; however, longer cylinders also have more ways of propagating through the graph. Meanwhile, increasing $q$ only magnifies the energetic effect since it increases the interaction strength and thus the energy cost of a domain wall unit but leaves the configurational entropy unchanged.

Thus, in the limit of large $q$ we expect the energetic effect to win out and the system to be ordered for any circuit depth $d$ and any circuit architecture. What about small $q$? Physically speaking, $q$ must be an integer at least 2 since it represents the local Hilbert space dimension of the qudit. However, the statistical mechanical model itself requires no such restriction, and we can allow $q$ to vary continuously in the region $[1,\infty)$. Then for $q \rightarrow 1$, the energy cost of one unit of domain wall becomes minimal (but it does not vanish). Depending on the exact circuit architecture and the depth of the circuit, the system may experience a phase transition into the disordered phase once $q$ falls below some critical threshold $q_c$. The depth-3 circuit with brickwork architecture that we present later in Section \ref{sec:brickwork} provides an example of such a transition. It is disordered when $q=2$ and experiences a phase transition as $q$ increases to the ordered phase at a transition point we estimate to be roughly $q_c \approx 6$.

When $q$ is fixed and $d$ is varied, it is less clear what to expect. Suppose for small $d$, the system is disordered. Then increasing $d$ will amplify both the energetic and entropic effects, but likely not in equal proportions. If the amplification of the energetic effect is stronger with increasing depth, then we expect to transition from the disordered phase to the ordered phase at some critical value of the depth $d_c$. Without a better handle on the behavior of the stat mech model, we cannot definitively determine if and when this depth-driven phase transition will happen.

However, we have other reasons to believe that there should be a depth-driven phase transition. In particular, we now provide an intuitive argument for why a disorder-order transition in the parameter $q$ should imply a disorder-order transition in the parameter $d$. Consider fixed $d$, and another fixed integer $r\geq 1$ such that $d/r \gg 1$. We may group together $r \times r$ patches of qudits to form a ``supersite'' with local dimension $q^{r^2}$. Similarly, we may consider a ``superlayer'' of $O(r)$ consecutive unitary layers. Since $O(r)$ layers is sufficient to implement an approximate unitary $k$-design on a $r\times r$ patch of qudits (taking $k=O(1)$) \cite{HM18}, we intuitively take each superlayer to implement a Haar-random unitary between pairs of neighboring supersites. Thus, a depth-$d$ circuit acting on qudits of local dimension $q$ is roughly equivalent to a depth-$O(d/r)$ circuit acting on qudits of local dimension $q^{r^2}$ in the supersite picture. If for a fixed $d$, we observe a disorder-order phase transition for increasing $q$, then for fixed $q$ and fixed $d/r$, we should also observe a disorder-order phase transition with increasing $r$. Equivalently, we should see a transition for fixed $q$ and increasing $d$. This logic is not perfect because superlayers do not exactly map to layers of Haar-random two-qudit gates between neighboring supersites, but nonetheless we take it as reason to expect a depth-driven phase transition.

}

\subsection{Efficiency of \texttt{SEBD} algorithm from stat mech}\label{se:statmechsebd}

The efficiency of the \texttt{SEBD} algorithm relies on the error incurred during the MPS compression being small. If the inverse error has a polynomial relationship (or better) with the bond dimension of truncation, then the algorithm's time complexity is polynomial (or better) in the inverse error and the number of qudits. This will be the case if the MPS prior to truncation satisfies an area law for the $k$-R\'{e}nyi entropy for some $0 < k < 1$. The stat mech mapping is unable to probe these values of $k$. However, we hypothesize that the behavior of larger values of $k$ is indicative of the behavior for $k < 1$ since the examples where the $k$-R\'{e}nyi entropy with $k \geq 1$ satisfies an area law but efficient MPS truncation is not possible require contrived spectrums of Schmidt coefficients. Although some physical processes give rise to situations where the von Neumann and $k$-R\'{e}nyi entropies with $k>1$ exhibit different behavior (see e.g.~\cite{yichen-renyi}, which showed that for random 1D circuits without measurements but with the unitaries chosen to commute with some conserved quantity, after time $t$ the entropy is $O(t)$ for $k=1$ but $O(\sqrt{t\log t})$ for $k>1$), the numerical evidence we gave in \Cref{se:numerics}, where the scaling of all the $k$-R\'{e}nyi entropies appears to be the same, suggests our case is not one of these situations.

In Section \ref{se:weakmeasurementmapping} we discussed how for 1D circuits with alternating unitary and weak measurement dynamics, there has been substantial numerical evidence for a phase transition from an area-law phase to a volume-law phase as the parameters of the circuit are changed. We also reviewed previous work applying the stat mech model to this setting, and explicitly showed how there is a $q$-driven phase transition from an area law to volume phase for the quasi-entropy for a specific choice of weak measurement. The \texttt{SEBD} algorithm simulating a 2D circuit of constant depth made from Haar-random gates may be viewed as a system with very similar dynamics --- an alternation between entanglement-creating unitary gates and entanglement-destroying weak measurements. However, none of the unitary-and-measurement models that have been previously studied capture the exact dynamics of \texttt{SEBD}, one reason being that \texttt{SEBD} tracks the evolution of several columns of qudits at once (recall it must include all qudits within the lightcone of the first column). The Haar-random unitaries create entanglement within these columns of qudits, but not in the exact way that entanglement is created by Haar-random nearest-neighbor gates acting on a single column. Nonetheless, we expect the story to be the same for the dynamics of \texttt{SEBD} since the main findings of studies of these unitary-and-measurement models have been quite robust to variations in which unitary ensembles and which measurements are being implemented; we expect that varying parameters of the circuit architecture like $q$ and $d$ can lead to entanglement phase transitions, and thus transitions in computational complexity.

Indeed, the discussion from the previous section suggests precisely this fact. When we apply the stat mech mapping directly to 2D circuits instead of to 1D unitary-and-measurement models, we expect disorder-order phase transitions as both $q$ and $d$ are varied. To make the connection to entanglement entropy explicit here, we note that after $t$ steps of the \texttt{SEBD} algorithm, all $\sqrt{n}$ qudits in the first $t$ columns of the $\sqrt{n} \times \sqrt{n}$ lattice have been measured, and we have an MPS representation of the state on columns $t+1$ through $t+r$, where $r = O(d)$ is the radius of the lightcone (which depends on circuit architecture, but cannot be larger than $d$).  To calculate the entropy of the MPS, we take the region $A$ to be the top half of these $r$ columns, and region $C$ to be the bottom half. Region $B$ consists of the first $t$ columns, which experience projective measurements. The prescription for computing $\E_U(Z_{2,A})/\E_U(Z_{2,\emptyset})$ calls for determining the free energy cost of twisting the boundary conditions in region $A$, which creates a domain wall along the $A:C$ border. If the bulk is in the ordered phase, then this domain wall membrane will penetrate through the graph a distance of $t$, leading to a domain wall area of $O(td)$. If the bulk is in the disordered phase, it will only penetrate a constant distance, on the order of the correlation length $\xi$ of the disordered stat mech model, before being washed out by the disorder, leading to a domain wall area of only $O(\xi d)$. The typical domain wall configurations before and after twisting boundary conditions in the ordered and disordered phases is is reflected in the cartoon in Figure \ref{fig:2Dcircuitsdomainwalls}. As we discussed in Section \ref{se:weakmeasurementmapping}, we expect there to be a correspondence between the scaling of the domain wall size and the free energy cost after twisting the boundary conditions of the stat mech model.

\begin{figure}[h]
    \centering
    \subfloat[Ordered phase]{{\includegraphics[width=0.45\textwidth]{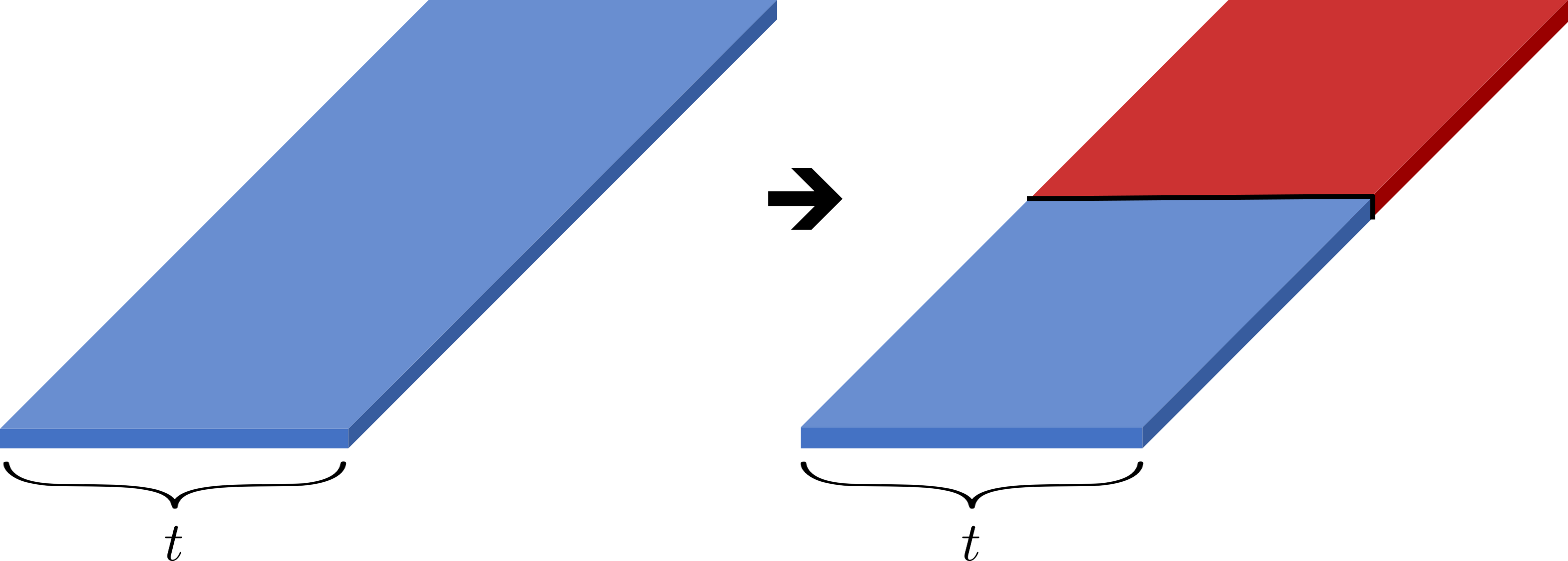} }}%
	\qquad
	\subfloat[Disordered phase]{{\includegraphics[width=0.45\textwidth]{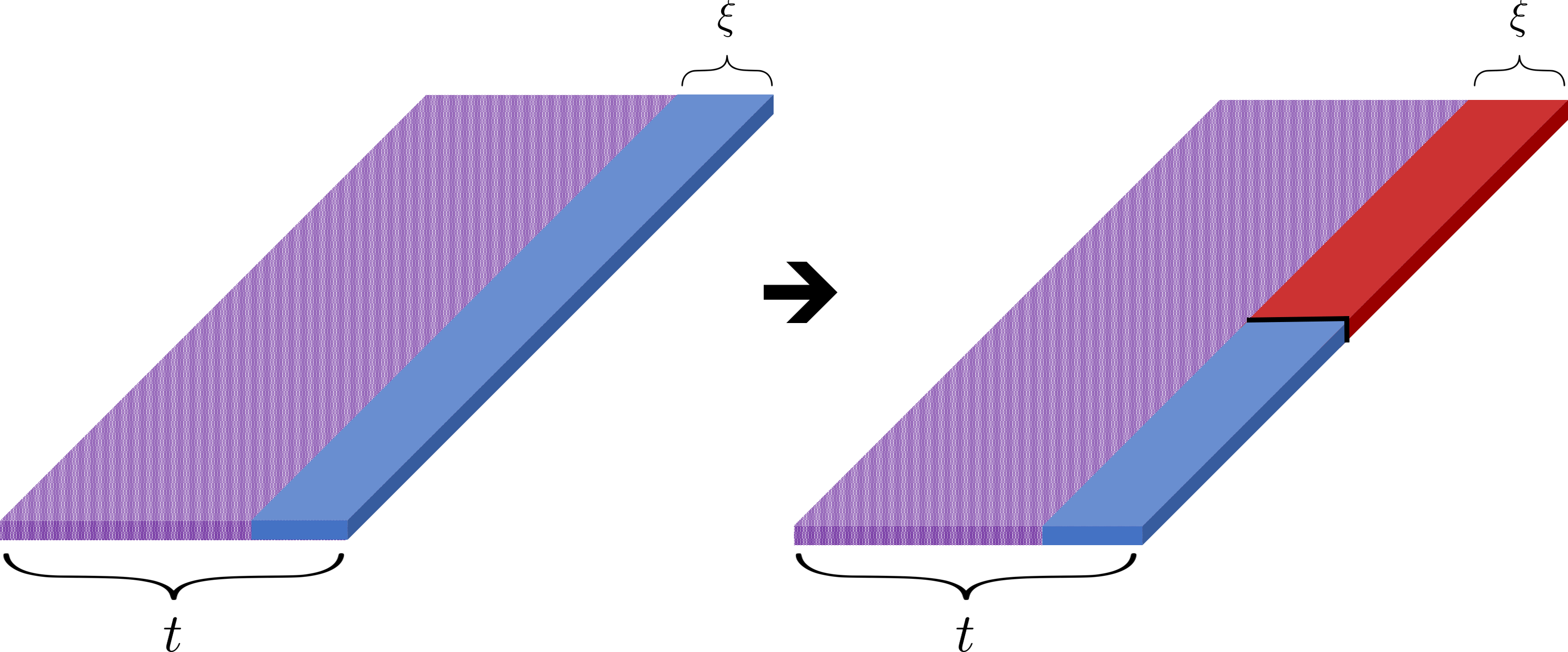} }}
    \caption{\label{fig:2Dcircuitsdomainwalls}  The stat mech mapping yields nodes arranged within a roughly $\sqrt{n} \times t \times d$ prism. (a) In the ordered phase, twisting the boundary conditions at the right boundary introduces a domain wall between the two phases (indicated by red and blue) that propagates through the bulk for a total area of $O(td)$. (b) In the disordered phase, boundary conditions introduce bias that is noticeable only within a constant $O(\xi)$ distance of the boundary, and the domain wall membrane introduced by twisting the boundary conditions is quickly washed out by the bulk disorder (dotted purple). The total area is $O(\xi d)$.}
\end{figure}

This implies that the quasi-entropy $\tilde{S}_2$ is in the area (resp.~volume) law phase when the classical stat mech model is in the disordered (resp.~ordered) phase. Heuristically we might expect the runtime of the \texttt{SEBD} algorithm to scale like $\text{poly}(n)\exp(O(\tilde{S}_2))$, suggesting that the disorder-to-order transition is accompanied by an efficient-to-inefficient transition in the complexity of the \texttt{SEBD} algorithm. Furthermore, near the transition point within the volume-law phase, the quasi-entropy scales linearly with system size but with a small constant prefactor, suggesting that the \texttt{SEBD} runtime, though exponential, could be considerably better than previously known exponential-time techniques.

\subsection{Efficiency of \texttt{Patching} algorithm from stat mech}\label{se:statmechpatching}
We now study the predictions of the stat mech model for the fate of the \texttt{Patching} algorithm we introduced in \Cref{sec:patching}. To do so, we in turn study the predictions of the stat mech model for entropic properties of the \emph{classical} output distribution, as \texttt{Patching} is efficient if the CMI of the classical output distribution is exponentially decaying with respect to shielded regions.

We have previously applied the stat mech model to study expected entropies of quantum states. However, we now wish to study expected entropies of the classical output distribution. To this end, we now consider the non-unitary quantum circuit consisting of the original, unitary circuit followed by a layer of dephasing channels applied to every qudit. The resulting mixed state is classical (i.e., diagonal in the computational basis) and is exactly equal the output distribution we want to study. That is, the state after application of the dephasing channels is $\sum_{\vb{x}} \mathcal{D}(\vb{x}) \dyad{\vb{x}}$ where $\mathcal{D}$ is the output distribution of the circuit. Note that the application of the dephasing channel is not described in the formalism we have discussed previously, but is easily incorporated. In particular, we need to compute the weights between the auxiliary node $x_a$ and the corresponding outgoing node $s_u$ associated with the unitary $u$ that is the last in the circuit to act on qudit $a$. We may update Eq.~\eqref{eq:weightsuxa} (whose original form was derived in Appendix \ref{app:statmechdetails}) and compute the following, letting $\ket{\Phi_k} \equiv \qty(\sum_{i=1}^{q} \ket{i}\otimes \ket{i})^{\otimes k}$.
\begin{align}
	\text{weight}(\expval{s_u x_a}) &= \expval{(I\otimes W_{\sigma^{-1}_u})\qty(\sum_{i=1}^q \dyad{i}\otimes \dyad{i})^{\otimes k} (I\otimes W_{\chi_a})}{\Phi_k} \\
	&= \sum_{i_1, \dots, i_k} \expval{W_{\sigma_u^{-1}}}{i_1, \dots, i_k} \expval{W_{\chi_a}}{i_1, \dots, i_k}.
\end{align}

We therefore see that $\text{weight}(\expval{s_u x_a} )$ in this setting is exactly equal to the number of $k$-tuples of indices $(i_1, \dots, i_k)$ with $i_j \in [q]$ that are invariant under both permutation operators $\sigma_u, \chi_a \in S_k$ acting as $\sigma_u \cdot (i_1, \dots, i_k) = (i_{\sigma(1)}, \dots, i_{\sigma(k)})$.  In fact, for our purposes, the auxiliary spin $\chi_a$ will either be set to the identity $e$ or to the $k$-cycle permutation $(1\dots k )$. In the former case, the weight reduces to $\tr(W_{\sigma_u}) = q^{C(\sigma_u)}$. In the latter case, since the only tuples that are invariant under application of the cycle permutation $(1\dots k )$ are the $q$ tuples of the form $(x,x,\dots,x)$ for $x\in [q]$, the weight is simply $q$ for all $\sigma_u$. Summarizing,
\begin{equation}\label{eq:dephasingWeights}
\text{weight}(\expval{s_u x_a})  =  \begin{cases}
	q^{C(\sigma_u)}, & \chi_a = e \\
	q, & \chi_a = (1\dots k).
\end{cases}
\end{equation}
From these expressions, we may immediately note the following facts. First, flipping some auxiliary spin from $e$ to $(1\dots k)$ cannot increase the weight of a configuration, and hence such a flip corresponds to an increase in free energy. Second, if an auxiliary spin is in the $(1\dots k)$ configuration, then the auxiliary spin may be effectively removed from the system since in this case the contribution of the auxiliary spin to the weight of a configuration is constant across all configurations.

With these modified weights, we may now compute ``quasi-entropies'' $\tilde{S}_k(X)$ as before, where now in the $k\rightarrow 1$ limit $\tilde{S}_k(X)$ approaches the expected Shannon entropy of the marginal of the output distribution on subregion $X$, $\langle S(X)_{\mathcal{D}} \rangle$, where the average is over random circuit instances.

\paragraph{Disordered stat mech model suggests \texttt{Patching} is successful.}
We consider the quasi-CMI defined by
\begin{equation}
\tilde{I}_2(A:C|B) := \tilde{S}_2(AB) + \tilde{S}_2(BC) - \tilde{S}_2(B) - \tilde{S}_2(ABC),
\end{equation}
where all quasi-entropies are taken with respect to the collection of classical output distributions that arise from the quantum circuit architecture. This definition is in analogy to the definition of CMI as $I(A:C|B) = S(AB) + S(BC) - S(B) - S(ABC)$ \cite{cover1991elements}. Note that we may define the quasi-$k$-CMI $\tilde{I}_k(A:C|B)$ analogously for any nonnegative $k$, and it holds that $\langle I(A:C|B)_\mathcal{D} \rangle = {\lim_{k\rightarrow 1} \tilde{I}_k (A:C|B)}$ where the angle brackets denote an expectation over random circuit instances.

Recalling that $\tilde{S}_2(X) = F_{2,X} - F_{2,\emptyset}$, we may rewrite the quasi-2-CMI  as
\begin{equation}
\tilde{I}_2(A:C|B) = (F_{2,AB} - F_{2,B}) - (F_{2,ABC} - F_{2,BC}).
\end{equation}
In stat mech language, the quasi-CMI is essentially the difference in free energy costs of twisting the boundary condition of subregion $A$ in the case where (1) no other spins have boundary conditions, and the case where (2) subregion $C$ also has an imposed boundary condition.

Now, consider some random circuit family $\mathcal{C}$ with associated stat mech model that is in the disordered phase for $k=2$. For any subregion $X$ of qudits, and partition of $X$ into subregions $X = A \cup B \cup C$, we expect this difference between free energy costs will decay exponentially with the separation between $A$ and $C$ as
\begin{equation}
\tilde{I}_2(A:C|B) \leq \poly(n,q) e^{-\dist(A,C)/\xi}
\end{equation}
where $\xi$ is a correlation length. This is because in the disordered phase of the stat mech model, information about the boundary of region $C$ will be exponentially attenuated as the distance from region $C$ grows. If we take $\tilde{I}_2(A:C|B)$ as a proxy for the average CMI of the output distribution, $\langle I(A:C|B)_{\mathcal{D}} \rangle$, we conclude that the random circuit family $\mathcal{C}$ is $\poly(n,q) e^{-\Theta(l)}$-Markov as defined in \Cref{sec:patching}. The results of that section then show that \texttt{Patching} can be used to efficiency sample from the output distribution and estimate output probabilities with high precision and high probability. We take this exponential decay of quasi-2-CMI as evidence that the average CMI also decays exponentially, and therefore that  \texttt{Patching} is successful.

\paragraph{Ordered stat mech model suggests \texttt{Patching} is unsuccessful.}
We first obtain exact, closed form results in the zero-temperature limit of the stat mech model, which corresponds to the $q\rightarrow \infty$ limit. However, we expect that qualitatively similar results hold outside of this limit.

As before, consider the stat mech model obtained by applying dephasing channels to all qudits after the application of all gates. Consider some connected, strict subset $A$ of qudits on the original grid. Suppose we are interested in the quasi-entropy $\tilde{S}_k(A) = (F_{k,A} - F_{k,\emptyset})/(k-1)$ of the output distribution on this region. This quantity is given by the free energy cost of twisting the boundary conditions (auxiliary spins) associated with region $A$ from $e$ to $(1\dots k)$. The auxiliary spins associated with qudits in the complement of $A$ are fixed to be in the identity permutation configuration, $e$. For both sets of boundary conditions, all non-auxiliary spins will order in the configuration $e$. This is because the configuration $e$ maximizes the weights in \Cref{eq:dephasingWeights} for spins connected to auxiliary spins in the configuration $e$, and the weight of a spin connected to an auxiliary spin in the configuration $(1\dots k)$ is independent of that spin's configuration. Hence, regardless of the configuration of the auxiliary spins, all bulk spins are in the identity permutation configuration in the $q\rightarrow \infty$ limit of infinitely strong couplings.

Therefore, twisting a single auxiliary spin from $e$ to $(1\dots k)$ results in a reduction of the total weight by a factor of $q/q^{C(e)} = q/q^k = q^{1-k}$, corresponding to  a free energy increase of $(k-1)\log(q)$. We therefore compute
\begin{equation}
\tilde{S}_k(A) = \frac{F_{k,A} - F_{k,\emptyset}}{k-1} = |A| \log(q).
\end{equation}
Note that this result is exact in the $q\rightarrow \infty$ limit. Notably, we find that all integer quasi-entropies are equal in this limit, and so we may trivially perform the analytic continuation to the von Neumann (i.e. Shannon) entropy:
\begin{equation}
\expval{S(A)} = \lim_{k\rightarrow 1} |A| \log(q) = |A| \log(q).
\end{equation}
Hence, in the $q\rightarrow \infty$ limit, the entropy of a strict subregion of the output distribution is maximal.

Now, let $X$ denote the set of \emph{all} qudits. We want to compute $\expval{S(X)}$. We again proceed by computing the quasi-entropies: $$\tilde{S}_k(X) = \frac{F_{k,X}-F_{k,\emptyset}}{k-1}. $$

As before, for each auxiliary spin associated with region $X$ that we ``twist'', the weight of the configuration is decreased by a factor of $q^{1-k}$ relative to the configuration in which all auxiliary spins are set to $e$. However, in this case, as opposed to our previous calculation, \emph{all} of the auxiliary spins are twisted. Recall from \Cref{eq:dephasingWeights} that the weight between a twisted auxiliary spin and  a bulk spin is independent of the value of the bulk spin. Hence, if all auxiliary spins are twisted, the lowest energy state in the bulk is no longer just the configuration in which all spins take the value $e$ -- in the absence of a symmetry-breaking boundary condition, there is now a global spin-flip symmetry and the ground space is $k!$-fold degenerate, consisting of all configurations in which all bulk spins are aligned. This symmetry contributes a factor of $k!$ to the partition function and $-\log(k!)$ to the free energy. We hence calculate
\begin{equation}
	\tilde{S}_k (X) =   |A| \log(q) - \frac{\log(k!)}{k-1}.
\end{equation}
We now perform the analytic continuation to the Shannon entropy:
\begin{align}
	\expval{S(X)} &= \lim_{k\rightarrow 1} \tilde{S}_k (X) \\
	&= |A| \log(q) - \lim_{k\rightarrow 1} \frac{\log(k!)}{k-1} \\
	&= |A| \log(q) - \frac{1-\gamma}{\ln(2)} \\
	&\approx |A| \log(q) - 0.61,
\end{align}
where $\gamma \approx 0.557$ denotes the Euler constant. The expected Shannon entropy of the output distribution is therefore $\frac{1-\gamma}{\ln(2)}$ less bits than maximal in the low-temperature limit, corresponding to $q\rightarrow \infty$.

From the above facts, we can immediately compute the expected CMI of the output distribution in this limit. Let $(A,B,C)$ be any partition of the qudits. We have
\begin{align}
\expval{I(A:C|B)_{\mathcal{D}}} &\equiv \expval{S(AB)_{\mathcal{D}} + S(BC)_{\mathcal{D}} - S(B)_{\mathcal{D}} - S(ABC)_{\mathcal{D}}} \\
&= [(|A|+|B|)\log(q) ] + [(|B|+|C|)\log(q) ] \\
&\qquad - [(|B|)\log(q) ] - [(|A|+|B| + |C|)\log(q) - \frac{1-\gamma}{\ln(2)}]  \\
&= \frac{1-\gamma}{\ln(2)} \approx 0.61.
\end{align}

We therefore find that in this limit, the expected CMI of the classical output distribution approaches a constant equal to $\frac{1-\gamma}{\ln(2)}$. While this result was derived with respect to the \emph{completely} ordered stat mech model, corresponding to   $q\rightarrow \infty$, we expect similar behavior for ordered stat mech models in general. In particular, if $X$ denotes the set of all qudits, in the case of an ordered $k\textsuperscript{th}$-order stat mech model,  $\tilde{S}_k(X)$ will similarly receive an extra contribution corresponding to the global spin-flip symmetry, which will also be contributed to the corresponding quasi-CMI $\tilde{I}_k(A:C|B)_{\mathcal{D}}$.  Hence, we do not expect the quasi-CMIs to decay when the corresponding stat mech model is in an ordered phase. We take this as evidence that the average CMI does not decay, and therefore that \texttt{Patching} is not successful in efficiently sampling from the output distribution with small error.

\subsection{Depth-3 2D circuits with brickwork architecture}\label{sec:brickwork}

In the previous sections we discussed the implications of the stat mech mapping for random 2D circuits of variable depth $d$. In this section we fix $d=3$ and examine the order or disorder properties of the model. In particular, we choose uniform depth-3 circuits with the so-called ``brickwork'' structure, where three layers of two-qudit gates are performed on a 2D lattice of qudits as shown in Fig. \ref{fig:brickwork}(a). Note that this architecture was also introduced in \Cref{se:proof}; the architecture we consider here is exactly the ``extended brickwork architecture'' of that section with the extension parameter $r$ fixed to be one.

\begin{figure}[h]
    \centering
            \def\svgwidth{0.9\columnwidth}
    \input{Figures/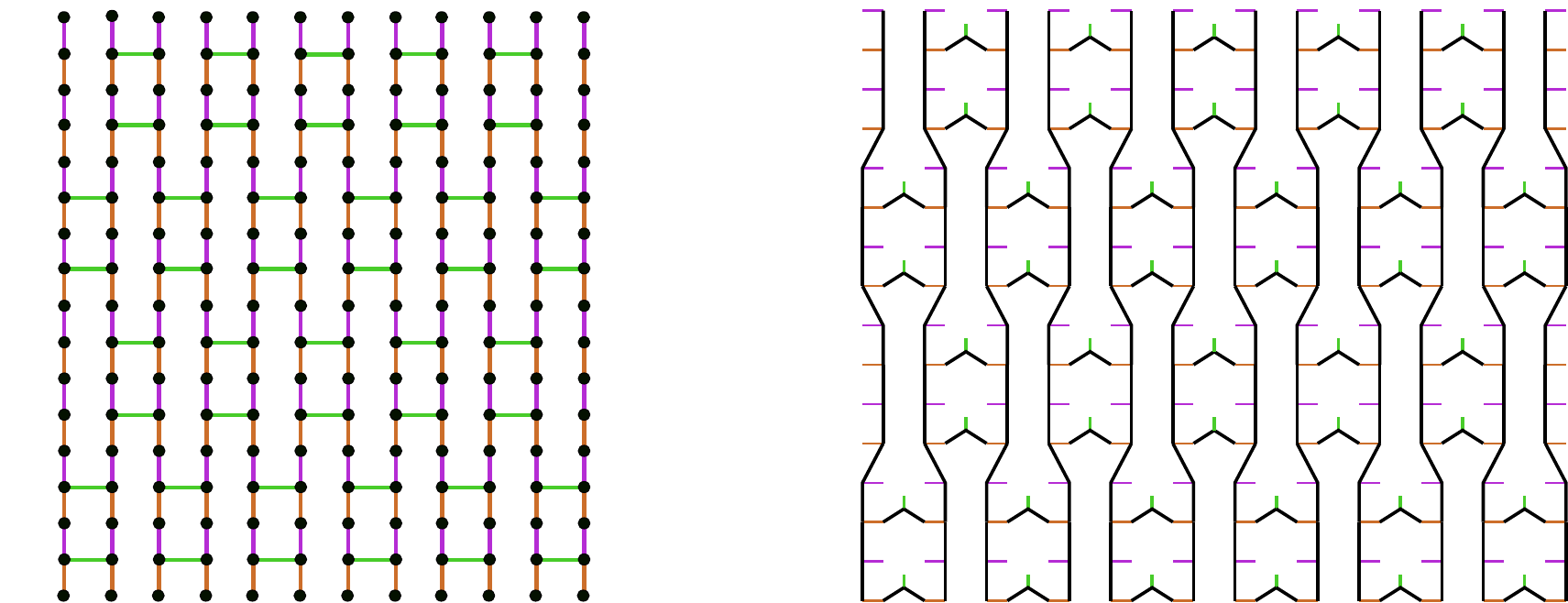_tex}
    \caption{ \label{fig:brickwork} (a) Brickwork architecture.  Qudits lie at location of black dots. Three layers of two-qudit gates act between nearest-neighbor qudits --- first qudits linked by a vertical purple edge, then vertical orange, then horizontal green. In our \texttt{SEBD} simulation of this circuit architecture, we sweep from left to right. (b) Result of stat mech mapping applied to brickwork circuit. Nodes are implied to lie at the ends of each edge in the graph. Purple, orange, green edges carry Weingarten weight. Black edges carry weight given by $\text{weight}(\langle s_{u_1}t_{u_2} \rangle) = q^{C(\sigma_{u_1} \tau_{u_2}^{-1})}$}
\end{figure}

This structure is known to be universal in the sense that one may simulate any quantum circuit using a brickwork circuit (with polynomial overhead in the number of qudits) by judiciously choosing which two-qudit gates to perform and performing adaptive measurements \cite{broadbent2009universal}. Thus, it is hard to exactly sample or compute the output probabilities of brickwork circuits in the worst case assuming the polynomial hierarchy does not collapse, and we expect neither the \texttt{SEBD} algorithm nor the \texttt{Patching} algorithm to be efficient. However, we give evidence that these algorithms are efficient in the ``average-case,'' where each two-qudit gate is Haar random, by considering the order/disorder properties of the stat mech model that the brickwork architecture maps to.

\paragraph{Stat mech mapping for general $k$.}
{
The stat mech mapping proceeds as previously discussed for 2D circuits, but we will see that the brickwork architecture allows us to make some important simplifications. Each gate in the circuit is replaced by a pair of nodes, which are connected with an edge. Then, the outgoing nodes of the first (purple) layer are connected to the incoming nodes of the second (orange) layer, and the outgoing nodes of the second (orange) layer are connected to the incoming nodes of the third (green) layer. The resulting graph is shown in Fig.~\ref{fig:brickwork}(b). Edges connecting incoming and outgoing nodes of the same layer are shown in color (purple, orange, green) and carry weight equal to the Weingarten function. Edges connecting subsequent layers are black. No weak measurement is performed between layers, so we may take $\mathcal{M} = \{\mathbb{I}_q\}$ along each of the black edges. Thus, these edges carry weight given by $\text{weight}(\langle s_{u_1} t_{u_2} \rangle) = q^{C(\sigma_u \tau_{u_2}^{-1})}$.

To perform the full mapping, we would also add a layer of auxiliary nodes to the graph and connect them to the third layer. However, we will suppose that most of the qudits undergo projective measurement after the third layer, and thus the auxiliary nodes may be omitted for those qudits. The auxiliary nodes will be important for any unmeasured qudits, but we assume these exist only at the edges of the graph. We do not need to consider auxiliary nodes to understand the bulk order/disorder properties of the model.

Looking at Fig.~\ref{fig:brickwork}(b), we see that some of the nodes have degree 1 and connect to the rest of the graph via a (purple or green) Weingarten link. We can immediately decimate these nodes from the graph. For any $\tau$, we have \cite{gu2013moments}
\begin{equation}
    \sum_{\sigma \in S_k} \Wg(\tau \sigma^{-1},q^2) = \sum_{\sigma \in S_k} \Wg(\sigma,q^2) = \frac{(q^2-1)!}{(k+q^2-1)!}
\end{equation}
which is independent of $\tau$, so decimating these spins merely contributes the above constant to the total weight. This constant will appear in both the numerator and denominator of quantities like $\E_U(Z_{k,A})/\E_U(Z_{k,\emptyset})$, and we ignore them henceforth. The remaining graph can be straightened out, yielding Fig.~\ref{fig:brickworkdecimated}(a). The fact that Fig.~\ref{fig:brickworkdecimated}(a) is a graph embedded in a plane that includes only two-body interactions is one upshot of studying the brickwork architecture, as it makes the analysis more straightforward and the stat mech model easier to visualize. This property and the fact that the brickwork architecture is universal for MBQC constitute the primary reasons we studied this architecture in the first place. Architectures with larger depth would lead to stat mech models that cannot be straightforwardly collapsed onto a single plane while maintaining the two-body nature of the interactions.
\begin{figure}[h]
    \centering
            \def\svgwidth{0.95\columnwidth}
    \input{Figures/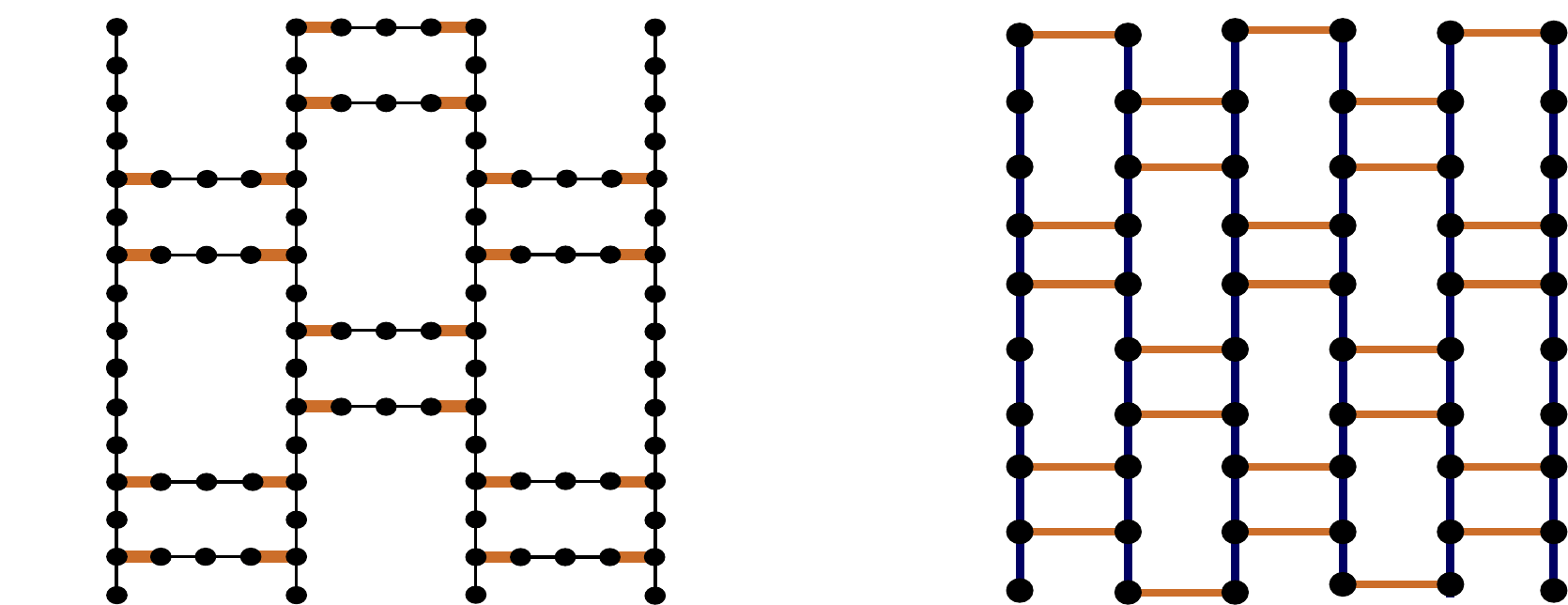_tex}
    \caption{ \label{fig:brickworkdecimated} (a) The graph that results from decimating degree-1 nodes in Fig.~\ref{fig:brickwork}(b). Each thin black link carries weight equal to the function $q^{C(\sigma\tau^{-1})}$ while each thick orange link carries weight equal to $\Wg(\sigma\tau^{-1},q^2)$. (b) The graph that results from decimating all degree two nodes for the graph in (a). For $k=2$, both the horizontal orange and the vertical dark blue links are ferromagnetic, but have different strengths.}
\end{figure}
}

\paragraph{Simplifications when $k=2$.}
{
As in previous examples, we examine the $k=2$ case. In this case we might as well decimate all the degree-2 nodes in the graph in Fig.~\ref{fig:brickworkdecimated}(a). This yields a graph with entirely degree-3 nodes, as shown in Fig.~\ref{fig:brickworkdecimated}(b).
The graph has two kinds of links, both carrying standard Ising interactions. The vertical dark blue links have weights given by
\begin{equation}
    \text{weight}(\langle s_us_{u'} \rangle) =
    \begin{cases}
    q^2(q^2+1) & \text{if } \sigma_u\sigma_{u'} =e\\
    q^2(2q) & \text{if } \sigma_u \sigma_{u'} = (12)
    \end{cases}
\end{equation}
while the horizontal orange links have weights given by
\begin{equation}
    \text{weight}(\langle s_us_{u'} \rangle) =
    \begin{cases}
    \frac{1}{q^2(q^4+1)^2}\left(q^6+q^4-4q^3+q^2+1\right) & \text{if } \sigma_u\sigma_{u'} =e\\
    \frac{1}{q^2(q^4+1)^2}\left(2q^5-2q^4-2q^2+2q\right) & \text{if } \sigma_u \sigma_{u'} = (12)
    \end{cases}
\end{equation}
Both of these interactions are ferromagnetic and become stronger as $q$ increases. We may think of the model as the square lattice Ising model for which 1/2 of the links carry a ferromagnetic interaction of one strength, 1/4 of the links carry ferromagnetic interactions of another strength, and the final 1/4 of the links have no interaction at all. The energy functional can be written
\begin{equation}
    E/(kT) = -J_{\text{vert}}\sum_{\langle i j\rangle} s_is_j -J_{\text{horiz}}\sum_{\langle i j\rangle} s_is_j
\end{equation}
where $s_i$ take on values in $\{+1,-1\}$. For $q=2$ we have $J_{\text{vert}} = \log(5/4)/2 = 0.112$ and $J_{\text{horiz}} = \log(53/28)/2 = 0.319$. Both of these values are weaker than the critical interaction strength for the square lattice Ising model of $J_{\text{square}}=\log(1+\sqrt{2})/2 = 0.441$. This indicates that the graph generated by the stat mech mapping on 2D circuits of depth 3 with brickwork architecture is in the disordered phase when $q=2$. This remains true for $q=3$. For $q=4$, $J_{\text{horiz}} =0.500 > J_{\text{square}}$, but $J_{\text{vert}} = 0.377 < J_{\text{square}}$. Recall that 1/4 of the links can be thought to have $J=0$ since they are missing. Taking this into account, the value of $J$ averaged over all the links remains below $J_{\text{square}}$ for $q=5$, and slightly exceeds it for $q=6$.

This indicates that when we run $\texttt{SEBD}$ on these uniform depth-3 circuits with Haar-random gates, the quantity $\tilde{S}_2 = O(1)$ (independent of the number of qudits $n$) when $q=2$ or $q=3$ (and probably also for $q=4$ and $q=5$). Moreover, it indicates that for a partition of the output state into regions $A$, $B$, and $C$, the quasi-2-CMI of the distribution over classical output distributions $\tilde{I}_2(A:C|B)$ decays exponentially in $\text{dist}(A,C)$. We take this as evidence that the \texttt{SEBD} and \texttt{Patching} algorithms would be efficient for these circuits.
}

\section*{Acknowledgments}
We thank Nicole Yunger Halpern, Richard Kueng, Saeed Mehraban, Ramis Movassagh, Anand Natarajan, and Mehdi Soleimanifar for helpful discussions. Numerical simulations were performed using the ITensor Library\footnote{\href{http://itensor.org}{http://itensor.org}}.  This work was funded by NSF grants CCF-1452616, CCF-1729369, PHY-1818914, and DGE‐1745301, as well as ARO
contract W911NF-17-1-0433, the MIT-IBM Watson AI Lab under the project {\it Machine Learning in Hilbert space} and the Dominic Orr Fellowship at Caltech. The Institute for Quantum Information and Matter (IQIM) is an NSF Physics Frontiers Center (PHY-1733907).

\appendix

\section{Relation to worst-to-average-case reductions based on truncated Taylor series}\label{app:supremacy}
In this section, we discuss the relation between our algorithms (\texttt{SEBD} and \texttt{Patching}) applied to the computation of output probabilities and a recent result \cite{bouland2019complexity} on the hardness of average-case simulation of random circuits based on polynomial interpolation. In particular, we discuss how this polynomial interpolation argument is insufficient to show that  the task of even \emph{exactly} computing output probabilities and sampling from the output distribution of a constant-depth Haar-random circuit instance with high probability using our algorithms is classically hard, even though these circuits possess worst-case hardness. We first briefly review their technique before discussing a limitation in the robustness of the polynomial interpolation scheme.  We then discuss how this robustness limitation makes the interpolation scheme inapplicable to our algorithms.

The main point is that our algorithms exploit unitarity (via the fact that gates outside of the lightcone of the qudits currently under consideration are ignored), but the hardness result of \cite{bouland2019complexity} holds with respect to circuit families that are non-unitary, albeit very close to unitary in some sense. Our algorithms are unable to simulate these slightly non-unitary circuits to the precision required for the worst-to-average case reduction, regardless of how well they can simulate Haar-random circuit families. While it is true that in this scheme there is an adjustable parameter $K$ which, when increased, brings the non-unitary circuit family closer to approximating the true Haar-random family, increasing $K$ also increases the degree of the interpolating polynomial. This makes the interpolation more sensitive to errors in such a way that, for any choice of $K$, the robustness that the interpolation can tolerate is not large enough to overcome the inherent errors that our algorithms make when trying to simulate these non-unitary families.  The existence of simulation algorithms like \texttt{SEBD} and \texttt{Patching}, which exploit the unitarity of the circuit, may present an obstruction to applying worst-to-average-case reduction techniques that obtain a polynomial structure at the expense of unitarity. Note that, as discussed in the main text, a very recent alternative worst-to-average case reduction \cite{movassagh2019} based on ``Cayley paths'' rather than truncated Taylor series does not suffer from this same limitation.

\subsection*{Background: truncated Haar-random circuit ensembles and polynomial interpolation}
In this section, we give an overview (omitting some details) of the interpolation technique of \cite{bouland2019complexity} used to show their worst-to-average-case reduction, partially departing from their notation. Suppose $U$ is a unitary operator. Then we define the $\theta$-contracted and $K$-truncated version of $U$ to be $U^\prime(\theta, K) = U \sum_{k=0}^K \frac{(- \theta \ln U )^k}{k!}$. Note that $U^\prime(\theta, \infty) = U e^{-i \theta(-i \ln U)}$ is simply $U$ pulled-back by angle $\theta$ towards the identity operator $I$. Note that $U^\prime(0,\infty) = U$ and $U^\prime(1,\infty) = I$. For $U^\prime(\theta,K)$ for $K<\infty$, the operator that performs this pullback is then approximated by a Taylor series which is truncated at order $K$. If $K < \infty$, $U^\prime(\theta,K)$ is (slightly) non-unitary.

Suppose $C$ is some circuit family for which computing output probabilities up to error $2^{-\poly(n)}$ is classically hard. Now, for each gate $G$ in $C$, multiply that gate by $H^\prime(\theta, K)$ with $H$  Haar-distributed and supported on the same qubits as $G$. This yields some distribution over non-unitary circuits that we call $\mathcal{D}(C,\theta,K)$. Note that if $\theta = 0$, $\mathcal{D}$ exactly becomes the Haar-random circuit distribution with the same architecture as $C$. When $\theta = 1$, the hard circuit $C$ is recovered up to some small correction due to the truncation. If $K$ is sufficiently large, we can assume that computing output probabilities for this slightly perturbed version of $C$ is also classically hard.

Fix some circuit $A$ drawn from $\mathcal{H}(C)$, the distribution over circuits with the same architecture as $C$ with gates chosen according to the Haar measure. Let $A(C,\theta,K)$ denote the circuit obtained when the $\theta$-pulled-back and $K$-truncated gates of $A$ are multiplied with their corresponding gates in $C$. Note that $A(C,\theta,K)$ is distributed as $\mathcal{D}(C,\theta,K)$.   Define the quantity
\begin{equation}
    p_0(A,\theta,K) := \abs{\expval{A(C,\theta,K)}{0}}^2.
\end{equation}

Assuming the circuit $C$ has $m$ gates, it is easy to verify that $p_0(A,\theta,K)$ may be represented as a polynomial in $\theta$ of degree $2mK$. Note also that $p_0(A,1,\infty) = p_0(C)$, which is assumed to be classically hard to compute.

Now, assume that there exists some classical algorithm $\mathcal{A}$ and some $\epsilon = 1/\poly(n)$ such that, for some fixed $K \leq \poly(n)$ and for all $0\leq \theta \leq \epsilon$, $\mathcal{A}$ can compute $p_0(A,\theta,K)$ up to additive error $\delta \leq 2^{-n^c}$ for some constant $c$, with probability $1-1/\poly(n)$ over $A(C,\theta,K) \sim \mathcal{D}(C,\theta,K)$. Then, $\mathcal{A}$ may evaluate $p_0(A,\theta,K)$ for $2mK+1$ evenly spaced values of $\theta$ in the range $[0,\epsilon]$ (up to very small error), and construct an interpolating polynomial $q_0(A,\theta,K)$. By a result of Rakhmanov \cite{rakhmanov}, there is some interval $[a,b] \subset [0,\epsilon]$ such that $b-a \geq 1/\poly(n)$ and $|p_0(A,\theta,K) - q_0(A,\theta,K)| \leq 2^{-n^{c^\prime}}$  for $\theta \in [a,b]$ where $c^\prime$ depends on $c$. One then invokes the following result of Paturi.

\begin{lemma}[\cite{paturi}]\label{paturi}
Let $p\, : \, \mathbb{R} \rightarrow \mathbb{R}$ be a real polynomial of degree $d$, and suppose $|p(x)| \leq \delta$ for all $|x|\leq \epsilon$. Then $|p(1)| \leq \delta e^{2d(1+1/\epsilon)}$.
\end{lemma}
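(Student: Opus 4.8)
The plan is to prove this as the standard extrapolation bound that follows from the extremal growth property of Chebyshev polynomials. First I would rescale to the canonical interval: assuming $\epsilon \le 1$ (otherwise $1 \in [-\epsilon,\epsilon]$ and the bound is trivial), set $q(y) := p(\epsilon y)/\delta$, so that $q$ is a real polynomial of degree $d$ with $|q(y)| \le 1$ for all $|y| \le 1$, and the quantity to control becomes $p(1) = \delta\, q(1/\epsilon)$ with $1/\epsilon \ge 1$. Thus it suffices to bound $|q(y_0)|$ at a single point $y_0 \ge 1$ given only a uniform bound of $1$ on $[-1,1]$.

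The key step is the Chebyshev extremal inequality: for any real polynomial $q$ of degree $\le d$ with $\sup_{|y|\le 1}|q(y)| \le 1$ and any $y_0 \ge 1$, one has $|q(y_0)| \le T_d(y_0)$, where $T_d$ is the degree-$d$ Chebyshev polynomial of the first kind. I would prove this by the classical equioscillation and root-counting argument. Suppose for contradiction (WLOG by sign) that $q(y_0) > T_d(y_0) > 0$, and consider
\begin{equation}
r(y) := T_d(y) - \frac{T_d(y_0)}{q(y_0)}\, q(y),
\end{equation}
a polynomial of degree $\le d$ with $r(y_0) = 0$. At the $d+1$ extremal points $z_j := \cos(j\pi/d)$, $j = 0,\dots,d$, we have $T_d(z_j) = (-1)^j$, while the subtracted term has magnitude at most $T_d(y_0)/q(y_0) < 1$; hence $r(z_j)$ has the sign of $(-1)^j$. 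The alternation forces at least $d$ roots of $r$ in $(-1,1)$, and together with the root at $y_0 > 1$ this gives $d+1$ distinct roots of a polynomial of degree $\le d$, so $r \equiv 0$. But then $q = \big(q(y_0)/T_d(y_0)\big) T_d$, whence $|q(z_j)| = q(y_0)/T_d(y_0) > 1$, contradicting $|q| \le 1$ on $[-1,1]$. This establishes $|q(1/\epsilon)| \le T_d(1/\epsilon)$.

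It remains to estimate the Chebyshev value and undo the rescaling. Using $T_d(y) = \tfrac12\big[(y+\sqrt{y^2-1})^d + (y-\sqrt{y^2-1})^d\big]$ for $y \ge 1$, I get the crude bound $T_d(1/\epsilon) \le (1/\epsilon + \sqrt{1/\epsilon^2-1})^d \le (2/\epsilon)^d$. Since $\ln(2/\epsilon) = \ln 2 + \ln(1/\epsilon) \le 2 + 1/\epsilon \le 2(1+1/\epsilon)$ (using $\ln t \le t$ and $\ln 2 \le 2$), this yields $(2/\epsilon)^d \le e^{2d(1+1/\epsilon)}$, and therefore $|p(1)| = \delta\,|q(1/\epsilon)| \le \delta\, e^{2d(1+1/\epsilon)}$, as claimed.

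I expect the main obstacle to be the Chebyshev extremal inequality itself; the rescaling and the final numerical estimate are routine. The one point requiring care is the degenerate case $r \equiv 0$ together with correct sign-tracking at the equioscillation points, but the root-counting argument makes this mechanical once set up, and everything else is elementary.
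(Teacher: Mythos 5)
Your proof is correct: the rescaling, the Chebyshev extremal inequality via equioscillation and root counting, and the final estimate $T_d(1/\epsilon) \le (2/\epsilon)^d \le e^{2d(1+1/\epsilon)}$ all check out (with the trivial edge cases $\epsilon \ge 1$, $\delta = 0$, and $d=0$ handled or harmless). The paper itself gives no proof --- it cites Paturi --- and your argument is essentially the standard Chebyshev-growth proof underlying that citation, so there is nothing further to reconcile.
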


Applying this result, we find $|p_0(A,1,K) - q_0(A,1,K)| \leq 2^{-n^{c^\prime}} e^{\poly(n,m,K)}$. If $c$ is sufficiently large, then $|p_0(A,1,K) - q_0(A,1,K)| \leq 2^{-\poly(n)}$ and the quantity $q_0(A,1,K)$ is hard to compute classically. But this would be a contradiction, because $q_0(A,1,K)$ can be efficiency evaluated  classically by performing the interpolation.

Hence, this argument shows that for some choice of $K$ and a sufficiently large $c$ depending on $K$, computing output probabilities of circuits in the truncated families $\mathcal{D}(C,\theta,K)$ with $\theta\leq 1/\poly(n)$ up to error $2^{-n^c}$ is hard (assuming standard hardness conjectures).

\subsection*{Limitation of the interpolation argument}
The above argument shows that the average-case simulation of some family $\mathcal{D}(C,\theta,K)$ of non-unitary circuits which in some sense is close to the corresponding Haar-random circuit family to precision $2^{-\poly(n)}$ is classically hard, if simulating $C$ is classically hard and the polynomial in the exponent is sufficiently large.

We now explain how, based on this argument, we are unable to conclude that exactly computing output probabilities of Haar-random circuits is classically hard.\footnote{A similar argument recently appeared independently in \cite{movassagh2019}. That argument shows that $K$ must be at least exponentially large for the interpolation result to work for non-truncated Haar-random circuits, while we argue that in fact no value of $K$ is sufficient.} In other words, suppose that with probability $1-1/\poly(n)$, some algorithm $\mathcal{A}$ can \emph{exactly} compute output probabilities from the distribution $\mathcal{H}(C)$. We argue that a straightforward application of the above result based on Taylor series truncations and polynomial interpolation is insufficient to compute $p_0(C)$ with small error.

Consider some circuit realization $A$ drawn from $\mathcal{H}(C)$, and assume that we can exactly compute its output probability $p_0(A)$. To use the argument of \cite{bouland2019complexity}, we actually need to compute $p_0(A,\theta,K)$ for some fixed value of $K$ and $\theta$ in some range $[0,\epsilon]$. We first find an upper bound for $\epsilon$ which must be satisfied for the interpolation to be guaranteed to succeed with high probability. To this end, we note that  \cite{bouland2019complexity}  the total variation distance between the distributions $\mathcal{D}(C,\theta,\infty)$ and $\mathcal{D}(C,0,\infty)$ is bounded by $O(m\theta)$. Hence, if we try to use the algorithm $\mathcal{A}$ to estimate $p_0(A,\theta,\infty)$, the failure probability over random circuit instances could be as high as $O(m\theta)$. Therefore, since the $\theta$ values to be evaluated are uniformly spaced on the interval $[0,\epsilon]$, the union bound tells us that the probability that one of the  $2mK+1$ values $p_0(A,\theta,K)$ is erroneously evaluated is bounded by $O(m^2 K \epsilon)$. Hence, in order to ensure that all $2mK+1$ points are correctly evaluated, we should take $\epsilon \leq O(1/m^2 K)$.

Now, assume that we have chosen $\epsilon \leq O(1/m^2 K)$ and all $2mK+1$ points $p_0(A,\cdot,\infty)$ are correctly evaluated. Let $\theta$ be one of the evaluation points. We now must consider the error made by approximating the ``probability'' associated with the truncated version of the circuit with the probability associated with the untruncated version of the circuit, namely $|p_0(A,\theta,\infty) - p_0(A,\theta,K)|$. This error associated with the truncated Taylor series is upper bounded by $\delta  \leq \frac{2^{O(nm)}}{K!}$ \cite{bouland2019complexity}.

Plugging these values into \Cref{paturi}, we find that if we use these values to try to interpolate to the classically hard-to-compute quantity $p_0(C,1,K)$, the error bound guaranteed by Paturi's lemma is no better than $\frac{2^{O(nm)}}{K!} \exp\qty(O(2mK(1+m^2 K)))$, which diverges in the limit $n\rightarrow \infty$ for any scaling of $m$ and $K$. Hence, the technique of \cite{bouland2019complexity} is insufficient to show that exactly computing output probabilities of circuits drawn from the Haar-random circuit distribution $\mathcal{H}_C$ with high probability is hard.

Intuitively, the limitation arises because there is a tradeoff between the amount of truncation error incurred  and the degree of the interpolating polynomial. As the parameter $K$ is increased, the truncation error is suppressed, but the degree of the interpolating polynomial is increased, making the interpolation more sensitive to errors.

\subsection*{Inapplicability to \texttt{SEBD} and \texttt{Patching}}
To summarize the findings above, the argument of \cite{bouland2019complexity} for the hardness of computing output probabilities of random circuits applies not directly to Haar-random circuit distributions, but rather to distributions over slightly non-unitary circuits that are exponentially close to the corresponding Haar distributions in some sense. We argued that the interpolation scheme cannot be straightforwardly applied to circuits that are truly Haar-random, and therefore it cannot be used to conclude that simulating truly Haar-random circuits, even exactly, is classically hard.

\emph{A priori}, it is not obvious  whether this limitation is a technical artifact or a more fundamental limitation of the interpolation scheme. In particular, one might imagine that if some algorithm $\mathcal{A}$ is capable of exactly simulating Haar-random circuit families, some modified version of the algorithm $\mathcal{A}^\prime$ might be capable of simulating the associated truncated Haar-random circuit families, at least up to the precision needed for the interpolation argument to work. If this were the case, then the hardness argument \emph{would} be applicable.

However, \texttt{SEBD} and \texttt{Patching} appear to be algorithms that \emph{cannot} be straightforwardly used to efficiently simulate truncated Haar-random circuit families to the precision needed for the interpolation to work, even under the assumption that they can efficiently, exactly simulate Haar-random circuit families. This is because the efficiency of these algorithms crucially relies on the existence of a constant-radius lightcone for constant-depth circuits.  The algorithm is able to ignore all qubits and gates outside of the lightcone of the sites currently being processed. However, the lightcone argument breaks down for non-unitary circuits. If the gates are non-unitary and we want to perform an exact simulation, we are left with using Markov-Shi  or some other general-purpose tensor network contraction algorithm, with a running time of $2^{O(d \sqrt{n})}$ for a depth-$d$ circuit on a square grid of $n$ qubits.

Consider what happens if one tries to use one of these algorithms to compute output ``probabilities'' for a slightly non-unitary circuit coming from a truncated Haar-random distribution $\mathcal{D}(C,\theta,K)$, and then use these computed values to interpolate to the hard-to-compute value $p_0(C,1,K)$ via the interpolating polynomial of degree $2mK$ proposed in \cite{bouland2019complexity}. Even without any other sources of error, when one of these algorithms ignores gates outside of the current lightcone, it is essentially approximating each gate outside the lightcone as unitary. This causes an incurred error bounded by $2^{O(nm)}/K!$ for the computed output probability. Then, by an argument essentially identical to the one appearing in the previous section, one finds that this error incurred just from neglecting gates outside the lightcone is already large enough to exceed the error permitted for the polynomial interpolation to be valid. We conclude that this worst-to-average-case reduction based on truncated Taylor series expansions cannot be used to conclude that it is hard for \texttt{SEBD} or \texttt{Patching} to exactly simulate worst-case hard shallow Haar-random circuits with high probability.

\section{Deferred proofs}\label{appendix:error}
\samplingError*
\begin{proof}
We rely upon a well-known fact about the error caused by truncating the bond dimension of a MPS, which we state in \Cref{lem:truncationError}.

\begin{lemma}[follows from \cite{verstraete2006matrix}]\label{lem:truncationError}
Suppose the MPS $\ket{\psi}$ is compressed via truncation of small singular values, and $\epsilon$ is the sum of the squares of the discarded singular values. Then if $\ket{\psi^{(t)}}$ is the truncated version of the MPS after normalization,
\begin{equation}
    \| |\psi \rangle \langle \psi | - |\psi^{(t)} \rangle \langle \psi^{(t)} | \|_1 \leq \sqrt{8 \epsilon}.
\end{equation}
\end{lemma}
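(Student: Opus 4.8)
The plan is to reduce the trace-norm bound to a lower bound on the overlap (fidelity) between $\ket{\psi}$ and the renormalized truncated state $\ket{\psi^{(t)}}$. The starting point is the standard identity that for two normalized pure states $\| \ket{\psi}\bra{\psi} - \ket{\psi^{(t)}}\bra{\psi^{(t)}} \|_1 = 2\sqrt{1 - |\langle \psi | \psi^{(t)}\rangle|^2}$, so it suffices to lower bound the squared overlap $F := |\langle \psi | \psi^{(t)}\rangle|^2$ in terms of the discarded weight $\epsilon$ and then substitute.

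First I would treat truncation at a single bond. Writing the Schmidt decomposition $\ket{\psi} = \sum_i \lambda_i \ket{a_i}\ket{b_i}$ across that bond with $\sum_i \lambda_i^2 = 1$, the compression retains a subset $S$ of Schmidt vectors and discards the rest, so $\epsilon = \sum_{i \notin S} \lambda_i^2$ and the unnormalized truncated vector $\ket{\tilde\psi} = \sum_{i\in S}\lambda_i \ket{a_i}\ket{b_i}$ has squared norm $1-\epsilon$. A direct computation then gives $\langle \psi | \tilde\psi\rangle = 1-\epsilon$, hence $\langle\psi|\psi^{(t)}\rangle = \sqrt{1-\epsilon}$ and $F = 1-\epsilon$. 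Substituting into the identity above yields a trace distance of exactly $2\sqrt{\epsilon}$, comfortably below the claimed $\sqrt{8\epsilon}$.

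For the general situation in which truncations are applied at several bonds (as in \texttt{SEBD}), I would propagate the single-bond estimate to a fidelity lower bound of the robust form $F \geq 1 - 2\epsilon$. The route is to track the overlap through the sequence of orthogonal projections onto the kept Schmidt subspaces — each is a contraction, and the accumulated discarded weight is $\epsilon$ — while simultaneously controlling the subnormalization $\|\ket{\tilde\psi}\|^2 = 1-\epsilon$ introduced by the renormalization $\ket{\psi^{(t)}} = \ket{\tilde\psi}/\|\ket{\tilde\psi}\|$, using the elementary estimate $1-\sqrt{1-\epsilon}\leq \sqrt\epsilon$. The extra factor of $2$ in $F \geq 1-2\epsilon$ is precisely the slack absorbing both the accumulation across bonds and the renormalization step; feeding it into $2\sqrt{1-F}$ gives $2\sqrt{2\epsilon} = \sqrt{8\epsilon}$, which is exactly the stated constant.

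The main obstacle is this last bookkeeping: the truncated tensor network is subnormalized, and in the multi-bond case each successive truncation acts on an already-perturbed state, so its discarded weight is defined relative to a shifting canonical form. The care required is to arrange the triangle inequalities so that the per-bond errors and the renormalization drift combine into a single $\sqrt{\epsilon}$-scale quantity (rather than a sum of square roots over bonds) while keeping the fidelity loss bounded by $2\epsilon$. Once that is set up, the remaining steps are the routine Schmidt-decomposition computation of the single-bond case and the elementary inequalities for $1-\sqrt{1-\epsilon}$; this is the content underlying the cited bound of Verstraete \emph{et al.}~\cite{verstraete2006matrix}.
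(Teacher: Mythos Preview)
The paper does not actually supply its own proof of this lemma: it is stated inside the proof of \Cref{lem:sampling_error_bound} as a well-known fact that ``follows from \cite{verstraete2006matrix}'' and is then used as a black box. Your proposal sketches precisely the standard argument underlying that citation --- reducing the trace distance to a fidelity bound via $\| \dyad{\psi}-\dyad{\psi^{(t)}}\|_1 = 2\sqrt{1-F}$, computing the single-bond case exactly, and then invoking the Verstraete--Cirac accumulation bound $\| \ket{\psi}-\ket{\tilde\psi}\|^2 \leq 2\epsilon$ across multiple bonds --- so there is no methodological difference to report. Your identification of the multi-bond bookkeeping (sequential truncations acting on shifting canonical forms, plus the renormalization) as the only nontrivial step is accurate, and the constants you arrive at match the statement.
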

The second inequality follows from the fact that $\sqrt{\sum_i x_i^2} \leq \sum_i x_i$ for $x_i \geq 0$. To prove the first inequality, we  start by considering the version of the algorithm with no truncation, which we have argued samples exactly from $\mathcal{D}$. Let $\mathcal{N}_t$ denote the TPCP map corresponding to  the application of gates that have come into the lightcone of \texttt{column t} and the measurement of \texttt{column t}. That is,
\begin{equation}
\mathcal{N}_t (\rho) =  \sum_{\vb{x}_t} \Pi_t^{\vb{x}_t} V_t \rho  V_t^\dagger \Pi_t^{\vb{x}_t} ,
\end{equation}
where $\vb{x}_t$ indexes (classical) outcome strings of \texttt{column t}. Note that $\mathcal{N}_t(\rho)$ is a classical-quantum state for which the sites corresponding to the first $t$ columns are classical, and the quantum register consists of sites which are in the lightcone of \texttt{column $t$} but not in the first $t$ columns.   Define $\rho_t := \mathcal{N}_{t-1}(\rho_{t-1})$ and $\rho_1 := \dyad{1}^{\otimes L_1}_{\texttt{column 1}}$, so that $\rho_{L_2+1}$ is a classical state exactly corresponding to output strings on the $L_1 \times L_2$ grid distributed according to $\mathcal{D}$.

Now consider the ``truncated'' version of the algorithm, which is defined similarly except we use $\sigma_t$ to denote the state of the algorithm immediately after the truncation at the beginning of iteration $t$. That is, we define
\begin{equation}
\sigma_{t} := (T_{t} \circ \mathcal{N}_{t-1}) (\sigma_{t-1}),
\end{equation}
where $T_t$ denotes the mapping corresponding to the MPS truncation and subsequent renormalization at the beginning of iteration $t$, and we define $\sigma_1 := T_1(\rho_1) = \rho_1$ (there is no truncation at the beginning of the first iteration since the initial state is a product state).

We now have
\begin{align}
    \| \mathcal{D}_C - \mathcal{D}^\prime_C \|_1 &= \| \rho_{L_2+1} - \sigma_{L_2+1} \|_1 \\
    &\leq \| \rho_{L_2+1} - \mathcal{N}_{L_2}(\sigma_{L_2}) \|_1 + \| \mathcal{N}_{L_2}(\sigma_{L_2}) - \sigma_{L_1+1} \|_1 \\
    &\leq \| \rho_{L_2} - \sigma_{L_2} \|_1 + \| \mathcal{N}_{L_2}(\sigma_{L_2}) - \sigma_{L_2+1} \|_1,
\end{align}
where the first inequality follows from the triangle inequality, and the second from contractivity of TPCP maps. Applying this inequality recursively yields
\begin{equation}
\norm{\mathcal{D}_C-\mathcal{D}^\prime_C}_1  \leq \sum_{i=1}^{L_2} \| \mathcal{N}_i (\sigma_i) - \sigma_{i+1} \|_1 = \sum_{i=1}^{L_2 - 1} \| \mathcal{N}_i (\sigma_i) - (T_{i+1} \circ \mathcal{N}_i) (\sigma_i) \|_1
\end{equation}
where we also used the fact that no truncation occurs after $\mathcal{N}_{L_2}$ is applied (i.e. $T_{L_2+1}$ acts as the identity). Now, note that $\norm{ \mathcal{N}_i (\sigma_i) - (T_{i+1} \circ \mathcal{N}_i) (\sigma_i)}_1$ is exactly the expected error in 1-norm caused by the truncation in iteration $i+1$. (This is true because of the following fact about classical-quantum states: $\norm{  \E_i  \dyad{i}_C \otimes (\dyad{\psi_i}_Q - \dyad{\phi_i}_Q)}_1 = \E_i \norm{\dyad{\psi_i} - \dyad{\phi_i}}_1$ where $\{\ket{i}_C\}_i$ is an orthonormal basis for the Hilbert space associated with register $C$.) By \Cref{lem:truncationError}, this quantity is bounded by $\E \sqrt{8 \epsilon_{i+1}}$. Substituting this bound into the summation yields the desired inequality.
\end{proof}
\toyModel*
\begin{proof}
   Suppose that an EPR pair is measured $2t$ times, corresponding to each of the two qubits being measured $t$ times. A calculation shows that the probability of obtaining $s$ $M_1$ outcomes is given by a mixture of two binomial distributions. Letting $S$ be the random variable denoting the number of $M_1$ outcomes, we find
\begin{equation}
\Pr[S=s] = \frac{1}{2} \Pr[B_{2t, \sin^2 (\theta/2)} = s] + \frac{1}{2} \Pr[B_{2t, \cos^2 (\theta/2)} = s],
\end{equation}
where $B_{n,p}$ denotes a binomial random variable associated with  $n$ trials and success probability $p$. If after the $2t$ measurements we obtain outcome $M_1$ $s$ times, the post-measurement state is given by (up to normalization)
\be \ket{00} + \tan^{2(t-s)}(\theta/2)\ket{11}.\label{eq:outcome-s}\ee
Note that $s$ can be assumed to be generated by sampling from either  $B_{2t, \sin^2 (\theta/2)}$ or $B_{2t, \cos^2 (\theta/2)}$ with probability $1/2$ each. In the former case, the post-measurement state may be written
\be \ket{00} + \tan^{2\left(t-B_{2t,\sin^2(\theta/2)}\right)}(\theta/2)\ket{11} = \ket{00} + \tan^{2t\cos(\theta)-2 X_{2t,\sin^2(\theta/2)}}(\theta/2) \ket{11} \ee
where we have defined the random variable $X_{2t,\sin^2(\theta/2)}$ via $B_{n,p} = np + X_{n,p}$.  That is, the random variable $X_{n,p}$ is distributed as a binomial distribution shifted by its mean.  Now, defining $\gamma := (\tan(\theta/2))^{2\cos(\theta)}$ and $X_{n,p}' = X_{n,p} / \cos(\theta)$, we may write the post-measurement state as

\be \ket{00} + \gamma^{t - X'_{2t,\sin^2(\theta/2)}} \ket{11}. \ee

We assume WLOG that $0<\theta<\pi/2$, so that $0<\gamma<1$. Similarly, if $s$ is drawn from $B_{2t,\cos^2(\theta/2)}$, then the post-measurement state is given by

\be \ket{00} + \gamma^{-t - X'_{2t,\cos^2(\theta/2)}} \ket{11}. \ee

Note that, under a relabeling of basis states $0 \leftrightarrow 1$, the post-measurement state in this case is

\be \ket{00} + \gamma^{t - X'_{2t,\sin^2(\theta/2)}} \ket{11}, \ee
where we used the fact that $-X'_{2t,\cos^2(\theta/2)}$ is distributed identically to $X'_{2t,\sin^2(\theta/2)}$. Since we will be interested in studying the entanglement spectrum of this process, which is invariant under such local basis changes, we may assume WLOG that the random post-measurement state after $2t$ measurements is given by $ \ket{00} + \gamma^{t - X'_{2t,\sin^2(\theta/2)}} \ket{11}$.

We can then model the final state as
\be \bigotimes_t  \ket{00} + \gamma^{t - X'_{2t,\sin^2(\theta/2)}} \ket{11} \label{unnormalized}\ee
up to normalization. This allows an estimate of the tradeoff between rank, truncation error, and associated probability of success.

Let $Q(\ell)$ denote the number of ``strict partitions'' of $\ell$, i.e. the number of ways of writing $\ell = t_1 + t_2 + \dots $ for positive integers $t_1 < t_2 < \dots$. Precise asymptotics are known for $Q(\ell)$ (see \url{https://oeis.org/A000009} and \cite{ayoub1963partitions}):
\be Q(\ell) = \exp(\Theta(\sqrt{\ell})).\ee

By expanding \Cref{unnormalized} as a superposition over computational basis states, we obtain the unnormalized Schmidt coefficients $\tilde{\lambda}_1 \geq \tilde{\lambda}_2 \geq \cdots$; each coefficient in the expansion gives an unnormalized Schmidt coefficient. There are $Q(\ell)$ unnormalized Schmidt coefficients that are distributed as $\gamma^{\ell - X'_{2\ell , \sin^2(\theta/2)}}$, where we used the fact that $X'_{t_1, \sin^2(\theta/2)} + X'_{t_2, \sin^2(\theta/2)}$ is distributed as $X'_{t_1+t_2, \sin^2(\theta/2)}$. We say that these $Q(\ell)$ coefficients live in sector $\ell$.  For a fixed probability $p$, let $K_{\ell,p}$ denote the smallest positive integer for which, with probability at least $1-p$, all sector-$\ell$ coefficients lie in the range $[\gamma^{\ell+K_{\ell, p}}, \gamma^{\ell-K_{\ell, p}}]$.  By the union bound, to upper bound $K_{\ell, p}$ it suffices to find an integer $a$ for which
\be \Pr[ \left|X'_{2\ell, \sin^2 (\theta/2)} \right| \geq a ] \leq \frac{p}{Q(\ell)} = p \exp(-\Theta(\sqrt{\ell})). \ee
By Hoeffding's inequality, we have $\Pr[\left|X'_{2\ell, \sin^2 (\theta/2)} \right| \geq a ] \leq \exp(-\Theta(a^2 / \ell))$; this yields the bound
\be K_{\ell, p} \leq \Theta\left(\sqrt{\ell \log (1/p) + \ell \sqrt{\ell}}\right). \ee
Furthermore, note that since there are $\Theta(n^2)$ sectors, by the union bound, with probability at least $1-\delta$, for each sector $j$, all coefficients lie in the range $[\gamma^{j+K_{j,p}}, \gamma^{j-K_{j,p}}]$ if we take $p$ to be $p = \delta / \Theta(n^2)$. We make this choice of $p$ and assume for the remainder of the argument that all coefficients of sector $j$ lie in the given range, which is true with probability at least $1-\delta$.  We also note the following fact which will be used below: if $\ell$ and $p$ are related as $\ell \geq \Theta(\log(1/p))$, then $K_{\ell, p} = O(\ell)$.

Still working with the unnormalized state of \Cref{unnormalized}, we now study the scaling between the Schmidt index $i$ and corresponding coefficient $\tilde{\lambda}_i$ for $i$ in the regime $i \geq \exp(\Theta(\sqrt{\log(1/p)}))$. Note that $\tilde{\lambda}_i = \gamma^\ell$ for some integer $\ell$. We first lower bound $\ell$. Note that the lower bound is achieved if, for each sector $j$, all coefficients in that sector are equal to $\gamma^{j-K_{j,p}}$. In this case, the exponent $\ell$ is equal to $\ell' - K_{\ell', p}$, where $\ell'$ is the smallest integer such that
\be i \leq \sum_{j=1}^{\ell'} Q(\ell') = \exp(\Theta(\sqrt{\ell'})). \ee
Rearranging, we see that $\ell' = \Theta(\log^2(i)) \geq \Theta(\log(1/p))$, and hence $\ell = \Theta(\log^2(i))$ since $\ell' - K_{\ell', p} = \Theta(\ell')$. Similarly , an upper bound on $\ell$ is achieved if, for each sector $j$, all coefficients in that sector are equal to $\gamma^{j+K_{j,p}}$. In this case, $\ell$ is equal to $\ell' + K_{\ell',p}$, where $\ell'$ is defined as above. This yields a matching upper bound for $\ell$ of $\Theta(\log^2(i))$. We therefore have the scaling $\ell = \Theta(\log^2(i))$, which, using the fact that $\tilde{\lambda}_i = \gamma^\ell$ yields
\be \tilde{\lambda}_i = \exp(\Theta(-\log^2(i))), \, \, i \geq \exp(\Theta(\sqrt{\log(1/p)})). \ee
Noting that $\lambda_i$ is proportional to $\tilde{\lambda}_i$ via $\lambda_i = \frac{1}{N}\tilde{\lambda}_i$ with $N = \sqrt{\sum_i \tilde{\lambda}_i^2}$, this shows the first statement of the lemma.

Now, suppose that for some $i \geq i^* = \exp(\Theta(\sqrt{\log(1/p)}))$, we truncate all Schmidt coefficients with index $\geq i$. The incurred truncation error is
\be \epsilon = \sum_{j\geq i} \lambda_j^2 < \sum_{j\geq i} \tilde{\lambda}_j^2 = \exp(-\Theta(\log^2(i))) \ee
where the inequality holds because the unnormalized state has norm strictly greater than one (i.e. $N > 1$).  Rearranging, this becomes
\be i \leq \exp(\Theta(\sqrt{\log(1/\epsilon)})). \ee
Hence, if we truncate the state at the end of the process up to a truncation error of $\epsilon$, the rank $r$ of the post-truncation state is bounded by
\be r \leq \max\qty(\exp(\Theta(\sqrt{\log(1/\epsilon)})), \exp(\Theta(\sqrt{\log(1/p)}))) = \exp\left(\Theta\left(\sqrt{\log\left(\frac{n}{\epsilon\cdot \delta}\right)}\right)\right) \ee
as desired, where we used the relation $p = \delta / \Theta(n^2)$.
\end{proof}

\entanglementDecay*
\begin{proof}
We will use a smaller technical lemma, which we state and prove below.
\begin{lemma}\label{lem:entropy_decrease}
Let $\ket{\psi}_{AB}$ be some state on subsystems $A$ and $B$ with subsystem $B$ a qubit, and let $\ket{H}_{CD}$ be some two-qubit Haar-random state on subsystems  $C$ and $D$. Suppose a Haar-random two-qubit gate $U$ is applied to subsystems $B$ and $C$. If subsystem $B$ is measured in the computational basis and outcome $b$ is obtained, then the von Neumann entropy of the post-measurement state $\ket{\psi_b}_{ABCD}$ in subsystem $A$ satisfies
\begin{equation}
\E_{b,H,U} S(A)_{\psi_b} \leq c \cdot S(A)_{\psi}
\end{equation}
for some constant $c < 1$, where the expectation is over the random measurement outcome, the random state $\ket{H}_{CD}$, and the Haar-random unitary $U$.
\end{lemma}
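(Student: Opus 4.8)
The plan is to reduce the statement to an effective single-qubit problem for $A$, establish the non-strict bound $\E_b S(A)_{\psi_b} \le S(A)_\psi$ by concavity, and then upgrade this to a strict constant factor $c<1$ by a compactness argument over the Schmidt weight combined with a quantitative (Holevo-type) estimate of the concavity gap. First I would reduce to qubits: since $B$ is a qubit, $\rho_A = \tr_B\dyad{\psi}$ has rank at most two, so writing the Schmidt decomposition $\ket{\psi}_{AB} = \sqrt{p}\,\ket{a_0}_A\ket{0}_B + \sqrt{1-p}\,\ket{a_1}_A\ket{1}_B$, every relevant state lives in the two-dimensional subspace $\mathrm{span}\{\ket{a_0},\ket{a_1}\}$, and we may treat $A$ as a qubit with $S(A)_\psi = H(p)$, the binary entropy. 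Applying $U_{BC}$ does not alter $\rho_A$, so before measurement $S(A)=H(p)$ still holds; writing $\ket{\tilde\psi_b}_{ACD}=(I_A\otimes K_b\otimes I_D)(\ket{\psi}_{AB}\otimes\ket{H}_{CD})$ with $K_b=(\bra{b}_B\otimes I_C)U_{BC}$ exhibits $\rho_A^{(b)}$ as a normalized $2\times 2$ matrix and makes transparent that $\sum_b\Pr(b)\rho_A^{(b)}=\rho_A$ (averaging over outcomes returns the pre-measurement marginal). Concavity of the von Neumann entropy then yields $\E_b S(\rho_A^{(b)})\le S(\rho_A)=H(p)$, i.e.\ the claim with $c=1$.

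To obtain a strict constant I would define the ratio $R(p) := \E_{b,H,U}S(A)_{\psi_b}/H(p)$ for $p\in(0,1)$ and argue $\sup_p R(p)=:c<1$ by compactness of $[0,1]$. The gap in the concavity inequality is precisely the Holevo quantity $\chi(p,H,U)=S(\rho_A)-\sum_b\Pr(b)S(\rho_A^{(b)})\ge 0$ of the two-element ensemble $\{\Pr(b),\rho_A^{(b)}\}$, which is strictly positive whenever the two post-measurement marginals are not both equal to $\rho_A$. Because $U$ is a Haar-random two-qubit gate, for a positive-measure set of $(U,H)$ the measurement of $B$---after $B$ has been scrambled with the fresh qubit $C$---genuinely disturbs the $A$-marginal, so $\E_{H,U}\chi(p,H,U)>0$ and hence $R(p)<1$ for every fixed interior $p$.

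The endpoints are where I expect the real difficulty. The obstacle is that $H(p)\to 0$ as $p\to 0$, and near a pure state the qubit entropy is not linear in the smallest eigenvalue but carries an extra logarithm, $S\approx\lambda_-\log(1/\lambda_-)$, so the purity gap alone does not control the entropy ratio. Instead I would carry out a leading-order expansion in $p$: writing $\lambda_-^{(b)}=p\,g_b+O(p^2)$ gives $S(A)_{\psi_b}\approx p\,g_b\log(1/p)$ and $H(p)\approx p\log(1/p)$, so the subleading $\log g_b$ terms cancel and $R(0^+)=\E_{b,H,U}[\Pr(b)\,g_b]$. Since the smallest eigenvalue $\lambda_{\min}(\cdot)$ is concave, $\sum_b\Pr(b)\lambda_-^{(b)}\le\lambda_{\min}(\rho_A)=p$, with strict inequality on a positive-measure set of $(U,H)$, giving $R(0^+)<1$; the limit $p\to 1$ is symmetric. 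Thus $R$ extends continuously to $[0,1]$ as a function that is everywhere strictly below $1$, and compactness yields a uniform constant $c<1$. The two points needing the most care are verifying the strict-disturbance claim (that the scrambled measurement moves $\rho_A$ with positive probability, which can be settled by exhibiting a single explicit $(U,H)$ for which it holds) and making the $p\to 0$ expansion rigorous, in particular the uniformity of the $O(p^2)$ error over $(U,H)$ needed to pass the expansion through the expectation.
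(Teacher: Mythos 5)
Your proposal is correct in outline and can be completed, but it is organized quite differently from the paper's proof. You fix $p$, average over $(b,H,U)$, get strictness at interior $p$ from positivity of the Holevo gap, handle the endpoints $p\to 0,1$ by a perturbative expansion of the smallest eigenvalue, and then invoke compactness in $p$. The paper instead works pointwise in $(H,U)$: writing the conditional (normalized) $CD$-states as $\ket{\phi_{b,1}},\ket{\phi_{b,2}}$, it defines $\epsilon:=\min_b\lvert\langle\phi_{b,1}\vert\phi_{b,2}\rangle\rvert^2$, shows $\lambda_{\max}(\rho_{A,b})\ge \Pr(X=1|Y=b)+\epsilon\Pr(X=2|Y=b)$ so that $\E_b\,\lambda_{\max}\ge p+\epsilon(1-p)$, and then uses concavity and monotonicity of the binary entropy to get $\E_b S(A)_{\psi_b}\le H_2\bigl(p+\epsilon(1-p)\bigr)$; the explicit ratio $r(p,\epsilon)=H_2(p+\epsilon(1-p))/H_2(p)$ is bounded by some $c(\epsilon)<1$ \emph{uniformly} in $p$ (its $p\to1$ limit is $1-\epsilon$), and finally $c=\E_{H,U}[c(\epsilon)]<1$ since $\epsilon>0$ almost surely. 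The paper's route buys you the uniformity in $p$ for free from a single explicit function, avoiding both the compactness step and the asymptotic expansion; your route identifies the concavity gap conceptually as a Holevo quantity and actually computes the sharp endpoint ratio (an average of conditional-state overlaps rather than a minimum), but costs more analysis: continuity in $p$ of the Haar-averaged entropy, and an exchange of limit and expectation at the endpoints. On that last worry, note you do not need uniformity of the $O(p^2)$ error over $(U,H)$: the ratio $\E_b S(A)_{\psi_b}/H_2(p)$ is bounded by $1$ by concavity, so pointwise a.e.\ convergence plus dominated convergence suffices, and the a.e.\ limit $1-\sum_b\lvert\langle\tilde\phi_{b,1}\vert\tilde\phi_{b,0}\rangle\rvert^2/\|\tilde\phi_{b,1}\|^2$ is almost surely strictly below $1$ because the overlaps are real-analytic, not identically zero, functions of $(U,H)$. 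One small point both arguments must make explicit: writing the Schmidt basis of $B$ as the computational basis (so that your $R$ is a function of $p$ alone and the final constant is universal over input states) is justified by invariance of the Haar measure on $U$ under right-multiplication by unitaries acting on $B$.
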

\begin{proof}
Consider the Schmidt decomposition $\ket{\psi}_{AB} = \sqrt{p} \ket{e_1}_A \ket{f_1}_B + \sqrt{1-p} \ket{e_2}_A \ket{f_2}_B$ where we assume WLOG that $p \geq 1/2$. We also assume that $p<1$, because the statement is trivially true for any value of $c$ when $p=1$. Note that the entanglement entropy of this state is simply $S(A)_\psi = H_2(p)$ where $H_2(p) := -p \log p - (1-p) \log(1-p)$ is the binary entropy function.  Let $M_0 := (\Pi_0 \otimes I)U$ and $M_1 := (\Pi_1 \otimes I)U$ denote the measurement operators acting on subsystems $B$ and $C$, where $\Pi_i$ denotes the projector onto the computational basis state $\ket{i}$ and $U$ is the Haar-random unitary applied to subsystems $B$ and $C$. Let $X$ denote a random variable equal to $1$ with probability $p$ and equal to $2$ with probability $1-p$. Let $Y$ denote the measurement outcome of $\{M_0, M_1\}$ when applied to the state $\ket{e_X}_A \ket{f_X}_B \ket{H}_{C, D}$. The probability of obtaining measurement outcome $b$ on the original state is simply $\Pr(Y=b)$, and the post-measurement state after obtaining outcome $b$ is
\begin{align}
&\frac{1}{\sqrt{\Pr(Y=b)}} \left(\sqrt{p\cdot \Pr(Y=b|X=1)} \ket{e_1}_A \ket{b}_B \ket{\phi_{b,1}}_{C,D} + \sqrt{(1-p)\cdot \Pr(Y=b|X=2)} \ket{e_2}_A \ket{b}_B \ket{\phi_{b,2}}_{C,D} \right) \\
= &\sqrt{\Pr(X=1 | Y=b)} \ket{e_1}_A \ket{b}_B \ket{\phi_{b,1}}_{C,D} + \sqrt{\Pr(X=2 | Y=b)} \ket{e_2}_A \ket{b}_B \ket{\phi_{b,2}}_{C,D}
\end{align}
where $\ket{\phi_{b,j}}_{C,D}$ are normalized states on subsystems $C$ and $D$. Define
\begin{equation}
\epsilon := \min_b |\langle \phi_{b,1} | \phi_{b,2} \rangle |^2.
\end{equation}
Letting $\rho_{A,b}$ denote the reduced density matrix on subsystem $A$ of the post-measurement state after obtaining measurement outcome $b$, the maximal eigenvalue of this matrix is lower bounded as $\lambda_{\max}(\rho_{A,b}) \geq \Pr(X=1|Y=b) + \epsilon \Pr(X=2 | Y=b)$. (To see this, observe that the reduced density matrix on $CD$ is $\sigma = \Pr(X=1|Y=b)\dyad{\phi_{b,1}} + \Pr(X=2|Y=b)\dyad{\phi_{b,2}}$, and the maximal eigenvalue is lower bounded as $\lambda_{\max}(\rho_{A,b}) = \lambda_{\max}(\sigma) \geq \expval{\sigma}{\phi_{b,1}}  \geq \Pr(X=1|Y=b) + \epsilon \Pr(X=2 | Y=b)$). Furthermore, note that
\begin{equation}
\E_{Y} \lambda_{\max}(\rho_{A,Y}) \geq \E_{Y} \qty[\Pr(X=1|Y) + \epsilon \Pr(X=2 | Y)] = p + \epsilon (1-p).
\end{equation}
Now, using concavity of the binary entropy function, we have
\begin{equation}
\E_{Y} S(A)_{\psi_Y} = \E_{Y} H_2(\lambda_{\max}(\rho_{A,Y})) \leq H_2(\E_{Y} \lambda_{\max}(\rho_{A,Y})) \leq H_2 (p + \epsilon(1-p)).
\end{equation}
Consider the ratio $r(p,\epsilon) := \frac{H_2(p + \epsilon (1-p))}{H_2 (p)}$. We want to argue that for any $\epsilon>0$, $r(p,\epsilon)$ is bounded away from one on the interval $p \in [1/2,1)$. This statement is clearly true for any $p$ bounded away from one since $H_2$ is monotonically decreasing on the interval $[1/2,1)$. Furthermore, it is straightforward to show $\lim_{p\rightarrow 1} r(p,\epsilon) = 1-\epsilon$.  Hence, we have
\begin{equation}
\frac{\E_Y S(A)_{\psi_Y}}{S(A)_{\psi}} \leq r(p,\epsilon) \leq c(\epsilon)
\end{equation}
where $c(\epsilon) < 1$ unless $\epsilon = 0$. We now average both sides over the choice of Haar-random state on $CD$ as well as the Haar-random unitary $U$  acting on $BC$. Since the event $\epsilon > 0$ occurs with nonzero probability (in fact, with probability one), we have the strict inequality $\mathbb{E}_{H,U} \left[c(\epsilon)\right] := c < 1$, from which the desired inequality follows.  \end{proof}

We may assume that $i \neq 0$ and $j \neq n$, as in these cases we trivially have $S(\rho_A(b)) = 0$. The post-measurement state may be constructed as follows. Apply all gates in the lightcone of qubit $i$, then measure qubit $i$. Now apply all gates in the lightcone of qubit $i+1$ not previously applied, then measure qubit $i+1$. Assume that qubits are introduced only when they come into the lightcone under consideration. Iterate until all qubits in region $B$ have been measured. Finally, apply any gates that have not yet been applied. It is straightforward to verify that this is equivalent to applying all gates of the circuit before performing the measurement of region $B$, in the sense that the measurement statistics are the same, and the post-measurement state given some outcome $b$ is the same.

By \Cref{lem:entropy_decrease}, after the first iteration we are left with the state $\ket{\psi}_{LR} \ket{b_{i_1}}_{i_1}$, such that $\mathbb{E} S(L)_{\psi} \leq c$ for some constant $c < 1$. In all iterations, we let $L$ denote the current subsystem to the left of the measured qubits, and $R$ denote the subsystem to the right of the measured qubits. Now consider the second iteration. Depending on whether $i$ was even or odd, $R$ may consist of one or two qubits immediately after the measurement of $i$. In the former case, we may apply \Cref{lem:entropy_decrease} again, obtaining  $\mathbb{E} S(L)_{\psi} \leq c^2$ after the measurement of qubit $i+1$, and obtaining a two-qubit subsystem to the right of the measured qubits. In the latter case, as a consequence of concavity of von Neumann entropy, we have $\mathbb{E} S(L)_{\psi} \leq c$ after measurement, and are left with a one-qubit subsystem to the right of the measured qubits. Iterating this process, after all qubits of subregion $B$ have been measured, we are left with some state $\ket{\psi}_{LR}$ such that $\E S(L)_{\psi} \leq c^{|B|/2} \leq c'^{|B|}$ where $c' = \sqrt{c} <  1$.  Finally, local unitary gates are applied to $\ket{\psi}_{LR}$ to obtain the final post-measurement state on the entire chain. Since each unitary is applied to only the left of region $B$ or only the right of region $B$, the entanglement entropy across the $(A, A^c)$ cut is unaffected by these gates, and remains bounded by $c^{|B|}$ in expectation.
\end{proof}

\section{Justification of stat mech mapping}\label{app:statmechdetails}

In this appendix, we provide the justification for the procedure in Section \ref{se:mapping}. Namely, we show that Eq.~\eqref{eq:partitionfunction}, which expresses $\E_U(Z_{k,\emptyset/A})$ as a partition function, is correct, and we derive the equations for the weights of the stat mech model.

To begin, for any integer $k\geq 2$, we rewrite $Z_{k,\emptyset/A}$ from Eqs.~\eqref{eq:Zk0def} and \eqref{eq:ZkAdef} as
\begin{align}
    Z_{k,\emptyset} &= \tr[(\rho \otimes \ldots \otimes \rho)] \label{eq:Zk0} \\
    Z_{k,A} &= \tr[(\rho \otimes \ldots \otimes \rho)W^{(A)}_{(1\ldots k)}] \label{eq:ZkA}
\end{align}
where each trace includes $k$ copies of $\rho$ and $W^{(A)}_{(1\ldots k)}$ is the linear operator that performs a $k$-cycle permutation, denoted $(1\ldots k)$ in cycle notation, of the copies for qudits within region $A$ while leaving the copies of the qudits outside of $A$ unpermuted. When $k=2$ there are two copies of $\rho$ and $W^{(A)}_{(12)}$ is the swap operator for qudits in $A$.

After substituting Eq.~\eqref{eq:rhocircuitoutput} for each copy of $\rho$ that appears in the equations above, we obtain an expression with $k$ copies of each unitary $U_u$ and $k$ copies of its adjoint $U_u^\dagger$, as well as $k$ copies each of $M_u$, $M_u^\dagger$, $M'_u$, and ${M'_u}^\dagger$. Taking the expectation $\E_U(Z_{k,\emptyset/A})$ introduces integrals over $U_u$ and expectations over $M_u$ and $M_u'$ drawn from distributions $\mu_u$ and $\mu'_u$, for each $u$. To perform the integrals, we rely on techniques for integration over the Haar measure, invoking the formula \cite{collins2003moments,collins2006integration}
\begin{equation}\label{eq:haarintegration}
    \int_{U(q^2)} dU \; U_{i_1j_1}\ldots U_{i_kj_k}U_{i_1'j_1'}^\dagger\ldots U_{i_k'j_k'}^\dagger = \sum_{\sigma,\tau \in S_k}\delta_\sigma(\vec{i},\vec{j}')\delta_\tau(\vec{i}',\vec{j}) \Wg(\tau \sigma^{-1},q^2)
\end{equation}
where on the left hand side $U_{ij}$ is the $(i,j)$ matrix element of the unitary $U$ and on the right hand side $S_k$ is the symmetric group, $\delta_\sigma(\vec{i},\vec{j}')$ is shorthand for $\prod_{a=1}^{k} \delta_{i_aj'_{\sigma(a)}}$ and $\Wg(\tau \sigma^{-1}, q^2)$ is the Weingarten function, which can be defined in several ways, for example by the following expansion \cite{collins2003moments,collins2006integration} over irreducible characters of the symmetric group $S_k$
\begin{equation}\label{eq:weingarten}
    \Wg(\pi,q^2) = \frac{1}{(q^2)!^2}\sum_{\lambda }\frac{\chi^\lambda(e)^2}{s_{\lambda,q^2}(1)} \chi^\lambda(\pi)
\end{equation}
where the sum is over all partitions $\lambda$ of the integer $k$, $\chi^\lambda$ is the irreducible character of $S_k$ associated with the partition $\lambda$, $e$ is the identity permutation, and $s_{\lambda,q^2}(1)$ is the Schur polynomial evaluated at 1 which is equal to the dimension of the representation of $U(q^2)$ associated with $\lambda$. Note that there exist permutations $\pi$ for which $\Wg(\pi,q^2)$ is negative.

In words, formula \eqref{eq:haarintegration} states that Haar integration can be performed by summing over all ways of pairing up the incoming index for each of the $k$ copies of $U$ with an outgoing index of a copy of $U^\dagger$, and the incoming index of each copy of $U^\dagger$ with an outgoing index of a copy of $U$. The permutations $\sigma, \tau \in S_k$ encode which copies are paired with each other, and each permutation pair $(\sigma, \tau)$ is weighted by $\Wg(\tau \sigma^{-1},q^2)$ in the sum. It is helpful to think of this formula graphically, as in Figure \ref{fig:haarintegration}, where we have depicted how the indices pair up after integration over a two-qudit Haar-random unitary.

\begin{figure}[h]
    \centering
    \subfloat[Haar integration formula applied to $k$ copies of two-qudit gate]{{\includegraphics[width=0.9\textwidth]{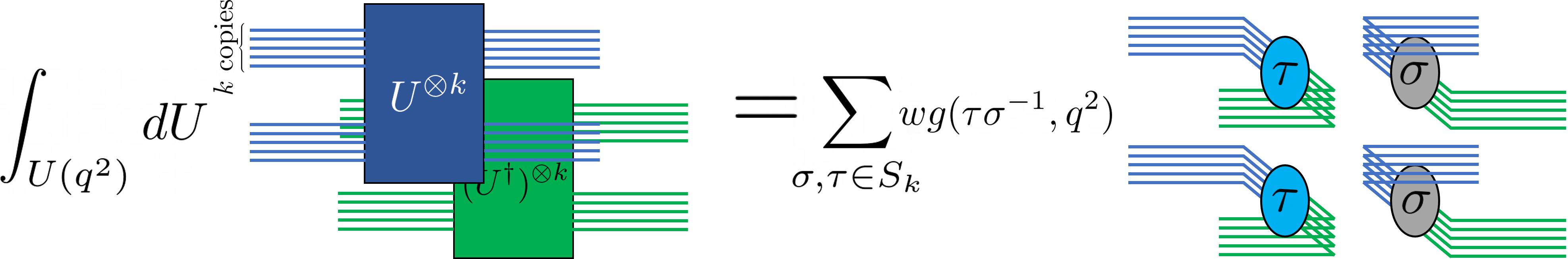} }}%
	\qquad
	\subfloat[Mapping of unitary in circuit diagram]{{\includegraphics[width=0.45\textwidth]{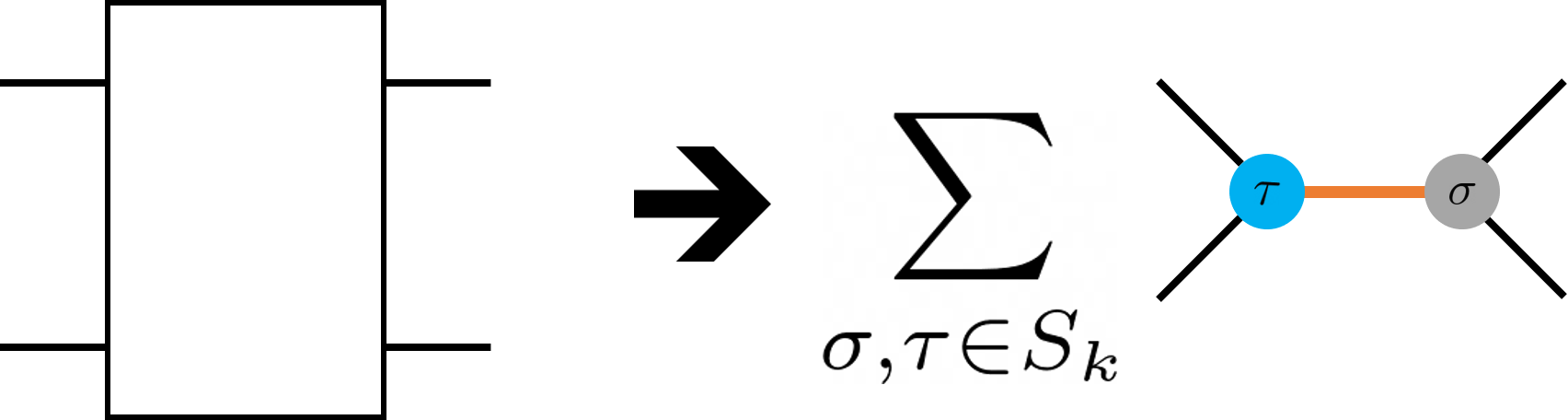} }}%
    \caption{ \label{fig:haarintegration} (a) Graphical depiction of Haar integration formula. (b) Haar integration formula allows us to replace Haar-random unitaries from circuit diagram with sums over configurations on a graph with nodes taking values in $S_k$, and edges between graphs contributing a factor to the weight of a configuration. }
\end{figure}

By applying this formula to all of the Haar-random gates, all of the integrals are eliminated and the tensor network representation of $Z_{k,\emptyset/A}$ can be expressed as a weighted sum over many networks, where each network in this sum corresponds to some choice of $(\sigma_u,\tau_u)$ for every unitary $u$ in the original circuit and some choice of $M_u$ and $M'_u$ from $\mathcal{M}_u$ and $\mathcal{M}'_u$. Furthermore, each network in this weighted sum is itself composed of many disjoint parts that can be individually evaluated. We can see this by observing that when unitary $u_2$ succeeds unitary $u_1$ and shares a qudit, Haar integration forces the $k$ tensor indices representing that qudit at that place in the circuit diagram to pair up with the $k$ dual indices for the qudit at the same place, according to some permutation. This happens both at the output of unitary $u_1$ (corresponding to permutation $\sigma_{u_1}$) and at the input of unitary $u_2$ (corresponding to permutation $\tau_{u_2}$) yielding a set of closed loops in the tensor network diagram. If the weak measurement acting on that qudit between unitaries $u_1$ and $u_2$ is $M$, then $k$ copies of $M$ and $k$ copies of $M^\dagger$ appear among these loops. An example of such a subdiagram is shown in Figure \ref{fig:Mweights}.

This observation justifies Eq.~\eqref{eq:partitionfunction}, as we have expressed $\E_U(Z_{k,\emptyset/A})$ as a weighted sum, with each term labelled by pairs of permutations at the locations of each unitary, where the weight is given by a product of factors that depend only on two of these permutations. These factors are the weights given by Eqs.~\eqref{eq:weightsutu}, \eqref{eq:weightsu1tu2}, and \eqref{eq:weightsuxa}. Eq.~\eqref{eq:weightsutu} accounts for the factor $\Wg(\tau\sigma^{-1},q^2)$ in Eq.~\eqref{eq:haarintegration}. Meanwhile, we can derive Eqs.~\eqref{eq:weightsu1tu2} and \eqref{eq:weightsuxa} by performing the expectation in Figure \ref{fig:Mweights}. We can graphically see that each term in the expansion of this expectation is given by $\mu(M)\tr(W_{\sigma}M^{\otimes k}W_{\tau^{-1}}{(M^\dagger)}^{\otimes k})$, and noting that $W_\pi$ commutes with $X^{\otimes k}$ for any $X$.

\begin{figure}[h]
    \centering
\includegraphics[width=0.4\columnwidth]{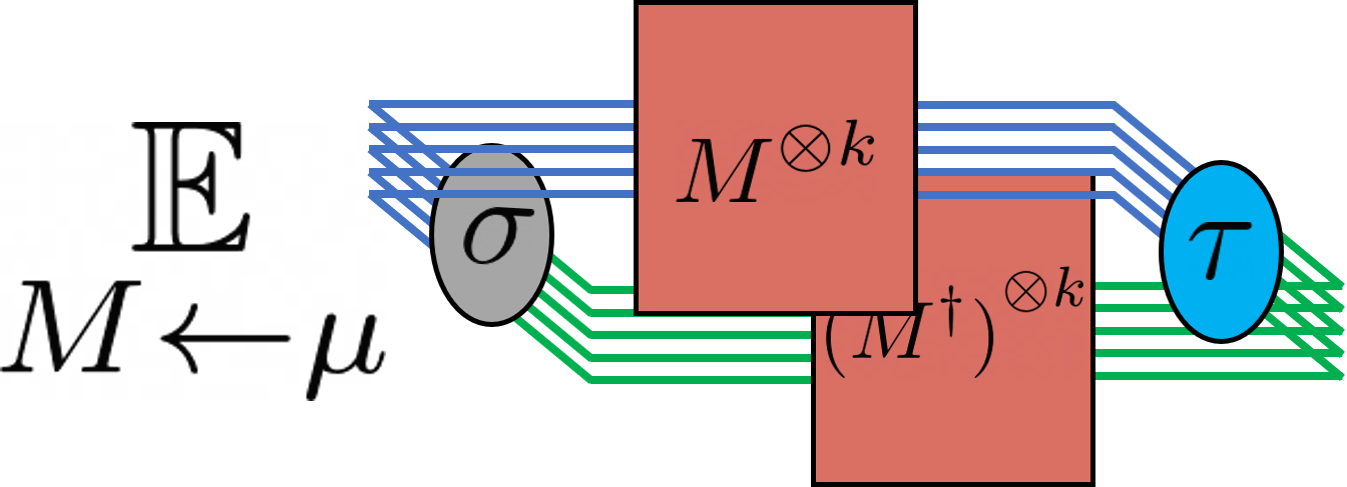}
    \caption{ \label{fig:Mweights} Disjoint part that forms tensor network representation of $\E_U(Z_{k,\emptyset/A})$ after performing integrals over Haar-random gates. The weight given by Eqs.~\eqref{eq:weightsu1tu2} and \eqref{eq:weightsuxa} is derived by evaluating this diagram and taking the expectation with $M$ drawn from $\mathcal{M}$ according to the distribution $\mu$. }
\end{figure}

The final part to justify is the introduction of the auxiliary nodes and corresponding weights in Eq.~\eqref{eq:weightsuxa}. The $\tr$ in the definition of $Z_{k,\emptyset}$ implies that the indices of the qudits at the circuit output are paired up with their dual indices without permutation. This creates a disjoint closed diagram for each qudit at the circuit output. To evaluate it, we may use the same formula as Eq.~\eqref{eq:weightsu1tu2} taking $\tau_{u_2}=e$, the identity permutation. This is equivalent to introducing auxiliary nodes, as we have done, that are fixed to $e$ for all qudits and across all terms in the partition function. The same follows for $Z_{k,A}$ with the exception that the operator $W^{(A)}_{(1\ldots k)}$ is applied to the circuit output, which permutes the output indices of any qudit $a \in A$ prior to connecting them with their dual indices. This is equivalent to introducing an auxiliary node and fixing it to the value $(1 \ldots k)$.

There is no need to introduce auxiliary nodes at the beginning of the circuit because we are assuming the circuit acts on the pure product state $\ket{1\ldots 1}\bra{1 \ldots 1}$. Thus, the $k$ copies of the index that feeds into the first unitary of the circuit are forced to be 0 and regardless of the permutation value of the incoming node for that unitary, this part of the circuit will contribute a factor of 1. If we had considered circuits that act initially on the maximally mixed state, we could have handled this by introducing a layer of auxiliary nodes at the beginning of the circuit and fixing their value to $e$.

\printbibliography[heading=bibintoc]

\end{document}